\documentclass{article}

\usepackage{microtype}
\usepackage{graphicx}
\usepackage{subcaption}
\usepackage{booktabs} 

\usepackage{hyperref}

\usepackage[utf8]{inputenc} 
\usepackage[T1]{fontenc}    
\usepackage{hyperref}       
\hypersetup{
    colorlinks,
    breaklinks,
    linkcolor = blue,
    citecolor = blue,
    urlcolor  = blue,
}
\usepackage{url}            
\usepackage{amsfonts}       
\usepackage{nicefrac}       
\usepackage{microtype}      
\usepackage{amsmath,bm}
\usepackage{amsthm}
\usepackage[algo2e, ruled, vlined]{algorithm2e}
\usepackage{algorithm}
\usepackage{bbm}      
\usepackage{mathtools}  \usepackage{matlab-prettifier}
\usepackage{amsopn}
\usepackage{amssymb}
\usepackage{graphicx}
\usepackage{xcolor}
\usepackage{footnote}
\usepackage{makecell}
\makesavenoteenv{tabular}
\makesavenoteenv{table}

\usepackage{booktabs,caption,threeparttable}

\usepackage{xcolor}
\usepackage{soul}
\usepackage{changepage}

\definecolor{Green}{rgb}{0.13, 0.65, 0.3}
\definecolor{Amber}{rgb}{0.3, 0.5, 1.0}
\usepackage[]{color-edits}
\addauthor{HL}{Green}
\addauthor{TJ}{Amber}
\usepackage{comment}

\usepackage{lipsum}
\usepackage{minitoc}

\sethlcolor{gray}

\newcommand{\bi}{\begin{itemize}}
\newcommand{\ei}{\end{itemize}}

\usepackage{amssymb}
\usepackage{pifont}%

\allowdisplaybreaks

\usepackage{amsmath}
\usepackage{amssymb}
\usepackage{graphicx}
\usepackage{mathtools}
\usepackage{enumerate}
\usepackage{enumitem}
\usepackage{footnote}
\usepackage{float}
\usepackage{xspace}
\usepackage{multirow}
\usepackage{nicefrac}
\usepackage{wrapfig}
\usepackage{framed}
\usepackage{url}
\usepackage{tikz}
\usetikzlibrary{arrows,chains,matrix,positioning,scopes}

\DeclareFontFamily{OMX}{MnSymbolE}{}
\DeclareFontShape{OMX}{MnSymbolE}{m}{n}{
    <-6>  MnSymbolE5
   <6-7>  MnSymbolE6
   <7-8>  MnSymbolE7
   <8-9>  MnSymbolE8
   <9-10> MnSymbolE9
  <10-12> MnSymbolE10
  <12->   MnSymbolE12}{}
\DeclareSymbolFont{mnlargesymbols}{OMX}{MnSymbolE}{m}{n}
\SetSymbolFont{mnlargesymbols}{bold}{OMX}{MnSymbolE}{b}{n}
\DeclareMathDelimiter{\llangle}{\mathopen}{mnlargesymbols}{'164}{mnlargesymbols}{'164}
\DeclareMathDelimiter{\rrangle}{\mathclose}{mnlargesymbols}{'171}{mnlargesymbols}{'171}

\usepackage{prettyref}

\newcommand{\savehyperref}[2]{\texorpdfstring{\hyperref[#1]{#2}}{#2}}
\newrefformat{eq}{\savehyperref{#1}{Eq.~\eqref{#1}}}
\newrefformat{eqn}{\savehyperref{#1}{Equation~\eqref{#1}}}
\newrefformat{lem}{\savehyperref{#1}{Lemma~\ref*{#1}}}
\newrefformat{def}{\savehyperref{#1}{Definition~\ref*{#1}}}
\newrefformat{line}{\savehyperref{#1}{line~\ref*{#1}}}
\newrefformat{thm}{\savehyperref{#1}{Theorem~\ref*{#1}}}
\newrefformat{corr}{\savehyperref{#1}{Corollary~\ref*{#1}}}
\newrefformat{cor}{\savehyperref{#1}{Corollary~\ref*{#1}}}
\newrefformat{col}{\savehyperref{#1}{Corollary~\ref*{#1}}}
\newrefformat{sec}{\savehyperref{#1}{Section~\ref*{#1}}}
\newrefformat{app}{\savehyperref{#1}{Appendix~\ref*{#1}}}
\newrefformat{assum}{\savehyperref{#1}{Assumption~\ref*{#1}}}
\newrefformat{ex}{\savehyperref{#1}{Example~\ref*{#1}}}
\newrefformat{fig}{\savehyperref{#1}{Figure~\ref*{#1}}}
\newrefformat{tab}{\savehyperref{#1}{Table~\ref*{#1}}}
\newrefformat{alg}{\savehyperref{#1}{Algorithm~\ref*{#1}}}
\newrefformat{rem}{\savehyperref{#1}{Remark~\ref*{#1}}}
\newrefformat{conj}{\savehyperref{#1}{Conjecture~\ref*{#1}}}
\newrefformat{prop}{\savehyperref{#1}{Proposition~\ref*{#1}}}
\newrefformat{proto}{\savehyperref{#1}{Protocol~\ref*{#1}}}
\newrefformat{prob}{\savehyperref{#1}{Problem~\ref*{#1}}}
\newrefformat{claim}{\savehyperref{#1}{Claim~\ref*{#1}}}
\newrefformat{que}{\savehyperref{#1}{Question~\ref*{#1}}}
\newrefformat{obs}{\savehyperref{#1}{Observation~\ref*{#1}}}

\usepackage[preprint]{icml2026}

\usepackage{amsmath}
\usepackage{amssymb}
\usepackage{mathtools}
\usepackage{amsthm}

\usepackage[capitalize,noabbrev]{cleveref}

\theoremstyle{plain}
\newtheorem{theorem}{Theorem}[section]

\newtheorem{lemma}[theorem]{Lemma}

\theoremstyle{definition}

\theoremstyle{remark}

\usepackage[textsize=tiny]{todonotes}

\icmltitlerunning{Revisiting the Bertrand Paradox via Equilibrium Analysis of No-regret Learners}
\allowdisplaybreaks
\begin{document}

\twocolumn[
  \icmltitle{Revisiting the Bertrand Paradox via Equilibrium Analysis of No-regret Learners}



  \icmlsetsymbol{equal}{*}

  \begin{icmlauthorlist}
    \icmlauthor{Arnab Maiti}{yyy}
    \icmlauthor{Junyan Liu}{yyy}
    \icmlauthor{Kevin Jamieson}{yyy}
    \icmlauthor{Lillian J. Ratliff}{yyy}
  \end{icmlauthorlist}

  \icmlaffiliation{yyy}{University of Washington}

  \icmlcorrespondingauthor{Arnab Maiti}{arnabm2@uw.edu}

  \icmlkeywords{Machine Learning, ICML}

  \vskip 0.3in
]



\printAffiliationsAndNotice{}  

\begin{abstract}
We study the discrete Bertrand pricing game with a non-increasing demand function. The game has $n \ge 2$ players who simultaneously choose prices from the set $\{1/k, 2/k, \ldots, 1\}$, where $k\in\mathbb{N}$. The player who sets the lowest price captures the entire demand; if multiple players tie for the lowest price, they split the demand equally.

We study the Bertrand paradox, where classical theory predicts low prices, yet real markets often sustain high prices. To understand this gap, we analyze a repeated-game model in which firms set prices using no-regret learners. Our goal is to characterize the equilibrium outcomes that can arise under different no-regret learning guarantees. We are particularly interested in questions such as whether no-external-regret learners can converge to undesirable high-price outcomes, and how stronger guarantees such as no-swap regret shape the emergence of competitive low-price behavior. We address these and related questions through a theoretical analysis, complemented by experiments that support the theory and reveal surprising phenomena for no-swap regret learners.


\end{abstract}

\section{Introduction}
The concept of price competition in a duopoly dates back to \citet{bertrand1883theorie}. In his critique of Cournot’s work, Bertrand introduced the idea that firms choose prices for their products and sell in order to maximize profits. He argued that consumers purchase from the firm charging the lower price, an observation later elaborated and extended by \citet{edgeworth2024pure}. This framework of competition through prices is commonly referred to as the Bertrand pricing game. Formally, in a Bertrand pricing game, each player chooses a price, and the player who sets the lowest price captures the entire demand at that price, which we refer to as the transaction price. If multiple players tie for the lowest price, demand is split equally among them. A player’s utility is the product of its per-unit margin (the price it sets minus marginal cost) and the demand it serves.

Under symmetric marginal costs, where every player has the same marginal cost, any firm can profitably undercut the lowest posted price by an arbitrarily small amount and capture the entire demand, thereby obtaining higher utility. As a consequence, high prices cannot be sustained in the Bertrand pricing game, and prices are driven close to marginal cost. However, in real markets this is clearly not the case, as firms often set prices well above marginal cost \cite{tremblay2019oligopoly, dufwenberg2000price, hall1988relation}. This phenomenon is famously known as the Bertrand paradox, and it has spurred a large body of work aimed at resolving it \cite{geromichalos2014directed,brander2015endogenous,cabon2010end,baye1999folk,hehenkamp2002sluggish,carvalho2009price}.



One approach to resolving the paradox is to model the setting as a repeated game, where players set prices over multiple rounds after observing prices chosen in previous rounds, rather than treating it as a one-shot interaction. This line of work has received substantial attention in recent years \cite{nadav2010no,bruttel2009group,arunachaleswaran2024algorithmic,bichler2024online,farrell2000renegotiation}. When players choose prices repeatedly with the objective of maximizing long-run profits, a central concept studied in game theory is regret, which measures the increase in utility a player could obtain by unilaterally deviating from its realized sequence of prices. Two classes of deviations are commonly considered: deviations to a fixed price and deviations that swap one chosen price with another. Regret with respect to the former is known as external regret, while regret with respect to the latter is known as swap regret \cite{cesa2006prediction}. It is well known that when all players incur low external regret, the empirical joint distribution of prices converges to a coarse correlated equilibrium (CCE) \cite{roughgarden2016twenty}. Likewise, when all players incur low swap regret, the empirical joint distribution converges to a correlated equilibrium (CE) \cite{roughgarden2016twenty}. These convergence results motivate the study of such equilibria attainable by no-regret learners in the Bertrand pricing game as a means of understanding when high transaction prices can be sustained and when competitive forces drive the transaction prices downward.

While correlated equilibrium has been extensively studied \cite{wu2008correlated,jann2015correlated,arunachaleswaran2024algorithmic}, coarse correlated equilibrium remains comparatively less understood. Only recently, \citet{nadav2010no} showed that under symmetric costs there exists a demand function for which a CCE yields high utilities, suggesting that no-external-regret learners can sustain high transaction prices. It remains open whether such a guarantee extends to arbitrary non-increasing demand functions, and how the resulting transaction prices scale with the number of firms $n$. Related questions also arise in asymmetric-cost duopolies, where one firm has a significantly higher marginal cost than the other. Similarly, it is unclear whether a no-swap-regret learner can always drive the transaction prices down when competing against a no-external-regret learner.

While a subset of recent work has focused on the continuous price setting \cite{nadav2010no,jann2015correlated,raskovich2025homogeneous}, our focus is on a discrete price setting, motivated by the fact that real-world markets operate with discrete prices (for example, the U.S. dollar is discretized into cents) and by recent work adopting this perspective \cite{calvano2020artificial,eschenbaum2022robust,arunachaleswaran2024algorithmic,collina2025breaking}. These considerations motivate the following central question:
\begin{quote}
\emph{What transaction prices can be sustained in the equilibrium outcomes attainable by no-regret learners in a discrete Bertrand pricing game with an arbitrary non-increasing demand function?}
\end{quote}

\subsection{Problem Setting}
Toward answering the question posed above, we first formally define the Bertrand pricing game. The game consists of $n$ players, and each player simultaneously chooses a price from the discrete set $\mathcal{P} := \{i/k : i \in [k]\}$, where $k$ is a positive integer. Let $f:\mathcal{P}\to[0,1]$ be a non-increasing demand function, i.e., $f(x)\ge f(y)$ for all $x\le y$. Let $c_i\in \mathcal{P}\cup\{0\}$ denote the marginal cost of player $i$. Given a price profile $(x_1,\ldots,x_n)\in\mathcal{P}^n$, the utility of player $i$ is defined by
\[
u_i(x_i,x_{-i}) =
\begin{cases}
\displaystyle \frac{(x_i-c_i), f(x_i)}{\left|{j\in[n]: x_j=x_i}\right|} & \text{if } x_i=\min_{j\in[n]} x_j, \\
0 & \text{otherwise,}
\end{cases}
\]
where $x_{-i}$ denotes the prices chosen by all players other than $i$. For any price tuple $x\in\mathcal{P}^n$, we define the transaction price as $\min_{j\in[n]}x_j$. We also define the maximum monopoly utility of the player $i$ as $\max_{x\in\mathcal{P}}(x-c_i)f(x)$.

In this paper, we analyze equilibrium notions in the Bertrand pricing game. To this end, we first define $\Phi$-correlated equilibrium. For each player $i\in[n]$, let $\Phi_i$ denote a collection of deviation maps $\phi_i:\mathcal{P}\to\mathcal{P}$, and let $\Phi := (\Phi_1,\Phi_2,\ldots,\Phi_n)$. A joint distribution $\mathcal{D}$ over price tuples $x=(x_1,x_2,\ldots,x_n)\in\mathcal{P}^n$ is a \emph{$\Phi$-correlated equilibrium} if, for every player $i\in[n]$ and every deviation map $\phi_i\in \Phi_i$, we have
\[
\mathbb{E}_{x \sim \mathcal{D}}\!\left[u_i(x)\right]
\ge
\mathbb{E}_{x \sim \mathcal{D}}\!\left[u_i\!\big(\phi_i(x_i),x_{-i}\big)\right].
\]

If, for each player $i$, the set $\Phi_i$ consists of all constant deviation maps, i.e.,
\[
\Phi_i := \{\, \phi_p : \mathcal{P} \to \mathcal{P} \mid p \in \mathcal{P},\ \phi_p(x) = p \ \forall x \in \mathcal{P} \,\},
\]
then $\Phi$-correlated equilibrium coincides with coarse correlated equilibrium (CCE). In contrast, if for each player $i$ the set $\Phi_i$ contains all possible deviation maps, i.e.,
\[
\Phi_i := \{\, \phi : \mathcal{P} \to \mathcal{P} \,\}.
\]

then $\Phi$-correlated equilibrium coincides with correlated equilibrium (CE).

As we study the equilibrium notions above, it is also important to define the corresponding regret notions whose minimization leads to convergence to these equilibria. Given a sequence of price profiles $x^{(1)},x^{(2)},\ldots,x^{(T)}$, where each $x^{(t)}=(x^{(t)}_1,x^{(t)}_2,\ldots,x^{(t)}_n)\in\mathcal{P}^n$, the $\Phi_i$-regret of player $i\in[n]$ is defined as
\[
R_{\Phi_i}(T)
:=
\max_{\phi_i\in\Phi_i}
\sum_{t=1}^T u_i\!\big(\phi_i(x_i^{(t)}),x_{-i}^{(t)}\big)
\;-\;
\sum_{t=1}^T u_i\!\big(x_i^{(t)},x_{-i}^{(t)}\big).
\]
A learning algorithm for player $i$ is called a \emph{no-$\Phi_i$-regret} learner if $R_{\Phi_i}(T)/T\rightarrow0$ as $T\to\infty$. If $\Phi_i$ consists only of constant deviation maps, then $R_{\Phi_i}(T)$ coincides with external regret and the corresponding learner is called a \emph{no-external-regret} learner. If instead $\Phi_i$ contains all possible deviation maps, then $R_{\Phi_i}(T)$ coincides with swap regret and the corresponding learner is called a \emph{no-swap-regret} learner.

Moreover, in the duopoly setting where $n=2$, for every $\Phi$-correlated equilibrium there exist \emph{no-$\Phi_i$-regret} learners for each player $i$ such that, when both players simultaneously use their corresponding learners, the empirical distribution of $\{x^{(t)}\}_{t=1}^T$ converges to that equilibrium as $T\to\infty$. We refer the reader to Appendix~\ref{appendix:extra-learner} for more details.

In this paper, we use the terms \emph{high} and \emph{low} transaction prices informally. In the symmetric-cost setting (all firms have marginal cost $c$), a transaction price $p$ is \emph{high} if its markup $p-c$ is within a constant factor (independent of $k$) of the monopoly-optimal markup $p_*-c$, where $p_*$ maximizes monopoly utility, and \emph{low} otherwise. In the asymmetric duopoly setting ($n=2$ and $c_1<c_2$), $p$ is \emph{high} if $p-c_2$ is within a constant factor (independent of $k$) of firm~2's monopoly-optimal markup, and \emph{low} otherwise. We say two no-regret learners \emph{sustain high transaction prices} if transaction prices are high for a constant fraction (independent of $k$) of the interaction rounds; otherwise, we say they \emph{drive prices down}.

\subsection{Our contributions}

Having described the problem setting, we now turn to answering the question posed earlier. We begin by outlining our contributions in the symmetric-cost setting, where all players share the same marginal cost $c$. We show in Theorem~\ref{thm:cce-symmetric} that, for any non-increasing demand function, there exist two no-external-regret learners that can sustain high transaction prices while competing against each other. In contrast, we show in Theorem~\ref{thm:ce-symmetric} that, for any non-increasing demand function, any pair of no-swap-regret learners drive the transaction prices down when competing against each other. We then ask whether a no-swap-regret learner can always drive the transaction prices down when competing against a no-external-regret learner. We answer this question in the negative in Theorem~\ref{thm:phi-symmetric} by showing that there exists a demand function and a pair of learners, one no-external-regret and one no-swap-regret, such that they sustain high transaction prices while competing against each other. Moving beyond duopoly, we show in Theorem~\ref{thm:phi-oligopoly} that for $n\ge 3$ and any non-increasing demand function, the presence of just two no-swap-regret learners suffices to drive the transaction prices down. Finally, we show in Theorem~\ref{thm:decay} that as the number of firms grows, the highest utilities that no-external-regret learners can achieve decay exponentially when they compete against each other.

We next extend our results for the Bertrand duopoly to the asymmetric-cost setting, where player~1 has a lower marginal cost than player~2, and and the gaps $c_2 - c_1$ and $1 - c_2$ are fixed positive constants, independent of $k$ and much larger than $1/k$. First, in Theorem \ref{thm:cce-asymmetric} we show that, for a broad class of demand functions, there exist two no-external-regret learners that can sustain high transaction prices while competing against each other. We next show in Theorem~\ref{thm:ce-asymmetric} that, for any non-increasing demand function, any pair of no-swap-regret learners drive player~2’s utility to zero when competing against each other. Finally, in Theorem \ref{thm:phi-asymmetric} we show that there exist marginal costs, a demand function, and a $\Phi$-correlated equilibrium in which both players obtain expected utility that is a constant fraction of the maximum utility achievable under monopoly, even when $\Phi_1$ consists only of constant deviation maps while $\Phi_2$ consists of all deviation maps. The same guarantee also holds under the roles reversed, i.e., when $\Phi_1$ contains all deviation maps and $\Phi_2$ contains only constant deviation maps.

In Section \ref{sec:experiments}, we first run numerical experiments that compute, for several well-known demand functions, the exact fraction of the maximum monopoly utility achieved by the best possible CCE both in the symmetric and asymmetric-cost setting. Interestingly, in the symmetric-cost setting, we observe that this fraction converges to roughly $1/e$. We then study, again under symmetric costs, how the maximum achievable sum of players' expected utilities over all CCE varies with the number of firms $n$. Across several standard demand functions, we observe an exponential decay as $n$ increases, with utility becoming negligible beyond five firms. Finally, in the asymmetric-cost duopoly setting with $c_1<c_2$ and constant demand, we run experiments with standard no-swap-regret learners, illustrating how different correlated equilibria can emerge from competition between such learners. A surprising trend is that the transaction prices can fall well below $c_2$, and this outcome can vary with algorithmic choices such as the learning rate.
\subsection{Related Works}
Prior work has explored several approaches for resolving the Bertrand paradox, including capacity constraints \cite{peters1984bertrand,geromichalos2014directed}, product differentiation \cite{anderson2008product,brander2015endogenous}, consumer search \cite{stahl1989oligopolistic}, imperfectly informed customers \cite{hehenkamp2002sluggish}, repeated interactions \cite{nadav2010no,bruttel2009group,arunachaleswaran2024algorithmic,bichler2024online,farrell2000renegotiation}, and collusion \cite{spagnolo2000self,melkonyan2017collusion,obara2011tacit}.

Among recent works on repeated interactions, \citet{arunachaleswaran2024algorithmic} is particularly closely related to ours. They study the discrete Bertrand pricing game with constant demand $f(x)=1$ and zero marginal costs. Their main result shows that, in a duopoly, if player~1 is a no-swap regret learner, there exists an algorithm for player~2 that yields high utility for both players. They also show that, under any correlated equilibrium, each player’s expected utility is at most $O(1/k)$. They also study mean-based algorithms and showed that these algorithms drive the prices down. 

\citet{collina2025breaking} also consider the discrete Bertrand pricing game with constant demand and zero marginal costs. Building on techniques of \citet{feldman2016correlated}, who established a similar phenomenon for auctions, they observe that players' maximum attainable utilities decay exponentially with the number of players. \citet{jann2015correlated} characterize correlated equilibria of the Bertrand pricing game in the continuous-price setting for arbitrary non-increasing demand functions. \citet{nadav2010no} likewise study the continuous-price model and show the existence of a high-utility CCE for linear demand. In the discrete setting, \citet{bichler2024online} use numerical experiments to demonstrate that high utility can be achieved via CCE for linear demand, and \citet{wu2008correlated} characterize correlated equilibria for linear demand.

Beyond Bertrand pricing games, CE and CCE have also been studied in other oligopoly models. \citet{liu1996correlated} and \citet{yi1997existence} study correlated equilibrium in Cournot oligopoly. \citet{ray2013coarse} study CCE in linear duopoly games, and \citet{moulin2014improving} study CCE for a class of symmetric two-player quadratic games. \citet{awaya2020information} study CCE in the context of cartels. More broadly, CE has been studied in potential and concave games, which include a wide range of settings encompassing oligopoly models \cite{neyman1997correlated,ui2008correlated}. Finally, \citet{einy2022strong} introduce a notion of strong robustness and connect it to CE in Cournot and Bertrand oligopoly.

\section{Results}\label{sec:results}
In this section, we formally state our main results. We begin by considering the symmetric-cost setting, where all players have the same marginal cost, in Section~\ref{sec:symmetric-costs}. We then extend our analysis to the asymmetric-cost setting for the Bertrand duopoly, where players have different marginal costs, in Section~\ref{sec:asymmetric-costs}. The omitted detailed proofs are presented in Appendix \ref{appendix:omitted-proofs}.
\subsection{Symmetric costs setting}\label{sec:symmetric-costs}
We begin with the Bertrand duopoly and assume that both players have the same marginal cost $c$. Prior work, including \citet{nadav2010no} and \citet{bichler2024online}, has observed that for a linear demand function there exists a CCE under which the players obtain utilities that are a constant fraction of the maximum monopoly utility. This raises an important question: for an arbitrary non-increasing demand function, does there exist a CCE under which the players achieve utilities that are a constant fraction of the maximum monopoly utility? We answer this question in the affirmative in the following theorem.
\begin{theorem}\label{thm:cce-symmetric}
Consider any non-increasing demand function $f:\mathcal{P}\to[0,1]$ and the symmetric-cost setting, where all players have the same marginal cost $c$. There exists a CCE $\mathcal{D}$ such that, for each player $i\in\{1,2\}$, the following holds:
\[
\mathbb{E}_{x\sim \mathcal{D}}\!\left[u_i(x)\right]
\ge
\frac{1}{4e^2}\cdot \max_{x\in \mathcal{P}} (x-c)\, f(x).
\]
In other words, for any non-increasing demand function, there exist two no-external-regret learners that can sustain high transaction prices while competing against each other.
\end{theorem}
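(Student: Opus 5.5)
The plan is to build an explicit symmetric CCE in which both players always post the \emph{same} price $x$, drawn from a carefully designed distribution on $\mathcal{P}$. Each player then gets $\tfrac12 g(x)$, where $g(q):=(q-c)f(q)$ and $M:=\max_{q\in\mathcal{P}} g(q)=g(p_*)$. The CCE condition only has to be checked against constant deviations to some $q$. Deviating to $q$ yields $g(q)\Pr[x>q]+\tfrac12 g(q)\Pr[x=q]\le g(q)\,S(q)$, where $S(q):=\Pr[x\ge q]$. So it suffices to design $S$ such that
\[
g(q)\,S(q)\le \tfrac12\,\mathbb{E}[g(x)]\quad\text{for all }q,\qquad \tfrac12\,\mathbb{E}[g(x)]\ge \tfrac{1}{4e^2}M .
\]
I will put all mass on prices $\le p_*$, so deviations to $q>p_*$ give zero.

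The main device for the first inequality is the running maximum $G(q):=\max_{q'\le q} g(q')$. This is non-decreasing and equals $M$ from $p_*$ on. I will set $S(q):=\min\{1,\,t/G(q)\}$ for $q\le p_*$ and $S(q)=0$ for $q>p_*$, with a threshold $t$ to be tuned (I expect $t\approx M/e^2$ up to constants). Since $G$ is non-decreasing, $S$ is a valid survival function, and the deviation gain is bounded by $g(q)S(q)\le G(q)\cdot t/G(q)=t$. Mass sits only where $G$ jumps, i.e.\ at record points where $g(q)=G(q)$, plus an atom of size $t/M$ at $p_*$.

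Next I will lower bound $\mathbb{E}[g(x)]$ by summing by parts. Let $q_0$ be the first point with $G(q_0)\ge t$ and let $q_0<q_1<\dots<q_m=p_*$ be the successive record points. Then
\[
\mathbb{E}[g(x)]=\sum_{i<m} G(q_i)\bigl(S(q_i)-S(q_{i+1})\bigr)+M\cdot \tfrac{t}{M}=t\sum_{i}\Bigl(1-\tfrac{G(q_i)}{G(q_{i+1})}\Bigr)+t .
\]
If there are many small jumps, the sum behaves like $\log(M/t)$, so choosing $t=M/e^2$ gives roughly $3t$, which exceeds the required $2t$. If there are few large jumps, each term is close to $1$.

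The main obstacle is the worst case of this telescoping sum under discreteness: one only gets $\sum(1-r_i)\ge 1-\prod r_i=1-t/M$, which is slightly short of what is needed. I would fix this by sacrificing constants. Concretely, I would use $S=\min\{1,\,t/(2G)\}$-type scaling, or mix the construction with a small point mass at $(p_*,p_*)$, so that the deviation bound becomes $t/2$ while $\mathbb{E}[g]\ge t$. I would then choose $t$ as an $e^{-2}$ fraction of $M$ (handling the boundary $q_0$, where $S$ first drops below $1$) to land exactly at $\tfrac{1}{4e^2}M$. The final sentence of the theorem, about learners, then follows from the duopoly convergence fact recalled from Appendix~\ref{appendix:extra-learner}.
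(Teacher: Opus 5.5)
Your survival function $S(q)=\min\{1,t/G(q)\}$ is exactly the paper's $\tau_i=B/S_i$, and your deviation bound $g(q)S(q)\le t$ is correct. The gap is in the lower bound on $\mathbb{E}[g(x)]$: your argument never uses that $f$ is non-increasing.

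Under this $S$, the mass $t/G(q)-t/G(q+1/k)$ sits at the price $q$ \emph{immediately below} each new record $q+1/k$, not at the record itself. So your identity $\mathbb{E}[g(x)]=\sum_i G(q_i)\bigl(S(q_i)-S(q_{i+1})\bigr)+t$ is unjustified; it is exact only when $g$ is nondecreasing up to $p_*$, e.g.\ constant demand. What you actually need is a lower bound on $g$ at these pre-record prices.

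The repairs you suggest for the shortfall also do not work. Replacing $t$ by $t/2$ just reruns the same construction with a different threshold. The deviation bound and your lower bound on $\mathbb{E}[g(x)]$ shrink together, so the ratio problem is unchanged. Mixing in an atom of weight $\alpha$ at $(p_*,p_*)$ raises each player's payoff by $\alpha M/2$. But it raises the payoff of undercutting to $p_*-1/k$ by $\alpha\,g(p_*-1/k)$, which can be close to $\alpha M$ (e.g.\ for constant demand).

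The missing ingredient is a factor-two growth bound. For $i>kc$, monotonicity of $f$ gives $g\bigl((i+1)/k\bigr)\le\frac{i+1-kc}{i-kc}\,g(i/k)\le 2\,g(i/k)$. This rules out the ``few large jumps'' worst case you were worried about, and it does the two jobs your argument leaves open.

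First, if $q+1/k$ is a new record, then $g(q)\ge G(q+1/k)/2$. So the mass at $q$ contributes at least $\frac t2\bigl(G(q+1/k)/G(q)-1\bigr)\ge\frac t2\ln\bigl(G(q+1/k)/G(q)\bigr)$. These terms have the form $1/r-1\ge\ln(1/r)$, which is the right direction; your terms $1-r\le\ln(1/r)$ go the wrong way. They telescope to $\frac t2\ln\bigl(M/G(q_0)\bigr)$.

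Second, provided $q_0-1/k>c$, the same bound gives $G(q_0)\le 2t$. With $t=M/(2e^2)$ the logarithm is then at least $2$. Together with the atom $t/M$ at $p_*$, this gives $\mathbb{E}[g(x)]\ge 2t$, i.e.\ each player earns at least the deviation bound $t$.

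The leftover case is $g(c+1/k)\ge t$. There $G(q_0)$ is uncontrolled, and for $c=0$ your $S$ may not even satisfy the required $S(1/k)=1$. The paper handles it separately with the pure Nash equilibrium $(c+1/k,c+1/k)$, which gives each player $g(c+1/k)/2\ge M/(4e^2)$. That case, not a halving of $t$, is where the constant $\frac{1}{4e^2}$ comes from.
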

\begin{proof}[Proof Sketch]
We now describe a way to construct a symmetric CCE where both players always choose the same price. Let $s_i:=(i/k-c)\cdot f(i/k)$ and $S_i:=\max_{j\leq i}s_j$. Let $s_{\max}:=\max_{i\in[k]} s_i$ and consider the smallest index $m$ such that $m\in \arg\max_{i\in[k]} s_i$. Consider a constant $\lambda :=\tfrac{1}{2e^2}$ and set $B := \lambda s_{\max}$. We now consider the following two cases.


\textbf{Case 1: $s_{k\cdot c+1} \ge B$.} We choose the price $c+1/k$ for both the players with probability $1$. Note that this is actually a Nash equilibrium as no player can unilaterally deviate and get a higher reward. Now observe that each player receives a utility of $\tfrac12 s_{k\cdot c+1} \ge B/2 = (\lambda/2)\cdot s_{\max}$.

\textbf{Case 2: $s_{k\cdot c+1}<B$.} Let $i_0 := \min\{ i : S_i \ge B \}$. We randomly choose a price $x=i/k$ for both the players with probability $\tau_i-\tau_{i+1}$, where $\tau_i$ is defined as follows:
\[
\tau_i := \Pr[x\geq i/k] =
\begin{cases}
1, & i < i_0, \\
B / S_i, & i_0 \le i \le m, \\
0, & i > m,
\end{cases}
\]

Now we prove that we have indeed constructed a symmetric CCE. Due to symmetry, showing that $\mathbb{E}[u_1(i/k,x)]\leq \mathbb{E}[u_1(x,x)]$ for any $i\in[k]$ suffices, where $x$ is the price randomly chosen for both the players as per the distribution above. First, we have the following:
\[
\mathbb E[u_1(i/k, x)]
\le s_i \tau_i \le B.
\]
We get the last inequality due to the fact that if $i<i_0$, then $s_i<B$ and if $i\geq i_0$ then $s_i\tau_i\leq S_i\cdot \frac{B}{S_i}=B$.

Now, we have the following:
\begin{align*}
    \mathbb E[u_1(x,x)]&\geq \frac{1}{2}\sum_{i=i_0}^{m-1}s_i\cdot \left(\frac{B}{S_i} - \frac{B}{S_{i+1}}\right)\;+\; \frac{1}{2}s_m\cdot \frac{B}{S_m}\\
    &\geq  \frac{B}{2}\sum_{i=i_0}^{m-1}s_i\cdot \left(\frac{1}{S_i} - \frac{1}{S_{i+1}}\right)\;+\; B/2\tag{as $S_m=s_m$}\\
    &\geq B\\
\end{align*}
The last inequality follows from a sequence of nontrivial calculations, which we defer to Appendix~\ref{appendix:omitted-proofs}.
\end{proof}

\textbf{Remark.} Theorem \ref{thm:cce-symmetric} establishes the existence of one equilibrium outcome attainable under no-external-regret learning. It does not imply that an arbitrary instantiation of a no-external-regret learner will exhibit this behavior. Establishing such learner-specific guarantees requires assumptions on the learning dynamics, namely how prices are explored and updated over time, and cannot be derived from the no-external-regret property alone. Dynamics-based analyses under additional behavioral or algorithmic assumptions are an active direction, pursued for example by \citet{arunachaleswaran2024algorithmic} and \citet{bichler2024online}.

We next focus on correlated equilibrium. \citet{jann2015correlated} previously observed in the continuous Bertrand pricing game that, for an arbitrary non-increasing demand function, each player’s expected utility goes to zero. It is therefore natural to expect a similar phenomenon in the discrete setting, and we show that this is indeed the case in the following theorem. 
\begin{theorem}\label{thm:ce-symmetric}
Consider any non-increasing demand function $f:\mathcal{P}\to[0,1]$ and any correlated equilibrium $\mathcal{D}$. For each player $i\in\{1,2\}$, the following holds:
\begin{itemize}
    \item If $c<1$, then $\mathbb{E}_{x\sim\mathcal{D}}\!\left[u_i(x)\right]\in\left[0,\frac{f(c+1/k)}{k}\right]$.
    \item If $c=1$, then $\mathbb{E}_{x\sim\mathcal{D}}\!\left[u_i(x)\right]=0$.
\end{itemize}
In other words, for any non-increasing demand function, any pair of no-swap-regret learners drive prices down to the marginal cost when competing against each other.
\end{theorem}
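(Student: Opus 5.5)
The plan is to pin down the largest transaction price that can occur with positive probability in a correlated equilibrium, using only ``one-step-down'' deviations of the form $p\mapsto p-1/k$ applied to a single recommended price. Once that price is shown to be at most $c+2/k$, I bound each player's utility directly.

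\textbf{Trivial cases.} If $c=1$, every price satisfies $x_i-c\le 0$. Since $c\in\mathcal{P}\cup\{0\}$, this forces $x_i=c=1$, so $u_i\equiv 0$. If $c<1$, utilities are nonnegative, because a player who would earn a negative margin can deviate to a price above $c$ and get at least $0$. So the lower bound is immediate, and I only need the upper bound $f(c+1/k)/k$.

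\textbf{Step 1: the top transaction price.} Let $M$ be the largest price $p>c$ such that, with positive probability under $\mathcal{D}$, the transaction price equals $p$ and the winner earns positive utility. Suppose $M\ge c+2/k$.

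First I show that the event $\{x_1=M,\ x_2>M\}$ has probability zero. Condition on player~2 being recommended some $q>M$. By maximality of $M$, player~2 earns $0$ on this event: either player~1 undercuts, or the transaction price would exceed $M$. Now consider the swap deviation that sends $q$ to $M-1/k$. It earns at least
\[
(M-1/k-c)\,f(M-1/k)\,\Pr[x_1\ge M,\ x_2=q].
\]
If this probability were positive and $f(M)>0$, the CE condition would be violated. Hence $\Pr[x_1\ge M, x_2=q]=0$ for all $q>M$, and symmetrically with the roles of the players exchanged.

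It follows that all transactions at $M$ are ties, say with $b:=\Pr[x_1=x_2=M]>0$. Apply the deviation $M\mapsto M-1/k$ for player~1, conditioned on being recommended $M$. Comparing $(M-c)f(M)\,b/2$ with at least $(M-1/k-c)f(M-1/k)\,b$, and using $f(M-1/k)\ge f(M)$, gives $(M-c)/2\ge M-c-1/k$. This means $M\le c+2/k$.

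\textbf{Step 2: bounding each player's utility.} Only transaction prices $c+1/k$ and $c+2/k$ remain. On recommendations of $c+1/k$, player~$i$ earns at most $(1/k)f(c+1/k)\Pr[x_i=c+1/k]$. On recommendations of $c+2/k$, by Step 1 it earns only through ties, contributing $(2/k)f(c+2/k)\cdot b/2\le (1/k)f(c+1/k)\,b$. Summing over these disjoint events, whose total probability is at most $1$, gives
\[
\mathbb{E}[u_i]\le \frac{f(c+1/k)}{k}.
\]

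\textbf{Main obstacle.} The delicate step is the first part of Step 1: ruling out mass at $x_i=M$ against an opponent priced strictly above $M$. This is exactly where the swap structure of CE is used, since only a recommendation-specific deviation exploits the fact that the high-priced opponent earns zero. Degenerate cases such as $f(M)=0$ need a short separate check; they are absorbed by requiring in the definition of $M$ that the winner's utility be positive.
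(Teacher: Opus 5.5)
Your argument is correct, and it is organized differently from the paper's. The paper anchors on the largest \emph{posted} price $i_1/k$ in the support over both players. Via Lemma~\ref{lem:swap-increase}, it shows that the player posting this price strictly gains by moving to $c+1/k$ or one step down, so every support point has $\max\{x_1,x_2\}\le c+2/k$. A further case analysis then places the support on the diagonal $\{(c,c),(c+1/k,c+1/k),(c+2/k,c+2/k)\}$: first $\min\{x_1,x_2\}\ge c$, then $\min=c$ forces $(c,c)$, then $(c+1/k,c+2/k)$ is excluded. This yields an exact utility formula. You instead anchor on the highest transaction price $M$ at which the winner earns positive utility, and use two different swaps. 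The zero-earning high-priced player's swap $q\mapsto M-1/k$ kills all mass on $\{x_1\ge M,\,x_2>M\}$ and its mirror image, so every trade at $M$ is a tie. The tied winner's one-step-down swap then gives $M\le c+2/k$. Your route does not need to rule out high posted prices that simply lose, and it skips the preliminary $\min\ge c$ and diagonal-structure steps. It also absorbs the degenerate-demand cases (such as $f(c+1/k)=0$, or $f$ vanishing at high prices) into the positivity built into the definition of $M$. The paper's route gives a sharper structural description of every CE. It also packages the key step as a lemma stated for a general opponent profile $x_{-i}$, which it reuses for the $n\ge 3$ and asymmetric-cost results.

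One small repair: your $c=1$ case is misstated. The inequality $x_i-c\le 0$ does not force $x_i=1$, and $u_i$ need not vanish pointwise: a lowest price $x_i<1$ with $f(x_i)>0$ earns a negative amount. The correct one-line argument is that $u_i\le 0$ pointwise, while the constant deviation to price $1$ always earns exactly $0$. The CE condition then gives $\mathbb{E}[u_i]\ge 0$, hence $\mathbb{E}[u_i]=0$. You should also say explicitly that if no positive-utility transaction occurs, then $M$ is undefined and $\mathbb{E}[u_i]\le 0$ holds trivially.
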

\begin{proof}[Proof Sketch]
   At a high level, assuming $f(c+1/k)>0$, we argue that whenever a player chooses a price $p>c+2/k$ and the other player responds with a price at most $p$, then the former player can swap $p$ for a lower price and obtain higher utility. This creates downward pressure on prices, and the theorem follows. For a formal proof, we refer the reader to Appendix~\ref{appendix:omitted-proofs}. 
\end{proof}

Recall that correlated equilibrium is the special case of $\Phi$-correlated equilibrium in which, for both players, $\Phi_i$ contains all deviation maps. Since we showed above that under any CE the players obtain only low utility, it is natural to ask whether this conclusion continues to hold if we weaken the deviation class for one player, allowing that player’s $\Phi_i$ to contain only constant maps while the other player’s $\Phi_i$ still contains all deviation maps. Unfortunately, this is not the case, as we show in the following theorem.
\begin{theorem}\label{thm:phi-symmetric}
Consider marginal cost $0\leq c<1$ such that $1-c$ is a constant independent of $k$. For the constant demand function $f(x)=1$, there exists a $\Phi$-correlated equilibrium $\mathcal{D}$ in which $\Phi_1$ contains all deviation maps and $\Phi_2$ contains only constant deviation maps such that, for each player $i\in\{1,2\}$, the following holds:
\[
\mathbb{E}_{x\sim \mathcal{D}}\!\left[u_i(x)\right]
\geq \lambda_0 \cdot \max_{x\in \mathcal{P}} (x-c)\, f(x),
\]
where $\lambda_0>0$ is an absolute constant independent of $k$.

In other words, there exists a demand function and a pair of learners, one no-external-regret and one no-swap-regret, such that they can sustain high transaction prices while competing against each other.
\end{theorem}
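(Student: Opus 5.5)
The plan is to build $\mathcal{D}$ explicitly. With $f\equiv 1$ the monopoly benchmark is $1-c$, a constant, so it suffices to make both expected utilities at least an absolute constant times $1-c$ (up to $O(1/k)$ rounding slack). The asymmetry of the deviation classes guides the construction. Player~1, who has all deviation maps, must best-respond to the conditional law of $x_2$ given its own recommendation $x_1$. Player~2, who has only constant maps, only needs the marginal of $x_1$ to make every fixed price unattractive. So I will first draw $x_1$ and then draw $x_2$ conditionally on $x_1$.

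\textbf{Step 1 (marginal of $x_1$).} I draw $x_1=p$ from an equal-revenue-type law on prices in $[c+\alpha(1-c),1]$, for a constant $\alpha\in(0,1)$. Concretely, $\Pr[x_1\ge y]\approx \alpha(1-c)/(y-c)$, with an atom at $1$, rounded to the grid. Against this marginal, any constant price $y$ for player~2 earns at most about $(y-c)\Pr[x_1\ge y]\le \alpha(1-c)$, where ties only help by a factor already absorbed by $\Pr[x_1\ge y]$.

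\textbf{Step 2 (conditional of $x_2$ given $x_1=p$).} With probability $1-\beta$ I set $x_2=p+1/k$, so player~1 wins at $p$. With probability $\beta$, $x_2$ lies strictly below $p$ with an equal-revenue tail
\[
\Pr[x_2> y\mid x_1=p]=\frac{(1-\beta)(p-c)}{y-c}\quad\text{for } y\in[c+(1-\beta)(p-c),\,p).
\]
This is calibrated so that for every $y<p$ player~1's payoff from deviating to $y$, namely $(y-c)\Pr[x_2>y\mid p]$ plus a tie term of order $(y-c)\Pr[x_2=y\mid p]$, is at most $(1-\beta)(p-c)$. That quantity is exactly its payoff from obeying $p$. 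Deviating upward to any $y\ge p+1/k$ earns nothing, since $x_2\le p+1/k$ always and ties at $p+1/k$ give half. So every recommendation is a conditional best response, which is precisely the swap-regret (CE-type) condition for player~1.

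\textbf{Step 3 (payoffs and the CCE constraint for player~2).} Player~1's utility is $(1-\beta)\mathbb{E}[x_1-c]$, which is a constant times $1-c$. Player~2 earns $\mathbb{E}[(x_2-c)\mathbf 1\{x_2<x_1\}]$. Under the tail above, with $\beta$ a constant, this is roughly $\beta\cdot\Theta(p-c)$ per $p$: on the event $x_2<p$ we have $x_2-c\ge (1-\beta)(p-c)$. Hence player~2's utility is $\Theta(\beta)\,\mathbb{E}[x_1-c]=\Theta(\beta(1-c))$, using $\mathbb{E}[x_1-c]\gtrsim \alpha(1-c)\log(1/\alpha)$ from Step~1. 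The remaining requirement is $\alpha(1-c)\le$ player~2's equilibrium utility, which holds by choosing $\alpha$ small relative to $\beta$, e.g.\ $\beta=1/2$ and $\alpha$ a small absolute constant. One must check that the logarithmic factor in $\mathbb{E}[x_1]$ makes this consistent; it does, since player~2's utility scales like $\alpha\log(1/\alpha)\beta$, which beats $\alpha$.

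The main obstacle I expect is Step~2 on the discrete grid. I need the conditional tail to be realizable as a probability distribution, meaning it is monotone and its total mass below $p$ equals $\beta$. This forces a relation between $\beta$ and the support interval: the tail at the bottom point must equal the full $\beta$ mass. So the lower endpoint must be chosen with $(1-\beta)(p-c)/(y_{\min}-c)=\beta$, or the tail must be adjusted by an atom there. I also need the tie terms and the $\pm1/k$ rounding to cost only $O(1/k)$, which is negligible because $1-c$ is a constant. I will handle this by checking each grid point $y$ separately and absorbing the rounding into slightly smaller constants $\alpha,\beta$, which yields an absolute $\lambda_0$.
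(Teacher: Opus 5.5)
Your proposal is essentially the paper's construction. In the paper, player~1's marginal is the equal-revenue law $\Pr[x_1\ge c+i/k]=\min(1,M/i)$ with $M=\lfloor (k_0-1)/e^{10}\rfloor$ and its top atom at $c+(k_0-1)/k=1-1/k$. Player~2's best constant price therefore earns at most $M/k$, while $\mathbb{E}[x_1-c]=\frac{M}{k}(1+H_{k_0-1}-H_M)$ carries the logarithmic factor. Given $x_1=c+i/k$, player~2 plays $c+(i+1)/k$ with probability $1/2$ and undercuts with probability $1/2$.

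The one real difference is the undercut law. Player~1's swap constraint at recommendation $p$ only needs undercut mass $\beta$ together with $\Pr[x_2\ge y\mid x_1=p]\le(1-\beta)(p-c)/(y-c)$ for grid $y\le p$. You take this with equality, which maximizes player~2's share. The paper takes $\beta=1/2$ and a single atom at $c+\lfloor i/2\rfloor/k$, the bottom endpoint of your interval. That choice satisfies the same inequality trivially and reduces player~1's check to a short case analysis: deviating to $j\in(\lfloor i/2\rfloor,i)$ gives $j/(2k)$, and deviating to $j\le\lfloor i/2\rfloor$ gives at most $j/k\le i/(2k)$. The cost is a cruder constant ($\lfloor i/2\rfloor\ge i/4$, hence the factor $1/8$ and the choice $e^{-10}$). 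The point mass also removes the realizability and tie issues you flag as the main obstacle.

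Three details in your write-up need repair; all are easy.

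\textbf{(i) Top atom.} The top atom cannot sit at price $1$, because for $p=1$ the prescribed $x_2=p+1/k$ is not a price. Put it at $1-1/k$, as the paper does.

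\textbf{(ii) Tie term in Step~2.} The equilibrium inequalities must hold exactly, so $O(1/k)$ violations cannot be ``absorbed into slightly smaller constants.'' As written, Step~2 sets $\Pr[x_2>y\mid p]=(1-\beta)(p-c)/(y-c)$ on the grid. Then the tie term makes deviating to $y$ earn strictly more than $(1-\beta)(p-c)$, which is the payoff from obeying. Calibrate the weak tail instead, $\Pr[x_2\ge y\mid p]=\min\{1,(1-\beta)(p-c)/(y-c)\}$ for grid $y\le p$, exactly as you do for player~1's marginal in Step~1 and as the paper does via $S_i=\min(1,M/i)$. Deviating to $y$ then earns at most $(y-c)\Pr[x_2\ge y\mid p]\le(1-\beta)(p-c)$ with no slack needed. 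Deviating to $p+1/k$ earns $\tfrac12(1-\beta)(p+1/k-c)\le(1-\beta)(p-c)$.

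\textbf{(iii) The relation for $y_{\min}$.} The relation $(1-\beta)(p-c)/(y_{\min}-c)=\beta$ in your last paragraph is wrong and contradicts your own Step~2. Your tail already rises from $1-\beta$ just below $p$ to $1$ at $y_{\min}=c+(1-\beta)(p-c)$, so the undercut mass is automatically $\beta$. Imposing that relation with $\beta=1/2$ would force $y_{\min}=p$.

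With these repairs, Step~3 goes through as you describe. You need $\beta(1-\beta)(1+\ln(1/\alpha))\ge 1$ up to $O(1/k)$ rounding, e.g.\ $\beta=1/2$ and $\alpha$ somewhat below $e^{-3}$. The case of bounded $k_0=k(1-c)$ is covered by the profile $(c+1/k,c+1/k)$.
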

\begin{proof}[Proof Sketch]
    Let $k_0=k-ck$. Recall that $1-c$ is a constant independent of $k$. Now we describe a $\Phi$-correlated equilibrium $\mathcal{D}$ in which $\Phi_1$ contains all deviation maps and $\Phi_2$ contains only constant deviation maps. We begin by describing the marginal distribution $\mathcal{D}_1$ of the prices chosen by the player $1$. Fix an integer $M:=\Big\lfloor \lambda_1(k_0-1)\Big\rfloor$, where $\lambda_1$ is some small positive constant independent of $k$.

We now define $p_j$, the probability that the player $1$ chooses the price $c+j/k$, as follows:
$$
p_j=\begin{cases}
0& \text{if }j<M\text{ or }j=k_0 ,\\[2pt]
\frac{1}{M+1}& \text{if }j=M,\\[6pt]
\displaystyle \frac{M}{j(j+1)}&\text{if } M<j<k_0-1,\\[10pt]
\displaystyle \frac{M}{k_0-1}&\text{if } j=k_0-1.\\
\end{cases}
$$
The above values are nonnegative and sum to $1$.

Using nontrivial calculations involving properties of harmonic numbers, we can show that $\mathcal{D}$ is indeed a $\Phi$-correlated equilibrium. The same calculations also establish the utility guarantee in the theorem statement. For details, we refer the reader to Appendix~\ref{appendix:omitted-proofs}.

\end{proof}
The previous result raises an important question: in an oligopoly with $n\ge 3$ players, is it sufficient that a $\Phi$-correlated equilibrium has just two players whose $\Phi_i$ contain all deviation maps to ensure that all players’ expected utilities remain low, mirroring our duopoly result for CE? We answer this question in the affirmative in the following theorem.
\begin{theorem}\label{thm:phi-oligopoly}
Consider any non-increasing demand function $f:\mathcal{P}\to[0,1]$ and any $n\ge 3$. Let $\mathcal{D}$ be a $\Phi$-correlated equilibrium such that at least two players have $\Phi_i$ containing all deviation maps. Then, for each player $i\in[n]$, the following holds:
\begin{itemize}
    \item If $c<1$, then $\mathbb{E}_{x\sim\mathcal{D}}\!\left[u_i(x)\right]\leq\frac{f(c+1/k)}{k}$.
    \item If $c=1$, then $\mathbb{E}_{x\sim\mathcal{D}}\!\left[u_i(x)\right]\leq 0$.
\end{itemize}
 In other words, for any non-increasing demand function, a pair of no-swap-regret learners suffices to drive the transaction price down to marginal cost, or even below it.
\end{theorem}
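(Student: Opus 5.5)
The plan is to argue by contradiction about the largest transaction price that carries positive profit, using only the deviations available to the two players whose $\Phi_i$ contain all deviation maps. Call these players $1$ and $2$. Let $m(x):=\min_j x_j$ denote the transaction price. Define $M$ as the largest price in the support of $m(x)$ under $\mathcal{D}$ such that $f(M)>0$ and $M\ge c+2/k$. If no such $M$ exists, then almost surely every profile has either zero demand at the transaction price or $m(x)\le c+1/k$. The winner's utility is then at most $(m(x)-c)f(m(x))$ with $m(x)-c\le 1/k$, and $f$ is non-increasing, so each player's expected utility is at most $f(c+1/k)/k$. When $c=1$ every price is $\le c$, so every utility is $\le 0$. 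So it suffices to show that such an $M$ cannot exist in a $\Phi$-correlated equilibrium of this kind.

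\textbf{Step 1 (clearing out a player above $M$).} Consider the deviation $\phi_2$ that maps every price $b>M$ to $M-1/k$ and fixes all other prices. On profiles with $x_2=b>M$, player $2$ earns either $0$ (if $m(x)<b$), or $(b-c)f(b)/|\cdot|$ with $b>M$. By the definition of $M$, the latter is $0$ unless $f(b)>0$, and in that case $b$ lies outside the support of $m$. In either case the original utility is $0$. After deviating, player $2$ earns $\ge 0$ whenever $m(x_{-2})\ge M-1/k$. The only care needed is that $M-1/k>c$, so undercutting never yields a negative payoff. The payoff is strictly positive, namely $(M-1/k-c)f(M-1/k)\ge (M-c-1/k)f(M)>0$, on the event $E_2:=\{x_2>M,\ m(x)=M\}$. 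The equilibrium inequality for $\phi_2\in\Phi_2$ therefore forces $\Pr[E_2]=0$. Symmetrically, using player $1$'s swap, $\Pr[x_1>M,\ m(x)=M]=0$.

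\textbf{Step 2 (tie-breaking deviation).} By Step 1, almost surely every profile with $m(x)=M$ has $x_1=x_2=M$, so at least two players tie at the minimum. Now let player $1$ swap only the price $M$ to $M-1/k$. On profiles with $x_1=M$, there are two cases:
\begin{itemize}
\item If $m(x)<M$, player $1$ earned $0$ before the swap and earns $\ge 0$ after it.
\item If $m(x)=M$, the event $x_2>M$ has probability $0$ by Step 1, so $x_2=M$. Player $1$ earned at most $\tfrac12(M-c)f(M)$ and now earns $(M-1/k-c)f(M-1/k)$.
\end{itemize}
The inequality $(M-1/k-c)f(M-1/k)\ge (M-c-1/k)f(M)>\tfrac12(M-c)f(M)$ is strict precisely because $M-c\ge 2/k$. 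Since $M$ is in the support of $m$, the event $\{m(x)=M\}$ has positive probability, so this swap strictly increases player $1$'s expected utility. This contradicts $\mathcal{D}$ being a $\Phi$-correlated equilibrium.

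The main obstacle is the bookkeeping in Step 1. One must make sure that remapping all of player $2$'s prices above $M$ never loses utility, including at prices above $M$ where $f=0$ or that lie off the support of $m$. One must also make sure that undercutting to $M-1/k$ still has positive margin. The definition of $M$ (largest positive-profit transaction price with $M\ge c+2/k$) is chosen exactly so that both checks go through. It also uses that players $3,\dots,n$ need no deviation power at all. This is why two no-swap-regret players suffice for any $n\ge 3$, in line with the duopoly argument of Theorem~\ref{thm:ce-symmetric}.
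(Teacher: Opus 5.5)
\paragraph{Gap: Step~2 fails at $M=c+2/k$.}
The strict inequality you need is $(M-c-\tfrac1k)f(M-\tfrac1k) > \tfrac12(M-c)f(M)$. It follows from $M-c-\tfrac1k>\tfrac12(M-c)$, and that holds iff $M-c>2/k$, not $M-c\ge 2/k$.

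At $M=c+2/k$ the comparison is $\tfrac1k f(c+\tfrac1k)$ versus $\tfrac1k f(c+\tfrac2k)$. This is an equality for constant demand. Your bound $\tfrac12(M-c)f(M)$ on player~1's payoff is also attained exactly when only players~1 and~2 tie. So no contradiction arises.

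In fact the reduction ``it suffices to show that such an $M$ cannot exist'' is false. Take $n=3$, $c=0$, $f\equiv 1$, $k\ge 3$, and let $\Phi_3$ contain only the identity. The hypothesis allows this, and you explicitly intend to use no deviation power for players $3,\dots,n$. Let $\mathcal{D}$ be the point mass on $(2/k,2/k,1)$. Then:
\begin{itemize}
\item Players~1 and~2 each earn $1/k$.
\item Swapping $2/k$ to $1/k$ also earns exactly $1/k$.
\item Any price above $2/k$ earns $0$.
\end{itemize}
So $\mathcal{D}$ is a $\Phi$-correlated equilibrium in which your $M=c+2/k$ exists, and the theorem's bound holds with equality.

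\paragraph{Repair.}
Define $M$ with $M\ge c+3/k$ instead. Steps~1 and~2 then go through verbatim and exclude such an $M$.

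The base case now permits transaction price $c+2/k$ with $f(c+2/k)>0$, and this case must be bounded rather than excluded. All positive-profit transaction prices are now at most $c+2/k$, so run your Step~1 at level $c+2/k$. This works because undercutting to $c+1/k$ has margin $1/k$ and $f(c+1/k)\ge f(c+2/k)>0$. It gives $x_1=x_2=c+2/k$ almost surely on $\{m(x)=c+2/k\}$.

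Hence every player earns at most $\tfrac12\cdot\tfrac2k f(c+2/k)\le f(c+1/k)/k$ on $\{m(x)=c+2/k\}$. On $\{m(x)=c+1/k\}$ every player earns at most $f(c+1/k)/k$. Everything else contributes at most $0$.

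This two-stage ending is exactly how the paper concludes: it first shows the transaction price is at most $c+2/k$, then uses the forced tie of the two swap players at $c+2/k$. With this change your argument is essentially the paper's. Your Step~1, which clears out a swap player sitting strictly above the top transaction price, spells out a case the paper's WLOG application of Lemma~\ref{lem:swap-increase} passes over.
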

The proof of the above theorem follows a similar line of reasoning to that of Theorem~\ref{thm:ce-symmetric}, and the details are provided in Appendix~\ref{appendix:omitted-proofs}.

Finally, we return to CCE and ask a central question: as the number of players $n$ grows, does the maximum achievable sum of players’ expected utilities over all CCE remain a constant fraction of the maximum monopoly utility? If not, how does this quantity decay with $n$? We answer these questions in the following theorem.
\begin{theorem}\label{thm:decay}
    Consider a bertrand game with $n\geq 2$ players and $k\geq 5$. Then in any CCE, the total sum of expected utilities across all the players is at most \[\frac{4f(c+1/k)}{k}+n(1-c)f(c+1/k)e^{1-n/2}.\]
In other words, for any non-increasing demand function, the highest utility that no-external-regret learners can attain when competing with one another decays exponentially as the number of firms increases.
\end{theorem}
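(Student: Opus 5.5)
The plan is to adapt the averaging argument of \citet{feldman2016correlated}, as used by \citet{collina2025breaking} for constant demand. I will sum the CCE constraints over all $n$ players, turn the result into a recursive inequality on the tail of the transaction price $X:=\min_j x_j$, and read exponential decay off that recursion. Throughout, write $g(p):=(p-c)f(p)$, $F:=f(c+1/k)$, and $W:=\sum_i \mathbb{E}_{\mathcal{D}}[u_i(x)]$.

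\textbf{Step 1 (total welfare and deviations).} Only players at the minimum earn, and they split the demand, so $W=\mathbb{E}[g(X)]$. Since $f$ is non-increasing, $g(p)\le (p-c)F$ for every $p\ge c+1/k$. Hence
\[
W\le \frac{F}{k}\sum_{j\ge 1}\Pr[X\ge c+j/k],
\]
so it suffices to bound the tail probabilities $\Pr[X\ge c+j/k]$.

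Next consider player $i$ deviating to a constant price $p$. On the event $\{x_{-i}>p\}$ the deviator gets $g(p)$. On the event $\{\min_{-i}x=p\}$ it ties with $N_p(x)$ other players and gets $g(p)/(N_p+1)$. I will sum the CCE inequalities $\mathbb{E}[u_i]\ge \mathbb{E}[u_i(p,x_{-i})]$ over $i\in[n]$. Every profile with $X>p$ contributes $n\,g(p)$ to the right-hand side, and profiles with $X=p$ contribute a tie-split term. This gives
\[
W\;\ge\; n\,g(p)\Pr[X>p]\;+\;(\text{tie term at }p).
\]

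\textbf{Step 2 (the recursion — main obstacle).} The bound from Step 1 alone only gives $\Pr[X>p]\le W/(n g(p))$. Summing that over price levels loses a harmonic factor and does not yield exponential decay. To get the $e^{-n/2}$ behaviour, I will instead compare the two sides of the CCE restricted above a threshold. Let $T(p):=\mathbb{E}[g(X)\mathbf{1}\{X>p\}]$ denote the welfare generated at transaction prices above $p$. Then $W\le T(p)+\sup_{q\le p}g(q)$.

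The deviation at $p$ collects $g(p)$ on the whole event $\{X>p\}$, and that event is where $T(p)$ lives. The aim is therefore an inequality of the form
\[
n\, g(p)\Pr[X>p]\le T(p)+(\text{lower-price welfare}),
\]
which after a change of variables becomes a discrete Gr\"onwall-type inequality: $T$ contracts by a factor of roughly $(1-\tfrac{1}{n})$ per unit step in $\log g$. I expect this to be the hardest part, in two respects. First, the tie-splitting term at $p$ must be handled so that it costs only a factor $2$; this is where $n/2$ rather than $n$ appears in the exponent, and it is also why $k\ge 5$ is needed. Second, I must control the contribution of low prices $c+1/k$ and $c+2/k$, where the discretization bites. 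My first attempt will be to apply the deviation at $p$ and at $p+1/k$ and average the two.

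\textbf{Step 3 (assembling).} Solving the recursion should give $\Pr[X\ge c+j/k]\le e^{1-n/2}$ for all $j$ beyond a constant number of lowest levels. Those lowest levels together contribute at most $4F/k$, since $g(c+j/k)\le jF/k$ for $j\le 2$ and the probabilities are at most $1$. Summing the tail over the remaining at most $k(1-c)$ price levels, each weighted by at most $n F/k$ coming from the per-player deviation bounds, gives the second term $n(1-c)F e^{1-n/2}$. Together these give the stated bound.
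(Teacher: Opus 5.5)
There is a genuine gap, and it sits exactly where you leave the direct argument. In Step~2 you assert that the per-level bound $\Pr[X>p]\le W/(n\,g(p))$, summed over price levels, ``loses a harmonic factor and does not yield exponential decay.'' That is false, and this summation is in fact the paper's whole proof. What you are missing is that the bound is \emph{self-referential}: its right-hand side contains $W$ itself.

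Concretely, take a player $j$ and a price $c+a$ with $a=i/k$, and write $X_{-j}=\min_{l\neq j}x_l$. Dropping ties, the CCE constraint gives $U_j\ge a\,f(c+a)\Pr[X_{-j}>c+a]$. Since $\{X>c+a\}\subseteq\{X_{-j}>c+a\}$, averaging over $j$ yields $f(c+a)\Pr[X>c+a]\le\min\{f(c+a),\,W/(na)\}$. Now write $W=\mathbb{E}[g(X)]\le\frac1k\sum_{i=0}^{k_0-1}\Pr[X>c+i/k]\,f(c+(i+1)/k)$ with $k_0=k(1-c)$, and sum the minima. With $\lambda:=W/n$ this gives
\[
W\;\le\;\frac{2F}{k}+\lambda\Big(1+\ln\frac{(1-c)F}{\lambda}\Big)=\frac{2F}{k}+\frac{W}{n}\Big(1+\ln\frac{n(1-c)F}{W}\Big).
\]
The harmonic factor is exactly the $\ln(1/W)$ term, and it is harmless. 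If $W>4F/k$, the left side minus $2F/k$ exceeds $W/2$. Dividing by $W/n$ then gives $\frac n2\le 1+\ln\frac{n(1-c)F}{W}$, i.e.\ $W\le n(1-c)Fe^{1-n/2}$.

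So the exponent $n/2$ comes from absorbing the additive discretization term $2F/k$ into $W/2$, not from tie-splitting: ties are simply discarded. No Gr\"onwall recursion is needed. You should also keep $f$ inside the level sum as above. Your Step~1 replaces $f(X)$ by $F$ and then divides by $g(p)=(p-c)f(p)$, which leaves an uncontrolled ratio $F/f(p)$ when $f$ is not constant.

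Beyond dismissing the route that works, your replacement is never established, and its intermediate target is false. Step~2 never actually derives the inequality $n\,g(p)\Pr[X>p]\le T(p)+\cdots$. Step~3 claims $\Pr[X\ge c+j/k]\le e^{1-n/2}$ for all but a constant number of levels.

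Here is a counterexample. Take $c=0$, $f\equiv 1$, $n\ge 3$, and let all players post a common price $x$ with $\Pr[x\ge i/k]=\min\{1,Bk/i\}$, where $B=e^{-n}$.
\begin{itemize}
\item A deviation to $i/k$ earns at most $\frac ik\Pr[x\ge i/k]\le B$.
\item Each player earns $\mathbb{E}[x]/n\approx\frac{B}{n}(1+\ln\frac1B)=\frac{(1+n)B}{n}>B$ for large $k$.
\end{itemize}
So this is a CCE. Yet $\Pr[X\ge j/k]=1$ for all $j\le e^{-n}k$, a number of levels that grows with $k$.

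Finally, your assembly uses a weight of $nF/k$ per level. That does not follow from Step~1, whose weight is $F/k$. In the correct proof, the factor $n$ enters only through $\lambda=W/n$ inside the logarithm.
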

\begin{proof}[Proof Sketch]
    Let us extend the demand function $f$ to include zero by setting $f(0)=f(1/k)$. Consider a joint distribution $\mathcal{D}$ which is a CCE of the bertrand game with $n$ players. Let $X=(X_1,X_2,\ldots,X_n)$ denote the random price tuple drawn from the $\mathcal{D}$ where $X_i$ denote the price set by the player $i$. Let $u_i(X)$ denote the payoff of player $i$.  Note that $\sum_{j=1}^n u_j(X) = (\min_{j\in[n]} X_j-c)\cdot f(\min_{j\in[n]} X_j)$. Let $U_i := \mathbb{E}[u_i(X)]$ for all $i\in[n]$ and $W := \mathbb{E}[(\min_j X_j-c)\cdot f(\min_{j\in[n]} X_j)]=\sum_{j=1}^n U_j$. 

Consider a player $i$ and a price $c+a\in\mathcal{P}$. Let $X_{-i}:=\min_{j\neq i} X_j$. Observe that $\mathbb{E}[u_i(c+a,X_{-i})]\geq a\cdot \Pr(X_{-i}>c+a)\cdot f(c+a)$. As the distribution $\mathcal{D}$ is a CCE, we have $U_i\geq\mathbb{E}[u_i(c+a,X_{-i})]$. Therefore, we have $\Pr(X_{-i}>c+a)\cdot f(c+a)\leq \min\{f(c+a),\frac{U_i}{a}\}$. Now we have the following:
\begin{align*}
    &\Pr\left(\min_{j\in[n]}X_j>c+a\right)\cdot f(c+a)\\
    &\leq \frac{1}{n}\sum_{j=1}^n\Pr(X_{-j}>c+a)\cdot f(c+a)\\
    &\leq\frac{1}{n}\sum_{j=1}^n\min\left\{f(c+a),\frac{U_j}{a}\right\}\\
    &\leq\min\left\{f(c+a),\frac{W}{na}\right\}
\end{align*}

Let $\lambda:=W/n$ and $k_0=k-ck$. Using the above inequality and some nontrivial calculations, we can show the following:
\begin{align*}
     W&=\mathbb{E}[(\min_{j\in[n]}X_j-c)\cdot f(\min_{j\in[n]} X_j)]\\
     &\leq \frac{1}{k}\left(f(c+1/k)+\sum_{i=1}^{k_0-1}\min\left\{f(c+i/k),\frac{\lambda k}{i}\right\}\right)\\
    &\leq \frac{2f(c+1/k)}{k}+\lambda+\lambda\ln\left(\frac{(1-c)f(c+1/k)}{\lambda}\right)\\
\end{align*}

Using the above inequality, and analyzing the case $W>4f(c+1/k)/k$ with some straightforward calculations, we can prove the theorem statement. For more details, we refer the reader to Appendix \ref{appendix:omitted-proofs}. 
\end{proof}
\subsection{Asymmetric costs setting}\label{sec:asymmetric-costs}
We now consider the Bertrand duopoly and extend our results from the symmetric-cost setting to the asymmetric-cost setting, where player~1 has a lower marginal cost than player~2, and the gaps $c_2 - c_1$ and $1 - c_2$ are fixed positive constants, independent of $k$ and much larger than $1/k$. Recall that in the symmetric-cost setting, for any non-increasing demand function there exists a CCE under which both players obtain a constant fraction of the maximum monopoly utility. In contrast, analogous guarantees in the asymmetric-cost setting have been largely absent, even for special cases such as linear demand. This raises an important question: does there exist a CCE in the asymmetric-cost setting under which both players obtain a constant fraction of their respective maximum monopoly utilities? We answer this question in the following theorem, the formal statement of which is provided in Appendix \ref{appendix:omitted-proofs}.
\begin{theorem}[Informal]\label{thm:cce-asymmetric}
For a broad class of demand functions, including constant, linear, quadratic, and exponential demand, there exists a CCE in which both players obtain a constant fraction of their respective monopoly-optimal utilities. In other words, there exist two no-external-regret learners that can sustain high transaction prices while competing against each other under such demand functions.
\end{theorem}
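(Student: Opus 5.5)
The plan is to adapt the symmetric construction of Theorem~\ref{thm:cce-symmetric} to unequal costs. In the asymmetric duopoly the obstruction is player~1's deviations, since player~1 has the lower cost. I will again build a CCE supported on the diagonal, with both players posting the same price $x$. This keeps player~2's utility positive exactly when $x>c_2$. Write $s^{(i)}_j:=(j/k-c_i)f(j/k)$, and let $m$ be the smallest index maximizing player~2's monopoly profit $s^{(2)}_j$. The diagonal distribution is described by its tail $\tau_j=\Pr[x\ge j/k]$. Deviation utilities are monotone-friendly here: if player~$i$ deviates to $j/k$, it gets at most $s^{(i)}_j\tau_j$ (exactly $s^{(i)}_j\tau_{j+1}$ plus a tie term). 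So the CCE constraints reduce to
\[
\sup_j s^{(i)}_j\tau_j\le \mathbb{E}[u_i(x,x)]=\tfrac12\sum_j s^{(i)}_j(\tau_j-\tau_{j+1}), \qquad i=1,2.
\]

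Following the proof sketch of Theorem~\ref{thm:cce-symmetric}, I would take the support to be $\{j:\ j/k\ge p_0\}$ up to $m$, where $p_0$ is a threshold with $p_0-c_2$ a constant. On this support I set $\tau_j=B/S^{(1)}_j$, with $S^{(1)}_j$ the running maximum of player~1's profits $s^{(1)}$ and $B=\lambda\, S^{(1)}_m$. Below $p_0$ I set $\tau_j=1$, which is legitimate provided $s^{(1)}_j\le B$ for $j/k<p_0$; this forces the choice of $p_0$. With this choice player~1's deviation constraint is exactly the one already verified in Theorem~\ref{thm:cce-symmetric}, using the same telescoping/log estimate $\sum s_j(1/S_j-1/S_{j+1})\gtrsim \ln(S_m/B)$. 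Player~1 then earns $\Omega(B)$, a constant fraction of its monopoly utility.

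The new step is player~2's constraint, and I expect it to be the main obstacle. Player~2's deviation bound is $s^{(2)}_j\tau_j\le s^{(2)}_j B/S^{(1)}_j\le B\cdot s^{(2)}_j/s^{(1)}_j$. Its equilibrium utility is $\tfrac12\sum_j s^{(2)}_j(\tau_j-\tau_{j+1})$, which is at least the player~1 expression times $\min_{j\ge p_0 k}(j/k-c_2)/(j/k-c_1)$. Both ratios are constants once $p_0-c_2$ and $c_2-c_1$ are constants. So the comparison goes through provided the log factor $\ln(S_m/B)$ beats the constant ratio $\max_j (j/k-c_1)/(j/k-c_2)$ on the support. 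This is where the restriction to a ``broad class'' of demands enters. I would require that player~2's monopoly price be bounded away from $c_2$, and that $f$ not be so flat near $c_2$ that player~1's profit already exceeds $B$ at $p_0$.

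For the demand classes listed (constant, linear, quadratic, exponential) I would verify these conditions explicitly and tune $\lambda$ (much smaller than $1/(2e^2)$) accordingly. A small $\lambda$ makes the logarithmic gain large enough to absorb the cost-ratio factor. It also means $p_0$ lies above $c_2$ by a constant, because the profit $s^{(1)}$ near $c_2$ is a constant fraction of $S^{(1)}_m$. This is fine as long as $p_0\le$ player~2's monopoly price. The conclusion is that both expected utilities are $\Omega(1)$ times the respective monopoly optima. By the equivalence recalled in Appendix~\ref{appendix:extra-learner}, such a CCE is realized by two no-external-regret learners.
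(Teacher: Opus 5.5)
There is a genuine gap: the two requirements you place on the diagonal conflict in exactly the regime the statement covers.

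\textbf{Why your construction fails.} Your support starts at $p_0$ with $p_0-c_2$ a positive constant. You also set $\tau_j=1$ below $p_0$, which requires $s^{(1)}_j\le B=\lambda S^{(1)}_m\le\lambda s^{(1)}_{\max}$ for every $j/k<p_0$. In particular it requires $(c_2-c_1)f(c_2)\le\lambda s^{(1)}_{\max}$.

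Your sentence that a small $\lambda$ ``means $p_0$ lies above $c_2$ by a constant, because the profit $s^{(1)}$ near $c_2$ is a constant fraction of $S^{(1)}_m$'' has the direction reversed. If $s^{(1)}$ at $c_2$ equals $\mu\, s^{(1)}_{\max}$ for a constant $\mu$ and $\lambda<\mu$, the requirement forces $p_0<c_2$.

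Concretely, take $f\equiv 1$, $c_1=0$, $c_2=1/2$. This is inside the class, since both gaps are constants. Here you would need $\lambda\ge 1/2$, while your logarithmic estimate needs $\ln(S^{(1)}_m/B)\gtrsim 1$, i.e.\ $\lambda\lesssim 1/e$. Put differently, the log gain available on $[p_0,m]$ is at most $\ln\bigl(s^{(1)}_{\max}/s^{(1)}(c_2)\bigr)$. That is small precisely when player~1 already earns a lot at $c_2$. For $f(x)=1-x$, $c_1=0$, $c_2\ge 1/2$ it is zero, because $S^{(1)}$ is flat beyond $1/2$.

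This is not a tuning issue; it is a limitation of purely diagonal play. Take $f\equiv 1$, $c_1=0$. Player~1's constraints give $q\Pr[x>q]\le U_1=\tfrac12\mathbb{E}[x]$. Combined with $\mathbb{E}[x]=\int_0^1\Pr[x>q]\,dq$, this yields $U_1\le 1/e+o(1)$. Deviating to just below the bottom of the support then shows that bottom is at most about $1/e$. So for $c_2>1/e$, no diagonal CCE is supported above $c_2$. On the diagonal player~1 always splits the market, while undercutting to just below $c_2$ wins it outright.

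\textbf{The missing idea.} The paper uses an off-diagonal block in which player~1 wins outright below $c_2$.

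Take $x_0<c_2-1/k$ to be the smallest price with $(x_0-c_1)f(x_0)\ge\frac{\lambda_1}{8e^2}s^{(1)}_{\max}$. This is where the formal hypotheses enter: player~1 can earn a constant fraction of its monopoly profit below $c_2$, and $f(c_1+1/k)/k$ is small. The doubling bound then gives $\lambda_0:=(x_0-c_1)f(x_0)/s^{(1)}_{\max}\le\frac{\lambda_1}{4e^2}$.

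Put mass $1-\lambda_0$ on the profile $(x_0,x_0+1/k)$. Put the remaining mass $\lambda_0$ on the diagonal construction of Theorem~\ref{thm:cce-symmetric}, built from player~2's profits $s^{(2)}$ rather than $s^{(1)}$.

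Player~2's constraint reduces to the symmetric one on the small block. In the big block, any winning or tying deviation is at a price at most $x_0<c_2$, so it is worth nothing. This gives player~2 a constant fraction of $s^{(2)}_{\max}$.

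Player~1 collects $(1-\lambda_0)\lambda_0 s^{(1)}_{\max}$ on path. It also gets at least player~2's diagonal payoff, which is about $\lambda_0\frac{\lambda_1}{4e^2}s^{(1)}_{\max}$ using $s^{(2)}_{\max}\ge\lambda_1 s^{(1)}_{\max}$. By contrast, undercutting is worth about $\lambda_0 s^{(1)}_{\max}$ by minimality of $x_0$. Pricing above $x_0+1/k$ forfeits the big block and is worth at most $\lambda_0 s^{(1)}_{\max}$. The constants are chosen so that the diagonal bonus covers the difference.
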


We next focus on correlated equilibrium. For asymmetric-cost setting, \citet{jann2015correlated} previously observed in the continuous Bertrand pricing game that, for an arbitrary non-increasing demand function, the second player which has the higher marginal cost, its expected utility is zero. We show that this conclusion continues to hold in the discrete setting in the following theorem.
\begin{theorem}\label{thm:ce-asymmetric}
Consider any non-increasing demand function $f:\mathcal{P}\to[0,1]$, marginal costs $c_1<c_2$ such that $c_2-c_1>1/k$ and any correlated equilibrium $\mathcal{D}$. Then we have the following:
\[
\mathbb{E}_{x\sim\mathcal{D}}\!\left[u_2(x)\right]=0
\]
In other words, for any non-increasing demand function, any pair of no-swap-regret learners drive player~2’s utility to zero when competing against each other. 
\end{theorem}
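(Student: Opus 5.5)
The plan is to prove $\mathbb{E}[u_2(x)]\ge 0$ and $\mathbb{E}[u_2(x)]\le 0$ separately. Player~2 earns strictly positive utility only on the event $E:=\{c_2<x_2\le x_1\}$, so the upper bound reduces to showing $\Pr_{\mathcal{D}}[E]=0$.

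\textbf{Step 1: $\mathbb{E}[u_2]\ge 0$.} For each recommended price $p$ of player~2, consider the swap $p\mapsto 1$. At price $1$ player~2 earns either $0$ or $(1-c_2)f(1)/|\{j:x_j=1\}|\ge 0$. The CE constraint then says that, conditional on $x_2=p$, player~2's expected utility is nonnegative. Summing over $p$ gives $\mathbb{E}[u_2]\ge 0$. As a by-product, the negative contributions from recommendations $x_2\le c_2$ are controlled, but we only need the global statement.

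\textbf{Step 2: $\Pr[E]=0$ by a maximal-price argument.} Suppose $\Pr[E]>0$. Let $L$ be the largest price such that $\Pr[x_1=L,\ c_2<x_2\le L]>0$; in particular $L\ge c_2+1/k$. Condition on player~1 being recommended $L$ and compare with the swap $L\mapsto L-1/k$. These are the only changes:
\begin{itemize}
\item On $\{x_2>L\}$ player~1 loses at most $f(L)/k$ in margin while keeping full demand, since $f(L-1/k)\ge f(L)$.
\item On $\{x_2=L\}$ player~1 gains because a tie becomes an outright win: it goes from $\tfrac12(L-c_1)f(L)$ to at least $(L-1/k-c_1)f(L)$.
\item On $\{x_2=L-1/k\}$ player~1 moves from $0$ to a tie.
\item On $\{x_2<L-1/k\}$ nothing changes.
\end{itemize}
The hypothesis $c_2-c_1>1/k$ gives $L-c_1>2/k$, so the tie-breaking gain strictly dominates the per-unit loss $1/k$. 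The remaining difficulty is the mass on $\{x_2>L\}$. To handle it, I would use the maximality of $L$ together with player~2's swap constraint. Conditional on a recommendation $x_2=p>L$, player~2 wins only on $\{x_1\ge p\}$, and by maximality of $L$ those profiles carry no mass in $E$, so player~2 earns $0$ there. Player~2 could, however, deviate to a price in $(c_2,L]$ and undercut or tie player~1 on $\{x_1=L\}$. The CE constraint for that deviation should force $\Pr[x_1=L,\ x_2>L]$ to be small relative to $\Pr[x_1=L,\ x_2\le L]$, or zero, depending on $f$. I would then try to close the argument by choosing instead a deviation of player~1 to $c_2$ or to $x_2$'s ``level'' rather than $L-1/k$. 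Undercutting everything above $c_2$ yields margin $c_2-c_1>1/k$ with full demand $f(c_2)\ge f(L)$, and this exceeds the tie payoff when $L$ is near $c_2$.

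\textbf{Main obstacle.} I expect the main obstacle to be Step~2: making the single-step downward-pressure comparison strict when mass sits on $\{x_1=L,\ x_2>L\}$. If the direct comparison fails, the fallback is an induction from the top price down. I would show successively that for each $q$ from $1$ down to $c_2+1/k$, the profiles with $\min(x_1,x_2)=q\ge x_2>c_2$ have zero mass. At each level I would use player~1's swap $q\mapsto q-1/k$ together with player~2's swap $x_2\mapsto q-1/k$, mirroring the argument of Theorem~\ref{thm:ce-symmetric}. The asymmetry $c_2-c_1>1/k$ is what guarantees that player~1's undercutting margin stays positive all the way down to $c_2$. Combining $\Pr[E]=0$ with Step~1 then gives $\mathbb{E}[u_2]=0$.
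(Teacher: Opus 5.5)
\textbf{Gap: Step~2 is unfinished.} Step~1, the reduction to $E$, the extremal price $L$ and the deviations you list are the right ingredients. But Step~2 stops at the decisive point, so as written it is not yet a proof.

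You correctly observe that for $p>L$, conditional on $x_2=p$, player~2 earns exactly $0$ by maximality of $L$. The conclusion you need is not that $\Pr[x_1=L,\,x_2>L]$ is ``small relative to'' $\Pr[x_1=L,\,x_2\le L]$, but that it is \emph{zero}.
\begin{itemize}
\item Consider the swap $p\mapsto c_2+1/k$. It gives $u_2(x_1,c_2+1/k)\ge 0$ for every $x_1$.
\item It also gives $u_2(L,c_2+1/k)\ge f(c_2+1/k)/(2k)>0$.
\item So any mass on $(L,p)$ makes this swap strictly profitable for a recommendation that currently earns $0$.
\item No quantitative comparison is needed, because the constraint is conditioned on player~2's recommendation.
\end{itemize}

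Once $\{x_1=L,\,x_2>L\}$ is null, finish with two cases; no downward induction is needed.
\begin{itemize}
\item If $\Pr[x_2=L\mid x_1=L]>0$ and $f(L)>0$, your tie computation makes $L\mapsto L-1/k$ strictly profitable, using $L-c_1\ge 3/k$. The only other affected event, $\{x_2=L-1/k\}$, changes by a nonnegative amount.
\item Otherwise player~1 earns $0$ at $L$. The swap $L\mapsto c_2$ then earns $(c_2-c_1)f(c_2)>0$ on $\{c_2<x_2\le L\}$, which has positive conditional mass by the choice of $L$.
\end{itemize}

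\textbf{Degenerate demand.} Your claim that the upper bound reduces to $\Pr[E]=0$ is false when $f(c_2+1/k)=0$. For example, if $f\equiv 0$, every distribution is a CE and $E$ can carry mass. Dispose of that case first: there $u_2\le 0$ pointwise. Note also that $f(c_2+1/k)>0$ gives $f(c_2)>0$, which the swap to $c_2$ needs.

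\textbf{Comparison with the paper.} With these completions your route differs from the paper's.
\begin{itemize}
\item The paper first shows $x_1\le c_2+2/k$ on the support. It applies its top-price lemma to whichever player has the larger maximal support price.
\item It then rules out $x_1\in\{c_2+1/k,c_2+2/k\}$ by explicit deviations of player~1 to $c_2$ (or $c_2+1/k$), treating $c_2=c_1+2/k$ and $c_2\ge c_1+3/k$ separately.
\item Your choice of the extremal price inside $E$ is more direct and avoids both the intermediate support bound and that case split.
\end{itemize}
Your player-2 zero-payoff observation is also what the paper's case $i_2>i_1$ needs. There the lemma for player~2 requires the conditional top price of $x_1$ given $x_2=i_2/k$ to exceed $c_2$, which is not guaranteed. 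When it fails, one must clear the mass of $x_2$ above $i_1/k$ exactly as above, then apply the lemma to player~1 at $i_1/k$.
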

Recall that in the symmetric-cost setting we showed that weakening the deviation class for one player, by allowing that player’s $\Phi_i$ to contain only constant deviation maps while the other player’s $\Phi_i$ contains all deviation maps, does not guarantee that both players obtain low utility under every $\Phi$-correlated equilibrium. This raises a natural question in the asymmetric-cost setting: if we again restrict one player’s $\Phi_i$ to contain only constant deviation maps while allowing the other player’s $\Phi_i$ to contain all deviation maps, can we guarantee that player~2 receives zero expected utility under any such $\Phi$-correlated equilibrium? Surprisingly, the answer is no, as we show in the following theorem.
\begin{theorem}\label{thm:phi-asymmetric}
For the constant demand function $f(x)=1$, there exists marginal costs $c_1<c_2$ and a $\Phi$-correlated equilibrium $\mathcal{D}$ in which $\Phi_1$ contains all deviation maps and $\Phi_2$ contains only constant deviation maps such that, for each player $i\in\{1,2\}$ and large $k$, the following holds:
\[
\mathbb{E}_{x\sim \mathcal{D}}\!\left[u_i(x)\right]
\geq \lambda_0 \cdot \max_{x\in \mathcal{P}} (x-c_i)\, f(x),
\]
where $\lambda_0>0$ is an absolute constant independent of $k$. Similarly, for the same marginal costs $c_1<c_2$, a $\Phi$-correlated equilibrium $\mathcal{D}$ in which $\Phi_2$ contains all deviation maps and $\Phi_1$ contains only constant deviation maps such that, for each player $i\in\{1,2\}$ and large $k$, the following holds:
\[
\mathbb{E}_{x\sim \mathcal{D}}\!\left[u_i(x)\right]
\geq \lambda_1 \cdot \max_{x\in \mathcal{P}} (x-c_i)\, f(x),
\]
where $\lambda_1>0$ is an absolute constant independent of $k$.
\end{theorem}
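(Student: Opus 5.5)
The plan is to exploit the freedom to choose the marginal costs. I would take $c_1=0$ and $c_2=\delta$ for a small absolute constant $\delta>0$, to be fixed at the end. Then both monopoly benchmarks, $\max_x (x-c_1)=1$ and $\max_x(x-c_2)=1-\delta$, are $\Theta(1)$. For both parts of the statement I would reuse the harmonic-weight construction from the proof of Theorem~\ref{thm:phi-symmetric}, instantiated at the common cost $c:=c_2=\delta$. All of its prices lie in $[c_2+M/k,1]$, with $M=\lfloor\lambda_1(k_0-1)\rfloor$, so every price in the support exceeds $c_2$ by a constant.

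\emph{Roles as in Theorem~\ref{thm:phi-symmetric} ($\Phi_1$ all maps, $\Phi_2$ constant maps).} Player~2 has the cost $c_2$ used in the construction, and its deviation set is unchanged. Its constraints therefore hold verbatim, and its utility is at least $\lambda_0(1-c_2)$.

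Player~1 now has the lower cost. For a recommended price $p$ and a swap target $p'$, its conditional payoff changes from $(p'-c_2)\Pr[\text{win}\mid\cdot]$ to $(p'-c_2)\Pr[\text{win}\mid\cdot]+\delta\Pr[\text{win}\mid\cdot]$. So every conditional swap constraint is perturbed by at most an additive $\delta$.

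There is also one genuinely new family of deviations, to prices at or below $c_2$. These win essentially always and give player~1 at most $c_2-c_1=\delta$ per recommendation.

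I would handle both effects by proving a slack version of the symmetric result. Conditional on each recommendation $c_2+j/k$ with $j\ge M$, player~1's equilibrium conditional payoff should exceed the best swap payoff by at least a constant $\eta$, and should itself be at least a constant. If the harmonic construction is tight, as suggested by $p_j=M/(j(j+1))$, I would first perturb it. The perturbation would put slightly more conditional mass of player~2 above each recommendation, for example by mixing in $\epsilon$ probability of player~2 playing $1$. This creates slack $\Theta(\epsilon)$ while moving the utilities only by $O(\epsilon)$. Choosing $\delta<\eta$ then makes both the perturbed swap constraints and the new low-price deviations unprofitable.

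\emph{Reversed roles ($\Phi_2$ all maps, $\Phi_1$ constant maps).} Here I would relabel the construction: player~2, with cost $c_2$, plays the role of the swap-regret player, and player~1 plays the role of the external-regret player. Player~2's swap constraints are again unchanged.

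Player~1's external constraints must now be checked with cost $0$ instead of $c_2$. Each fixed deviation $p'\ge c_2+1/k$ gains at most an extra $\delta\Pr[\text{win}]\le\delta$. Deviations $p'\le c_2$ give at most $\delta$ in total. Player~1's equilibrium utility, which is at least $\lambda_0(1-c_2)$ and is a constant, dominates both once $\delta$ is small. For the remaining fixed deviations I would again use the slack argument, now for the aggregated constraints.

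The main obstacle is the slack step. It requires showing quantitatively that the construction of Theorem~\ref{thm:phi-symmetric}, possibly after the $\epsilon$-perturbation, satisfies every incentive constraint of the cost-shifted game with a uniform constant margin. This means redoing the harmonic-number estimates from that proof while tracking constants, in particular at the boundary index $j=M$ and at the top index $j=k_0-1$. If uniform slack fails at these endpoints, my fallback is to raise $M$, so that the lowest price is pushed further above $c_2$, and to shrink $\delta$ accordingly. The utility guarantees then follow because all transaction prices remain a constant above $c_2$ with constant probability for both players.
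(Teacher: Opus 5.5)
Your treatment of the reversed roles is sound. Basing the construction at $c_2$ leaves player~2's swap constraints verbatim, and player~1's external constraint has genuine constant slack: $\frac12\sum_i p_i\lfloor i/2\rfloor/k\ge\frac{9.5}{8}\cdot\frac{M}{k}$, while every fixed deviation earns at most $\max_i (i/k)S_i=M/k$ in the cost-$c_2$ game. The extra $\delta$ from cost $0$ is therefore absorbed once $\delta\lesssim M/k$.

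The gap is in the first part, at the slack step you flag. A uniform constant margin $\eta$ over all swap targets cannot exist in any construction. Suppose player~1 is recommended $p=c_2+j/k$ and wins with conditional weight $w$ (ties counted as $\tfrac12$). Then the target $p-1/k$ wins with weight at least $w$, so in the cost-$c_2$ game it is worse by at most $w/k\le 1/k$. Your rule $\delta<\eta$ then forces $\delta<1/k$. That is incompatible with $c_2\in\mathcal{P}$, let alone a constant gap. Raising $M$ does not help: the offending targets are the immediately lower prices for every recommendation, not the endpoints $j=M$ or $j=k_0-1$.

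What does make Part~1 work is a sharper accounting of the cost shift. Lowering player~1's cost from $c_2$ to $0$ changes the constraint for a target $p'$ with win weight $w'$ by exactly $-\delta(w'-w)$, not by an arbitrary amount in $[-\delta,\delta]$. With your $\epsilon$-mixing, player~2's conditional support is $\{c_2+\lfloor j/2\rfloor/k,\;c_2+(j+1)/k,\;1\}$.
\begin{itemize}
\item For every target strictly between $c_2+\lfloor j/2\rfloor/k$ and $p$ we have $w'=w$, so those tight constraints are untouched.
\item Upward targets have $w'\le w$, so the shift helps, and they keep constant slack once $\epsilon\lesssim M/(2k)$.
\item Only targets at or below $c_2+\lfloor j/2\rfloor/k$ lose, by at most $\delta(1-\epsilon)/2$, against a margin of at least $\epsilon j/(2k)\ge\epsilon M/(2k)$.
\end{itemize}
The mixing is genuinely needed: without it, moving to $c_2+(\lfloor j/2\rfloor-1)/k$ gains about $\delta/2$. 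Take $\delta\ll\epsilon M/k$, with $\epsilon$ small enough that player~2's factor $(1-\epsilon)$ stays within its $9.5/8$ slack; this closes the argument.

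The paper avoids all of this by basing the harmonic construction at cost $0$, on prices $j/k$, with $c_1=0$ and $c_2=\lfloor M/36\rfloor/k$. In the first part, the swap player's constraints are then literally those of the symmetric case. The cost gap falls on the external player, where $\frac12(\lfloor i/2\rfloor/k-c_2)\ge\frac{i}{9k}$ gives $\mathbb{E}[u_2]\ge\frac19\mathbb{E}[x]\ge M/k$. In the reversed case, the swap player carries the higher cost, which only makes downward swaps less attractive.
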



\section{Experiments}\label{sec:experiments}
In this section, we present our experiments. In Sections~\ref{sec:experiments-symmetric} and \ref{sec:experiments-asymmetric}, we numerically investigate, under both symmetric and asymmetric cost settings, how the largest expected utility attainable by a player under any CCE compares to the corresponding maximum monopoly utility across standard demand functions. We then study no-swap-regret learners in Section~\ref{sec:experiments-asymmetric} and analyze the prices selected by these learners. The code for the experiments is available at
\url{https://github.com/maitiarnab9/bertrand-paradox}.
\subsection{Symmetric cost setting}\label{sec:experiments-symmetric}

\begin{figure*}[t]
  \centering

  \begin{subfigure}[t]{0.24\textwidth}
    \centering
    \includegraphics[width=\linewidth]{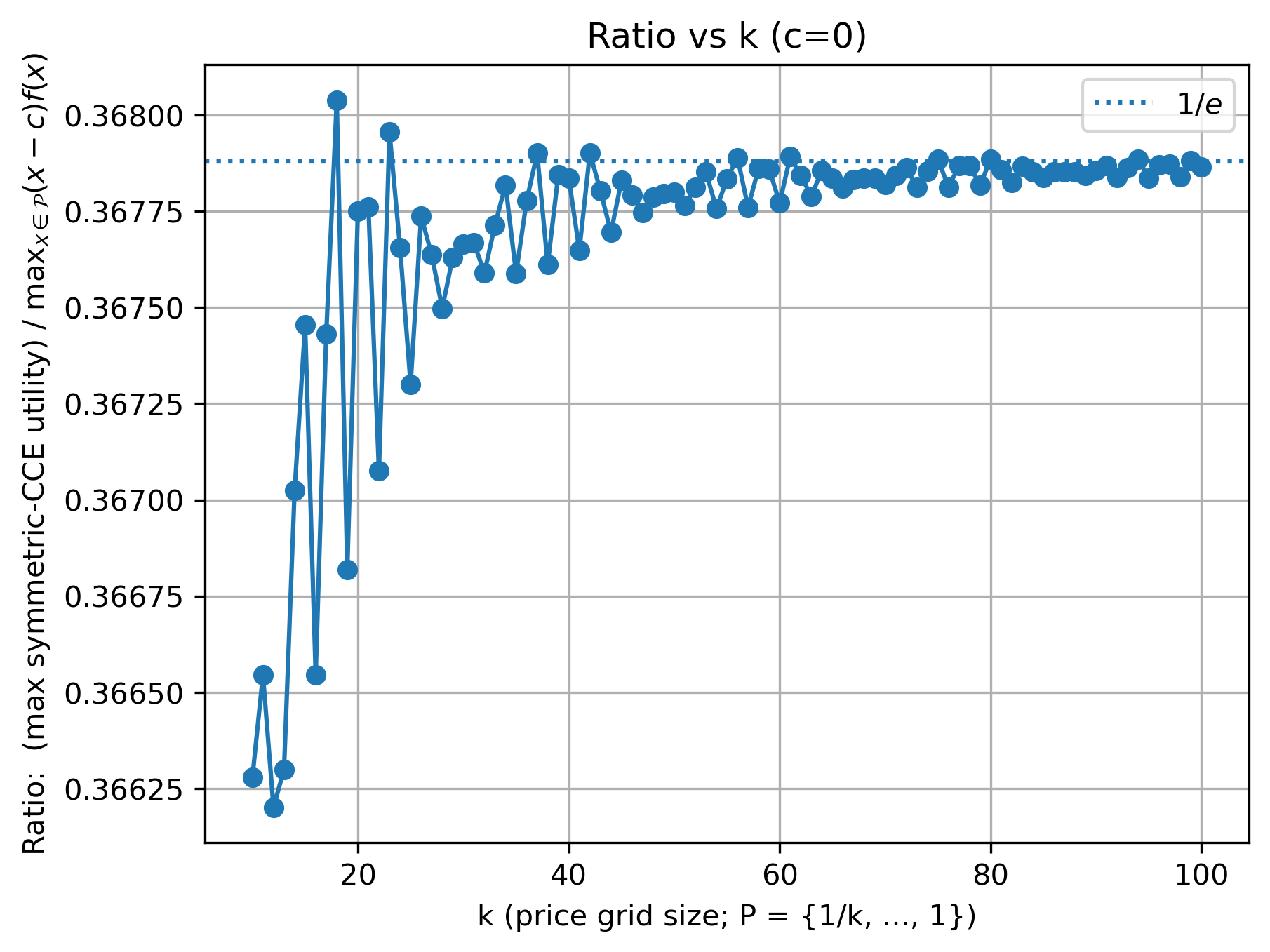}
    \caption{Ratio of duopoly utility under the best symmetric CCE and monopoly utility }
    \label{fig1:img1}
  \end{subfigure}\hfill
  \begin{subfigure}[t]{0.24\textwidth}
    \centering
    \includegraphics[width=\linewidth]{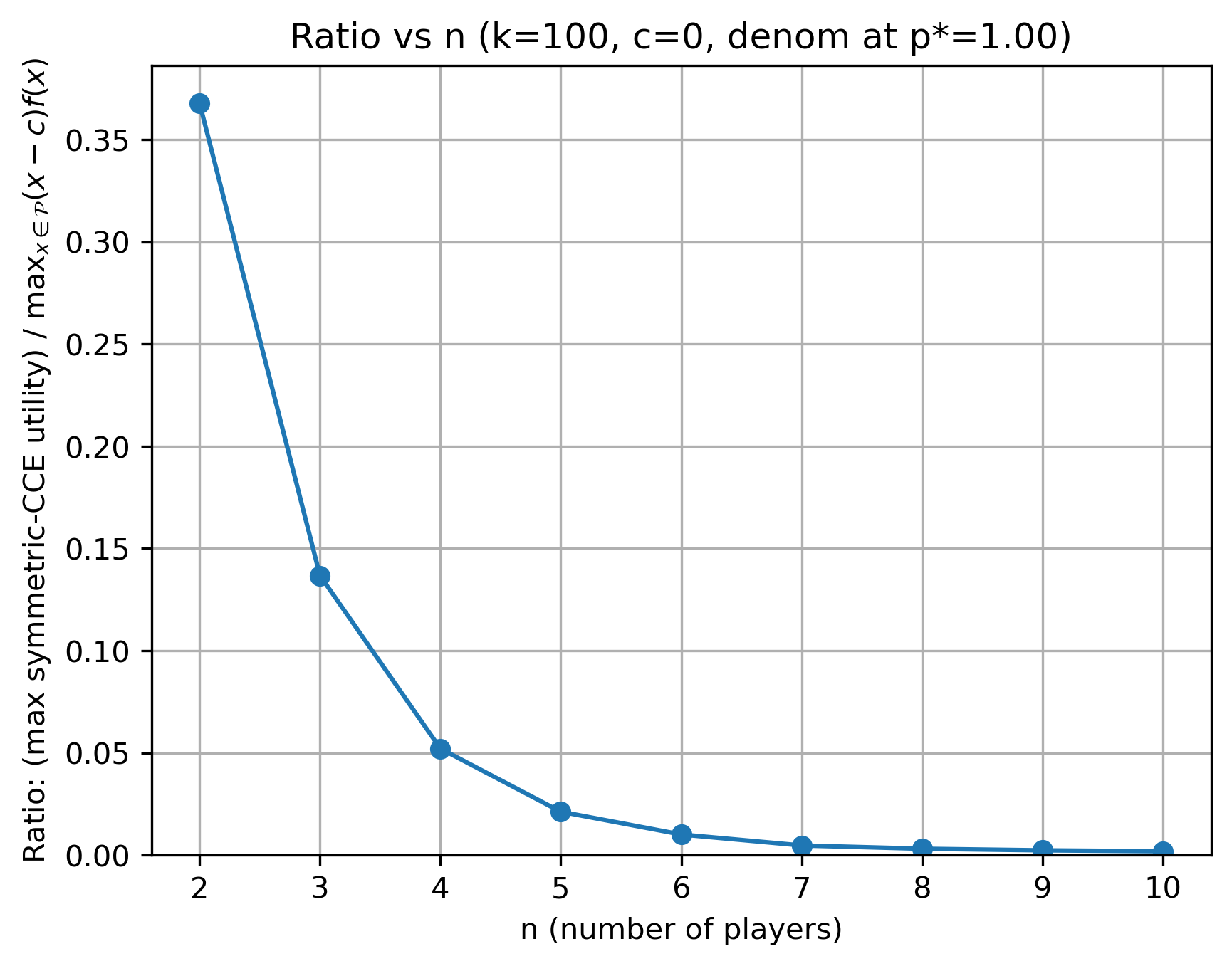}
    \caption{Exponential decay of utility under the best symmetric CCE}
    \label{fig1:img2}
  \end{subfigure}\hfill
  \begin{subfigure}[t]{0.24\textwidth}
    \centering
    \includegraphics[width=\linewidth]{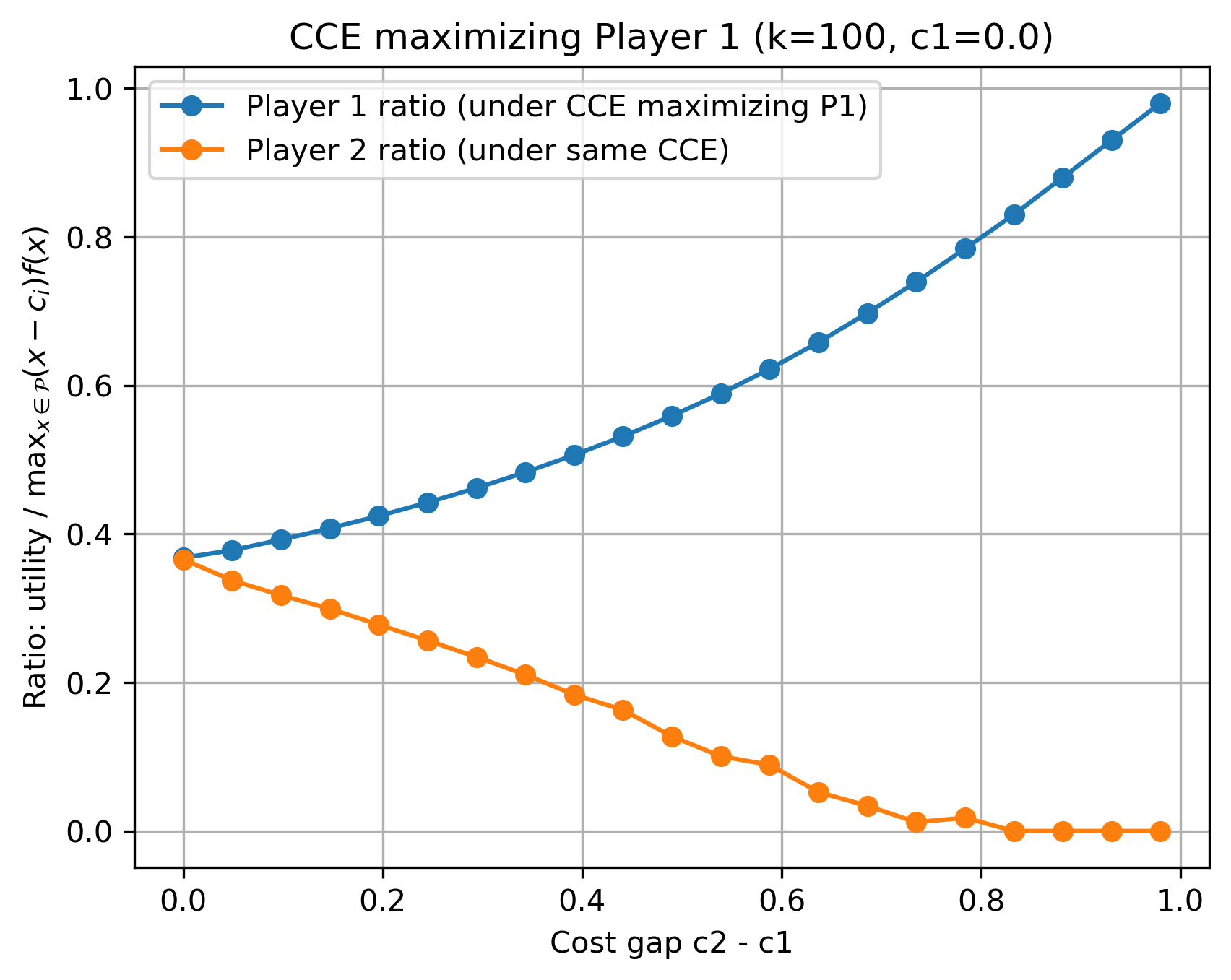}
    \caption{Utility Ratios under the CCE that maximizes player 1's utility}
    \label{fig1:img3}
  \end{subfigure}\hfill
  \begin{subfigure}[t]{0.24\textwidth}
    \centering
    \includegraphics[width=\linewidth]{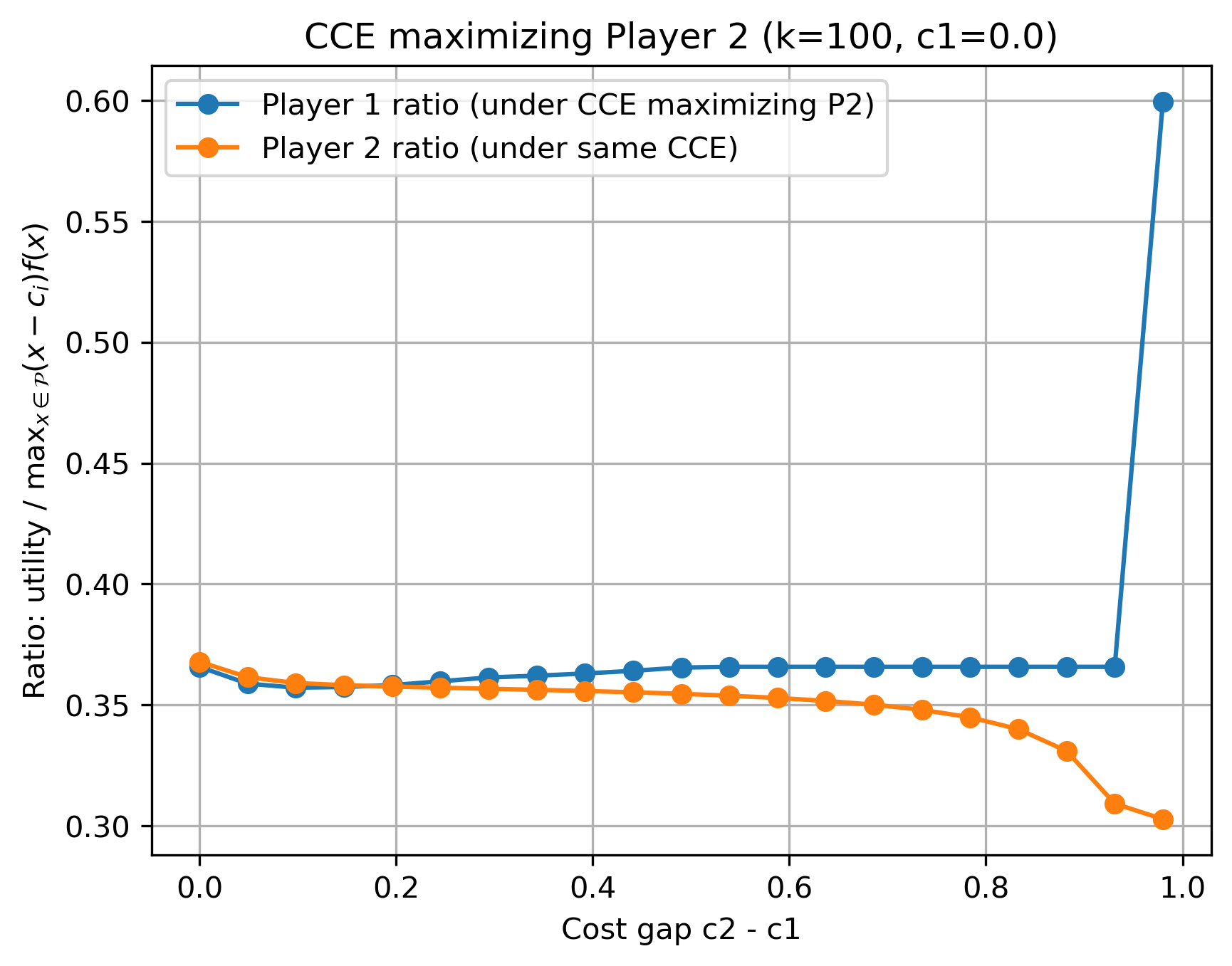}
    \caption{Utility Ratios under the CCE that maximizes player 2's utility}
    \label{fig1:img4}
  \end{subfigure}

  \caption{Numerical experiments for constant demand with $c_1=0$.}
  \label{fig1}
\end{figure*}

In this section, we restrict attention to the symmetric cost setting and to symmetric CCE. A symmetric CCE is a CCE supported only on price tuples in which all components are equal. This restriction is sufficient for our purposes because, when costs are symmetric, any CCE can be transformed into a symmetric CCE without changing the sum of players’ expected utilities. A formal proof of this claim is provided in Appendix~\ref{appendix:extra-symmetric-cce}.

We begin by numerically investigating how the largest expected utility attainable by a player under any symmetric CCE compares to maximum monopoly utility across standard demand functions, namely constant ($f(x)=1$), linear ($f(x)=1-x$), quadratic ($f(x)=1-x^2$), and exponential ($f(x)=\exp(-x)$) demand. For each $k \in \{10,11,\dots,100\}$, we compute the ratio of the maximum expected utility over all symmetric CCEs to the maximum monopoly utility, and plot this ratio as a function of $k$. Figure~\ref{fig1:img1} presents the results for marginal cost $c=0$ and constant demand function. The ratio appears to converge to $1/e$ as $k$ increases. We observe the same qualitative behavior for other demand functions and cost levels, with additional plots provided in Appendix \ref{appendix:experiments}.


We next numerically study how the largest expected utility attainable by a player under any symmetric CCE scales with the number of firms $n$ across standard demand functions, namely constant, linear, quadratic, and exponential demand. For each $n \in \{2,3,\dots,10\}$, we compute the ratio of the maximum expected utility over all symmetric CCEs to the maximum monopoly utility, and plot this ratio as a function of $n$ with $k$ fixed to $100$. Figure~\ref{fig1:img2} presents the results for marginal cost $c=0$ and constant demand function. The ratio decays exponentially in $n$. We observe the same qualitative behavior for other demand functions and cost levels, with additional plots provided in Appendix \ref{appendix:experiments}.
\subsection{Asymmetric cost setting}\label{sec:experiments-asymmetric}
In this section, we study the asymmetric-cost setting in a Bertrand duopoly with $c_1=0$ and $c_2>0$. We begin by numerically investigating, across standard demand functions (constant, linear, quadratic, and exponential), how the largest expected utility attainable by each player under a CCE compares to the corresponding maximum monopoly utility.

We first consider a CCE that maximizes the expected utility of player~1. For this equilibrium, we compute, for each player, the ratio between the player’s expected utility and the corresponding maximum monopoly utility, and plot these ratios as functions of $c_2$. Figure~\ref{fig1:img3} presents the results for constant demand with $k=100$. As $c_2$ increases, player~1's ratio increases while player~2's ratio decreases. We observe the same qualitative behavior for the other demand functions, with additional plots provided in Appendix \ref{appendix:experiments}.

We next consider a CCE that maximizes the expected utility of player~2. We again compute the two players' utility ratios and plot them as functions of $c_2$. Figure~\ref{fig1:img4} presents the results for constant demand with $k=100$. Somewhat surprisingly, both ratios remain approximately constant as $c_2$ increases, and only begin to diverge for larger values of $c_2$ approaches $1$. We observe a similar pattern for the other demand functions, with additional plots in Appendix \ref{appendix:experiments}. 

\begin{figure*}[t]
  \centering

  \begin{subfigure}[t]{0.24\textwidth}
    \centering
    \includegraphics[width=\linewidth]{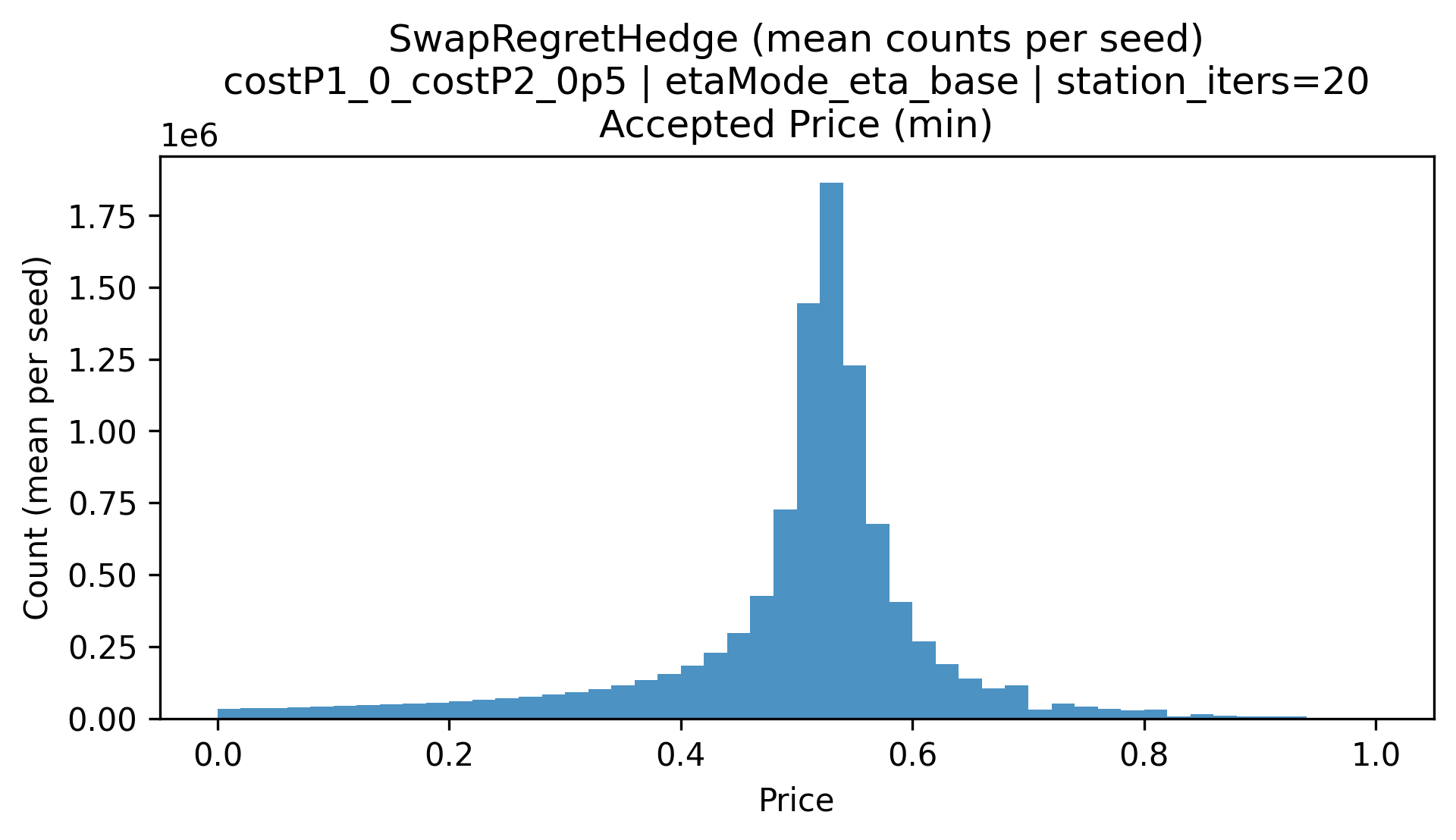}
    \caption{$c_2=0.5$ with same learning rates for both players} 
    \label{fig2:img1}
  \end{subfigure}\hfill
  \begin{subfigure}[t]{0.24\textwidth}
    \centering
    \includegraphics[width=\linewidth]{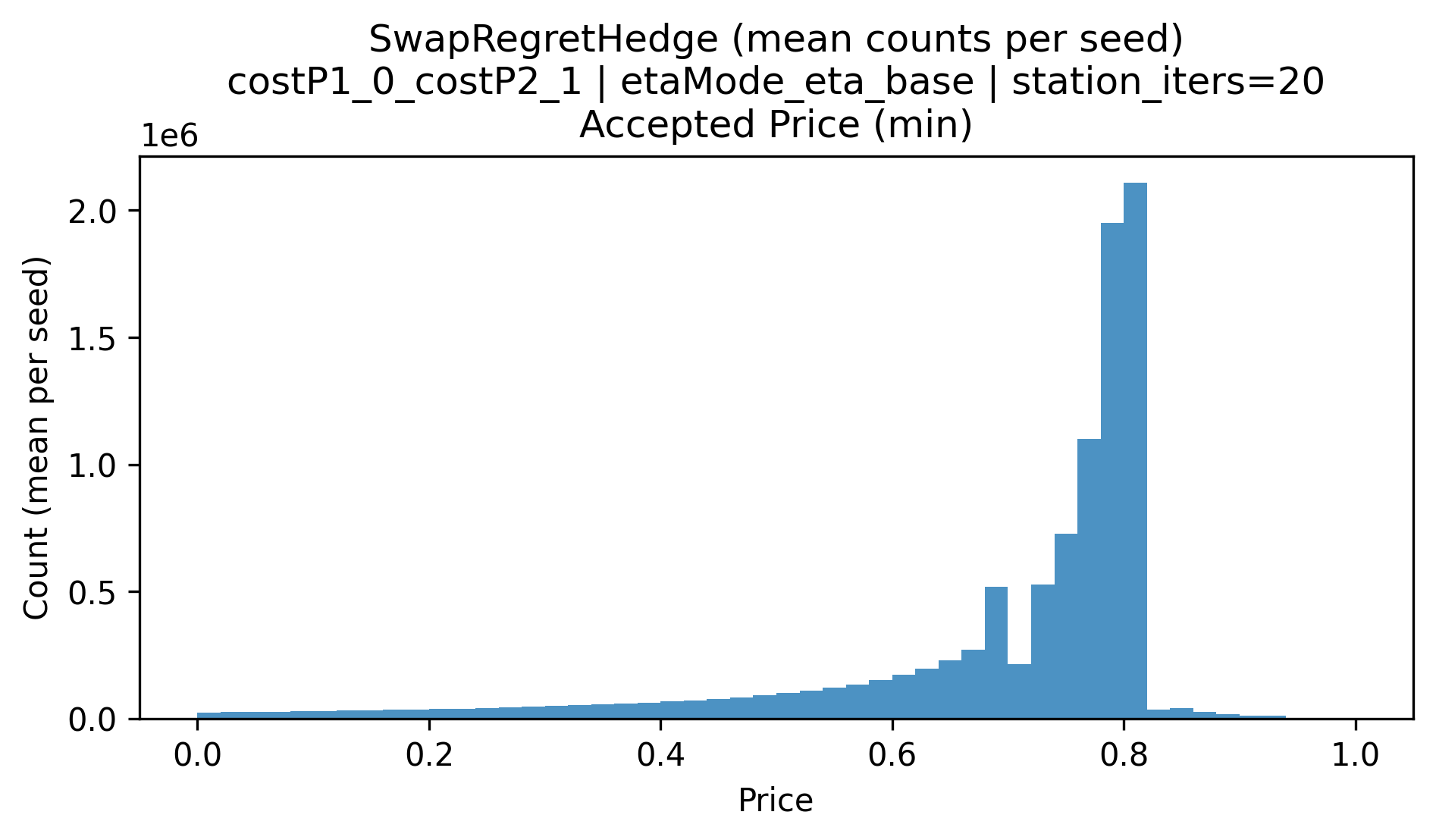}
    \caption{$c_2=1$ with same learning rates for both players}
    \label{fig2:img2}
  \end{subfigure}\hfill
  \begin{subfigure}[t]{0.24\textwidth}
    \centering
    \includegraphics[width=\linewidth]{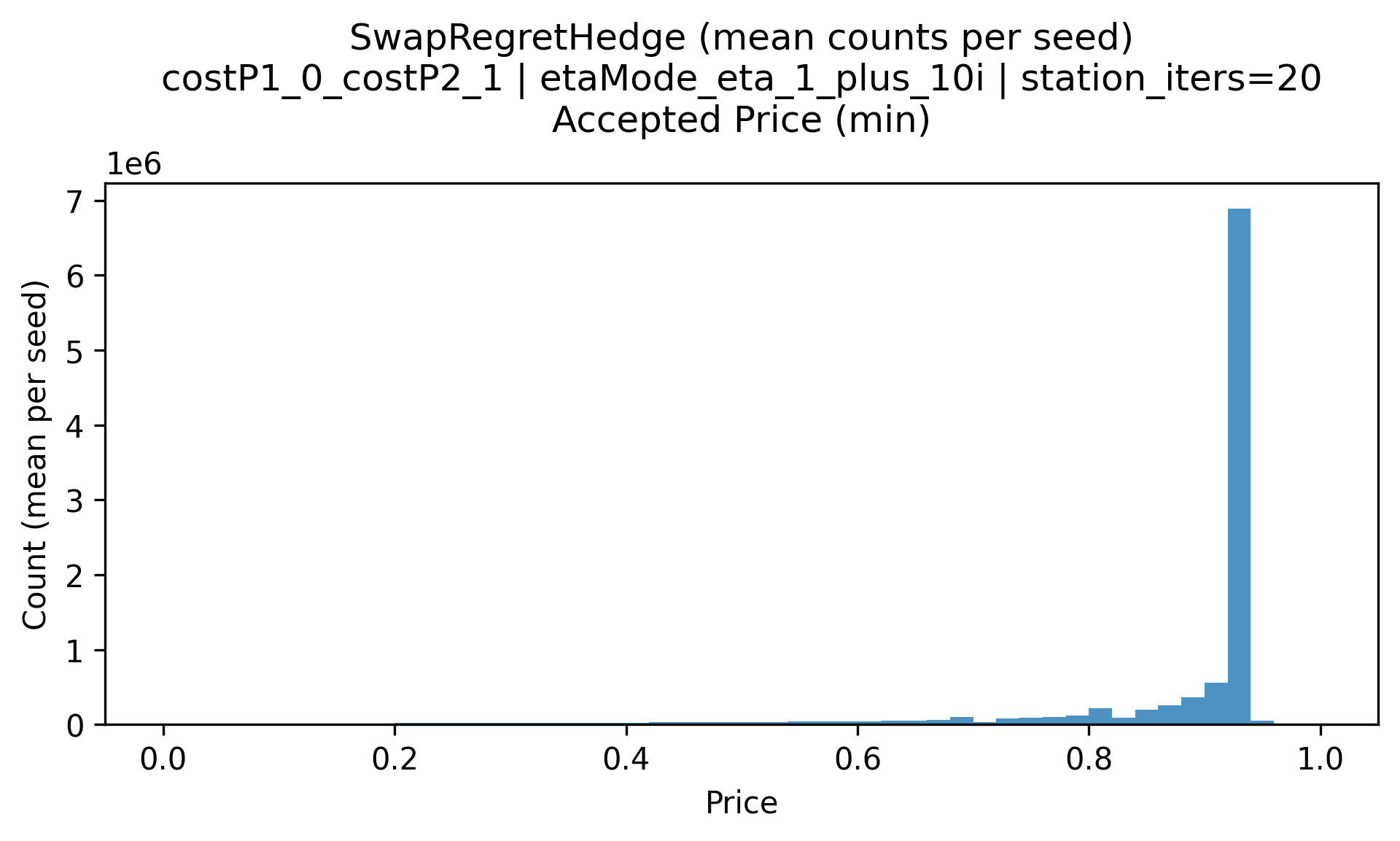}
    \caption{$c_2=1$ with larger learning rate for the second player}
    \label{fig2:img3}
  \end{subfigure}\hfill
  \begin{subfigure}[t]{0.24\textwidth}
    \centering
    \includegraphics[width=\linewidth]{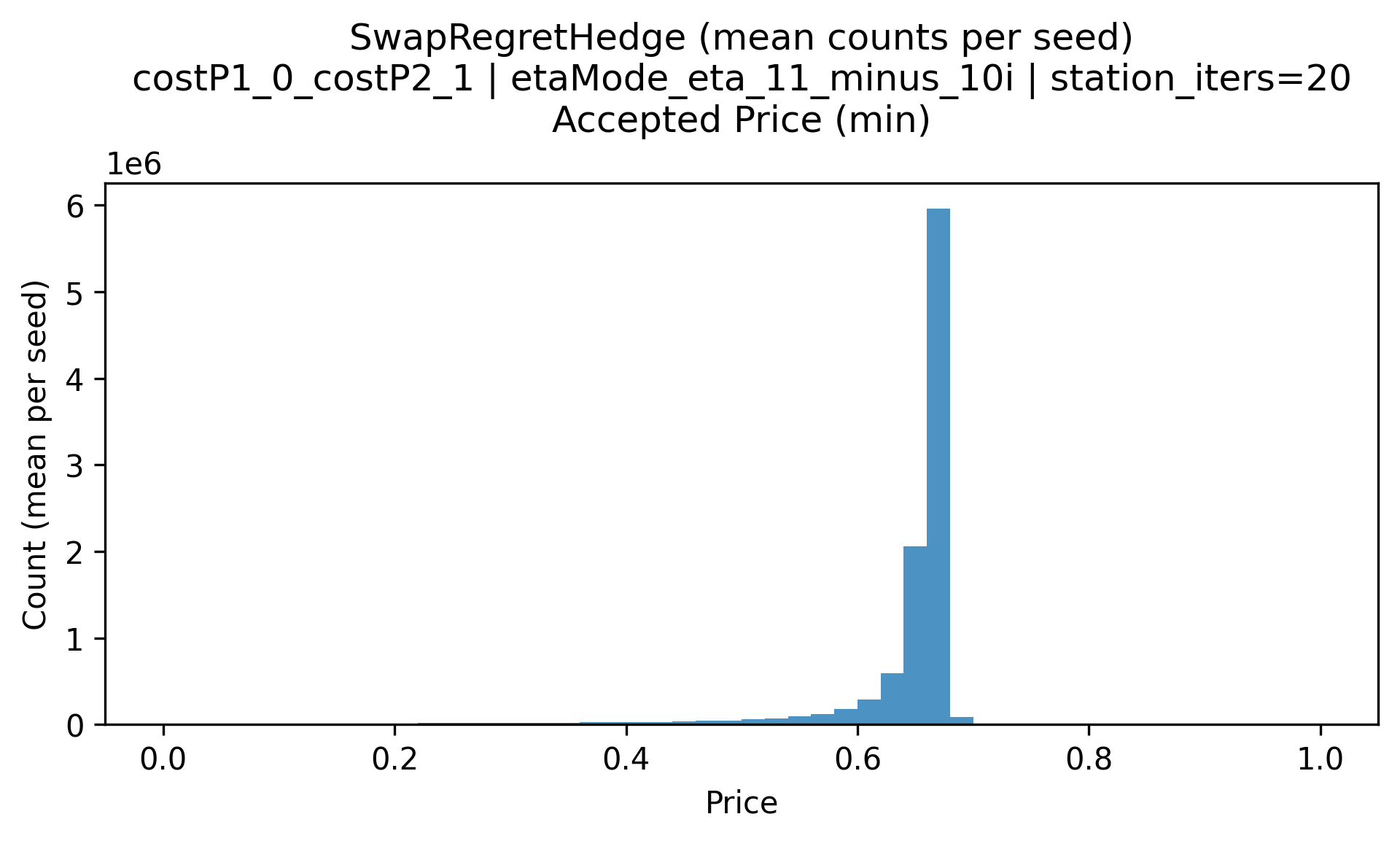}
    \caption{$c_2=1$ with larger learning rate for the first player}
    \label{fig2:img4}
  \end{subfigure}

  \caption{Experiments using the Hedge-based no-swap-regret learner with different combinations of costs and learning rates. The plots show the frequency of transaction prices over $T=10^7$ rounds, averaged across 100 random seeds.
}
  \label{fig2}
\end{figure*}

We next turn to correlated equilibria and study which prices emerge under CEs induced by different no-swap-regret learners. To illustrate this behavior, we focus on the constant-demand function. We consider two families of no-swap-regret learners: (i) regret matching \cite{hart2000simple}, and (ii) an external-to-internal reduction-based learner that invokes Hedge as a subroutine \cite{blum2007external}.
Each learner's action set is the discretized price grid $\mathcal{P}=\{1/k,2/k,\ldots,1\}$ with $k=100$. In our experiments, we fix a learner class and have both players run an algorithm from that class for $T=10^7$ rounds to select prices. At round $t\in[T]$, player~1 selects price $i_t/k$ and player~2 selects price $j_t/k$. After observing the opponent's price, each player receives full-information feedback given by its entire utility vector against the realized opponent action. Concretely, player~1 observes $u_1(\cdot, j_t/k)$ and player~2 observes $u_2(i_t/k, \cdot)$, and supplies this feedback to its respective learner.

While our theoretical result implies that player~2 receives zero utility in any CE, there exist CEs in which player~1 attains utility $c_2$. This raises the question of whether standard no-swap-regret learners reliably converge to such favorable equilibria for player~1. To investigate this, we consider two cost values, $c_2=0.5$ and $c_2=1$. Aggregating over $100$ random seeds, Figure~\ref{fig2} shows that when both players use the Hedge-based no-swap-regret learner with identical learning rates, the most frequently chosen price for player~1 is close to $0.5$ when $c_2=0.5$. In contrast, when $c_2=1$, the most frequent price is not close to $1$, but instead around $0.8$. Moreover, this most frequently chosen price increases when we increase player~2’s learning rate and decreases when we increase player~1’s learning rate. We observe a similar pattern for regret matching, with the corresponding experimental results deferred to Appendix \ref{appendix:experiments}. Appendix \ref{appendix:experiments} also provides a more detailed explanation for the surprising behavior when $c_2=1$, which does not appear at $c_2=0.5$.

Overall, these experiments suggest that there is no universal answer to the question above. The particular CE to which no-swap-regret learning dynamics converge can depend sensitively on the cost parameters and on algorithmic details such as learning rates. This motivates analysis beyond generic convergence guarantees for no-swap-regret learners and points to an important open direction.

\section{Conclusion}
In this paper, we study equilibrium outcomes attainable by no-regret learners in Bertrand pricing game. In the symmetric-cost setting, we show the existence of two no-external-regret learners that can sustain high transaction prices. In contrast, when both firms incur low swap regret, transaction prices are driven down; moreover, a single no-swap-regret learner need not be sufficient to induce low prices. We also extend our results to the asymmetric duopoly and provide experiments that support the theory and reveal additional, unexpected phenomena.

Our work raises several open questions. Which properties of no-external-regret learners determine whether high transaction prices are sustained, beyond the regret guarantee itself? Given a concrete instantiation of a no-swap-regret learner, can one always construct a no-external-regret learner that sustains high transaction prices against it? Finally, in the asymmetric duopoly with $c_1<c_2$, what properties of no-swap-regret learners drive transaction prices far below $c_2$?

\section*{Impact Statement}


This paper presents work whose goal is to advance the field of Machine
Learning. There are many potential societal consequences of our work, none
which we feel must be specifically highlighted here.



\section*{ACKNOWLEDGEMENTS}

LJ Ratliff was supported in part by NSF 1844729, 2312775.
KJ and AM were supported in part by a Microsoft Grant for Customer Experience Innovation and a Singapore AI Visiting Professorship award.

\bibliography{refs}

@article{guo2011sharp,
  title={Sharp bounds for harmonic numbers},
  author={Guo, Bai-Ni and Qi, Feng},
  journal={Applied Mathematics and Computation},
  volume={218},
  number={3},
  pages={991--995},
  year={2011},
  publisher={Elsevier}
}

@article{bertrand1883theorie,
  title={Th{\'e}orie math{\'e}matique de la richesse sociale},
  author={Bertrand, Joseph},
  journal={Journal des savants},
  volume={67},
  number={1883},
  pages={499--508},
  year={1883},
  publisher={Paris}
}

@incollection{edgeworth2024pure,
  title={The Pure Theory of Monopoly},
  author={Edgeworth, Francis Ysidro},
  year={1897},
}

@article{tremblay2019oligopoly,
  title={Oligopoly games and the Cournot--Bertrand model: a survey},
  author={Tremblay, Carol Horton and Tremblay, Victor J},
  journal={Journal of Economic Surveys},
  volume={33},
  number={5},
  pages={1555--1577},
  year={2019},
  publisher={Wiley Online Library}
}

@article{dufwenberg2000price,
  title={Price competition and market concentration: an experimental study},
  author={Dufwenberg, Martin and Gneezy, Uri},
  journal={international Journal of industrial Organization},
  volume={18},
  number={1},
  pages={7--22},
  year={2000},
  publisher={Elsevier}
}

@article{hall1988relation,
  title={The relation between price and marginal cost in US industry},
  author={Hall, Robert E},
  journal={Journal of political Economy},
  volume={96},
  number={5},
  pages={921--947},
  year={1988},
  publisher={The University of Chicago Press}
}

@article{geromichalos2014directed,
  title={Directed search and the Bertrand paradox},
  author={Geromichalos, Athanasios},
  journal={International Economic Review},
  volume={55},
  number={4},
  pages={1043--1065},
  year={2014},
  publisher={Wiley Online Library}
}

@techreport{brander2015endogenous,
  title={Endogenous horizontal product differentiation under bertrand and cournot competition: Revisiting the bertrand paradox},
  author={Brander, James A and Spencer, Barbara J},
  year={2015},
  institution={National Bureau of Economic Research}
}

@article{cabon2010end,
  title={The end of the Bertrand Paradox?},
  author={Cabon-Dhersin, Marie-Laure and Drouhin, Nicolas},
  journal={Documents de travail du Centre d'{\'E}conomie de la Sorbonne},
  year={2010}
}

@article{baye1999folk,
  title={A folk theorem for one-shot Bertrand games},
  author={Baye, Michael R and Morgan, John},
  journal={Economics Letters},
  volume={65},
  number={1},
  pages={59--65},
  year={1999},
  publisher={Elsevier}
}

@article{hehenkamp2002sluggish,
  title={Sluggish consumers: An evolutionary solution to the Bertrand paradox},
  author={Hehenkamp, Burkhard},
  journal={Games and economic Behavior},
  volume={40},
  number={1},
  pages={44--76},
  year={2002},
  publisher={Elsevier}
}

@article{carvalho2009price,
  title={Price recall, Bertrand paradox and price dispersion with elastic demand},
  author={Carvalho, Miguel},
  year={2009}
}

@article{bruttel2009group,
  title={Group dynamics in experimental studies—the bertrand paradox revisited},
  author={Bruttel, Lisa V},
  journal={Journal of Economic Behavior \& Organization},
  volume={69},
  number={1},
  pages={51--63},
  year={2009},
  publisher={Elsevier}
}

@inproceedings{nadav2010no,
  title={No regret learning in oligopolies: Cournot vs. Bertrand},
  author={Nadav, Uri and Piliouras, Georgios},
  booktitle={International Symposium on Algorithmic Game Theory},
  pages={300--311},
  year={2010},
  organization={Springer}
}

@article{arunachaleswaran2024algorithmic,
  title={Algorithmic collusion without threats},
  author={Arunachaleswaran, Eshwar Ram and Collina, Natalie and Kannan, Sampath and Roth, Aaron and Ziani, Juba},
  journal={arXiv preprint arXiv:2409.03956},
  year={2024}
}

@article{bichler2024online,
  title={Online Optimization Algorithms in Repeated Price Competition: Equilibrium Learning and Algorithmic Collusion},
  author={Bichler, Martin and Durmann, Julius and Oberlechner, Matthias},
  journal={arXiv preprint arXiv:2412.15707},
  year={2024}
}

@article{farrell2000renegotiation,
  title={Renegotiation in repeated oligopoly interaction},
  author={Farrell, Joseph},
  journal={Incentives, Organisation, and Public Economics: Papers in Honour of Sir James Mirrlees (Oxford University Press, New York)},
  pages={303--322},
  year={2000}
}

@article{jann2015correlated,
  title={Correlated equilibria in homogeneous good Bertrand competition},
  author={Jann, Ole and Schottm{\"u}ller, Christoph},
  journal={Journal of Mathematical Economics},
  volume={57},
  pages={31--37},
  year={2015},
  publisher={Elsevier}
}

@book{cesa2006prediction,
  title={Prediction, learning, and games},
  author={Cesa-Bianchi, Nicolo and Lugosi, G{\'a}bor},
  year={2006},
  publisher={Cambridge university press}
}

@book{roughgarden2016twenty,
  title={Twenty lectures on algorithmic game theory},
  author={Roughgarden, Tim},
  year={2016},
  publisher={Cambridge University Press}
}

@article{collina2025breaking,
  title={Breaking Algorithmic Collusion in Human-AI Ecosystems},
  author={Collina, Natalie and Arunachaleswaran, Eshwar Ram and Jagadeesan, Meena},
  journal={arXiv preprint arXiv:2511.21935},
  year={2025}
}

@article{calvano2020artificial,
  title={Artificial intelligence, algorithmic pricing, and collusion},
  author={Calvano, Emilio and Calzolari, Giacomo and Denicolo, Vincenzo and Pastorello, Sergio},
  journal={American Economic Review},
  volume={110},
  number={10},
  pages={3267--3297},
  year={2020},
  publisher={American Economic Association 2014 Broadway, Suite 305, Nashville, TN 37203}
}

@article{eschenbaum2022robust,
  title={Robust algorithmic collusion},
  author={Eschenbaum, Nicolas and Mellgren, Filip and Zahn, Philipp},
  journal={arXiv preprint arXiv:2201.00345},
  year={2022}
}

@inproceedings{wu2008correlated,
  title={Correlated equilibrium of bertrand competition},
  author={Wu, John},
  booktitle={International Workshop on Internet and Network Economics},
  pages={166--177},
  year={2008},
  organization={Springer}
}

@article{raskovich2025homogeneous,
  title={Homogeneous Product Bertrand Oligopoly},
  author={Raskovich, Alexander},
  journal={Available at SSRN 5310468},
  year={2025}
}

@inproceedings{feldman2016correlated,
  title={Correlated and coarse equilibria of single-item auctions},
  author={Feldman, Michal and Lucier, Brendan and Nisan, Noam},
  booktitle={International Conference on Web and Internet Economics},
  pages={131--144},
  year={2016},
  organization={Springer}
}

@incollection{anderson2008product,
  title={Product differentiation},
  author={Anderson, Simon P},
  booktitle={The new Palgrave dictionary of economics},
  pages={1--7},
  year={2008},
  publisher={Springer}
}

@article{peters1984bertrand,
  title={Bertrand equilibrium with capacity constraints and restricted mobility},
  author={Peters, Michael},
  journal={Econometrica: Journal of the Econometric Society},
  pages={1117--1127},
  year={1984},
  publisher={JSTOR}
}

@techreport{spagnolo2000self,
  title={Self-defeating antitrust laws: how leniency programs solve Bertrand's Paradox and enforce collusion in auctions},
  author={Spagnolo, Giancarlo},
  year={2000},
  institution={Nota di Lavoro}
}

@article{melkonyan2017collusion,
  title={Collusion in Bertrand vs. Cournot competition: A virtual bargaining approach},
  author={Melkonyan, Tigran and Zeitoun, Hossam and Chater, Nick},
  journal={Management Science},
  year={2017},
  publisher={INFORMS}
}

@article{obara2011tacit,
  title={On Tacit Collusion among Asymmetric Firms in Bertrand Competition},
  author={Obara, Ichiro and Zincenko, Federico},
  journal={UCLA, November},
  year={2011}
}

@article{stahl1989oligopolistic,
  title={Oligopolistic pricing with sequential consumer search},
  author={Stahl, Dale O},
  journal={The American Economic Review},
  pages={700--712},
  year={1989},
  publisher={JSTOR}
}

@article{liu1996correlated,
  title={Correlated equilibrium of Cournot oligopoly competition},
  author={Liu, Luchuan},
  journal={Journal of Economic Theory},
  volume={68},
  number={2},
  pages={544--548},
  year={1996},
  publisher={Elsevier}
}

@article{yi1997existence,
  title={On the existence of a unique correlated equilibrium in Cournot oligopoly},
  author={Yi, Sang-Seung},
  journal={Economics Letters},
  volume={54},
  number={3},
  pages={235--239},
  year={1997},
  publisher={Elsevier}
}

@article{ray2013coarse,
  title={Coarse correlated equilibria in linear duopoly games},
  author={Ray, Indrajit and Gupta, Sonali Sen},
  journal={International Journal of Game Theory},
  volume={42},
  number={2},
  pages={541--562},
  year={2013},
  publisher={Springer}
}

@article{moulin2014improving,
  title={Improving Nash by coarse correlation},
  author={Moulin, Herv{\'e} and Ray, Indrajit and Gupta, Sonali Sen},
  journal={Journal of Economic Theory},
  volume={150},
  pages={852--865},
  year={2014},
  publisher={Elsevier}
}

@article{neyman1997correlated,
  title={Correlated equilibrium and potential games},
  author={Neyman, Abraham},
  journal={International Journal of Game Theory},
  volume={26},
  number={2},
  pages={223--227},
  year={1997},
  publisher={Springer}
}

@article{ui2008correlated,
  title={Correlated equilibrium and concave games},
  author={Ui, Takashi},
  journal={International Journal of Game Theory},
  volume={37},
  number={1},
  pages={1--13},
  year={2008},
  publisher={Springer}
}

@article{awaya2020information,
  title={Information exchange in cartels},
  author={Awaya, Yu and Krishna, Vijay},
  journal={The RAND Journal of Economics},
  volume={51},
  number={2},
  pages={421--446},
  year={2020},
  publisher={Wiley Online Library}
}

@article{einy2022strong,
  title={Strong robustness to incomplete information and the uniqueness of a correlated equilibrium},
  author={Einy, Ezra and Haimanko, Ori and Lagziel, David},
  journal={Economic Theory},
  volume={73},
  number={1},
  pages={91--119},
  year={2022},
  publisher={Springer}
}

@article{hart2000simple,
  title={A simple adaptive procedure leading to correlated equilibrium},
  author={Hart, Sergiu and Mas-Colell, Andreu},
  journal={Econometrica},
  volume={68},
  number={5},
  pages={1127--1150},
  year={2000},
  publisher={Wiley Online Library}
}

@article{blum2007external,
  title={From external to internal regret.},
  author={Blum, Avrim and Mansour, Yishay},
  journal={Journal of Machine Learning Research},
  volume={8},
  number={6},
  year={2007}
}
\bibliographystyle{icml2026}

\newpage
\appendix
\onecolumn



\section{Omitted proofs}\label{appendix:omitted-proofs}

\begin{proof}[Proof of Theorem \ref{thm:cce-symmetric}]
 We now describe a way to construct a symmetric CCE where both players always choose the same price. Let $s_i:=(i/k-c)\cdot f(i/k)$ and $S_i:=\max_{j\leq i}s_j$. Let $s_{\max}:=\max_{i\in[k]} s_i$ and consider the smallest index $m$ such that $m\in \arg\max_{i\in[k]} s_i$. Consider a constant $\lambda :=\tfrac{1}{2e^2}$ and set $B := \lambda s_{\max}$. We now consider the following two cases.


\textbf{Case 1: $s_{k\cdot c+1} \ge B$.} We choose the price $c+1/k$ for both the players with probability $1$. Note that this is actually a Nash equilibrium as no player can unilaterally deviate and get a higher reward. Now observe that each player receives a utility of $\tfrac12 s_{k\cdot c+1} \ge B/2 = (\lambda/2)\cdot s_{\max}$.

\textbf{Case 2: $s_{k\cdot c+1}<B$.} Let $i_0 := \min\{ i : S_i \ge B \}$. We randomly choose a price $x=i/k$ for both the players with probability $\tau_i-\tau_{i+1}$, where $\tau_i$ is defined as follows:
\[
\tau_i := \Pr[x\geq i/k] =
\begin{cases}
1, & i < i_0, \\
B / S_i, & i_0 \le i \le m, \\
0, & i > m,
\end{cases}
\]

Now we prove that we have indeed constructed a symmetric CCE. Due to symmetry, showing that $\mathbb{E}[u_1(i/k,x)]\leq \mathbb{E}[u_1(x,x)]$ for any $i\in[k]$ suffices, where $x$ is the price randomly chosen for both the players as per the distribution above. First, we have the following:
\[
\mathbb E[u_1(i/k, x)]
= s_i\cdot\left(\Pr[x > i/k] + \tfrac12 \Pr[x=i/k]\right)
\le s_i \tau_i \le B.
\]
We get the last inequality due to the fact that if $i<i_0$, then $s_i<B$ and if $i\geq i_0$ then $s_i\tau_i\leq S_i\cdot \frac{B}{S_i}=B$.

Consider $i>kc$. As $f$ is non-increasing, we have $k\cdot s_i/(i-kc)=f(i/k)\geq f((i+1)/k)=k\cdot s_{i+1}/(i+1-kc)$. Hence if $S_i<S_{i+1}$, then $S_{i+1}=s_{i+1}\leq \frac{i+1-kc}{i-kc}\cdot s_i\leq 2s_i$. This also implies that $S_{i_0}\leq 2B$. Now, let $\mathcal{I}:=\{i\in \{i_0,i_0+1,\ldots,m-1\}: S_i<S_{i+1}\}$. For any $i\in\mathcal{I}$, we have the following 
\[
s_i\cdot\left(\frac{1}{S_i} - \frac{1}{S_{i+1}}\right)
= \frac{s_i}{S_{i+1}} \cdot\frac{S_{i+1} - S_i}{S_i}
\ge \tfrac12 \cdot \frac{S_{i+1} - S_i}{S_i}.
\]

Now, we have the following:
\begin{align*}
    \sum_{i\in\mathcal{I}} \frac{S_{i+1} - S_i}{S_i}&=\sum_{i\in\mathcal{I}}\left(\frac{S_{i+1}}{S_i}-1\right)\\
    &\ge \sum_{i\in\mathcal{I}}\ln \frac{S_{i+1}}{S_i}\tag{as $\ln r\leq r-1$ for all $r>0$}\\
    &= \ln \frac{S_m}{S_{i_0}}\tag{as $\min_{i\in\mathcal{I}}S_i=S_{i_0}$ and $m-1\in \mathcal{I}$}\\
    &=\ln \frac{s_{\max}}{S_{i_0}}\\
    &\geq \ln\frac{1}{2\lambda}=2\tag{as $B=\lambda s_{\max}$ and $S_{i_0}\leq 2B$}.
\end{align*}
Hence, we have the following:
\begin{align*}
    \mathbb E[u_1(x,x)]&\geq \frac{1}{2}\sum_{i=i_0}^{m-1}s_i\cdot \left(\frac{B}{S_i} - \frac{B}{S_{i+1}}\right)\;+\; \frac{1}{2}s_m\cdot \frac{B}{S_m}\\
    &\geq  \frac{B}{2}\sum_{i\in \mathcal{I}}s_i\cdot \left(\frac{1}{S_i} - \frac{1}{S_{i+1}}\right)\;+\; B/2\tag{as $S_m=s_m$}\\
    &\geq B/2+B/2\tag{due to the above calculations}\\
    &=B
\end{align*}
This concludes the proof of this theorem.
\end{proof}

We next prove the following technical lemma.
\begin{lemma}\label{lem:swap-increase}
Consider a player $i$ and a non-increasing demand function $f$ such that $f\!\left(c+\frac{1}{k}\right)>0$.
Let $\mathcal{D}$ be a distribution over price tuples $x_{-i}\in\mathcal{P}^{\,n-1}$, and for any tuple $x_{-i}$ let $\min(x_{-i})$ denote its minimum component.
Let $\frac{i_\star}{k}$ be the largest value of $\min(x_{-i})$ among all tuples $x_{-i}$ in the support of $\mathcal{D}$, and assume $i_\star>ck$.
Define
\[
\mathcal{P}_\star := \left\{\frac{i_\star}{k},\frac{i_\star+1}{k},\ldots,1\right\}\setminus\left\{c+\frac{1}{k},c+\frac{2}{k}\right\}.
\]
Then
\[
\max_{x_i\in \mathcal{P}\setminus \mathcal{P}_\star}\;
\mathbb{E}_{x_{-i}\sim \mathcal{D}}\!\left[u_i(x_i,x_{-i})\right]
\;>\;
\max_{x_i\in \mathcal{P}_\star}\;
\mathbb{E}_{x_{-i}\sim \mathcal{D}}\!\left[u_i(x_i,x_{-i})\right].
\]
\end{lemma}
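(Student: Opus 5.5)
The plan is to show that every price in $\mathcal{P}_\star$ is strictly beaten by some price outside $\mathcal{P}_\star$. Since $\mathcal{P}$ is finite, this gives the strict inequality between the two maxima. Throughout, write $m(x_{-i}) := \min(x_{-i})$ and
\[
q := \Pr_{x_{-i}\sim\mathcal{D}}\!\left[m(x_{-i}) = \tfrac{i_\star}{k}\right].
\]
By definition of $i_\star$ as the largest value of $m(x_{-i})$ in the support, we have $q>0$ and $m(x_{-i})\le i_\star/k$ almost surely. Note also that every $p\in\mathcal{P}_\star$ satisfies $p\ge i_\star/k>c$.

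\textbf{Step 1 (a positive benchmark).} I first show that the price $c+\tfrac1k$, which lies outside $\mathcal{P}_\star$, earns strictly positive expected utility. Since $i_\star\ge ck+1$, we have $c+\tfrac1k\le i_\star/k$. On the event $m(x_{-i})=i_\star/k$, which has probability $q$, player $i$ is therefore among the lowest bidders. It earns margin $\tfrac1k$, demand $f(c+\tfrac1k)>0$, and a share at least $1/n$. Hence
\[
\mathbb{E}\!\left[u_i(c+\tfrac1k,x_{-i})\right]\ \ge\ \frac{q\, f(c+\tfrac1k)}{nk}\ >\ 0 .
\]

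\textbf{Step 2 (prices above $i_\star/k$).} Let $p\in\mathcal{P}_\star$ with $p>i_\star/k$. Almost surely some opponent prices strictly below $p$, so $u_i(p,x_{-i})=0$ almost surely. By Step 1, $p$ is strictly beaten by $c+\tfrac1k$.

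\textbf{Step 3 (the price $p=i_\star/k$ itself, the main case).} This case arises only when $i_\star/k\notin\{c+\tfrac1k,c+\tfrac2k\}$, so $p-c=j/k$ with $j\ge3$. Player $i$ earns something only on the event $m(x_{-i})=p$, and there it ties with at least one opponent. Hence
\[
\mathbb{E}\!\left[u_i(p,x_{-i})\right]\ \le\ \tfrac12\,(p-c)\,f(p)\,q .
\]
If $f(p)=0$, this is $0$ and Step 1 finishes the case. Otherwise I compare with $p'=p-\tfrac1k$. Since $p'<i_\star/k$, this price lies outside $\mathcal{P}_\star$, and $p'-c=(j-1)/k>0$. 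On the event $m(x_{-i})=p$, player $i$ wins alone at price $p'$. Dropping all other events (which contribute nonnegatively since $p'>c$), I obtain
\[
\mathbb{E}\!\left[u_i(p',x_{-i})\right]\ \ge\ \frac{j-1}{k}\,f(p')\,q .
\]
Because $f$ is non-increasing, $f(p')\ge f(p)$. Because $j\ge3$, we have $j-1\ge \tfrac23 j>\tfrac12 j$. Together with $f(p)q>0$, this gives
\[
\mathbb{E}\!\left[u_i(p',x_{-i})\right]\ >\ \mathbb{E}\!\left[u_i(p,x_{-i})\right].
\]

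The only delicate point is this last comparison. The exclusion of $c+\tfrac1k$ and $c+\tfrac2k$ from $\mathcal{P}_\star$ is exactly what guarantees that the undercut price $p'$ keeps more than half of the margin, which beats the at-most-half tie share at $p$. I would also double-check that ties among several opponents only reduce player $i$'s share at $p$, so the factor $\tfrac12$ remains a valid upper bound.

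Combining Steps 2 and 3, every element of $\mathcal{P}_\star$ has strictly smaller expected utility than some element of $\mathcal{P}\setminus\mathcal{P}_\star$. This proves the stated strict inequality of maxima.
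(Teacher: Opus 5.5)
Your proof is correct and follows the paper's argument essentially step for step. You use $c+\tfrac1k$ as a strictly positive benchmark, note that prices above $i_\star/k$ earn zero, and beat $i_\star/k$ (when $i_\star-ck\ge 3$) by undercutting to $(i_\star-1)/k$, where the retained margin $(j-1)/k>\tfrac12\cdot j/k$ beats the at-most-half tie share. The only cosmetic difference is that you split on whether $f(i_\star/k)=0$, whereas the paper splits on whether $f((i_\star-1)/k)=0$; both splits work equally well.
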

\begin{proof}
    Observe that $\mathbb{E}_{x_{-i}\sim \mathcal{D}}[u_i(c+1/k,x_{-i})]>0$ and $\mathbb{E}_{x_{-i}\sim \mathcal{D}}[u_i(i/k,x_{-i})]=0$ for all $i>i_\star$. Next we have the following if $i_\star/k\geq c+3/k$ and $f\left(\frac{i_\star-1}{k}\right)>0$:
    \begin{align*}
        \mathbb{E}_{x_{-i}\sim \mathcal{D}}\left[u_i\left(\frac{i_\star-1}{k},x_{-i}\right)\right]&\geq \left(\frac{i_\star-1}{k}-c\right)\cdot f\left(\frac{i_\star-1}{k}\right)\cdot \mathbb{P}_{x_{-i}\sim \mathcal{D}}[\min(x_{-i})=i_\star/k]\\
        &>\frac{1}{2}\cdot\left(\frac{i_\star}{k}-c\right)\cdot f\left(\frac{i_\star}{k}\right)\cdot \mathbb{P}_{x_{-i}\sim \mathcal{D}}[\min(x_{-i})=i_\star/k]\\
        &\geq\mathbb{E}_{x_{-i}\sim \mathcal{D}}\left[u_i\left(\frac{i_\star}{k},x_{-i}\right)\right]
    \end{align*}
    If $f\left(\frac{i_\star-1}{k}\right)=0$, then $\mathbb{E}_{x_{-i}\sim \mathcal{D}}[u_i(i_\star/k,x_{-i})]=0$ whereas $\mathbb{E}_{x_{-i}\sim \mathcal{D}}[u_i(c+1/k,x_{-i})]>0$.

\end{proof}

We next prove Theorem \ref{thm:ce-symmetric}.
\begin{proof}[Proof of Theorem \ref{thm:ce-symmetric}]

    Let $\mathcal{D}$ be a correlated equilibrium. First observe that for any pair of prices $(x_1,x_2)$ which has a positive weight under the CE $\mathcal{D}$, we have $\min\{x_1,x_2\}\geq c$ otherwise the player setting the lowest price can swap their price with $c$ and get a higher utility. If $c=1$ then a weight of one on $(c,c)$ is the unique CE and $\mathbb{E}_{x\sim\mathcal{D}}\!\left[u_i(x)\right]=0$ for any $i\in\{1,2\}$. 
    
    Now let us consider the case where $c<1$. If $f(c+1/k)=0$, then again we get $\mathbb{E}_{x\sim\mathcal{D}}\!\left[u_i(x)\right]=0$ for any $i\in\{1,2\}$. Hence let us assume that $f(c+1/k)>0$.
    
    Now observe that for any pair of prices $(x_1,x_2)$ which has a positive weight under the CE $\mathcal{D}$ and $\min\{x_1,x_2\}=c$, then $x_1=x_2=c$ otherwise the player setting the price $c$ can swap their price with $c+1/k$ and get a higher utility. If $(c,c)$ has a weight of one under $\mathcal{D}$, then it forms a CE and $\mathbb{E}_{x\sim\mathcal{D}}\!\left[u_i(x)\right]=0$ for any $i\in\{1,2\}$. 
    
    Let us now assume that $(c,c)$ does not have a weight of one under $\mathcal{D}$. Now we claim that for any pair of prices $(x_1,x_2)$ which has a positive weight under the CE $\mathcal{D}$, we have $\max\{x_1,x_2\}\leq c+2/k$. For the sake of contradiction, let us assume that this is not true. Let $i_1/k$ be the largest price that the player $1$ chooses under $\mathcal{D}$ and let $i_2/k$ be the largest price that the player $2$ chooses under $\mathcal{D}$. W.l.o.g let us assume that $i_1\geq i_2$. In this case we have $i_1/k\geq c+3/k$. Now due to Lemma \ref{lem:swap-increase}, we have the following:
 \begin{equation*}
     \mathbb{E}_{(x_1,x_2)\sim\mathcal{D}}[u_1(i_1/k,x_2)|x_1=i_1/k]<\max_{x'_1\in\mathcal{P}}\mathbb{E}_{(x_1,x_2)\sim\mathcal{D}}[u_1(x'_1,x_2)|x_1=i_1/k]
 \end{equation*}
 
 This is a contradiction to our assumption that $\mathcal{D}$ is a correlated equilibrium. Hence, for any pair of prices $(x_1,x_2)$ which has a positive weight under the CE $\mathcal{D}$, we have $\max\{x_1,x_2\}\leq c+\frac{2}{k}$. 

 Now we claim that for any pair of prices $(x_1,x_2)$ which has a positive weight under the CE $\mathcal{D}$, we have $x_1=x_2$. We already proved the claim for the case when $\min\{x_1,x_2\}=c$. Also since $\max\{x_1,x_2\}\leq c+\frac{2}{k}$, if $\min\{x_1,x_2\}=c+\frac{2}{k}$, then $x_1=x_2=c+\frac{2}{k}$. Now let us consider the case when $\min\{x_1,x_2\}=c+\frac{1}{k}$. If $\max\{x_1,x_2\}= c+\frac{2}{k}$, then player setting price of $c+\frac{2}{k}$ can lower its price to $c+1/k$ and get a higher reward. Hence, in this case as well $x_1=x_2$. If $\lambda_1$ is the weight on $(c+1/k,c+1/k)$ and $\lambda_2$ is the weight on $(c+2/k,c+2/k)$, then for any $i\in\{1,2\}$ we have the following:
 \begin{equation*}
     \mathbb{E}_{x\sim\mathcal{D}}\!\left[u_i(x)\right]=\frac{\lambda_1}{2k}\cdot f(c+1/k)+\frac{\lambda_2}{k}\cdot f(c+2/k)\in \left[0,\frac{f(c+1/k)}{k}\right].
 \end{equation*}
\end{proof}

\begin{proof}[Proof of Theorem \ref{thm:phi-symmetric}]

Let $k_0=k-ck$. Recall that $1-c$ is a constant independent of $k$. Let us now assume that $k_0\geq 3e^{10}+1$ as the other case is trivial (just choose $c+1/k$ for both players). Now we describe a $\Phi$-correlated equilibrium $\mathcal{D}$ in which $\Phi_1$ contains all deviation maps and $\Phi_2$ contains only constant deviation. We begin by describing the marginal distribution $\mathcal{D}_1$ of the prices chosen by the player $1$. Fix an integer $M:=\Big\lfloor \frac{k_0-1}{e^{10}}\Big\rfloor.$

We now define $p_j$, the probability that the player $1$ chooses the price $c+j/k$, as follows:
$$
p_j=\begin{cases}
0& \text{if }j<M\text{ or }j=k_0 ,\\[2pt]
\frac{1}{M+1}& \text{if }j=M,\\[6pt]
\displaystyle \frac{M}{j(j+1)}&\text{if } M<j<k_0-1,\\[10pt]
\displaystyle \frac{M}{k_0-1}&\text{if } j=k_0-1.\\
\end{cases}
$$
The above values are nonnegative and sum to $1$ which we prove later.

Next, we describe how player~2 selects a price conditional on the price chosen by player~1. Given that player~1 chooses the price $c+i/k$, player~2 chooses the price $c+(i+1)/k$ with probability $1/2$ and the price $c+\lfloor i/2\rfloor/k$ with probability $1/2$.

First, we show that the player 1 has low swap regret. Consider a price $c+i/k$ such that $p_i>0$.  First, observe that $\mathbb{E}_{(x_1,x_2)\sim \mathcal{D}}[u_1(x_1,x_2)|x_1=c+i/k]=i/(2k)$. Next observe that $\mathbb{E}_{(x_1,x_2)\sim \mathcal{D}}[u_1(c+(i+1)/k,x_2)|x_1=c+i/k]=(i+1)/(4k)\leq i/(2k)$. Next observe that $\mathbb{E}_{(x_1,x_2)\sim \mathcal{D}}[u_1(c+j/k,x_2)|x_1=c+i/k]=0$ for all $j>i+1$. Next observe that $\mathbb{E}_{(x_1,x_2)\sim \mathcal{D}}[u_1(c+j/k,x_2)|x_1=c+i/k]=j/(2k)< i/(2k)$ for all $\lfloor i/2\rfloor<j<i$. Finally, observe that $\mathbb{E}_{(x_1,x_2)\sim \mathcal{D}}[u_1(c+j/k,x_2)|x_1=c+i/k]\leq i/(2k)$ for all $j\leq \lfloor i/2\rfloor$. Hence, the player $1$ has low swap regret.

Next, we show that the player 2 has low external regret.

Let $S_i:=\sum_{j\ge i}p_j$. We have the following for all $i\in[k_0-1]$ (which we prove later):
$$
S_i=\min\!\left(1,\frac{M}{i}\right),
$$
so
\begin{align*}
    \mathbb{E}_{x\sim \mathcal{D}_1}[x-c]&=\frac{1}{k}\sum_{j=1}^{k_0-1}jp_j=\frac{1}{k}\sum_{j=1}^{k_0-1}\sum_{i=1}^jp_j=\frac{1}{k}\sum_{i=1}^{k_0-1}\sum_{j=i}^{k_0-1}p_j=\frac{1}{k}\sum_{i=1}^{k_0-1}S_i\\
&=\frac{1}{k}\Big(M+M\sum_{i=M+1}^{k_0-1}\frac{1}{i}\Big)=\frac{M}{k}\Big(1+H_{k_0-1}-H_M\Big),
\end{align*}
where $H_n$ is the $n$-th harmonic number.

Now we have the following for any $i\in[k_0]$:
$$
\mathbb{E}_{x_1\sim \mathcal{D}_1}[u_2(x_1,c+i/k)]=\frac{i}{k}\Big(\mathbb{P}_{x_1\sim \mathcal{D}_1}(x_1>c+i/k)+\tfrac12\mathbb{P}_{x_1\sim \mathcal{D}_1}(x_1=c+i/k)\Big)\leq\frac{i}{k}S_i.
$$

Hence, $$
\max_{i\in[k_0]}\mathbb{E}_{x_1\sim \mathcal{D}_1}[u_2(x_1,c+i/k)]\le\frac{1}{k}\max_{i\in[k_0]} i\cdot S_i=\frac{M}{k}.$$ 



Due to the properties of Harmonic number, we have $H_{k_0-1}-H_M\ge \ln\!\frac{k_0-1}{M}-\frac{1}{2M}$ (which we prove later).
With $M=\lfloor (k_0-1)/e^{10}\rfloor$, we have $H_{k_0-1}-H_M\geq 9.5$, so

$$
\frac{\mathbb{E}_{x\sim \mathcal{D}_1}[x-c]}{\max_{i\in[k_0]}\mathbb{E}_{x_1\sim \mathcal{D}_1}[u_2(x_1,c+i/k)]}\geq{1+H_{k_0-1}-H_M}\geq{9.5}.
$$

Next we have the following:
\begin{align*}
    \mathbb{E}_{(x_1,x_2)\sim\mathcal{D}}[u_2(x_1,x_2)]&=\frac{1}{2}\cdot\sum_{i=M}^{k_0-1}p_i\cdot\left(\frac{\lfloor i/2\rfloor}{k}\right)\\
    &\geq \frac{1}{2}\cdot\sum_{i=M}^{k_0-1}p_i\cdot\left(\frac{i}{4k}\right)\\
    &= (1/8)\cdot \mathbb{E}_{x\sim \mathcal{D}_1}[x-c]\\
    &\geq \max_{i\in[k_0]}\mathbb{E}_{x_1\sim \mathcal{D}_1}[u_2(x_1,c+i/k)]
\end{align*}
Hence, player $2$ has a low external regret.

Also observe that $\mathbb{E}_{(x_1,x_2)\sim\mathcal{D}}[u_1(x_1,x_2)]\geq (1/2)\cdot \mathbb{E}_{x\sim \mathcal{D}_1}[x-c]$. Hence, in order to prove the theorem, it suffices to show that $\mathbb{E}_{x\sim \mathcal{D}_1}[x-c]$ is large. Towards that we have the following:
$$
\mathbb{E}_{x\sim \mathcal{D}_1}[x-c]=\frac{M}{k}\Big(1+H_{k_0-1}-H_M\Big)
\;\geq\;\frac{4.74(1-c)}{e^{10}}.
$$

\textbf{Omitted calculations}

First, we prove that $\sum_{j=1}^{k_0}p_j=1$. Using the identity $\frac{1}{j(j+1)}=\frac{1}{j}-\frac{1}{j+1}$, we have the following:
\begin{align*}
    \sum_{j=1}^{k_0}p_j&=\frac{1}{M+1}+\sum_{j=M+1}^{k_0-2}\frac{M}{j(j+1)}+\frac{M}{k_0-1}\\
    &=\frac{1}{M+1}+\sum_{j=M+1}^{k_0-2}\left(\frac{M}{j}-\frac{M}{j+1}\right)+\frac{M}{k_0-1}\\
    &=\frac{1}{M+1}+\frac{M}{M+1}-\frac{M}{k_0-1}+\frac{M}{k_0-1}\\
    &=1
\end{align*}

Next, we prove that $S_i:=\sum_{j\ge i}p_j=\left(1,\frac{M}{i}\right)$  for all $i\in[k_0-1]$. We divide our analysis into three cases.

Case 1: $i\le M$

As $p_j=0$ for all $j<M$, we have $\sum_{j<i}p_j=0$. Therefore

$$
S_i=\sum_{j\ge i}p_j=1=\min(1,M/i).
$$

Case 2: $M+1\le i\le k_0-2$

In this case, we have

$$
S_i=\sum_{j=i}^{k_0-2}\frac{M}{j(j+1)}+\frac{M}{k_0-1}=\sum_{j=i}^{k_0-2}\left(\frac{M}{j}-\frac{M}{j+1}\right)+\frac{M}{k_0-1}
 = M\!\left(\frac{1}{i}-\frac{1}{k_0-1}\right)+\frac{M}{k_0-1} = \frac{M}{i}=\min(1,M/i).
$$

Case 3: $i=k_0-1$

In this case, we have

$$
S_{k_0-1}=p_{k_0-1}=\frac{M}{k_0-1}=\frac{M}{i}=\min(1,M/i).
$$

By combining all the three cases, we get $S_i=\min(1,M/i)$ for all $i\in[k_0-1]$.

Finally, we show that $H_{k_0-1}-H_M\ge \ln\!\frac{k_0-1}{M}-\frac{1}{2M}$. For any $n\geq 1$, we have the following:
\begin{equation*}
    \gamma+\ln n<H_n<\gamma+\ln n+\frac{1}{2n},
\end{equation*}
where $\gamma$ is the euler's constant and the inequality follows from \citet{guo2011sharp}.

Hence, we have the following:
\begin{equation*}
    H_{k_0-1}-H_M\geq \gamma+\ln (k_0-1)-(\gamma+\ln M+\frac{1}{2M})=\ln\!\frac{k_0-1}{M}-\frac{1}{2M}
\end{equation*}
 \end{proof}

\begin{proof}[Proof of Theorem \ref{thm:phi-oligopoly}]
    Let $\mathcal{D}$ be a $\Phi$-correlated equilibrium such that $\Phi_1$ and $\Phi_2$ contain all deviation maps. First, observe that if $c=1$ then for each player $i\in[n]$ we have $\mathbb{E}_{x\sim\mathcal{D}}\!\left[u_i(x)\right]\leq 0$. Next, observe that if $f(c+1/k)=0$, then for each player $i\in[n]$ we have $\mathbb{E}_{x\sim\mathcal{D}}\!\left[u_i(x)\right]\leq 0$. 
    
    Let us now assume that $f(c+1/k)>0$. We now prove that for any tuple of prices $x$ which has a positive weight under $\mathcal{D}$, we have $\min_{i\in[n]}x_i\leq c+\frac{2}{k}$. For the sake of contradiction, let us assume that this is not true. Among the tuples that have a positive weight under $\mathcal{D}$, let $\tilde{x}$ be the tuple that maximizes $\min_{i\in[n]}x_i$. W.l.o.g let us assume that $\tilde{x}_1\geq \tilde{x}_2$. Observe that we have $\tilde{x}_1\geq c+3/k$. By applying the Lemma \ref{lem:swap-increase} on the marginal distribution of $x_{-1}$ conditioned on $x_1=\tilde{x}_1$ under $\mathcal{D}$, we have the following:
 \begin{equation*}
     \mathbb{E}_{x\sim\mathcal{D}}[u_1(\tilde{x}_1,x_{-1})|x_1=\tilde{x}_1]<\max_{x_1'\in\mathcal{P}}\mathbb{E}_{x\sim\mathcal{D}}[u_1(x_1',x_{-1})|x_1=\tilde{x}_1]
 \end{equation*}
 This is a contradiction to our assumption that $\mathcal{D}$ is a $\Phi$-correlated equilibrium. Hence, for any tuple of prices $x$ which has a positive weight under $\mathcal{D}$, we have $\min_{i\in[n]}x_i\leq c+\frac{2}{k}$. 

If $f(c+2/k)=0$, then for each player $i\in[n]$ we have $\mathbb{E}_{x\sim\mathcal{D}}\!\left[u_i(x)\right]\leq \frac{f(c+1/k)}{k}$. Let us now assume that $f(c+2/k)>0$.
 
 Consider tuple of prices $x$ which has a positive weight under $\mathcal{D}$. Now we claim that if $\min_{i\in[n]}x_i=c+\frac{2}{k}$, then $x_1=x_2=c+\frac{2}{k}$. For contradiction, let us assume that this is not true. W.l.o.g let us assume that $x_1\geq x_2$. Observe that $x_1>c+\frac{2}{k}$. Since for any tuple $x'$ which has a positive weight under $\mathcal{D}$ has $\min_{i\in[n]}x_i'\leq c+\frac{2}{k}$, we have the following:
 \begin{equation*}
     \mathbb{E}_{x'\sim\mathcal{D}}[u_1(x_1,x_{-1}')|x_1'=x_1]=0<\mathbb{E}_{x\sim\mathcal{D}}[u_1(c+2/k,x_{-1}')|x_1'=x_1].
 \end{equation*}
 This implies for each player $i\in[n]$ we have $\mathbb{E}_{x\sim\mathcal{D}}\!\left[u_i(x)\right]\leq \frac{2}{k}\cdot\frac{f(c+1/k)}{2}=\frac{f(c+1/k)}{k}$.
\end{proof}

 \begin{proof}[Proof of Theorem \ref{thm:decay}]
Let us extend the demand function $f$ to include zero by setting $f(0)=f(1/k)$. Consider a joint distribution $\mathcal{D}$ which is a CCE of the bertrand game with $n$ players. Let $X=(X_1,X_2,\ldots,X_n)$ denote the random price tuple drawn from the $\mathcal{D}$ where $X_i$ denote the price set by the player $i$. Let $u_i(X)$ denote the payoff of player $i$.  Note that $\sum_{j=1}^n u_j(X) = (\min_{j\in[n]} X_j-c)\cdot f(\min_{j\in[n]} X_j)$. Let $U_i := \mathbb{E}[u_i(X)]$ for all $i\in[n]$ and $W := \mathbb{E}[(\min_j X_j-c)\cdot f(\min_{j\in[n]} X_j)]=\sum_{j=1}^n U_j$. 

Consider a player $i$ and a price $c+a\in\mathcal{P}$. Let $X_{-i}:=\min_{j\neq i} X_j$. Observe that $\mathbb{E}[u_i(c+a,X_{-i})]\geq a\cdot \Pr(X_{-i}>c+a)\cdot f(c+a)$. As the distribution $\mathcal{D}$ is a CCE, we have $U_i\geq\mathbb{E}[u_i(c+a,X_{-i})]$. Therefore, we have $\Pr(X_{-i}>c+a)\cdot f(c+a)\leq \min\{f(c+a),\frac{U_i}{a}\}$. Now we have the following:
\begin{equation*}
    \Pr\left(\min_{j\in[n]}X_j>c+a\right)\cdot f(c+a)\leq \frac{1}{n}\sum_{j=1}^n\Pr(X_{-j}>c+a)\cdot f(c+a)\leq\frac{1}{n}\sum_{j=1}^n\min\left\{f(c+a),\frac{U_j}{a}\right\}\leq\min\left\{f(c+a),\frac{W}{na}\right\}
\end{equation*}

Let $\lambda:=W/n$ and $k_0=k-ck$. Now we have the following:
\begin{align*}
     W=\mathbb{E}[(\min_{j\in[n]}X_j-c)\cdot f(\min_{j\in[n]} X_j)]&\leq\frac{1}{k}\cdot\mathbb{E}\left[\sum_{i=0}^{k_0-1}\mathbbm{1}\left\{\min_{j\in[n]}X_j>c+i/k\right\}\cdot f(\min_{j\in[n]} X_j)\right]\\
    &\leq \frac{1}{k}\sum_{i=0}^{k_0-1}\Pr(\min_{j\in[n]}X_j>c+i/k)\cdot f(c+(i+1)/k)\\
    &\leq \frac{1}{k}\left(f(c+1/k)+\sum_{i=1}^{k_0-1}\min\left\{f(c+i/k),\frac{\lambda k}{i}\right\}\right)\\
    & \leq \frac{\lfloor \lambda k/f(c+1/k)\rfloor+2}{k}\cdot f(c+1/k)+\lambda\sum_{i=\lfloor \lambda k/f(c+1/k)\rfloor+2}^{k_0-1}\frac{1}{i}\\
    &\leq \frac{2f(c+1/k)}{k}+\lambda+\lambda\ln\left(\frac{k_0-1}{\lfloor \lambda k/f(c+1/k)\rfloor+1}\right)\\
    &\leq \frac{2f(c+1/k)}{k}+\lambda+\lambda\ln\left(\frac{k_0f(c+1/k)}{ \lambda k}\right)\\
    &\leq \frac{2f(c+1/k)}{k}+\lambda+\lambda\ln\left(\frac{(1-c)f(c+1/k)}{\lambda}\right)\\
\end{align*}

If $W\leq4f(c+1/k)/k$, the theorem statement holds trivially. Let us assume that $W>\frac{4f(c+1/k)}{k}$. Now, we have
\[
\frac{W}{2} \le \lambda + \lambda \ln\frac{(1-c)f(c+1/k)}{\lambda}
= \frac{W}{n}\Big(1 + \ln\frac{n(1-c)f(c+1/k)}{W}\Big).
\]

Now if we divide both sides by $W/2$, we have:
\[
1 \le \frac{2}{n}\Big(1 + \ln\frac{n(1-c)f(c+1/k)}{W}\Big)
\quad\Rightarrow\quad
n/2 \le 1 + \ln\frac{n(1-c)f(c+1/k)}{W}.
\]

Hence
\[
\ln\frac{n(1-c)f(c+1/k)}{W} \ge n/2-1
\quad\Rightarrow\quad
\frac{n(1-c)f(c+1/k)}{W} \ge e^{n/2-1}
\quad\Rightarrow\quad
W \le n(1-c)f(c+1/k)e^{1-n/2}.
\]

This concludes the proof of the theorem.
\end{proof}




 We now focus on the asymmetric-cost setting. We consider costs $c_1<c_2$ such that the gaps $c_2-c_1$ and $1-c_2$ are fixed positive constants, independent of $k$. We next state Theorem~\ref{thm:cce-asymmetric} formally, including the required conditions on the demand function; these conditions are satisfied by a broad range of demand functions, including constant, linear, quadratic, and exponential demand.
\begin{theorem}[Formal statement of Theorem \ref{thm:cce-asymmetric}]
    Let $\lambda_1\in (0,1)$ be a constant independent of $k$. Let the following conditions hold: $1/k\cdot f(c_1+1/k)<\frac{\lambda_1}{8e^2}\cdot \max_{x\in\mathcal{P}}(x-c_1)f(x)$, $\max_{x\in\mathcal{P}}(x-c_2)f(x)\geq \lambda_1\cdot \max_{x\in\mathcal{P}}(x-c_1)f(x)$, and $\max_{i\in [kc_2-1]}(i/k-c_1)\cdot f(i/k)\geq \lambda_1/(4e^2)\cdot \max_{x\in\mathcal{P}}(x-c_1)f(x)$. Then there exists a CCE $\mathcal{D}$ such that, for each player $i\in\{1,2\}$, the following holds:
\[
\mathbb{E}_{x\sim \mathcal{D}}\!\left[u_i(x)\right]
\ge
\lambda_2\cdot \max_{x\in \mathcal{P}} (x-c_i)\, f(x),
\]
where $\lambda_2 \in (0,1)$ is an absolute constant depending on $\lambda_1$ and independent of $k$.
\end{theorem}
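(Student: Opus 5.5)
The plan is to build $\mathcal{D}$ as a convex combination $\mathcal{D}=\alpha\,\mathcal{D}_A+(1-\alpha)\,\mathcal{D}_B$ of two distributions. Neither component needs to be a CCE on its own. The CCE constraints are linear, so it suffices that the \emph{aggregate} constraint for each player and each constant deviation holds. $\mathcal{D}_A$ will reward player~1 using only prices below $c_2$. $\mathcal{D}_B$ will reward player~2 using prices above $c_2$.

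\textbf{Component $\mathcal{D}_B$.} This is the construction from the proof of Theorem~\ref{thm:cce-symmetric}, run with cost $c_2$. Set $s^{(2)}_i=(i/k-c_2)f(i/k)$ and $B_2=\frac{1}{2e^2}\max_i s^{(2)}_i$, define $\tau_i$ as there, and let both players post the same random price $x\ge c_2+1/k$. (If we are in Case~1 of that proof, both post $c_2+1/k$.) That proof shows player~2 gets at least $B_2$, or $B_2/2$ in Case~1, while any constant deviation of player~2 earns at most $B_2$. Player~1 gets at least player~2's payoff, since $x-c_1\ge x-c_2$. For player~1 I only use the crude bound that any deviation earns at most $M_1:=\max_x(x-c_1)f(x)$ under $\mathcal{D}_B$.

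\textbf{Component $\mathcal{D}_A$.} Let $m'\le kc_2-1$ be the index attaining $\max_{i\le kc_2-1}(i/k-c_1)f(i/k)$, which is at least $\frac{\lambda_1}{4e^2}M_1$ by the third hypothesis.
\begin{itemize}
\item Player~1 draws $x$ from the Theorem~\ref{thm:cce-symmetric} distribution for cost $c_1$ truncated at $m'$, with threshold $B_1=\frac{1}{2e^2}s^{(1)}_{m'}$.
\item Player~2 posts $x+1/k$, so player~2 never wins and never prices below cost.
\end{itemize}
Player~1 then wins outright, so its payoff is $\mathbb{E}[s^{(1)}_x]\ge 2B_1$ by the same $\ln(S_m/S_{i_0})$ computation (now without the factor $\tfrac12$). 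Against player~2's marginal $x+1/k$, a deviation of player~1 to $i/k\le c_2$ earns at most $s^{(1)}_i\Pr[x\ge i/k]$. This is the symmetric bound shifted by one tick, so it is at most $B_1$ up to the single boundary term at the lowest index. The first hypothesis $f(c_1+1/k)/k<\frac{\lambda_1}{8e^2}M_1$ is exactly what makes that boundary term (and Case~1 of the symmetric construction) negligible. A deviation of player~1 to a price above $c_2$ earns $0$ in $\mathcal{D}_A$, since player~2's support lies in $(c_1,c_2]$. For player~2, $\mathcal{D}_A$ contributes $0$ to its realized utility. Any constant deviation $p$ also yields at most $0$: if $p<c_2$ the utility is nonpositive, and if $p\ge c_2$ it loses to player~1, who is below $c_2$.

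\textbf{Combining.} Player~2's aggregate constraint reduces to $(1-\alpha)\cdot(\text{deviation in }\mathcal{D}_B)\le(1-\alpha)\cdot(\text{utility in }\mathcal{D}_B)$, which holds. For player~1, any deviation earns at most $\alpha B_1+(1-\alpha)M_1$, while its utility is at least $2\alpha B_1$. So it suffices that $\alpha B_1\ge(1-\alpha)M_1$. Since $B_1\ge\frac{\lambda_1}{8e^4}M_1$, the choice $\frac{1-\alpha}{\alpha}=\frac{\lambda_1}{8e^4}$ works and is a constant. Then player~1 gets at least $2\alpha B_1=\Omega(\lambda_1)M_1$. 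Player~2 gets at least $(1-\alpha)B_2/2=\Omega(\lambda_1)\max_x(x-c_2)f(x)$, and this is also $\Omega(\lambda_1^2)M_1$ by the second hypothesis. This gives a $\lambda_2$ depending only on $\lambda_1$.

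\textbf{Main obstacle.} The delicate step is the deviation bound for player~1 inside $\mathcal{D}_A$. Because player~2 sits one tick above, player~1 can deviate to exactly player~2's price and still win ties or near-ties. The tail bound $\Pr[\text{opponent}>i/k]$ therefore involves $\tau_{i}$ rather than $\tau_{i+1}$, and the low end near $c_1+1/k$ needs the first hypothesis. I would handle this by redoing the $s_i\tau_i\le B$ argument with the shifted marginal, and absorbing the off-by-one loss into a factor~$2$ via $S_{i+1}\le 2s_i$, as in the proof of Theorem~\ref{thm:cce-symmetric}.
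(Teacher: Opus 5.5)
Your decomposition is a genuinely different route from the paper's, and it can be made to work. However, the step you single out as delicate is misstated, and the repair you propose would not close it.

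\textbf{The deviation bound in $\mathcal{D}_A$.} A deviation of player~1 to $i/k$ wins outright when $x\ge i/k$ and ties when $x=(i-1)/k$. So it earns exactly $\tfrac12 s^{(1)}_i(\tau_{i-1}+\tau_i)$, not at most $s^{(1)}_i\tau_i$. This generally exceeds $B_1$. At $i=i_0$ it equals $\tfrac12(s^{(1)}_{i_0}+B_1)$. For constant demand with $c_1=0$ it equals $\tfrac{B_1}{2}\bigl(1+\tfrac{i}{i-1}\bigr)$ for every $i_0<i\le m'$. So the bound $\alpha B_1+(1-\alpha)M_1$ in your combining step is false.

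If you absorb the shift by bounding $s^{(1)}_i\tau_{i-1}\le 2B_1$ wholesale, the $\mathcal{D}_A$ deviation bound becomes $2B_1$. That equals your utility lower bound, so nothing is left to pay for the $(1-\alpha)M_1$ coming from $\mathcal{D}_B$. You must keep the tie average instead:
\begin{itemize}
\item $s^{(1)}_i\tau_i\le B_1$, as in Theorem~\ref{thm:cce-symmetric};
\item $s^{(1)}_i\tau_{i-1}\le 2B_1$, from $S_{i_0}\le 2B_1$ and $s^{(1)}_i\le 2s^{(1)}_{i-1}$ for $i-1>kc_1$.
\end{itemize}
This gives a bound of $\tfrac32 B_1$, hence slack $\tfrac12\alpha B_1$, and the choice $\frac{1-\alpha}{\alpha}=\frac{\lambda_1}{16e^4}$ then works.

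\textbf{Case~1 of the ladder.} The first hypothesis does not rule out Case~1 of the ladder in $\mathcal{D}_A$, because $B_1$ can be as small as $\frac{\lambda_1}{8e^4}M_1<\frac{\lambda_1}{8e^2}M_1$. In that case $\mathcal{D}_A$ is the pair $(c_1+1/k,\,c_1+2/k)$, and it has no internal slack against the deviation to $c_1+2/k$. You need the extra check $s^{(1)}_{kc_1+2}\le 2s^{(1)}_{kc_1+1}<\frac{\lambda_1}{4e^2}M_1\le u^B_1$, where $u^B_1\ge B_2/2$ is player~1's payoff under $\mathcal{D}_B$. This is where the first hypothesis actually enters your argument. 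The Case~2 boundary term is below $B_1$ without any hypothesis.

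\textbf{Comparison with the paper.} The paper takes $\mathcal{D}_A$ to be the point mass on $(x_0,x_0+1/k)$. Here $x_0$ is the smallest price whose $c_1$-profit reaches $\frac{\lambda_1}{8e^2}M_1$. The weight on the $c_2$-ladder is calibrated to be exactly $\lambda_0=(x_0-c_1)f(x_0)/M_1$.

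That $\mathcal{D}_A$ has no internal slack. Instead, every deviation of player~1 earns at most about $\lambda_0M_1$. Undercutting $x_0$ is unprofitable by minimality of $x_0$, and prices above $x_0+1/k$ earn at most $(1-p_0)M_1=\lambda_0M_1$. Player~1's payoff $(1-\lambda_0)\lambda_0M_1+\lambda_0u^B_1$ covers this because $u^B_1\ge\frac{\lambda_1}{4e^2}M_1\ge\lambda_0M_1$. The paper uses the first hypothesis to force $x_0>c_1+1/k$, so that the doubling bound yields $\lambda_0\le\frac{\lambda_1}{4e^2}$.

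Your second ladder below $c_2$ buys internal slack. That lets you bound all cross terms from $\mathcal{D}_B$ crudely by $M_1$ and choose $\alpha$ without calibration. The cost is a second run of the logarithmic computation and the off-by-one tie analysis above. Your constants are also weaker: roughly $\frac{\lambda_1}{4e^4}M_1$ for player~1, versus the paper's $\frac{\lambda_1}{8e^2}M_1$.
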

\begin{proof}
    Let $\mathcal{P}_i:=\{c_i,c_i+1/k,c_i+2/k,\ldots,1\}$.  Let $s_{\max}^{(i)}:=\max_{j\in \mathcal{P}_i}(j/k-c_i)\cdot f(j/k)$. Let us assume that $s_{\max}^{(2)}\geq \lambda_1\cdot s_{\max}^{(1)}$.

    Let $x_0\in\mathcal{P}$ be the smallest price such that $c_1<x_0<c_2-1/k$ and $(x_0-c_1)\cdot f(x_0)\geq \frac{\lambda_1}{8e^2}s_{\max}^{(1)}$. Observe that such a price exists due to our assumptions and $\lambda_0:=\frac{(x_0-c_1)\cdot f(x_0)}{s_{\max}^{(1)}}\leq \frac{\lambda_1}{4e^2}$. Now we define the CCE $\mathcal{D}$ as follows. We choose the pair of prices $(x_0,x_0+1/k)$ with probability $p_0:=1-\lambda_0$.  

    Let $s_i:=(i/k-c_2)\cdot f(i/k)$ and $S_i:=\max_{j\leq i}s_j$. Consider the smallest index $m$ such that $m\in \arg\max_{i\in[k]} s_i$. Consider a constant $\lambda :=\tfrac{1}{2e^2}$ and set $B := \lambda s_{\max}^{(2)}$. We now consider the following two cases.

\textbf{Case 1: $s_{k\cdot c_2+1} \ge B$.} We choose the price $c_2+1/k$ for both the players with probability $1-p_0$. In this case, observe that $\mathbb{E}_{(x_1,x_2)\sim\mathcal{D}}[u_2(x_1,x_2)]=(1-p_0)s_{k\cdot c_2+1}/2\geq (1-p_0)B/2$ and $\max_{x\in\mathcal{P}}\mathbb{E}_{(x_1,x_2)\sim\mathcal{D}}[u_2(x_1,x)]=(1-p_0)s_{k\cdot c_2+1}/2$.

\textbf{Case 2: $s_{k\cdot c_2+1}<B$.} Let $i_0 := \min\{ i : S_i \ge B \}$. We randomly choose a price $x=i/k$ for both the players with probability $(1-p_0)(\tau_i-\tau_{i+1})$, where $\tau_i$ is defined as follows:
\[
\tau_i :=
\begin{cases}
1, & i < i_0, \\
B / S_i, & i_0 \le i \le m, \\
0, & i > m,
\end{cases}
\]
In this case, using analogous calculations as that of the proof of Theorem \ref{thm:cce-symmetric}, we can show that the following:
\begin{equation*}
    \mathbb{E}_{(x_1,x_2)\sim\mathcal{D}}[u_2(x_1,x_2)]\geq (1-p_0)B.
\end{equation*}
Similarly, using analogous calculations as that of the proof of Theorem \ref{thm:cce-symmetric}, we can show that the following:
\begin{equation*}
    \max_{x\in\mathcal{P}}\mathbb{E}_{(x_1,x_2)\sim\mathcal{D}}[u_2(x_1,x)]\leq (1-p_0)B.
\end{equation*}

Now for either case, we want to show that player $1$ will have no incentive to unilaterally deviate under the CCE $\mathcal{D}$. First we have the following:
\begin{equation*}
    \mathbb{E}_{(x_1,x_2)\sim\mathcal{D}}[u_1(x_1,x_2)]\geq p_0\cdot \lambda_0\cdot s_{\max}^{(1)}+\frac{(1-p_0)}{4e^2}\cdot \lambda_1\cdot s_{\max}^{(1)}
\end{equation*}
Next we have the following:
\begin{equation*}
     \max_{x\in\mathcal{P}:x\leq x_0+1/k}\mathbb{E}_{(x_1,x_2)\sim\mathcal{D}}[u_1(x,x_2)]=\lambda_0\cdot s_{\max}^{(1)}
\end{equation*}
Next we have the following:
\begin{equation*}
     \max_{x\in\mathcal{P}:x>x_0+1/k}\mathbb{E}_{(x_1,x_2)\sim\mathcal{D}}[u_1(x,x_2)]\leq (1-p_0)s_{\max}^{(1)}
\end{equation*}
As $\lambda_0\leq \lambda_1/(4e^2)$ and $p_0=1-\lambda_0$, then we have the following:
\begin{equation*}
    \mathbb{E}_{(x_1,x_2)\sim\mathcal{D}}[u_1(x_1,x_2)]\geq \max_{x\in\mathcal{P}}\mathbb{E}_{(x_1,x_2)\sim\mathcal{D}}[u_1(x,x_2)]
\end{equation*}

\end{proof}

We next prove the following technical lemma for costs $c_1\leq c_2$.
\begin{lemma}\label{lem:swap-increase-cost}
    Consider a distribution $\mathcal{D}$ over the prices in $\mathcal{P}$. Let $i_\star/k$ be the largest price that has positive weight under the distribution $\mathcal{D}$, and let $\mathcal{P}_{i}:=\{i_\star/k,(i_\star+1)/k,\ldots,1\}\setminus\{1/k,2/k,\ldots,c_i+2/k\}$. If $i_\star/k> c_1$ and $f(c_1+1/k)>0$, then we have
    $$\max_{x_1\in \mathcal{P}\setminus\mathcal{P}_{1}}\mathbb{E}_{x_2\sim \mathcal{D}}[u_1(x_1,x_2)]>\max_{x_1\in \mathcal{P}_{1}}\mathbb{E}_{x_2\sim \mathcal{D}}[u_1(x_1,x_2)],$$
    and if $i_\star/k>c_2$ and $f(c_2+1/k)>0$, then we have
    $$\max_{x_2\in \mathcal{P}\setminus\mathcal{P}_{2}}\mathbb{E}_{x_1\sim \mathcal{D}}[u_2(x_1,x_2)]>\max_{x_2\in \mathcal{P}_{2}}\mathbb{E}_{x_1\sim \mathcal{D}}[u_1(x_1,x_2)].$$
\end{lemma}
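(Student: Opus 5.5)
The plan is to follow the proof of Lemma~\ref{lem:swap-increase}, handled separately for each player. I treat player~1; the statement for player~2 is identical after swapping indices and replacing $c_1$ by $c_2$. I read the $u_1$ on the right-hand side of the second display as a typo for $u_2$.

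Since the opponent's price is drawn from $\mathcal{D}$, every opponent price is at most $i_\star/k$. By definition, $\mathcal{P}_1$ consists of the prices that are at least $i_\star/k$ and strictly greater than $c_1+2/k$. The reference deviation is $c_1+1/k$. It lies in $\mathcal{P}\setminus\mathcal{P}_1$ because $c_1\in\mathcal{P}\cup\{0\}$ and $c_1<i_\star/k\le 1$. The first step is to show this deviation earns strictly positive utility. Since $i_\star/k\ge c_1+1/k$, with probability $\Pr_{\mathcal{D}}[x_2=i_\star/k]>0$ the opponent prices at least $c_1+1/k$. In that event player~1 wins at least half the demand, so
\[
\mathbb{E}_{x_2\sim\mathcal{D}}[u_1(c_1+1/k,x_2)]\ \ge\ \tfrac{1}{2k}\, f(c_1+1/k)\,\Pr_{\mathcal{D}}[x_2=i_\star/k]\ >\ 0.
\]

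Next I bound the utility at every price in $\mathcal{P}_1$. Any price strictly above $i_\star/k$ never attains the minimum, so it yields utility $0$, which is strictly less than the reference. It remains to handle $i_\star/k$ itself, and only when $i_\star/k\in\mathcal{P}_1$, that is, when $i_\star/k\ge c_1+3/k$. Playing $i_\star/k$ earns something only on ties, so
\[
\mathbb{E}[u_1(i_\star/k,x_2)]=\tfrac12\,(i_\star/k-c_1)\,f(i_\star/k)\,\Pr[x_2=i_\star/k].
\]
If $f(i_\star/k)=0$, this is $0$ and is beaten by $c_1+1/k$.

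Otherwise I compare with the deviation $(i_\star-1)/k$. This price is at most $c_1+2/k$ or below $i_\star/k$, so in either case it is not in $\mathcal{P}_1$. It wins outright whenever $x_2=i_\star/k$, which gives
\[
\mathbb{E}[u_1((i_\star-1)/k,x_2)]\ \ge\ ((i_\star-1)/k-c_1)\,f((i_\star-1)/k)\,\Pr[x_2=i_\star/k].
\]
Because $f$ is non-increasing, $f((i_\star-1)/k)\ge f(i_\star/k)$. Because $i_\star/k-c_1\ge 3/k$, we have $(i_\star-1)/k-c_1\ge \tfrac23\,(i_\star/k-c_1)>\tfrac12\,(i_\star/k-c_1)$. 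Together with $f(i_\star/k)>0$ and $\Pr[x_2=i_\star/k]>0$, this gives a strict improvement over $i_\star/k$.

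Every price in $\mathcal{P}_1$ is therefore strictly beaten by some price outside $\mathcal{P}_1$, and since $\mathcal{P}_1$ is finite the strict inequality between the two maxima follows. I expect no step to be a real obstacle. The points needing care are bookkeeping: checking that the two comparison prices $c_i+1/k$ and $(i_\star-1)/k$ indeed lie outside $\mathcal{P}_i$; treating separately the degenerate case $f(i_\star/k)=0$ and the cases where $\mathcal{P}_i$ omits $i_\star/k$ or is empty (in the empty case the right-hand maximum is vacuous or $-\infty$); and noting that the margin $x-c_i$ is positive on $\mathcal{P}_i$, so no sign issues arise there.
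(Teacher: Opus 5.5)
Your proposal is correct and matches the paper's proof essentially step for step. The deviation to $c_i+1/k$ has strictly positive payoff, which handles prices above $i_\star/k$ and the zero-demand case, and undercutting to $(i_\star-1)/k$ beats $i_\star/k$ when $i_\star\ge kc_i+3$. The differences are cosmetic: you split on $f(i_\star/k)=0$ where the paper splits on $f((i_\star-1)/k)=0$, and reading the $u_1$ in the second display as a typo for $u_2$ is consistent with the paper's ``analogously'' treatment of player~2.
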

\begin{proof}
    Let us assume that $i_\star/k>c_1$.
    Observe that $\mathbb{E}_{x_2\sim \mathcal{D}}[u_1(c_1+1/k,x_2)]>0$ and $\mathbb{E}_{x_2\sim \mathcal{D}}[u_1(i/k,x_2)]=0$ for all $i>i_\star$. Next we have the following if $i_\star\geq kc_1+3$ and $f(\frac{i_\star-1}{k})>0$:
    \begin{align*}
        \mathbb{E}_{x_2\sim \mathcal{D}}\left[u_1\left(\frac{i_\star-1}{k},x_2\right)\right]&\geq \frac{i_\star-1-kc_1}{k}\cdot f\left(\frac{i_\star-1}{k}\right)\cdot \mathbb{P}_{x\sim \mathcal{D}}[x=i_\star/k]\\
        &>\frac{i_\star-kc_1}{2k}\cdot f\left(\frac{i_\star}{k}\right)\cdot \mathbb{P}_{x\sim \mathcal{D}}[x=i_\star/k]\\
        &=\mathbb{E}_{x_2\sim \mathcal{D}}\left[u_1\left(\frac{i_\star}{k},x_2\right)\right]
    \end{align*}
    If $f\left(\frac{i_\star-1}{k}\right)=0$, then $\mathbb{E}_{x_2\sim \mathcal{D}}[u_1(i_\star/k,x_2)]=0$ whereas $\mathbb{E}_{x_2\sim \mathcal{D}}[u_1(c_1+1/k,x_2)]>0$.
    
    Analogously, we can prove that $\max_{x_2\in \mathcal{P}\setminus\mathcal{P}_{2}}\mathbb{E}_{x_1\sim \mathcal{D}}[u_2(x_1,x_2)]>\max_{x_2\in \mathcal{P}_{2}}\mathbb{E}_{x_1\sim \mathcal{D}}[u_1(x_1,x_2)]$ if $i_\star/k>c_2$.
\end{proof}

\begin{proof}[Proof of Theorem \ref{thm:ce-asymmetric}]

As $\mathcal{D}$ is a CE, then we always have $\mathbb{E}_{x\sim \mathcal{D}}[u_2(x)]\geq 0$. If $f(c_2+1/k)=0$, then $\mathbb{E}_{x\sim \mathcal{D}}[u_2(x)]=0$. Let us assume that $f(c_2+1/k)>0$.

 We now prove that for any pair of prices $(x_1,x_2)$ which has a positive weight under the CE $\mathcal{D}$, we have $x_1\leq c_2+\frac{2}{k}$. For the sake of contradiction, let us assume that this is not true. Let $i_1/k$ be the largest price that the player $1$ chooses under $\mathcal{D}$ and let $i_2/k$ be the largest price that the player $2$ chooses under $\mathcal{D}$. First, let us consider the case where $i_1\geq i_2$. Note that $i_1/k\geq c_2+3/k$. Now due to Lemma \ref{lem:swap-increase-cost}, we have the following:
 \begin{equation*}
     \mathbb{E}_{(x_1,x_2)\sim\mathcal{D}}[u_1(i_1/k,x_2)|x_1=i_1/k]<\max_{x\in\mathcal{P}}\mathbb{E}_{(x_1,x_2)\sim\mathcal{D}}[u_1(x,x_2)|x_1=i_1/k]
 \end{equation*}
 This is a contradiction to our assumption that $\mathcal{D}$ is a correlated equilibrium. Next, let us consider the case where $i_2> i_1$. Note that $i_2/k\geq c_2+3/k$. Now due to Lemma \ref{lem:swap-increase-cost}, we have the following:
 \begin{equation*}
     \mathbb{E}_{(x_1,x_2)\sim\mathcal{D}}[u_2(x_1,i_2/k)|x_2=i_2/k]<\max_{x\in\mathcal{P}}\mathbb{E}_{(x_1,x_2)\sim\mathcal{D}}[u_1(x_1,x)|x_2=i_2/k]
 \end{equation*}
 This is a contradiction to our assumption that $\mathcal{D}$ is a correlated equilibrium. 
 
 Hence, for any pair of prices $(x_1,x_2)$ which has a positive weight under the CE $\mathcal{D}$, we have $x_1\leq c_2+\frac{2}{k}$. If for any pair of prices $(x_1,x_2)$ which has a positive weight under the CE $\mathcal{D}$ we have $x_1\leq c_2$ then $\mathbb{E}_{x\sim\mathcal{D}}[u_2(x)]=0$. If there is a pair $(x_1,x_2)$ which has a positive weight under the CE $\mathcal{D}$ and $x_1\in\{c_2+1/k,c_2+2/k\}$, then due to Lemma \ref{lem:swap-increase-cost}, we have $x_2\leq c_2+2/k$. Now we claim that $x_2\in\{c_2+1/k,c_2+2/k\}$. If this is not the case, player $2$ can deviate from $x_2$ to $c_2+1/k$ and get a higher reward. Hence, in this case we have $\mathbb{E}_{x\sim\mathcal{D}}[u_2(x)]\leq 1/k$.

 Now let us assume that $c_2\geq c_1+3/k$. We claim that for any pair of prices $(x_1,x_2)$ which has a positive weight under the CE $\mathcal{D}$, we have $x_1\leq c_2$.  For the sake of contradiction, let us assume that this is not true. In this case, recall that $x_2\in \{c_2+1/k,c_2+2/k\}$. Let us first consider the case where there is a pair $(x_1,x_2)$ which has a positive weight under the CE $\mathcal{D}$ and $x_1=c_2+2/k$. In this case, we have the following:
 \begin{align*}
      \mathbb{E}_{(x_1,x_2)\sim\mathcal{D}}[u_1(c_2,x_2)|x_1=c_2+2/k]&=(c_2-c_1)\cdot f(c_2)\\
      &> \frac{1}{2}\cdot (c_2+2/k-c_1)\cdot f(c_2+2/k)\\
      &\geq \mathbb{E}_{(x_1,x_2)\sim\mathcal{D}}[u_1(c_2+2/k,x_2)|x_1=c_2+2/k]
 \end{align*}
This is a contradiction to our assumption that $\mathcal{D}$ is a correlated equilibrium. Next let us consider the case where there is a pair $(x_1,x_2)$ which has a positive weight under the CE $\mathcal{D}$ and $x_1=c_2+1/k$. In this case, we have $x_2=c_2+1/k$ as $\mathcal{D}$ is a correlated equilibrium. In this case, we have the following: 
 \begin{align*}
      \mathbb{E}_{(x_1,x_2)\sim\mathcal{D}}[u_1(c_2,x_2)|x_1=c_2+1/k]&=(c_2-c_1)\cdot f(c_2)\\
      &> \frac{1}{2}\cdot (c_2+1/k-c_1)\cdot f(c_2+1/k)\\
      &=\mathbb{E}_{(x_1,x_2)\sim\mathcal{D}}[u_1(c_2+1/k,x_2)|x_1=c_1+2/k]
 \end{align*}
This is a contradiction to our assumption that $\mathcal{D}$ is a correlated equilibrium. Hence, if $c_2\geq c_1+3/k$, then for any pair of prices $(x_1,x_2)$ which has a positive weight under the CE $\mathcal{D}$, we have $x_1\leq c_2$. 

Next let us consider the case $c_2=c_1+2/k$. We claim that for any pair of prices $(x_1,x_2)$ which has a positive weight under the CE $\mathcal{D}$, we have $x_1\leq c_2$.  For the sake of contradiction, let us assume that this is not true. In this case, recall that $x_2\in \{c_2+1/k,c_2+2/k\}$. Let us first consider the case where there is a pair $(x_1,x_2)$ which has a positive weight under the CE $\mathcal{D}$ and $x_1=c_2+2/k$. 
Since $\mathcal{D}$ is a correlated equilibrium, we have zero weight on $(c_2+2/k,c_2+1/k)$ otherwise player $1$ can deviate from $c_2+2/k$ to $c_2+1/k$ and get a higher reward. Hence, we have the following: 
 \begin{align*}
      \mathbb{E}_{(x_1,x_2)\sim\mathcal{D}}[u_1(c_2+1/k,x_2)|x_1=c_2+2/k]&=(c_2+1/k-c_1)\cdot f(c_2+1/k)\\
      &>\frac{1}{2}\cdot (c_2+2/k-c_1)\cdot f(c_2+2/k)\\
      &\geq \mathbb{E}_{(x_1,x_2)\sim\mathcal{D}}[u_1(c_2+2/k,x_2)|x_1=c_2+2/k]
 \end{align*}
This is a contradiction to our assumption that $\mathcal{D}$ is a correlated equilibrium. 

Next let us consider the case where there is a pair $(x_1,x_2)$ which has a positive weight under the CE $\mathcal{D}$ and $x_1=c_2+1/k$. In this case, we have $x_2=c_2+1/k$ as $\mathcal{D}$ is a correlated equilibrium. In this case, we have the following: 
 \begin{align*}
      \mathbb{E}_{(x_1,x_2)\sim\mathcal{D}}[u_1(c_2,x_2)|x_1=c_2+1/k]&=(c_2-c_1)\cdot f(c_2)\\
      &> \frac{1}{2}\cdot (c_2+1/k-c_1)\cdot f(c_2+1/k)\\
      &=\mathbb{E}_{(x_1,x_2)\sim\mathcal{D}}[u_1(c_2+1/k,x_2)|x_1=c_2+2/k]
 \end{align*}
This is a contradiction to our assumption that $\mathcal{D}$ is a correlated equilibrium. 

Hence, if $c_2=c_1+2/k$, then for any pair of prices $(x_1,x_2)$ which has a positive weight under the CE $\mathcal{D}$, we have $x_1\leq c_2$. 



\end{proof}

We prove Theorem \ref{thm:phi-asymmetric} in the following two theorems.

\begin{theorem}
    For the constant demand function $f(x)=1$, there exists marginal costs $c_1<c_2$ and a $\Phi$-correlated equilibrium $\mathcal{D}$ in which $\Phi_1$ contains all deviation maps and $\Phi_2$ contains only constant deviation maps such that, for each player $i\in\{1,2\}$ and large $k$, the following holds:
\[
\mathbb{E}_{x\sim \mathcal{D}}\!\left[u_i(x)\right]
\geq \lambda_0 \cdot \max_{x\in \mathcal{P}} (x-c_i)\, f(x),
\]
where $\lambda_0>0$ is an absolute constant independent of $k$. 
\end{theorem}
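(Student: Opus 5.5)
We adapt the construction in the proof of Theorem~\ref{thm:phi-symmetric} to asymmetric costs. Since we may choose the costs, we take $c_1=0$ and $c_2=1/2$ (so $c_2-c_1$ and $1-c_2$ are constants). Player~1 is the swap-regret player and player~2 the external-regret player. The idea is to let player~1 randomize over prices above $c_2$ with the same harmonic tail profile as before, with markups measured relative to $c_2$. Player~2's recommended response then either slightly overcuts player~1 or undercuts it substantially while staying above $c_2$.

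\textbf{Construction.} Let $k_0=k-c_2k$, let $M=\lfloor (k_0-1)/e^{10}\rfloor$, and take $k$ large. Player~1 plays $c_2+j/k$ with the probabilities $p_j$ defined in the proof of Theorem~\ref{thm:phi-symmetric}, so that $S_i=\sum_{j\ge i}p_j=\min(1,M/i)$. Conditional on player~1 playing $c_2+i/k$, player~2 plays $c_2+(i+1)/k$ or $c_2+\lfloor i/2\rfloor/k$, each with probability $1/2$. The choice $M\ge 1$ ensures $\lfloor i/2\rfloor\ge 1$ on the support, so player~2 always prices strictly above $c_2$ and earns a nonnegative margin.

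\textbf{Player~2's external regret.} Player~2's utilities are exactly those of the symmetric game with cost $c_2$, because only the joint law of $(x_1-c_2,x_2-c_2)$ matters and $f\equiv1$. The calculation from Theorem~\ref{thm:phi-symmetric} therefore transfers verbatim. Deviating to $c_2+i/k$ yields at most $\frac{i}{k}S_i\le M/k$. Deviating to prices at most $c_2$ yields nonpositive utility. The equilibrium payoff is at least $\frac18\cdot\frac{M}{k}(1+H_{k_0-1}-H_M)\ge M/k$, and it is of order $(1-c_2)$. This is a constant fraction of player~2's monopoly utility $1-c_2$.

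\textbf{Player~1's swap regret.} This is the step that changes, and the main obstacle. Conditional on $x_1=c_2+i/k$, player~1 earns $\frac12(c_2+i/k)$, since it wins only in the overcut branch. The deviations to check are:
\begin{itemize}
\item Pricing at $c_2+(i+1)/k$ gives $\frac14(c_2+(i+1)/k)$, which is smaller.
\item Pricing at $c_2+j/k$ with $\lfloor i/2\rfloor<j<i$ gives $\frac12(c_2+j/k)$, which is smaller.
\item Pricing at $c_2+j/k$ with $j\le\lfloor i/2\rfloor$, or at any price below $c_2$, wins against both branches and yields up to roughly $c_2+\lfloor i/2\rfloor/k$ (or $c_2$ when pricing just below $c_2$).
\end{itemize}
The last deviation is the binding one. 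Because player~1's margin includes the full $c_2$, undercutting to the low branch beats $\frac12(c_2+i/k)$ as soon as $c_2\gtrsim i/k$.

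I would first try to fix this by rebalancing the mixture: put weight $q$ on the overcut branch and $1-q$ on the undercut branch. The undercut price must then be lowered to roughly $c_2+\delta i/k$ with $\delta$ small, so that undercutting gives about $c_2+\delta i/k\le q(c_2+i/k)$. Since the support has $i/k\ge M/k\approx (1-c_2)/e^{10}$, this requires $c_2$ to be small relative to $(1-c_2)e^{-10}$. If this route becomes unwieldy, the fallback is to choose $c_2$ small but still constant, say $c_2=e^{-20}$, and to keep $q=1/2$ while shrinking the undercut to $c_2+\lceil\delta i\rceil/k$. Player~2 then still earns about $\delta$ times the mean markup, and $M$ is tuned with the harmonic factor $1+H_{k_0-1}-H_M\approx 1+\ln(k_0/M)$ so that this exceeds $M/k$. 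This is the same trade-off as in Theorem~\ref{thm:phi-symmetric}, with $1/8$ replaced by $\delta/2$: we need $\frac{\delta}{2}(1+\ln(k_0/M))\ge1$, which holds once $M=k_0e^{-C/\delta}$ for a suitable constant $C$. Both players' payoffs then remain constant multiples of $1-c_i$, giving $\lambda_0$.
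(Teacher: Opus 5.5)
Your final (fallback) construction can be made to work, and it runs on the same mechanism as the paper's, but you left its swap-regret check undone, and that check constrains your parameters.

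\textbf{The gap in the fallback.} First, your diagnosis of the shifted construction understates the problem. With the undercut at $c_2+\lfloor i/2\rfloor/k$, pricing one tick below it earns $c_2+(\lfloor i/2\rfloor-1)/k$. This exceeds $\tfrac12(c_2+i/k)$ for every $c_2>3/k$, not only when $c_2\gtrsim i/k$. So no small constant $c_2$ rescues it, and an undercut fraction $\delta<\tfrac12$ is necessary.

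For the fallback (weight $q=\tfrac12$, undercut at $c_2+\lceil\delta i\rceil/k$), the swap-regret check requires, for every $i\ge M$:
\begin{itemize}
\item $c_2+(\lceil\delta i\rceil-1)/k\le\tfrac12(c_2+i/k)$, i.e.\ essentially $c_2\le(1-2\delta)M/k$;
\item the tie condition $\tfrac34(c_2+\lceil\delta i\rceil/k)\le\tfrac12(c_2+i/k)$.
\end{itemize}
Your external-regret bound needs $H_{k_0-1}-H_M\ge 2/\delta-1$, i.e.\ $M/k\lesssim(1-c_2)e^{1-2/\delta}$. These pull in opposite directions. For fixed $c_2\approx e^{-20}$ they force roughly $\delta\gtrsim 1/10$, so taking $\delta$ ``small'' would break the construction. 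A consistent choice is $\delta=\tfrac14$, $M=\lfloor(k_0-1)e^{-8}\rfloor$, $c_2=\lfloor e^{-20}k\rfloor/k$. Note that $c_2$ must lie on the grid $\mathcal{P}$.

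\textbf{Comparison with the paper.} The paper avoids this by not shifting player~1's support at all. It takes $c_1=0$ and $c_2=\lfloor M/36\rfloor/k$ with $M=\lfloor (k-1)/e^{10}\rfloor$. Player~1 keeps the harmonic profile on prices $j/k$, and player~2 responds with $(i+1)/k$ or $\lfloor i/2\rfloor/k$.

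\begin{itemize}
\item Player~1's payoffs do not involve $c_2$, so its swap-regret check is verbatim from Theorem~\ref{thm:phi-symmetric}.
\item The cost $c_2$ enters only player~2's margin. Since $c_2\le i/(36k)$ on the support,
\[
\mathbb{E}[u_2]\ge\tfrac12\sum_i p_i\bigl(\tfrac{i}{4k}-\tfrac{i}{36k}\bigr)=\tfrac19\mathbb{E}[x_1]\ge M/k,
\]
which dominates every constant deviation.
\end{itemize}

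Measured relative to $c_2$, the paper's undercut sits at markup fraction $(\lfloor i/2\rfloor-kc_2)/(i-kc_2)<\tfrac12$. So both constructions need $c_2$ tiny compared with the bottom of player~1's support, and an undercut strictly below half the markup. The paper's parametrization makes player~1's side free and concentrates the cost of $c_2$ in one inequality for player~2. Yours makes player~2's side verbatim, at the price of an extra parameter $\delta$, a retuned $M$, and the compatibility condition above.
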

\begin{proof}

Let us assume that $k\geq 36e^{10}+1$. Fix an integer $M:=\Big\lfloor \frac{k-1}{e^{10}}\Big\rfloor.$ Let $c_1=0$ and $c_2=\frac{\lfloor M/36\rfloor}{k}$. Now we describe a $\Phi$-correlated equilibrium $\mathcal{D}$ in which $\Phi_1$ contains all deviation maps and $\Phi_2$ contains only constant deviation maps. We begin by describing the marginal distribution $\mathcal{D}_1$ of the prices chosen by the player $1$. 

We now define $p_j$, the probability that the player $1$ chooses the price $j/k$, as follows:
$$
p_j=\begin{cases}
0& \text{if }j<M\text{ or }j=k ,\\[2pt]
\frac{1}{M+1}& \text{if }j=M,\\[6pt]
\displaystyle \frac{M}{j(j+1)}&\text{if } M<j<k-1,\\[10pt]
\displaystyle \frac{M}{k-1}&\text{if } j=k-1.\\
\end{cases}
$$
The above values are nonnegative and sum to $1$ which we prove later.

Next, we describe how player~2 selects a price conditional on the price chosen by player~1. Given that player~1 chooses the price $i/k$, player~2 chooses the price $(i+1)/k$ with probability $1/2$ and the price $\lfloor i/2\rfloor/k$ with probability $1/2$.

First, we show that the player 1 has low swap regret. Consider a price $i/k$ such that $p_i>0$.  First, observe that $\mathbb{E}_{(x_1,x_2)\sim \mathcal{D}}[u_1(x_1,x_2)|x_1=i/k]=i/(2k)$. Next observe that $\mathbb{E}_{(x_1,x_2)\sim \mathcal{D}}[u_1((i+1)/k,x_2)|x_1=i/k]=(i+1)/(4k)\leq i/(2k)$. Next observe that $\mathbb{E}_{(x_1,x_2)\sim \mathcal{D}}[u_1(j/k,x_2)|x_1=i/k]=0$ for all $j>i+1$. Next observe that $\mathbb{E}_{(x_1,x_2)\sim \mathcal{D}}[u_1(j/k,x_2)|x_1=i/k]=j/(2k)< i/(2k)$ for all $\lfloor i/2\rfloor<j<i$. Finally, observe that $\mathbb{E}_{(x_1,x_2)\sim \mathcal{D}}[u_1(j/k,x_2)|x_1=i/k]\leq i/(2k)$ for all $j\leq \lfloor i/2\rfloor$. Hence, the player $1$ has low swap regret.

Next, we show that the player 2 has low external regret.

Let $S_i:=\sum_{j\ge i}p_j$. We have the following for all $i\in[k-1]$ (which we prove later):
$$
S_i=\min\!\left(1,\frac{M}{i}\right),
$$
so
\begin{align*}
    \mathbb{E}_{x\sim \mathcal{D}_1}[x]&=\frac{1}{k}\sum_{j=1}^{k-1}jp_j=\frac{1}{k}\sum_{j=1}^{k-1}\sum_{i=1}^jp_j=\frac{1}{k}\sum_{i=1}^{k-1}\sum_{j=i}^{k-1}p_j=\frac{1}{k}\sum_{i=1}^{k-1}S_i\\
&=\frac{1}{k}\Big(M+M\sum_{i=M+1}^{k-1}\frac{1}{i}\Big)=\frac{M}{k}\Big(1+H_{k-1}-H_M\Big),
\end{align*}
where $H_n$ is the $n$-th harmonic number.

Now we have the following for any $i\in[k]$:
$$
\mathbb{E}_{x_1\sim \mathcal{D}_1}[u_2(x_1,i/k)]=\left(\frac{i}{k}-c_2\right)\cdot\Big(\mathbb{P}_{x_1\sim \mathcal{D}_1}(x_1>i)+\tfrac12\mathbb{P}_{x_1\sim \mathcal{D}_1}(x_1=i)\Big)\leq\frac{i}{k}S_i.
$$

Hence, $$
\max_{i\in[k]}\mathbb{E}_{x_1\sim \mathcal{D}_1}[u_2(x_1,i/k)]\le\frac{1}{k}\max_{i\in[k]} i\cdot S_i=\frac{M}{k}.$$ 



Due to the properties of Harmonic number, we have $H_{k-1}-H_M\ge \ln\!\frac{k-1}{M}-\frac{1}{2M}$ (which we prove later).
With $M=\lfloor (k-1)/e^{10}\rfloor$, we have $H_{k-1}-H_M\geq 9.5$, so

$$
\frac{\mathbb{E}_{x\sim \mathcal{D}_1}[x]}{\max_{i\in[k]}\mathbb{E}_{x_1\sim \mathcal{D}_1}[u_2(x_1,i/k)]}\geq{1+H_{k-1}-H_M}\geq{9.5}.
$$

Next we have the following:
\begin{align*}
    \mathbb{E}_{(x_1,x_2)\sim\mathcal{D}}[u_2(x_1,x_2)]&=\frac{1}{2}\cdot\sum_{i=M}^{k-1}p_i\cdot\left(\frac{\lfloor i/2\rfloor}{k}-c_2\right)\\
    &\geq \frac{1}{2}\cdot\sum_{i=M}^{k-1}p_i\cdot\left(\frac{i}{4k}-\frac{i}{36k}\right)\tag{as $c_2\leq M/(36k)$}\\
    &= (1/9)\cdot \mathbb{E}_{x\sim \mathcal{D}_1}[x]\\
    &\geq \max_{i\in[k]}\mathbb{E}_{x_1\sim \mathcal{D}_1}[u_2(x_1,i/k)]
\end{align*}
Hence, player $2$ has a low external regret.

Also observe that $\mathbb{E}_{(x_1,x_2)\sim\mathcal{D}}[u_1(x_1,x_2)]\geq (1/2)\cdot \mathbb{E}_{x\sim \mathcal{D}_1}[x]$. Hence, in order to prove the theorem, it suffices to show that $\mathbb{E}_{x\sim \mathcal{D}_1}[x]$ is large. Towards that we have the following:
$$
\mathbb{E}_{x\sim \mathcal{D}_1}[x]=\frac{M}{k}\Big(1+H_{k-1}-H_M\Big)
\;\geq\;\frac{4.74}{e^{10}}.
$$

\textbf{Omitted calculations}

First, we prove that $\sum_{j=1}^kp_j=1$. Using the identity $\frac{1}{j(j+1)}=\frac{1}{j}-\frac{1}{j+1}$, we have the following:
\begin{align*}
    \sum_{j=1}^kp_j&=\frac{1}{M+1}+\sum_{j=M+1}^{k-2}\frac{M}{j(j+1)}+\frac{M}{k-1}\\
    &=\frac{1}{M+1}+\sum_{j=M+1}^{k-2}\left(\frac{M}{j}-\frac{M}{j+1}\right)+\frac{M}{k-1}\\
    &=\frac{1}{M+1}+\frac{M}{M+1}-\frac{M}{k-1}+\frac{M}{k-1}\\
    &=1
\end{align*}

Next, we prove that $S_i:=\sum_{j\ge i}p_j=\left(1,\frac{M}{i}\right)$  for all $i\in[k-1]$. We divide our analysis into three cases.

Case 1: $i\le M$

As $p_j=0$ for all $j<M$, we have $\sum_{j<i}p_j=0$. Therefore

$$
S_i=\sum_{j\ge i}p_j=1=\min(1,M/i).
$$

Case 2: $M+1\le i\le k-2$

In this case, we have

$$
S_i=\sum_{j=i}^{k-2}\frac{M}{j(j+1)}+\frac{M}{k-1}=\sum_{j=i}^{k-2}\left(\frac{M}{j}-\frac{M}{j+1}\right)+\frac{M}{k-1}
 = M\!\left(\frac{1}{i}-\frac{1}{k-1}\right)+\frac{M}{k-1} = \frac{M}{i}=\min(1,M/i).
$$

Case 3: $i=k-1$

In this case, we have

$$
S_{k-1}=p_{k-1}=\frac{M}{k-1}=\frac{M}{i}=\min(1,M/i).
$$

By combining all the three cases, we get $S_i=\min(1,M/i)$ for all $i\in[k-1]$.

Finally, we show that $H_{k-1}-H_M\ge \ln\!\frac{k-1}{M}-\frac{1}{2M}$. For any $n\geq 1$, we have the following:
\begin{equation*}
    \gamma+\ln n<H_n<\gamma+\ln n+\frac{1}{2n},
\end{equation*}
where $\gamma$ is the euler's constant and the inequality follows from \citet{guo2011sharp}.

Hence, we have the following:
\begin{equation*}
    H_{k-1}-H_M\geq \gamma+\ln (k-1)-(\gamma+\ln M+\frac{1}{2M})=\ln\!\frac{k-1}{M}-\frac{1}{2M}
\end{equation*}    
\end{proof}

\begin{theorem}
    For the constant demand function $f(x)=1$, there exists marginal costs $c_1>c_2$ and a $\Phi$-correlated equilibrium $\mathcal{D}$ in which $\Phi_1$ contains all deviation maps and $\Phi_2$ contains only constant deviation maps such that, for each player $i\in\{1,2\}$ and large $k$, the following holds:
\[
\mathbb{E}_{x\sim \mathcal{D}}\!\left[u_i(x)\right]
\geq \lambda_1 \cdot \max_{x\in \mathcal{P}} (x-c_i)\, f(x),
\]
where $\lambda_1>0$ is an absolute constant independent of $k$. 
\end{theorem}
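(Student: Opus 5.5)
The plan is to reuse the construction from the proof of the preceding theorem with the roles of the costs swapped. Player~1 is the no-swap-regret player and is now given the \emph{higher} cost; player~2 is the no-external-regret player with the lower cost. Assume $k\ge 36e^{10}+1$, set $M:=\lfloor (k-1)/e^{10}\rfloor$, and take $c_2=0$ and $c_1=\lfloor M/36\rfloor/k$, so that $c_1>c_2$ for large $k$. Player~1's marginal $\mathcal{D}_1$ is the same $p_j$ as before: mass $1/(M+1)$ at $M/k$, mass $M/(j(j+1))$ at $j/k$ for $M<j<k-1$, and mass $M/(k-1)$ at $(k-1)/k$. Conditional on $x_1=i/k$, player~2 plays $(i+1)/k$ and $\lfloor i/2\rfloor/k$ with probability $1/2$ each. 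Earlier in the paper it is already shown that these weights sum to $1$, that $S_i:=\sum_{j\ge i}p_j=\min(1,M/i)$, that $\mathbb{E}_{\mathcal{D}_1}[x]=\frac{M}{k}(1+H_{k-1}-H_M)$, and that $H_{k-1}-H_M\ge 9.5$. I will invoke these facts verbatim.

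\textbf{Step 1: swap regret of player~1.} Condition on $x_1=i/k$ with $i\ge M$, and note $i-kc_1\ge M-M/36>1$. The realized conditional utility is $(i/k-c_1)/2$. For each possible deviation $j$:
\begin{itemize}
\item $j=i+1$ gives $(i+1-kc_1)/(4k)\le (i-kc_1)/(2k)$, since $i-kc_1\ge1$.
\item $j>i+1$ gives $0$.
\item $\lfloor i/2\rfloor<j<i$ gives $(j/k-c_1)/2$, which is smaller.
\item $j\le\lfloor i/2\rfloor$ gives at most $j/k-c_1\le i/(2k)-c_1\le (i/k-c_1)/2$.
\item $j/k<c_1$ gives a nonpositive payoff.
\end{itemize}
The only change from the earlier computation is the subtraction of $c_1$. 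The key inequality $i/2-kc_1\le (i-kc_1)/2$ holds because $c_1\ge0$.

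\textbf{Step 2: external regret of player~2.} Since $c_2=0$, the deviation bound is literally the one already proved: $\max_i \mathbb{E}_{\mathcal{D}_1}[u_2(x_1,i/k)]\le \frac1k\max_i iS_i=M/k$. Player~2's realized utility is $\frac12\sum_i p_i\lfloor i/2\rfloor/k\ge \frac18\mathbb{E}_{\mathcal{D}_1}[x]\ge \frac{9.5}{8}\cdot\frac{M}{k}>\frac{M}{k}$, so no constant deviation helps.

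\textbf{Step 3: utility bounds.} Both monopoly utilities are at most $1$. Player~2 gets at least $\frac18\mathbb{E}[x]\ge \frac{4.74}{8e^{10}}$. Player~1 gets at least $\frac12\mathbb{E}[x-c_1]\ge\frac12\bigl(\mathbb{E}[x]-\frac{M}{36k}\bigr)$. This is at least an absolute constant because $\mathbb{E}[x]\ge 10.5\,M/k$ and $M/k\approx e^{-10}$.

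The only delicate step is Step~1. The cost $c_1$ must be large enough to exceed $c_2$ yet small relative to $M/k$, so that it does not flip the comparison against the deviation to $i+1$ or to low prices. The choice $c_1\le M/(36k)$ comfortably ensures this.
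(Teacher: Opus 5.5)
Your proposal is correct and follows the paper's proof essentially verbatim. It uses the same $c_2=0$, $c_1=\lfloor M/36\rfloor/k$, the same marginal $p_j$ and conditional play for player~2, and the same case-by-case checks of player~1's swap deviations and player~2's constant deviations. The only cosmetic difference is that you lower-bound player~1's utility as $\tfrac12(\mathbb{E}[x]-c_1)$, while the paper uses $\tfrac{35}{72}\mathbb{E}[x]$ via $kc_1\le i/36$; both give an absolute constant.
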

\begin{proof}

Let us assume that $k\geq 36e^{10}+1$. Fix an integer $M:=\Big\lfloor \frac{k-1}{e^{10}}\Big\rfloor.$ Let $c_1=\frac{\lfloor M/36\rfloor}{k}$ and $c_2=0$. Now we describe a $\Phi$-correlated equilibrium $\mathcal{D}$ in which $\Phi_1$ contains all deviation maps and $\Phi_2$ contains only constant deviation maps. We begin by describing the marginal distribution $\mathcal{D}_1$ of the prices chosen by the player $1$. 

We now define $p_j$, the probability that the player $1$ chooses the price $j/k$, as follows:
$$
p_j=\begin{cases}
0& \text{if }j<M\text{ or }j=k ,\\[2pt]
\frac{1}{M+1}& \text{if }j=M,\\[6pt]
\displaystyle \frac{M}{j(j+1)}&\text{if } M<j<k-1,\\[10pt]
\displaystyle \frac{M}{k-1}&\text{if } j=k-1.\\
\end{cases}
$$
The above values are nonnegative and sum to $1$ (as proven in the previous theorem).

Next, we describe how player~2 selects a price conditional on the price chosen by player~1. Given that player~1 chooses the price $i/k$, player~2 chooses the price $(i+1)/k$ with probability $1/2$ and the price $\lfloor i/2\rfloor/k$ with probability $1/2$.

First, we show that the player 1 has low swap regret. Consider a price $i/k$ such that $p_i>0$.  First, observe that $\mathbb{E}_{(x_1,x_2)\sim \mathcal{D}}[u_1(x_1,x_2)|x_1=i/k]=\frac{1}{2}(i/k-c_1)\geq \frac{1}{2}(i/k-i/(36k))=\frac{35i}{72k}$. Next observe that $\mathbb{E}_{(x_1,x_2)\sim \mathcal{D}}[u_1((i+1)/k,x_2)|x_1=i/k]=\frac{1}{4}\cdot((i+1)/k-c_1)\leq \frac{1}{2}(i/k-c_1)$. Next observe that $\mathbb{E}_{(x_1,x_2)\sim \mathcal{D}}[u_1(j/k,x_2)|x_1=i/k]=0$ for all $j>i+1$. Next observe that $\mathbb{E}_{(x_1,x_2)\sim \mathcal{D}}[u_1(j/k,x_2)|x_1=i/k]=\frac{1}{2}(j/k-c_1)< \frac{1}{2}(i/k-c_1)$ for all $\lfloor i/2\rfloor<j<i$. Finally, observe that $\mathbb{E}_{(x_1,x_2)\sim \mathcal{D}}[u_1(j/k,x_2)|x_1=i/k]\leq i/(2k)-c_1$ for all $j\leq \lfloor i/2\rfloor$. Hence, the player $1$ has low swap regret.

Next, we show that the player 2 has low external regret.

Let $S_i:=\sum_{j\ge i}p_j$. We have the following for all $i\in[k-1]$ (as proven in the previous theorem):
$$
S_i=\min\!\left(1,\frac{M}{i}\right),
$$
so
\begin{align*}
    \mathbb{E}_{x\sim \mathcal{D}_1}[x]&=\frac{1}{k}\sum_{j=1}^{k-1}jp_j=\frac{1}{k}\sum_{j=1}^{k-1}\sum_{i=1}^jp_j=\frac{1}{k}\sum_{i=1}^{k-1}\sum_{j=i}^{k-1}p_j=\frac{1}{k}\sum_{i=1}^{k-1}S_i\\
&=\frac{1}{k}\Big(M+M\sum_{i=M+1}^{k-1}\frac{1}{i}\Big)=\frac{M}{k}\Big(1+H_{k-1}-H_M\Big),
\end{align*}
where $H_n$ is the $n$-th harmonic number.

Now we have the following for any $i\in[k]$:
$$
\mathbb{E}_{x_1\sim \mathcal{D}_1}[u_2(x_1,i/k)]=\frac{i}{k}\cdot\Big(\mathbb{P}_{x_1\sim \mathcal{D}_1}(x_1>i)+\tfrac12\mathbb{P}_{x_1\sim \mathcal{D}_1}(x_1=i)\Big)\leq\frac{i}{k}S_i.
$$

Hence, $$
\max_{i\in[k]}\mathbb{E}_{x_1\sim \mathcal{D}_1}[u_2(x_1,i/k)]\le\frac{1}{k}\max_{i\in[k]} i\cdot S_i=\frac{M}{k}.$$ 



Due to the properties of Harmonic number, we have $H_{k-1}-H_M\ge \ln\!\frac{k-1}{M}-\frac{1}{2M}$.
With $M=\lfloor (k-1)/e^{10}\rfloor$, we have $H_{k-1}-H_M\geq 9.5$, so

$$
\frac{\mathbb{E}_{x\sim \mathcal{D}_1}[x]}{\max_{i\in[k]}\mathbb{E}_{x_1\sim \mathcal{D}_1}[u_2(x_1,i/k)]}\geq{1+H_{k-1}-H_M}\geq{9.5}.
$$

Next we have the following:
\begin{align*}
    \mathbb{E}_{(x_1,x_2)\sim\mathcal{D}}[u_2(x_1,x_2)]&=\frac{1}{2}\cdot\sum_{i=M}^{k-1}p_i\cdot\frac{\lfloor i/2\rfloor}{k}\\
    &\geq \frac{1}{2}\cdot\sum_{i=M}^{k-1}p_i\cdot\frac{i}{4k}\\
    &= (1/8)\cdot \mathbb{E}_{x\sim \mathcal{D}_1}[x]\\
    &\geq \max_{i\in[k]}\mathbb{E}_{x_1\sim \mathcal{D}_1}[u_2(x_1,i/k)]
\end{align*}
Hence, player $2$ has a low external regret.

Also observe that $\mathbb{E}_{(x_1,x_2)\sim\mathcal{D}}[u_1(x_1,x_2)]\geq (35/72)\cdot \mathbb{E}_{x\sim \mathcal{D}_1}[x]$. Hence, in order to prove the theorem, it suffices to show that $\mathbb{E}_{x\sim \mathcal{D}_1}[x]$ is large. Towards that we have the following:
$$
\mathbb{E}_{x\sim \mathcal{D}_1}[x]=\frac{M}{k}\Big(1+H_{k-1}-H_M\Big)
\;\geq\;\frac{4.74}{e^{10}}.
$$
\end{proof}

\section{Additional Experiments and Plots}\label{appendix:experiments}
In Appendices~\ref{appendix:experiments-constant}, \ref{appendix:experiments-linear}, \ref{appendix:experiments-quad}, and \ref{appendix:experiments-exp}, we provide numerical results for various demand functions and cost levels, following the experimental setup in Section~\ref{sec:experiments}. In Appendix~\ref{appendix:experiments-hedge}, we provide additional empirical results for the Hedge-based no-swap-regret learner. In Appendix~\ref{appendix:experiments-regret-matching}, we provide the full set of empirical results for regret matching, using an experimental setup similar to that of the Hedge-based no-swap-regret learner. Finally, in Appendix~\ref{appendix:experiments-intuition}, we provide intuition for the experimental results in Appendices~\ref{appendix:experiments-hedge} and \ref{appendix:experiments-regret-matching}.

\subsection{Numerical experiments for constant demand}\label{appendix:experiments-constant}
\begin{figure}[H]
  \centering

  
  \begin{subfigure}[t]{0.39\textwidth}
    \centering
    \includegraphics[width=\linewidth]{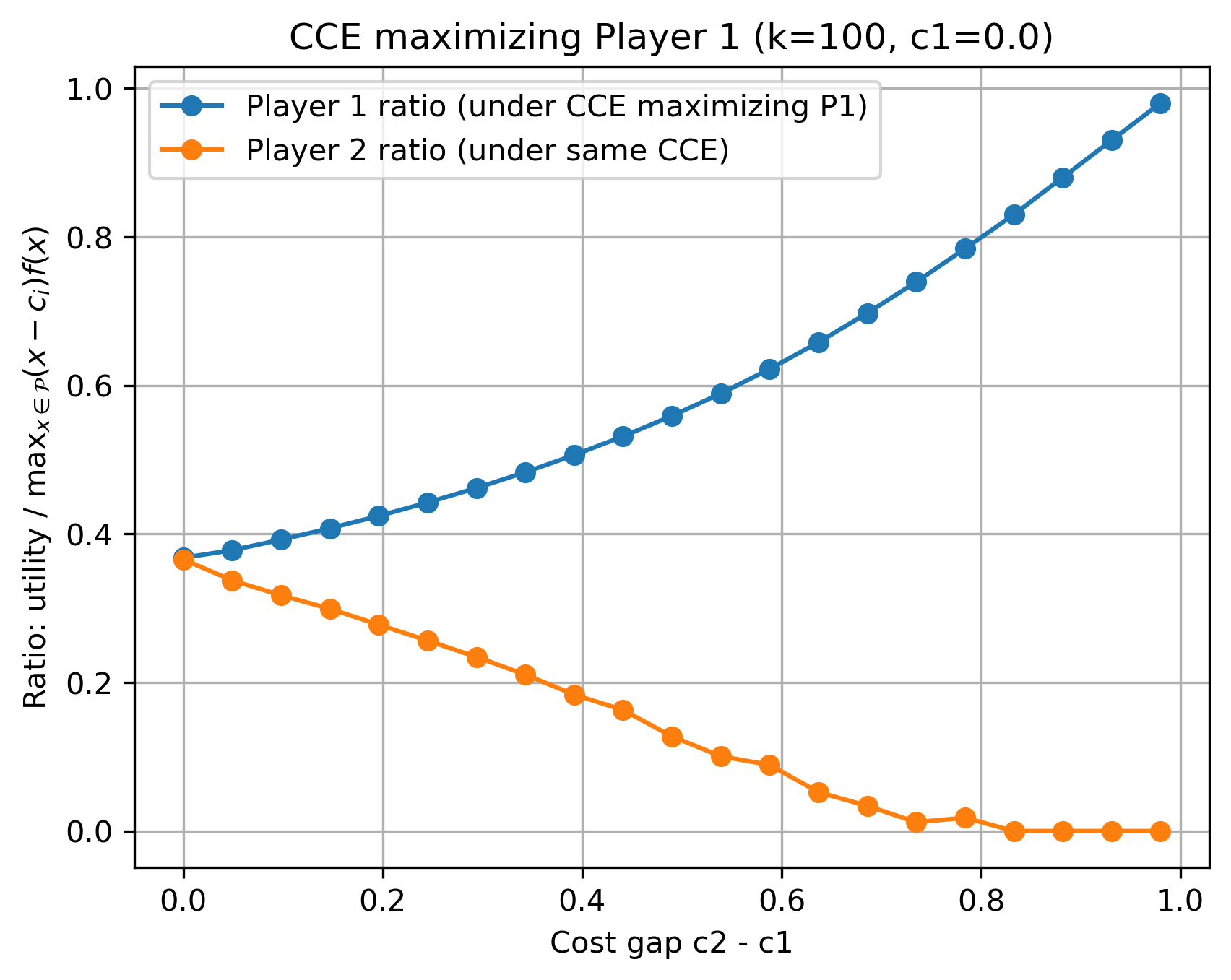}
    \caption{Utility Ratios under the CCE that maximizes player 1's utility}
    \label{fig-const-asym:img3}
  \end{subfigure}\hfill
  \begin{subfigure}[t]{0.39\textwidth}
    \centering
    \includegraphics[width=\linewidth]{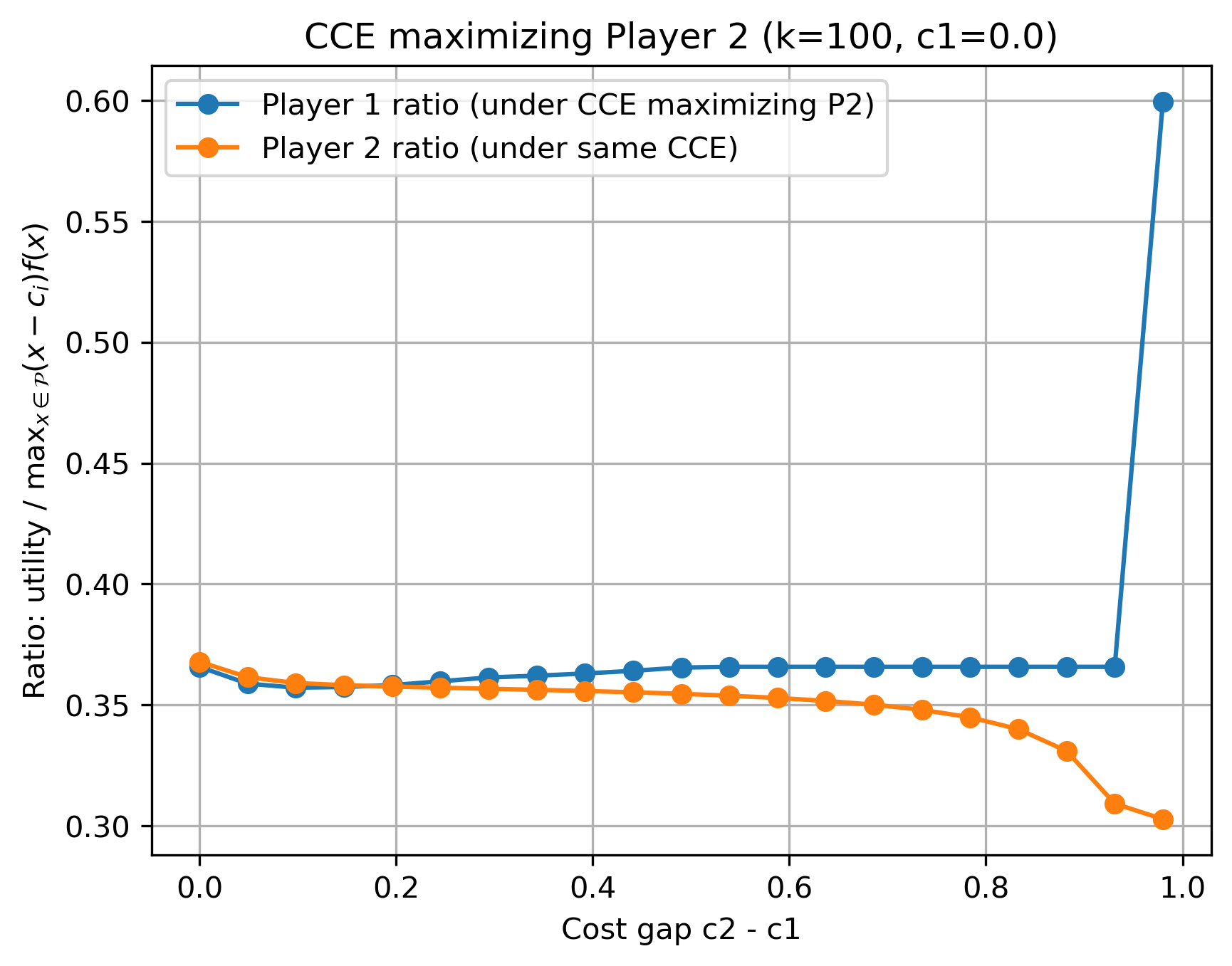}
    \caption{Utility Ratios under the CCE that maximizes player 2's utility}
    \label{fig-const-asym:img4}
  \end{subfigure}

  \caption{Numerical experiments for asymmetric CCE and constant demand.}
  \label{fig-const-asym}
\end{figure}

\begin{figure}[H]
  \centering

  \begin{subfigure}[t]{0.39\textwidth}
    \centering
    \includegraphics[width=\linewidth]{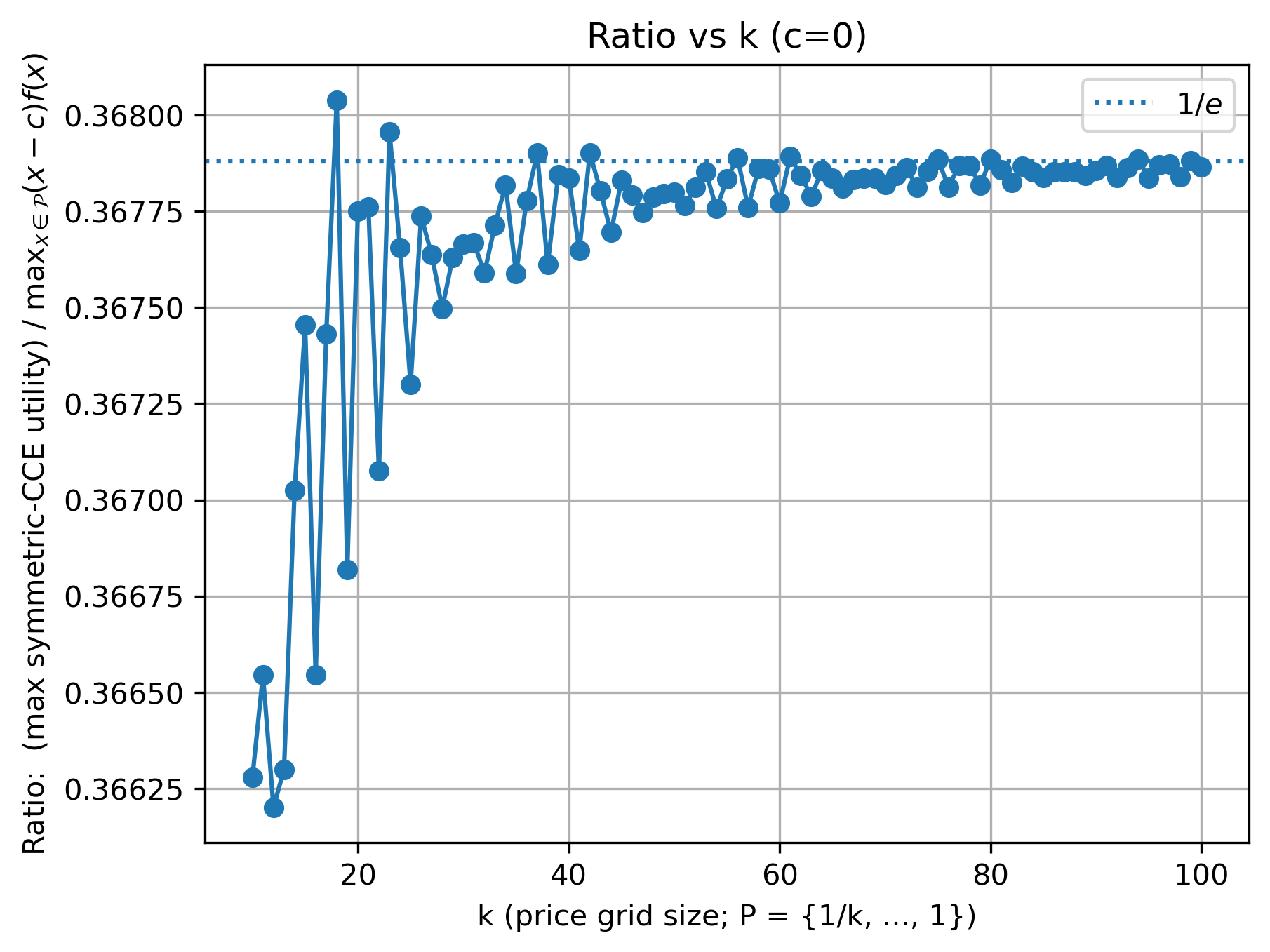}
    \caption{Ratio of duopoly utility under the best symmetric CCE and monopoly utility}
    \label{fig-const-0.0:img1}
  \end{subfigure}\hfill
  \begin{subfigure}[t]{0.39\textwidth}
    \centering
    \includegraphics[width=\linewidth]{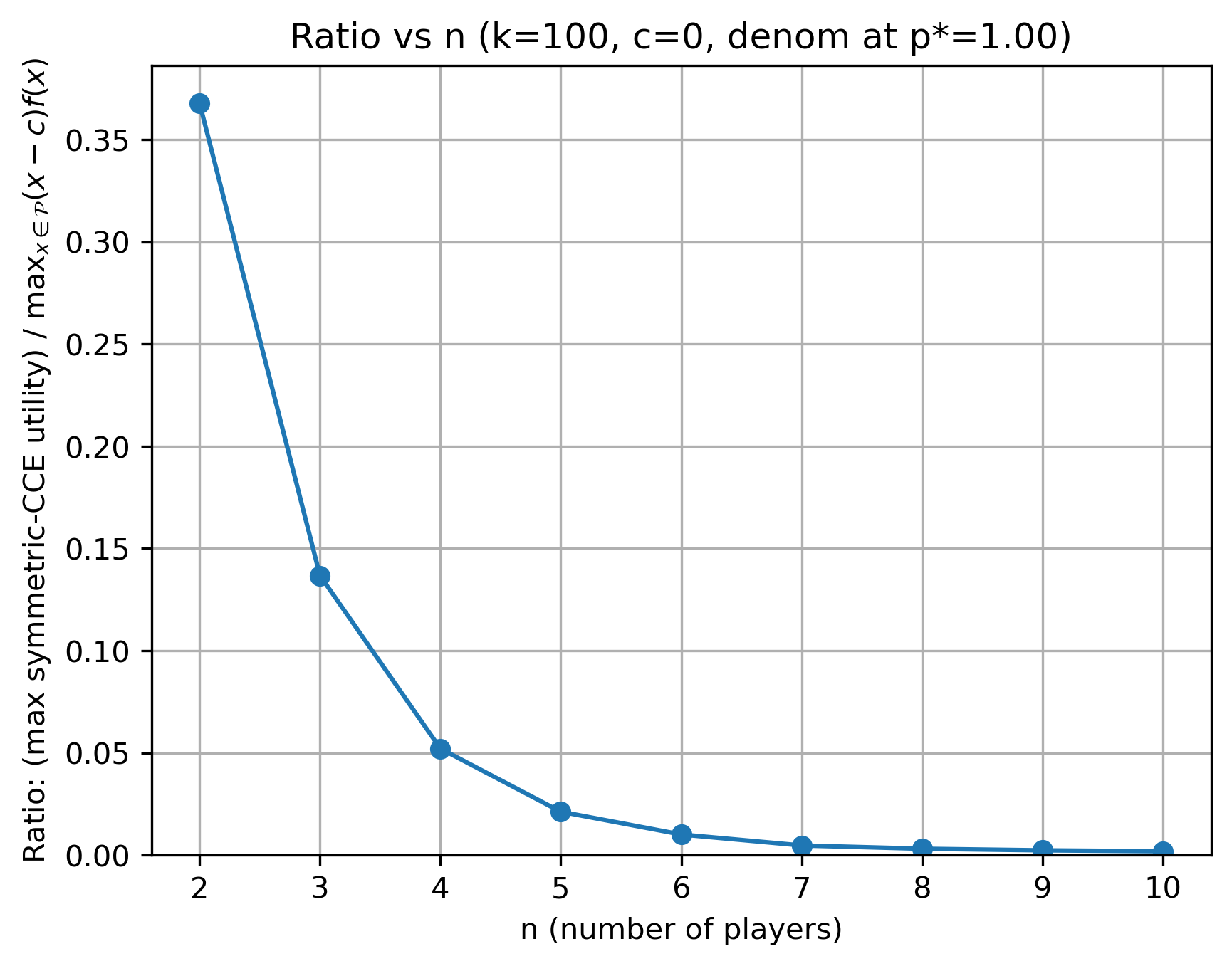}
    \caption{Exponential decay of utility under the best symmetric CCE}
    \label{fig-const-0.0:img2}
  \end{subfigure}

  \caption{Numerical experiments for symmetric CCE, constant demand and $c=0.0$.}
  \label{fig-const-0.0}
\end{figure}
  \begin{figure}[H]
  \centering
  \begin{subfigure}[t]{0.39\textwidth}
    \centering
    \includegraphics[width=\linewidth]{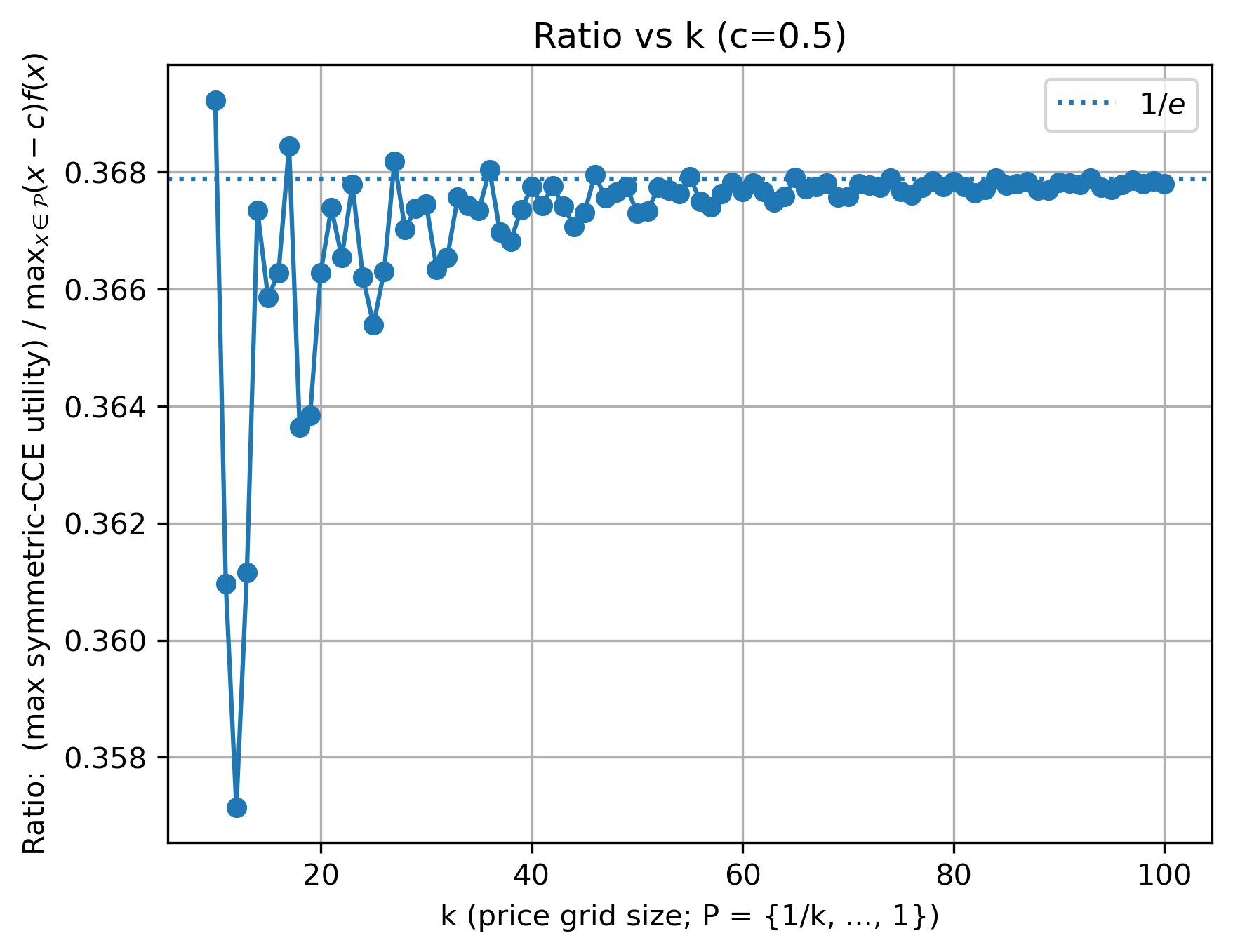}
    \caption{Ratio of duopoly utility under the best symmetric CCE and monopoly utility}
    \label{fig-const-0.5:img1}
  \end{subfigure}\hfill
  \begin{subfigure}[t]{0.39\textwidth}
    \centering
    \includegraphics[width=\linewidth]{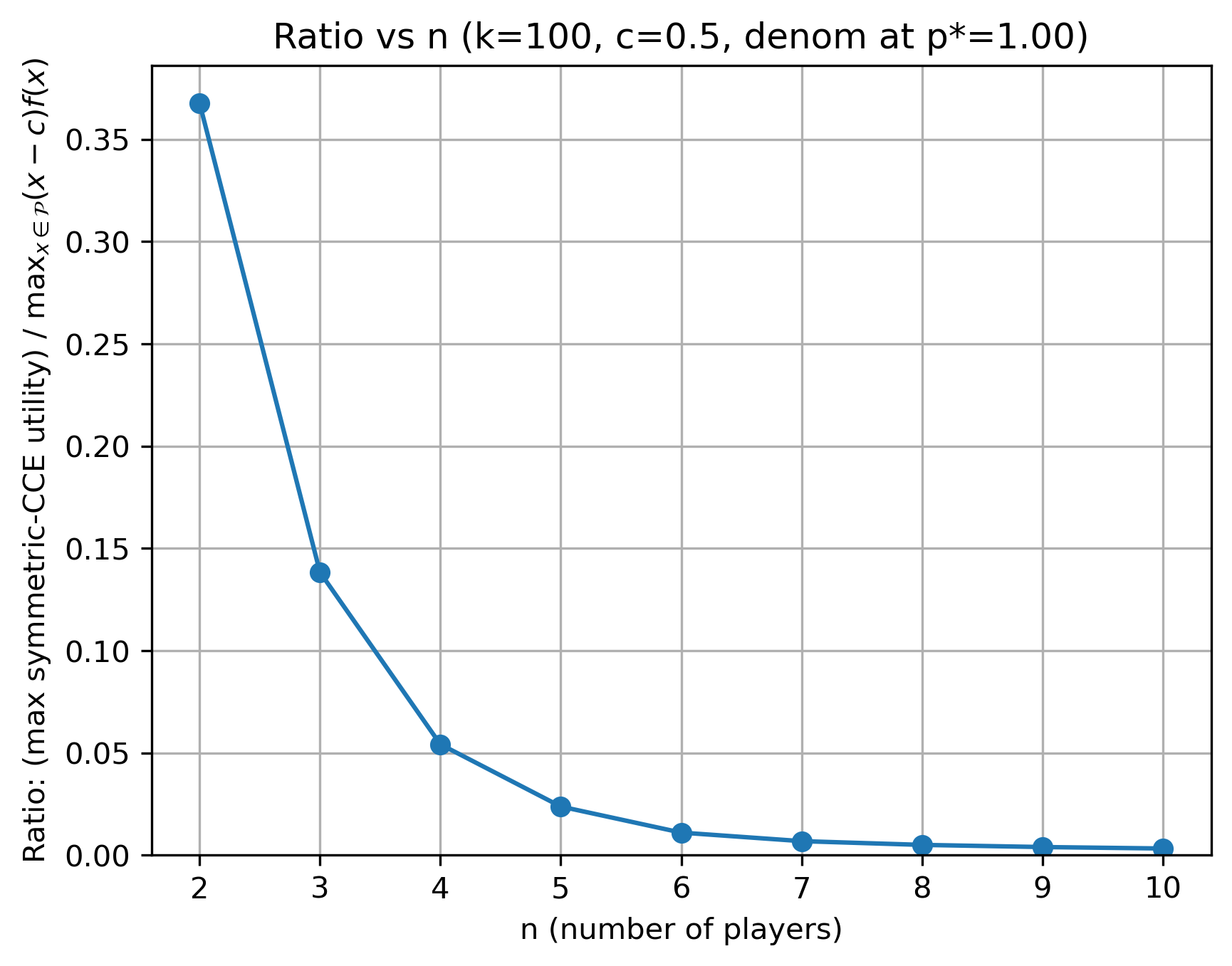}
    \caption{Exponential decay of utility under the best symmetric CCE}
    \label{fig-const-0.5:img2}
  \end{subfigure}

  \caption{Numerical experiments for symmetric CCE, constant demand and $c=0.5$.}
  \label{fig-const-0.5}
\end{figure}
\begin{figure}[H]
  \centering

  \begin{subfigure}[t]{0.39\textwidth}
    \centering
    \includegraphics[width=\linewidth]{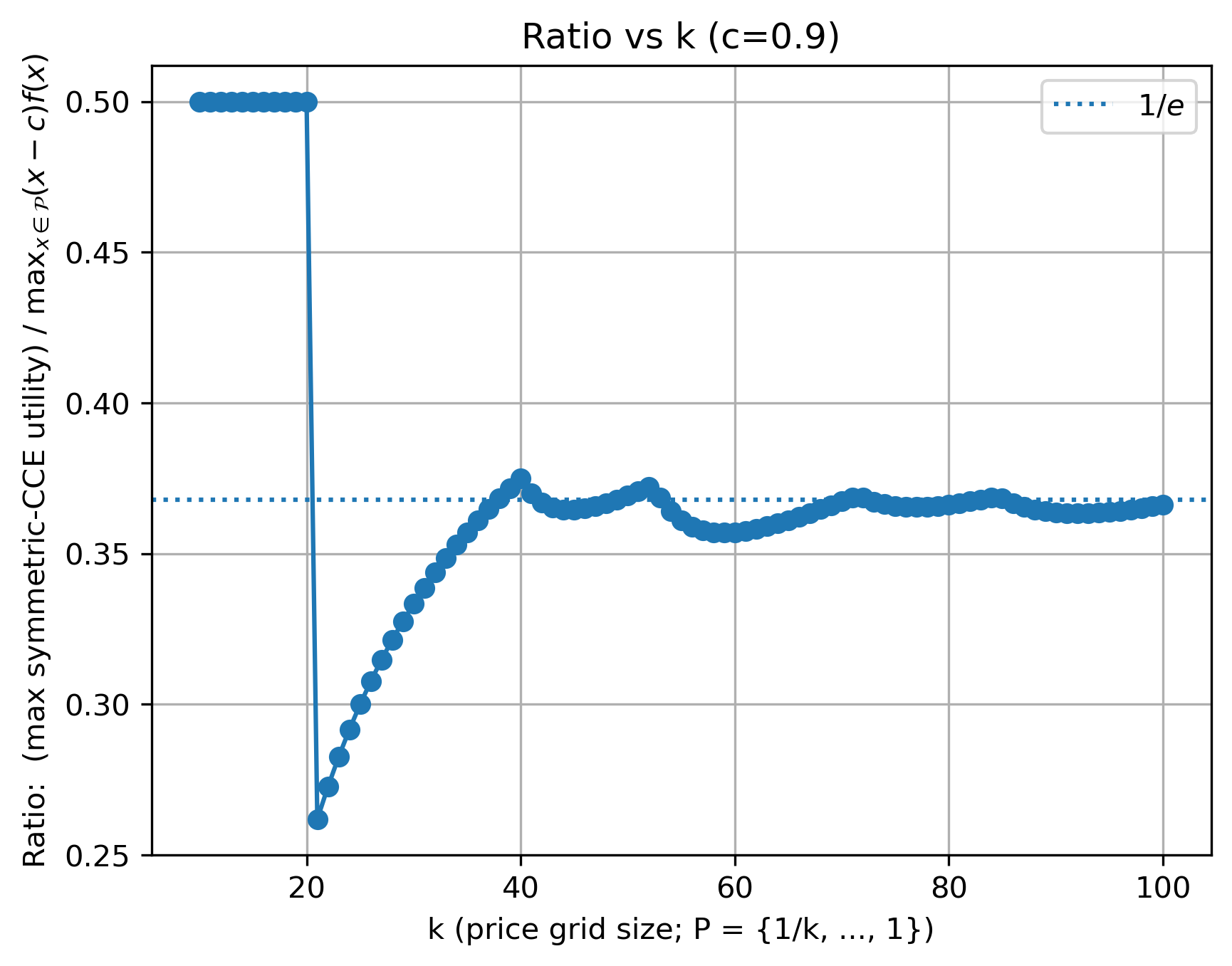}
    \caption{Ratio of duopoly utility under the best symmetric CCE and monopoly utility}
    \label{fig-const-0.9:img1}
  \end{subfigure}\hfill
  \begin{subfigure}[t]{0.39\textwidth}
    \centering
    \includegraphics[width=\linewidth]{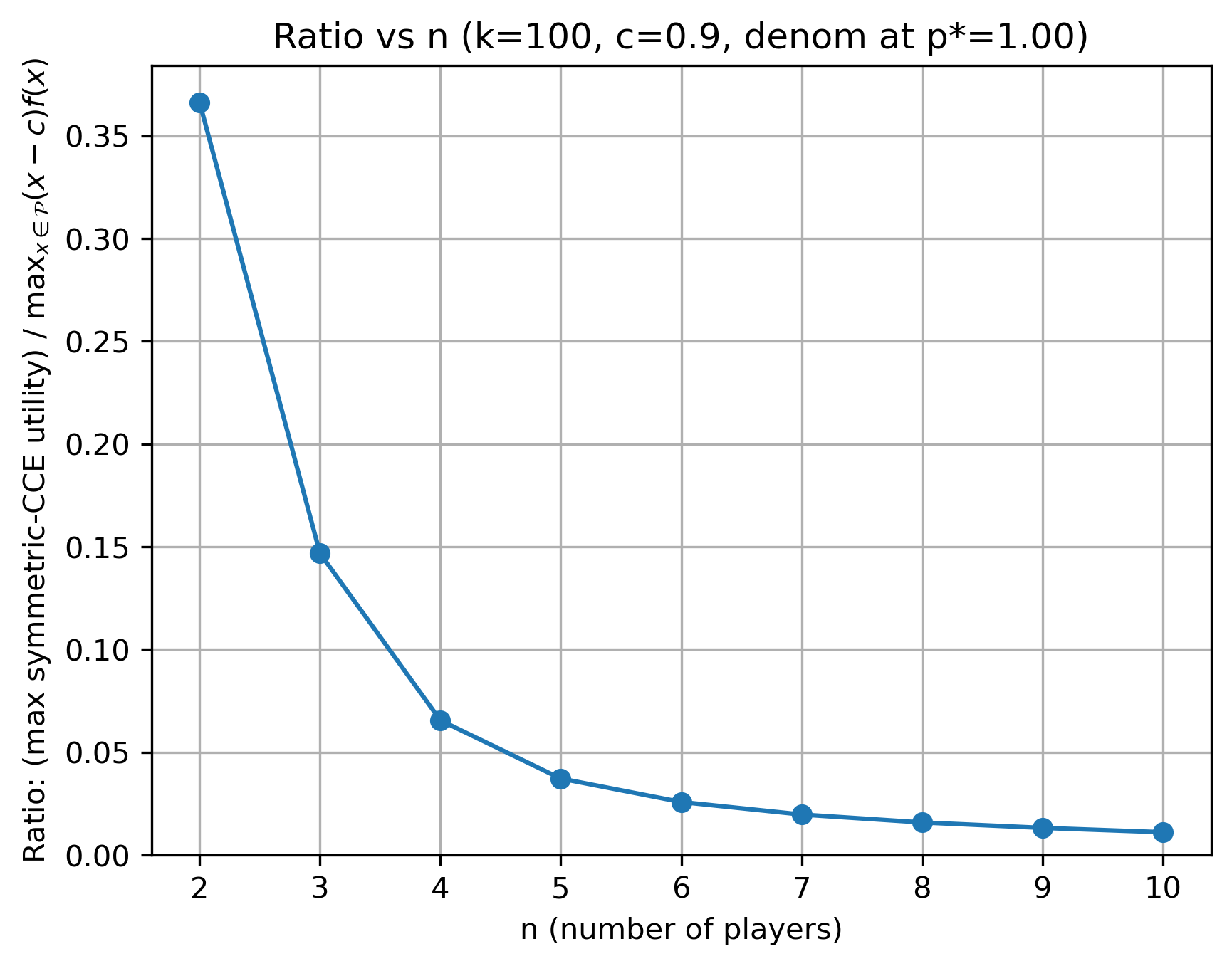}
    \caption{Exponential decay of utility under the best symmetric CCE}
    \label{fig-const-0.9:img2}
  \end{subfigure}

  \caption{Numerical experiments for symmetric CCE, constant demand and $c=0.9$.}
  \label{fig-const-0.9}
\end{figure}
\subsection{Numerical experiments for linear demand}\label{appendix:experiments-linear}
\begin{figure}[H]
  \centering

  
  \begin{subfigure}[t]{0.39\textwidth}
    \centering
    \includegraphics[width=\linewidth]{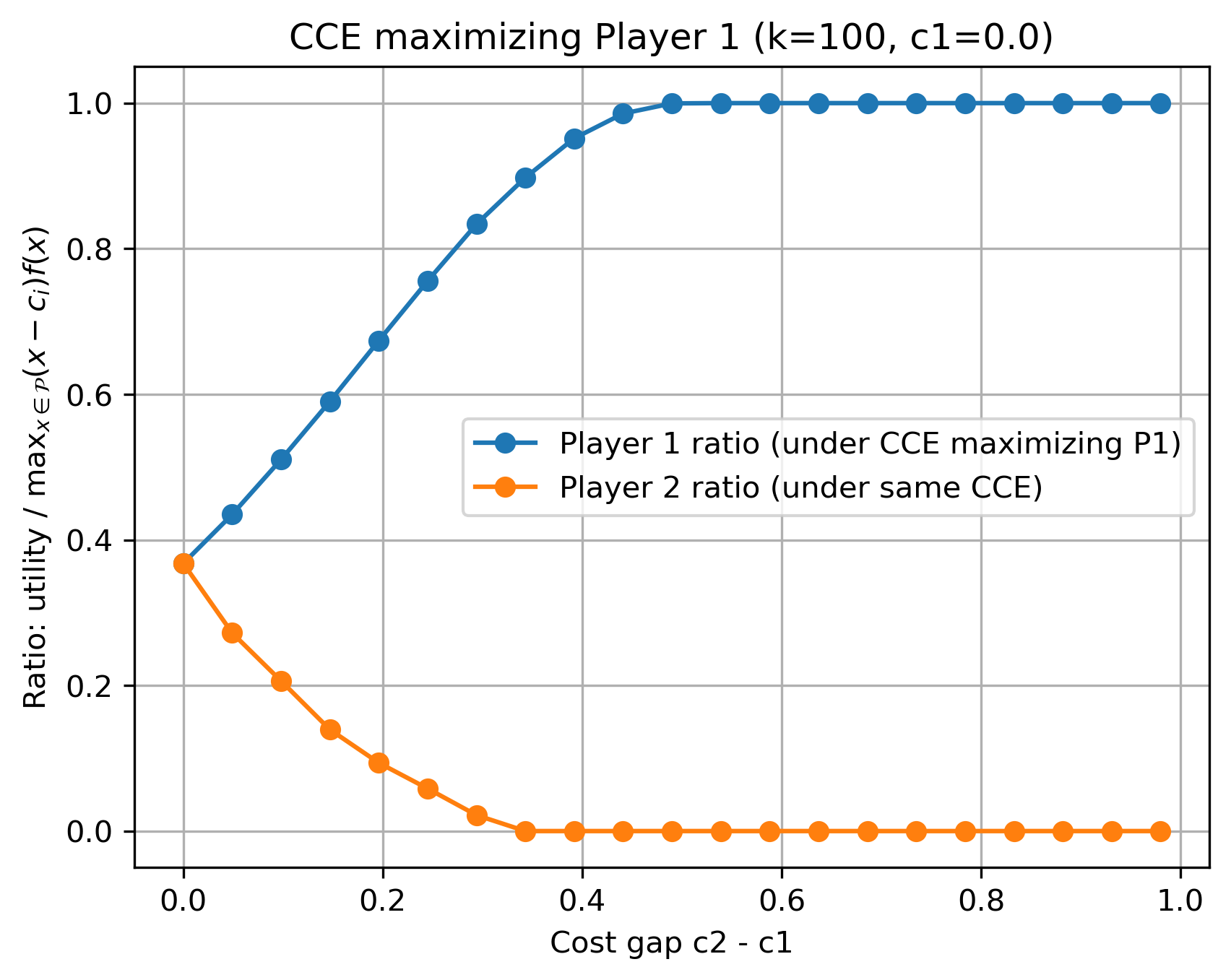}
    \caption{Utility Ratios under the CCE that maximizes player 1's utility}
    \label{fig-lin-asym:img3}
  \end{subfigure}\hfill
  \begin{subfigure}[t]{0.39\textwidth}
    \centering
    \includegraphics[width=\linewidth]{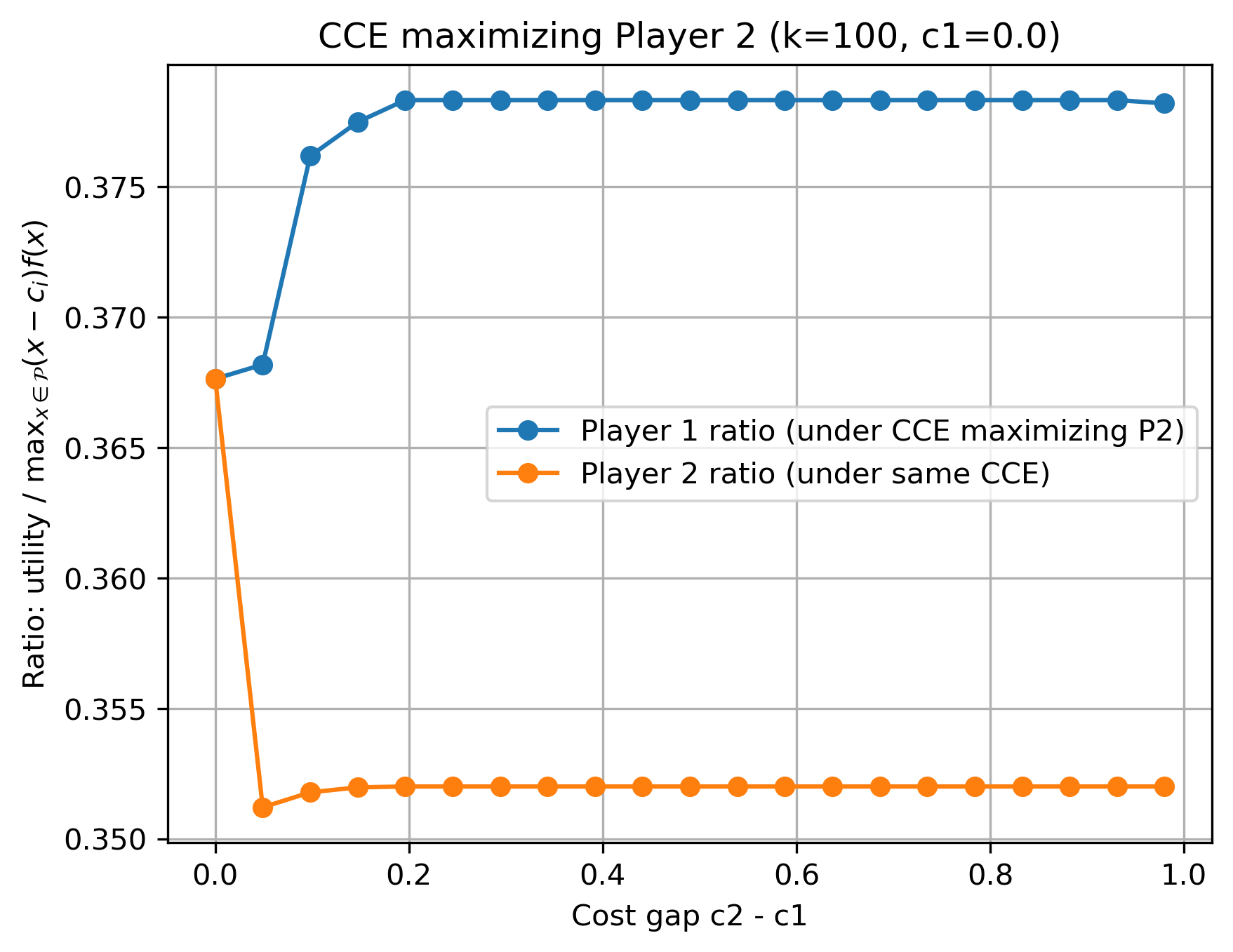}
    \caption{Utility Ratios under the CCE that maximizes player 2's utility}
    \label{fig-lin-asym:img4}
  \end{subfigure}

  \caption{Numerical experiments for asymmetric CCE and linear demand.}
  \label{fig-lin-asym}
\end{figure}

\begin{figure}[H]
  \centering

  \begin{subfigure}[t]{0.39\textwidth}
    \centering
    \includegraphics[width=\linewidth]{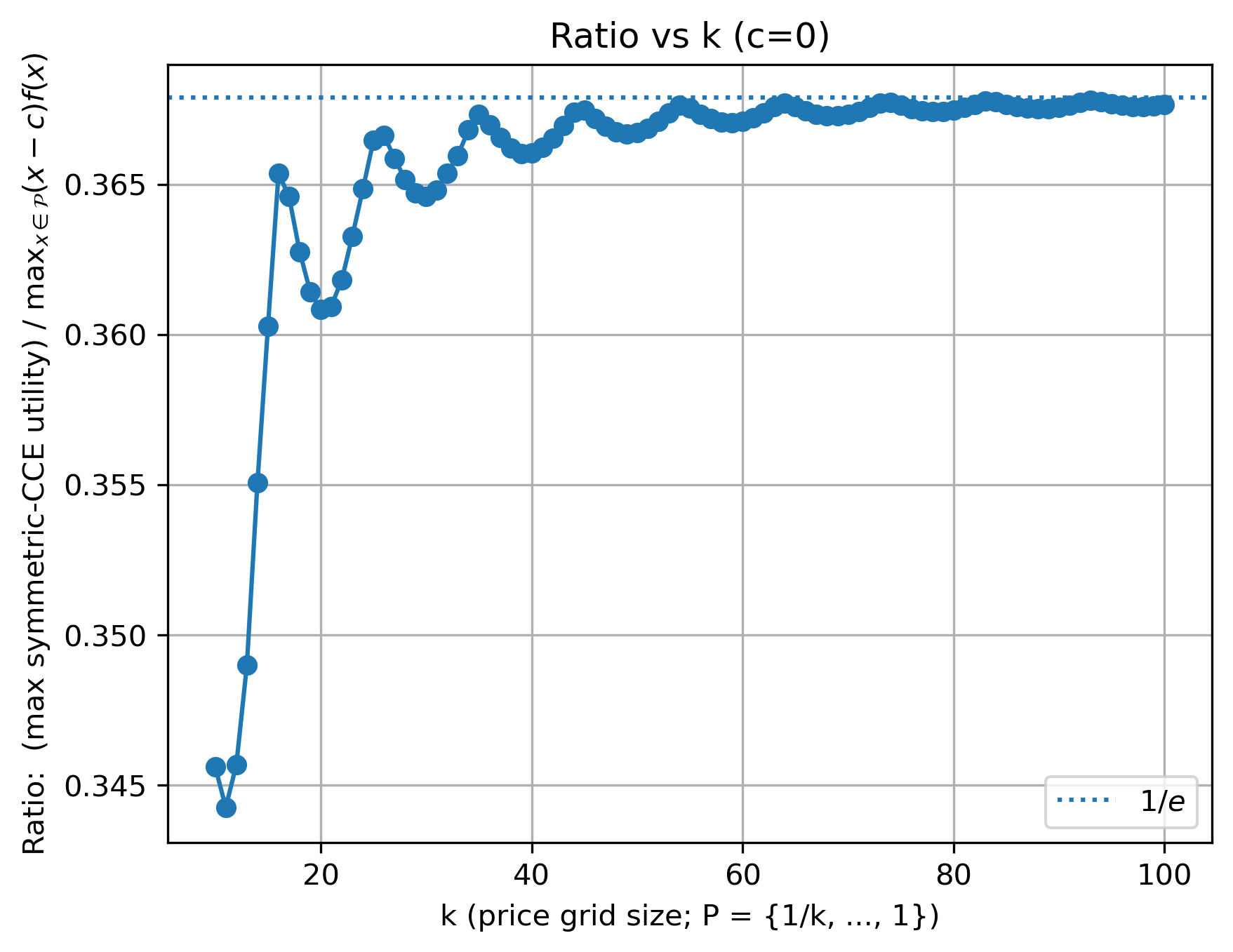}
    \caption{Ratio of duopoly utility under the best symmetric CCE and monopoly utility}
    \label{fig-lin-0.0:img1}
  \end{subfigure}\hfill
  \begin{subfigure}[t]{0.39\textwidth}
    \centering
    \includegraphics[width=\linewidth]{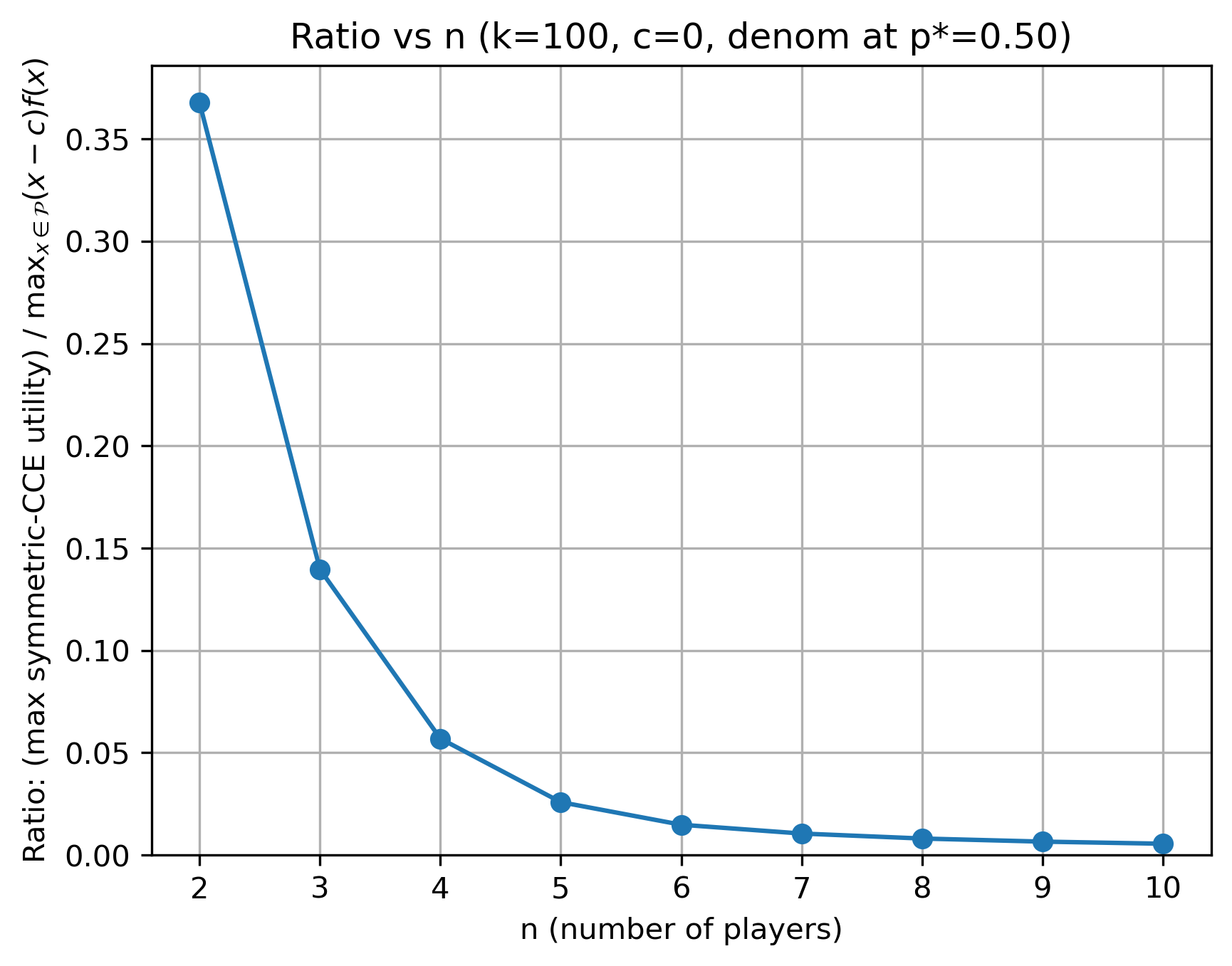}
    \caption{Exponential decay of utility under the best symmetric CCE}
    \label{fig-lin-0.0:img2}
  \end{subfigure}

  \caption{Numerical experiments for symmetric CCE, linear demand and $c=0.0$.}
  \label{fig-lin-0.0}
\end{figure}
  \begin{figure}[H]
  \centering
  \begin{subfigure}[t]{0.39\textwidth}
    \centering
    \includegraphics[width=\linewidth]{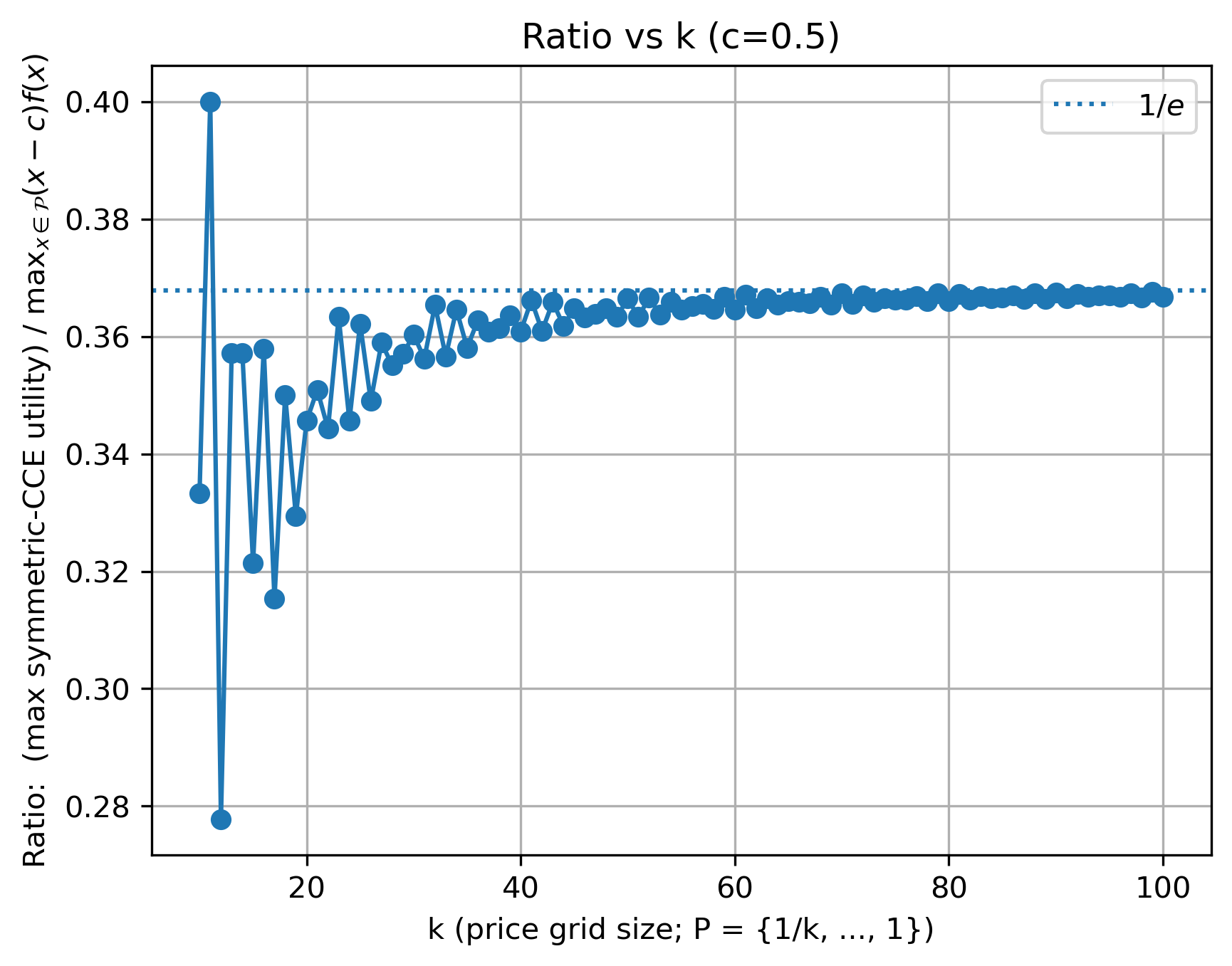}
    \caption{Ratio of duopoly utility under the best symmetric CCE and monopoly utility}
    \label{fig-lin-0.5:img1}
  \end{subfigure}\hfill
  \begin{subfigure}[t]{0.39\textwidth}
    \centering
    \includegraphics[width=\linewidth]{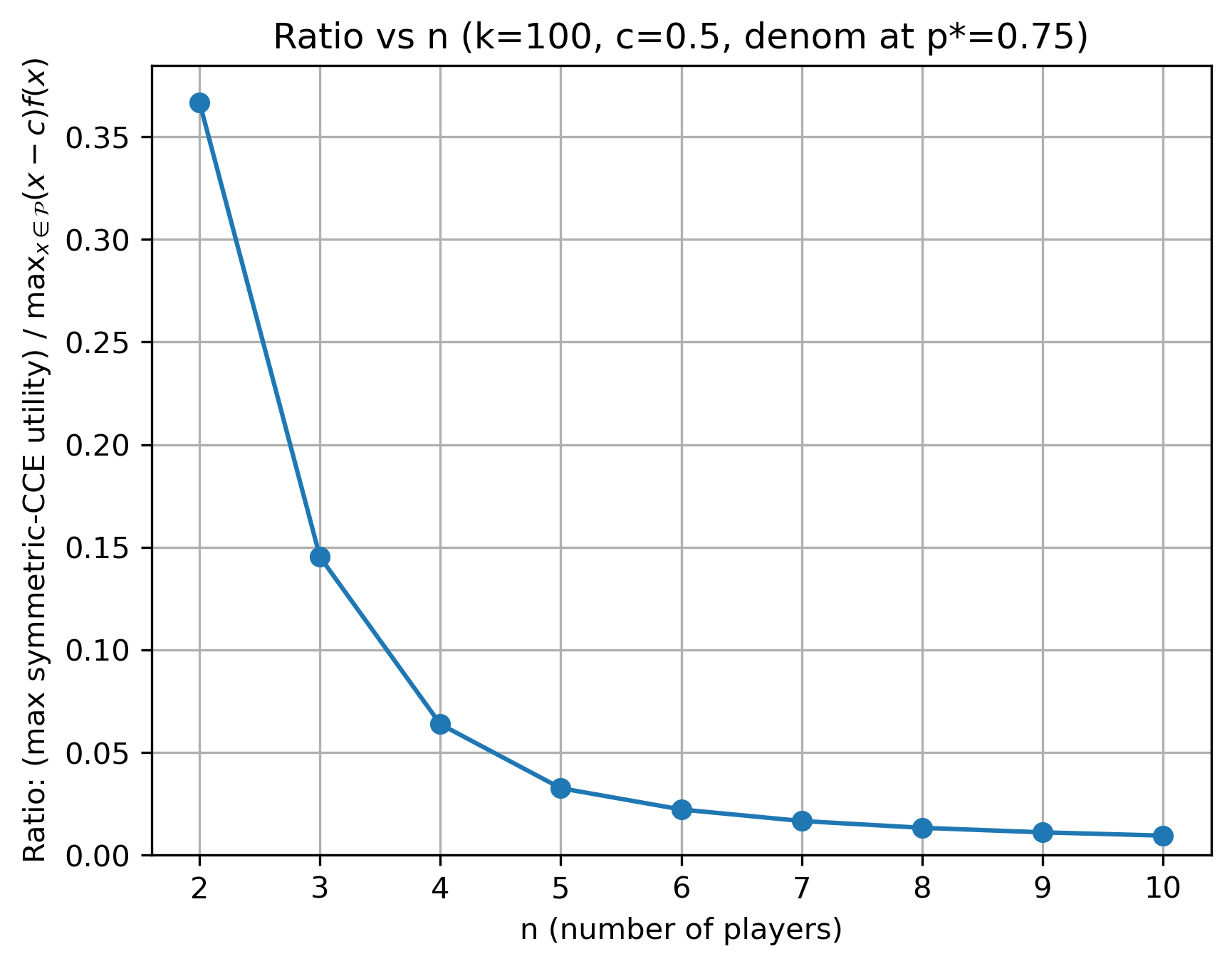}
    \caption{Exponential decay of utility under the best symmetric CCE}
    \label{fig-lin-0.5:img2}
  \end{subfigure}

  \caption{Numerical experiments for symmetric CCE, linear demand and $c=0.5$.}
  \label{fig-lin-0.5}
\end{figure}
\begin{figure}[H]
  \centering

  \begin{subfigure}[t]{0.39\textwidth}
    \centering
    \includegraphics[width=\linewidth]{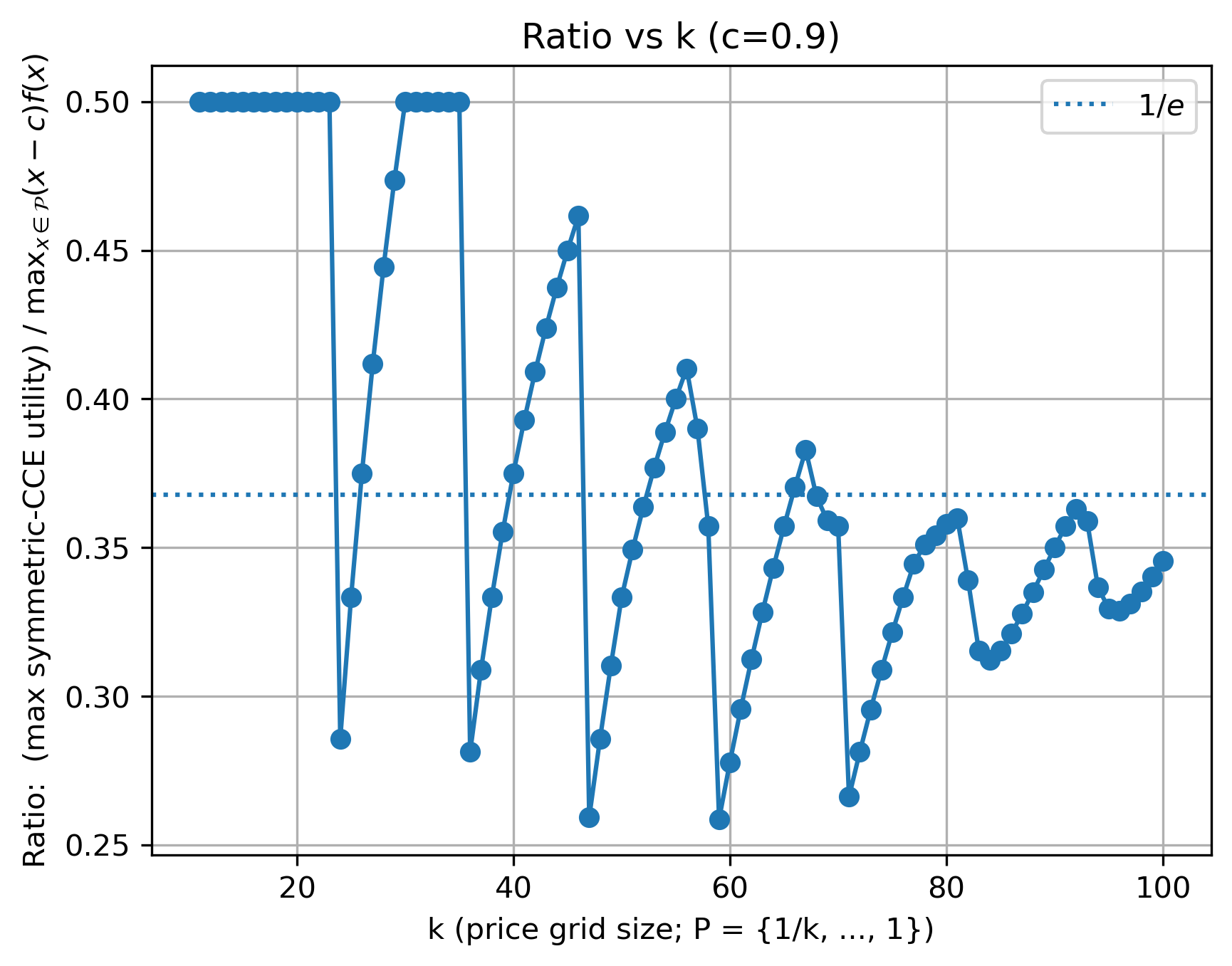}
    \caption{Ratio of duopoly utility under the best symmetric CCE and monopoly utility}
    \label{fig-lin-0.9:img1}
  \end{subfigure}\hfill
  \begin{subfigure}[t]{0.39\textwidth}
    \centering
    \includegraphics[width=\linewidth]{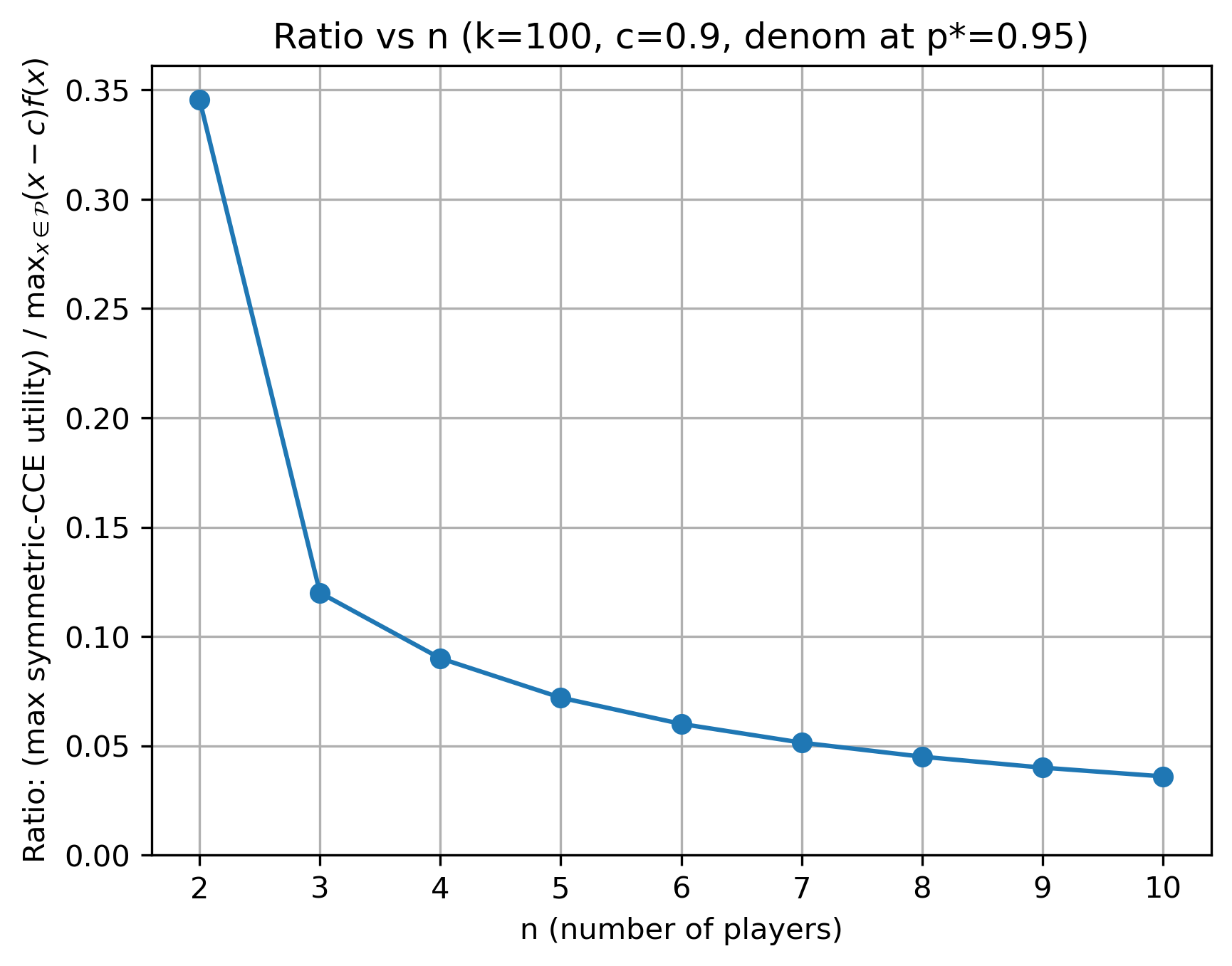}
    \caption{Exponential decay of utility under the best symmetric CCE}
    \label{fig-lin-0.9:img2}
  \end{subfigure}

  \caption{Numerical experiments for symmetric CCE, linear demand and $c=0.9$.}
  \label{fig-lin-0.9}
\end{figure}
\subsection{Numerical experiments for quadratic demand}\label{appendix:experiments-quad}
\begin{figure}[H]
  \centering

  
  \begin{subfigure}[t]{0.39\textwidth}
    \centering
    \includegraphics[width=\linewidth]{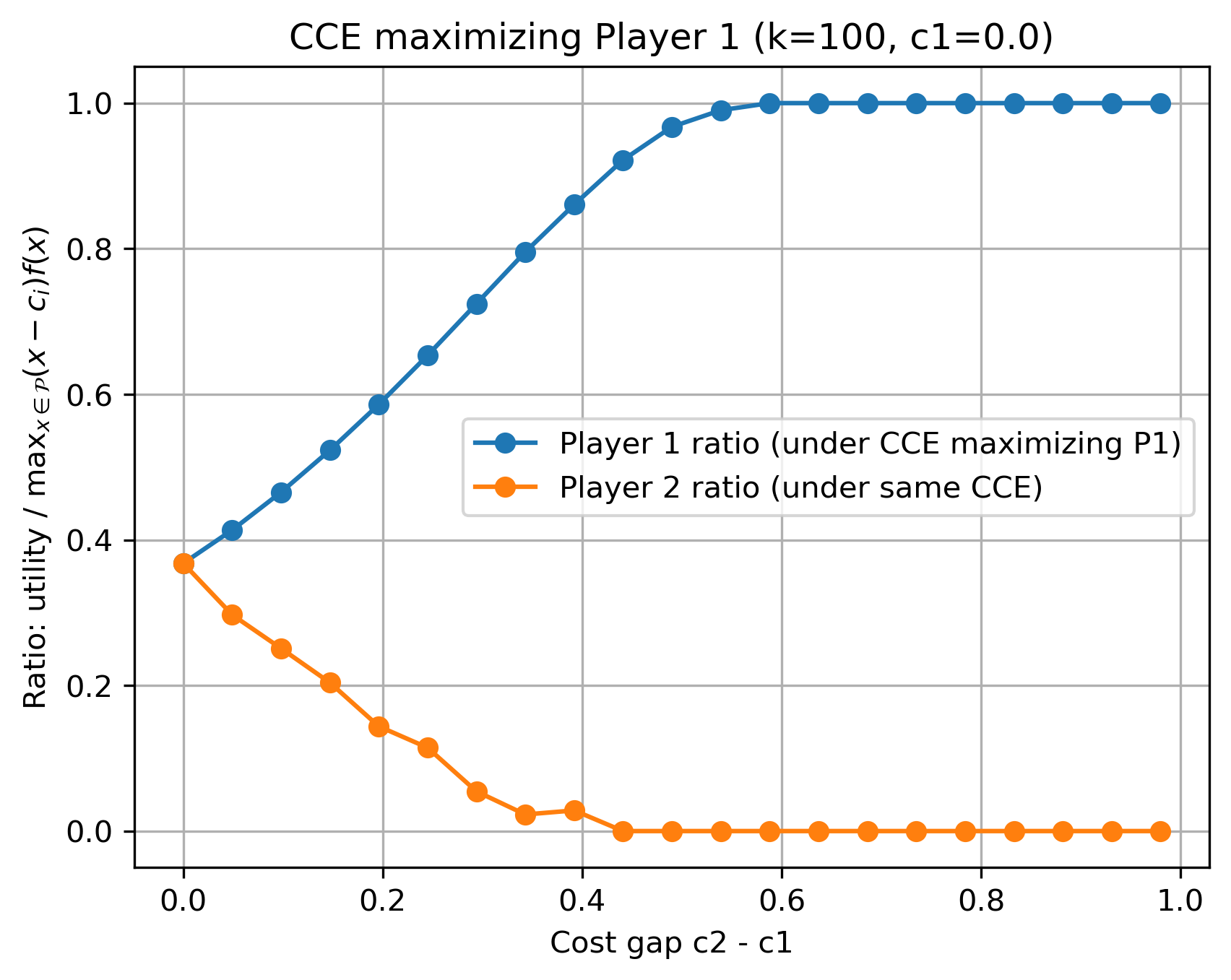}
    \caption{Utility Ratios under the CCE that maximizes player 1's utility}
    \label{fig-quad-asym:img3}
  \end{subfigure}\hfill
  \begin{subfigure}[t]{0.39\textwidth}
    \centering
    \includegraphics[width=\linewidth]{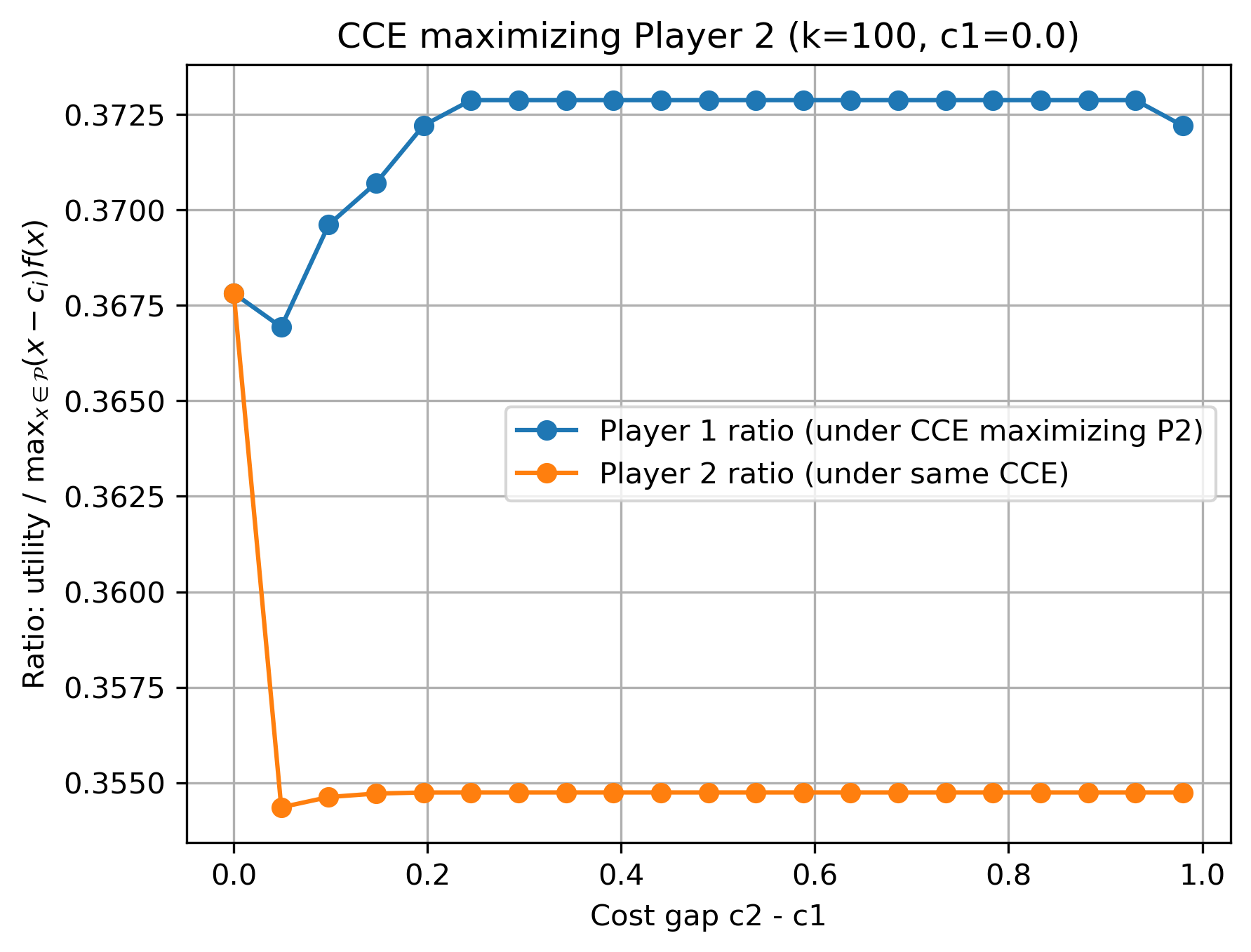}
    \caption{Utility Ratios under the CCE that maximizes player 2's utility}
    \label{fig-quad-asym:img4}
  \end{subfigure}

  \caption{Numerical experiments for asymmetric CCE and quadratic demand.}
  \label{fig-quad-asym}
\end{figure}

\begin{figure}[H]
  \centering

  \begin{subfigure}[t]{0.39\textwidth}
    \centering
    \includegraphics[width=\linewidth]{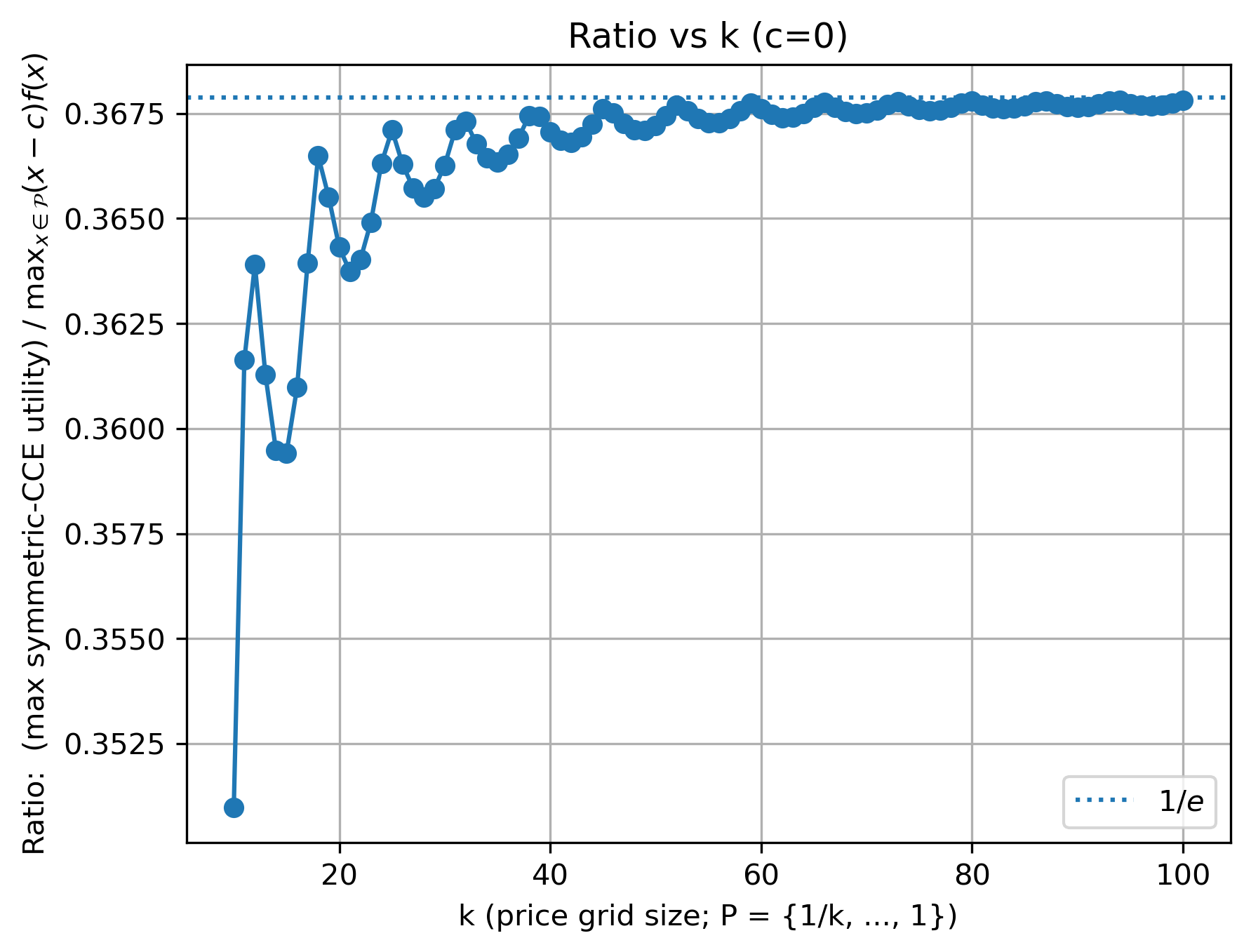}
    \caption{Ratio of duopoly utility under the best symmetric CCE and monopoly utility}
    \label{fig-quad-0.0:img1}
  \end{subfigure}\hfill
  \begin{subfigure}[t]{0.39\textwidth}
    \centering
    \includegraphics[width=\linewidth]{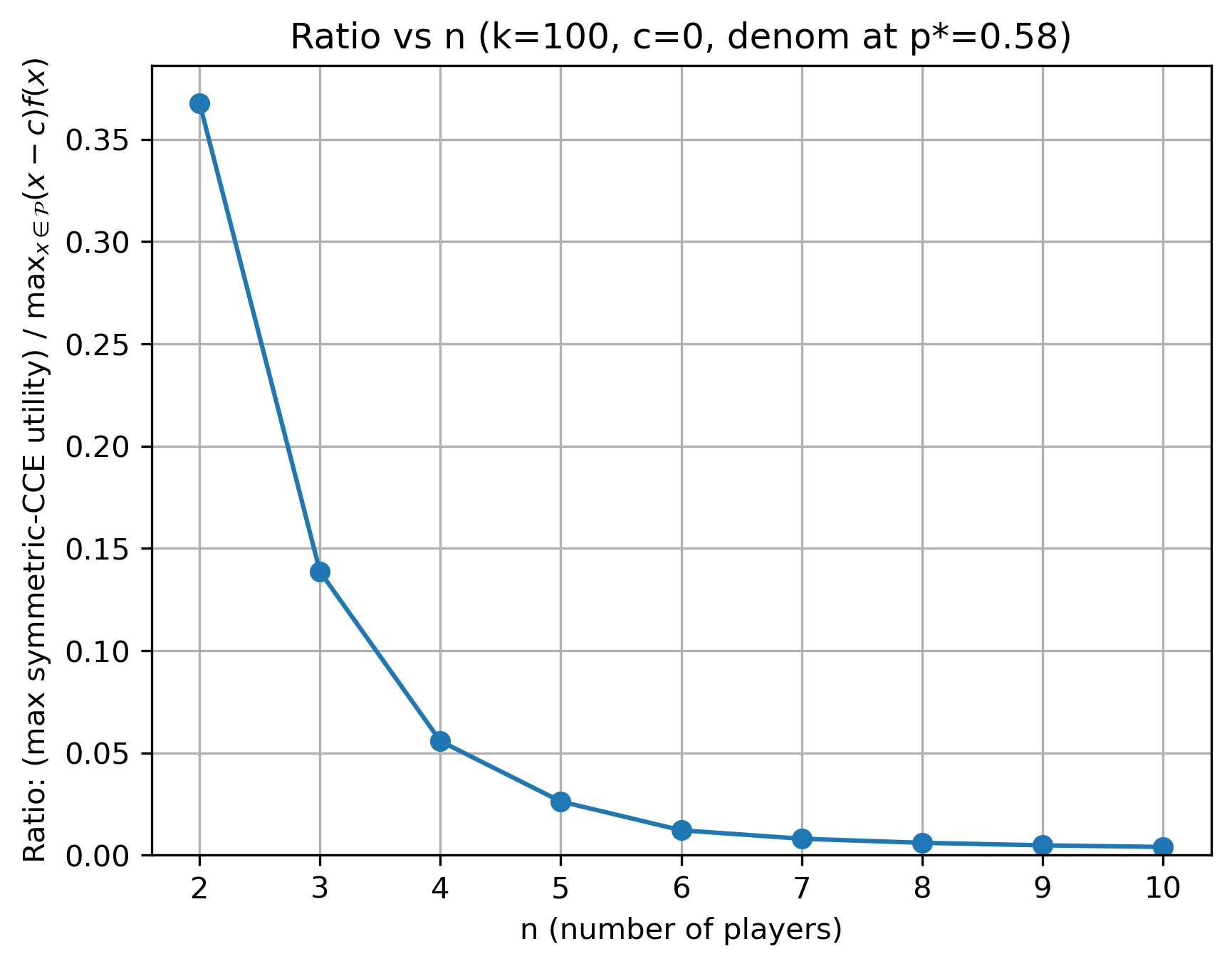}
    \caption{Exponential decay of utility under the best symmetric CCE}
    \label{fig-quad-0.0:img2}
  \end{subfigure}

  \caption{Numerical experiments for symmetric CCE, quadratic demand and $c=0.0$.}
  \label{fig-quad-0.0}
\end{figure}
  \begin{figure}[H]
  \centering
  \begin{subfigure}[t]{0.39\textwidth}
    \centering
    \includegraphics[width=\linewidth]{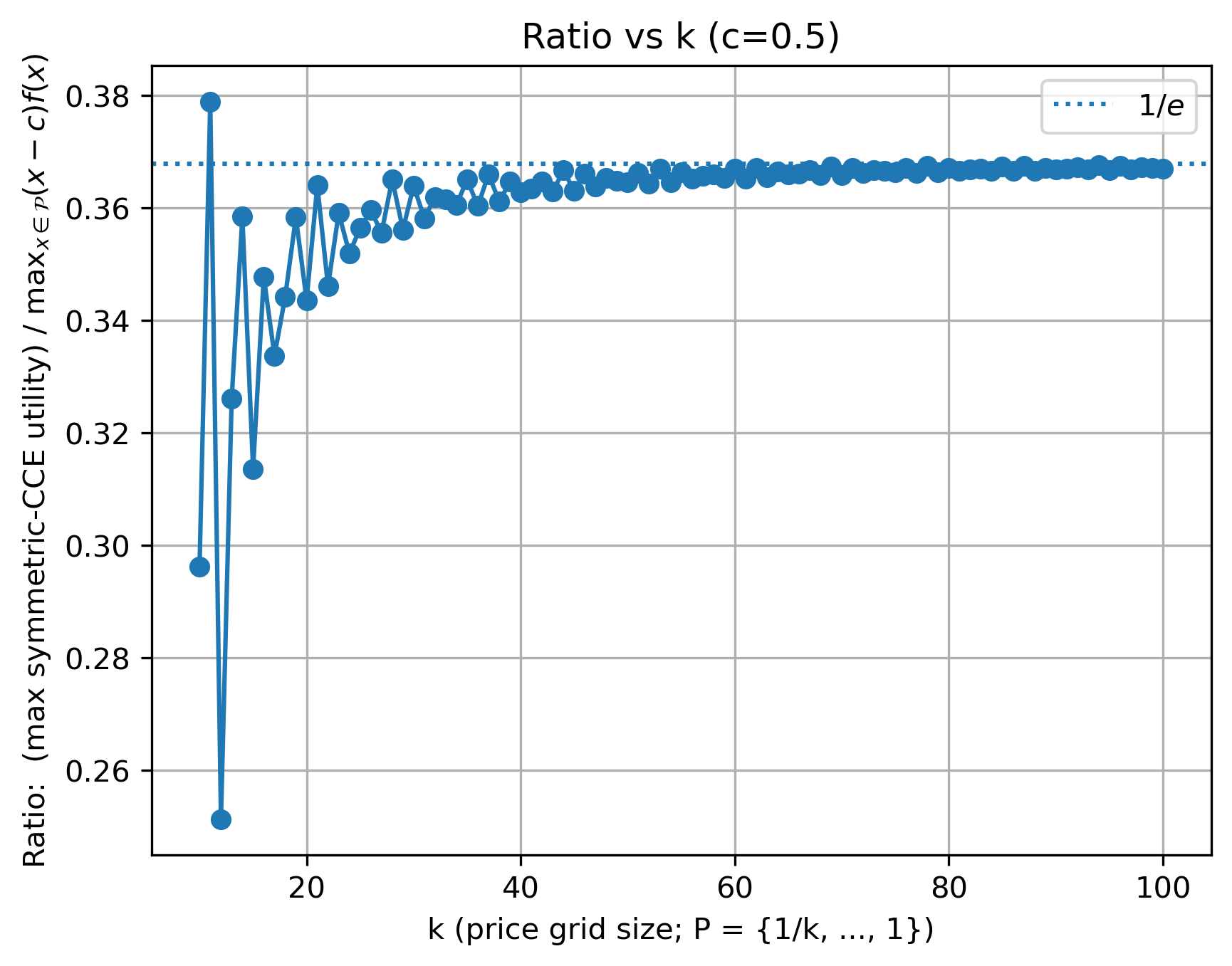}
    \caption{Ratio of duopoly utility under the best symmetric CCE and monopoly utility}
    \label{fig-quad-0.5:img1}
  \end{subfigure}\hfill
  \begin{subfigure}[t]{0.39\textwidth}
    \centering
    \includegraphics[width=\linewidth]{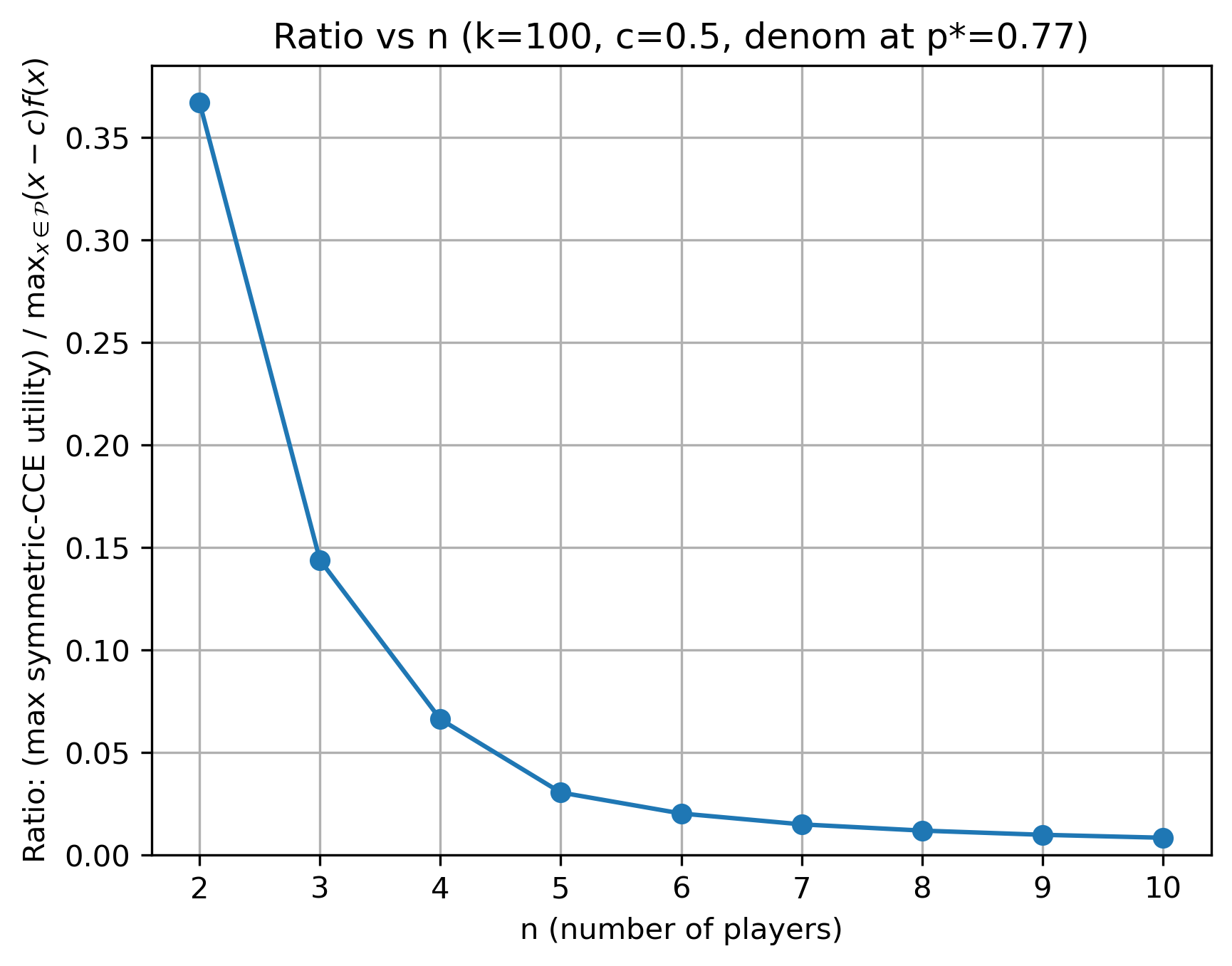}
    \caption{Exponential decay of utility under the best symmetric CCE}
    \label{fig-quad-0.5:img2}
  \end{subfigure}

  \caption{Numerical experiments for symmetric CCE, quadratic demand and $c=0.5$.}
  \label{fig-quad-0.5}
\end{figure}
\begin{figure}[H]
  \centering

  \begin{subfigure}[t]{0.39\textwidth}
    \centering
    \includegraphics[width=\linewidth]{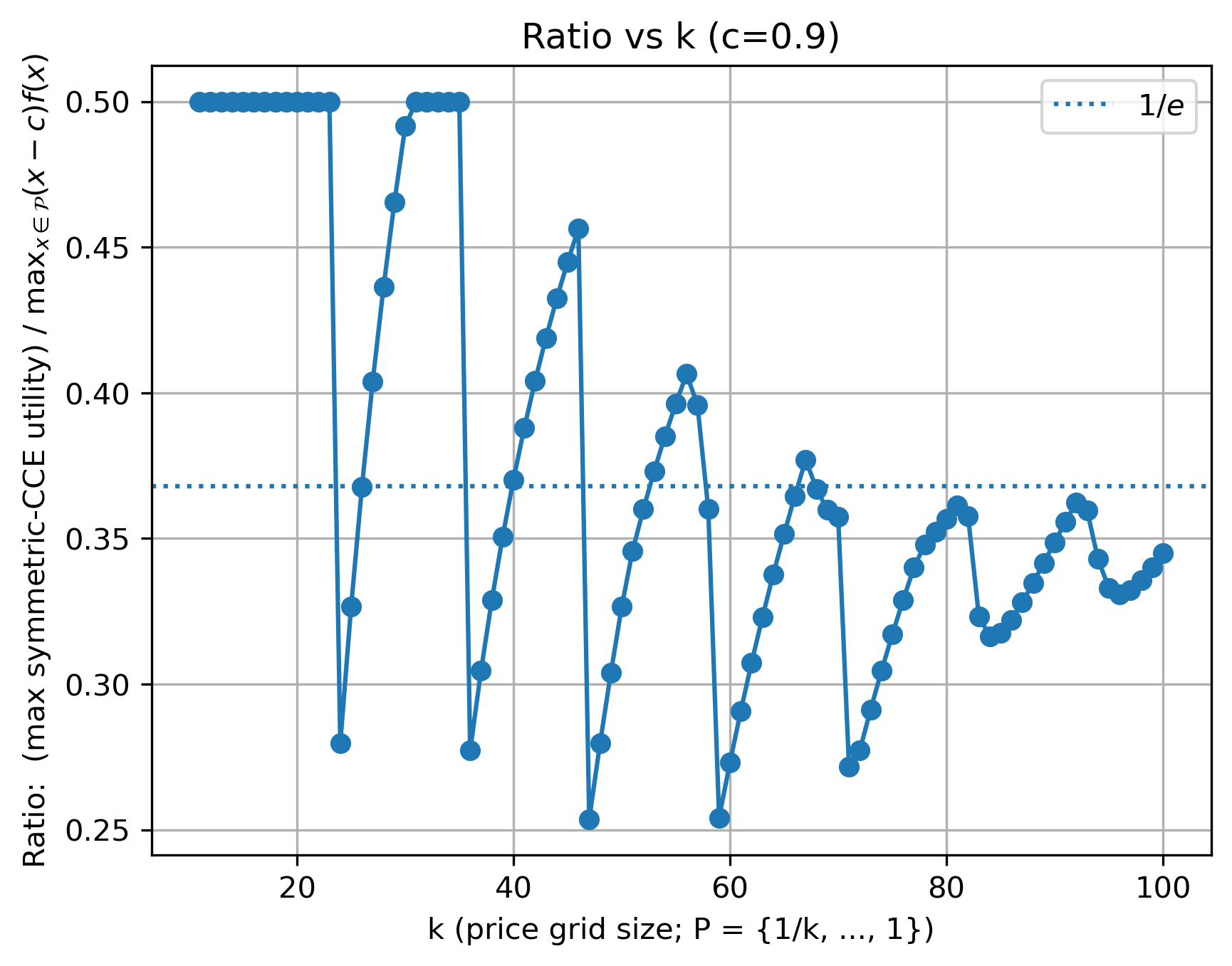}
    \caption{Ratio of duopoly utility under the best symmetric CCE and monopoly utility}
    \label{fig-quad-0.9:img1}
  \end{subfigure}\hfill
  \begin{subfigure}[t]{0.39\textwidth}
    \centering
    \includegraphics[width=\linewidth]{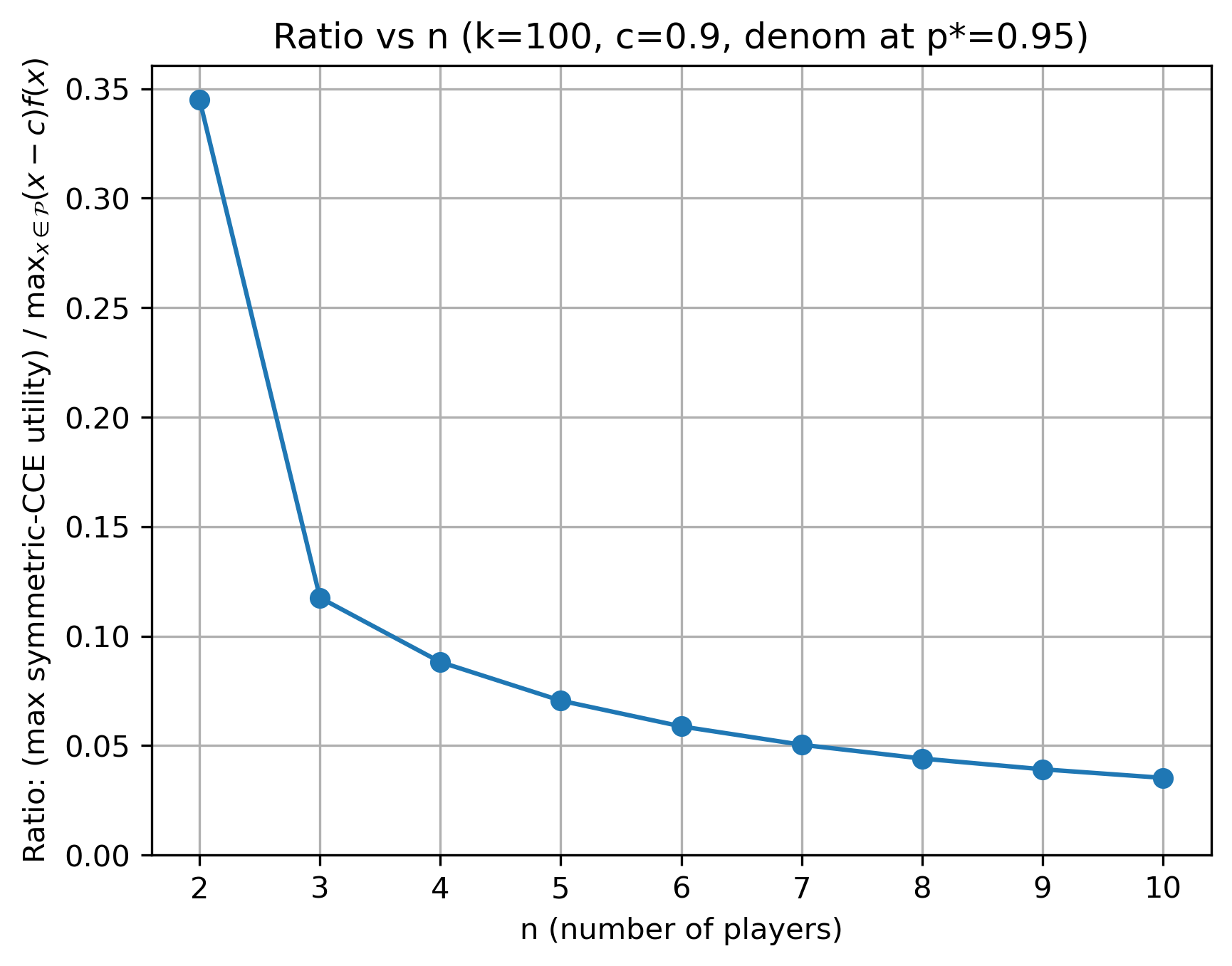}
    \caption{Exponential decay of utility under the best symmetric CCE}
    \label{fig-quad-0.9:img2}
  \end{subfigure}

  \caption{Numerical experiments for symmetric CCE, quadratic demand and $c=0.9$.}
  \label{fig-quad-0.9}
\end{figure}
\subsection{Numerical experiments for exponential demand}\label{appendix:experiments-exp}
\begin{figure}[H]
  \centering

  
  \begin{subfigure}[t]{0.39\textwidth}
    \centering
    \includegraphics[width=\linewidth]{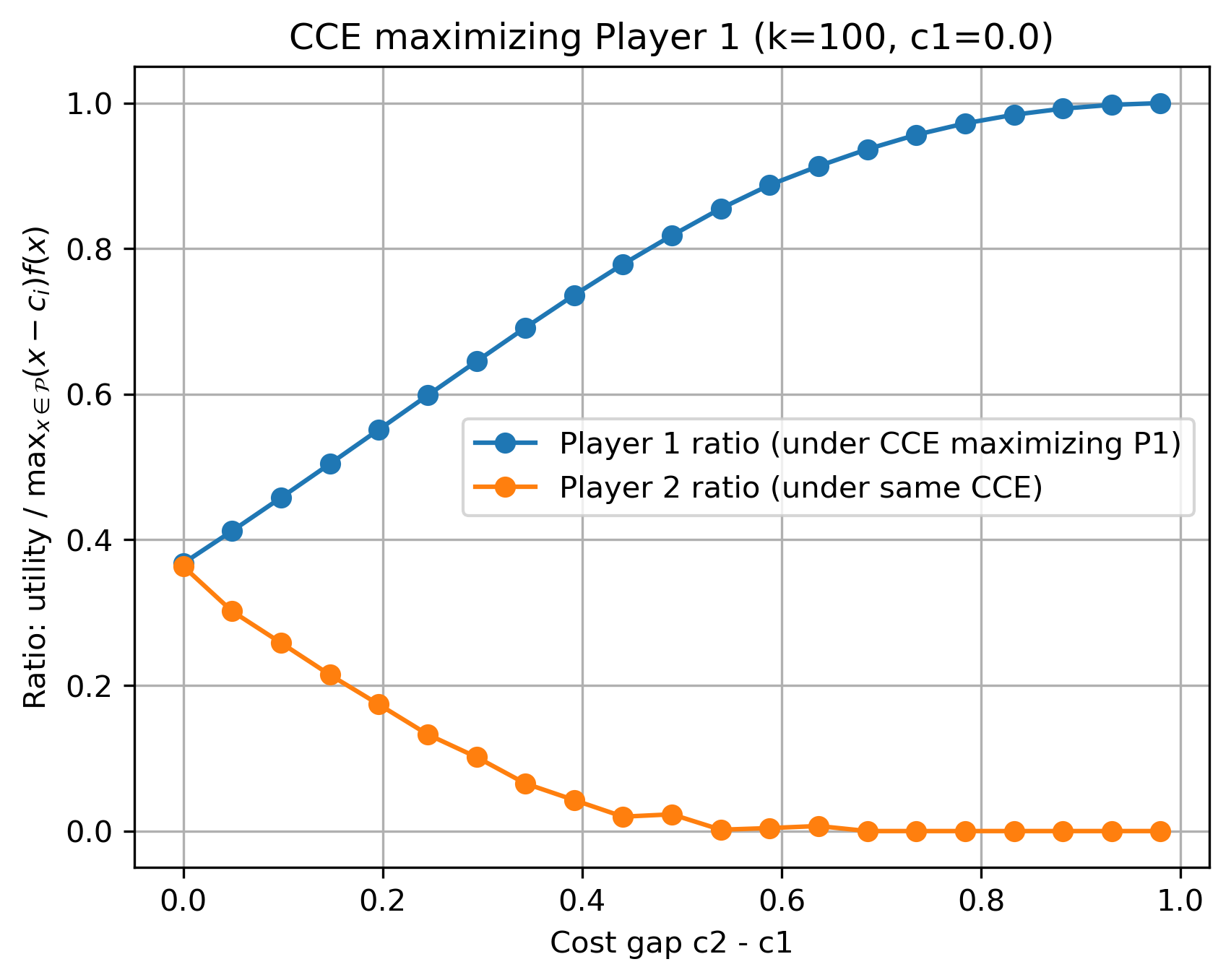}
    \caption{Utility Ratios under the CCE that maximizes player 1's utility}
    \label{fig-exp-asym:img3}
  \end{subfigure}\hfill
  \begin{subfigure}[t]{0.39\textwidth}
    \centering
    \includegraphics[width=\linewidth]{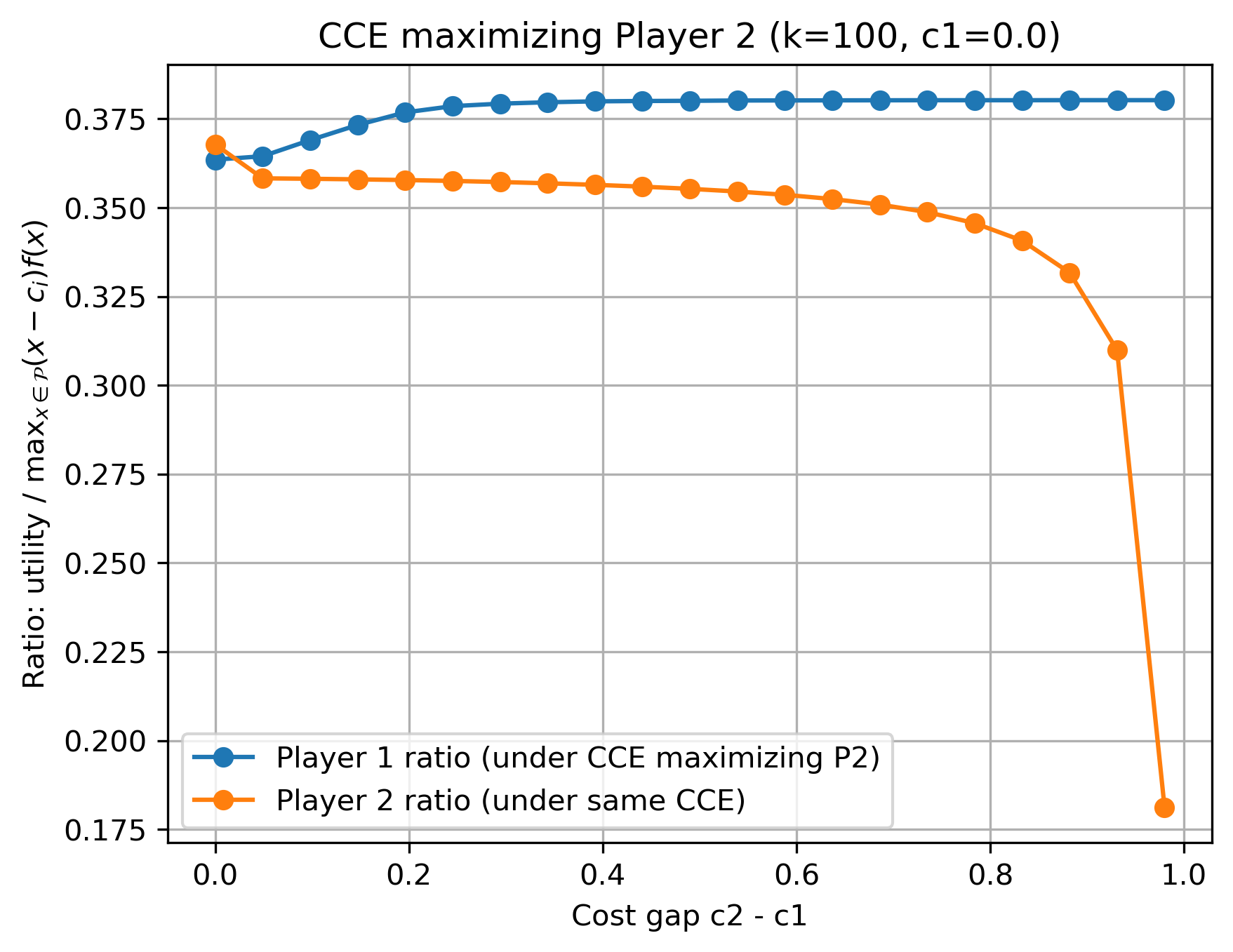}
    \caption{Utility Ratios under the CCE that maximizes player 2's utility}
    \label{fig-exp-asym:img4}
  \end{subfigure}

  \caption{Numerical experiments for asymmetric CCE and exponential demand.}
  \label{fig-exp-asym}
\end{figure}

\begin{figure}[H]
  \centering

  \begin{subfigure}[t]{0.39\textwidth}
    \centering
    \includegraphics[width=\linewidth]{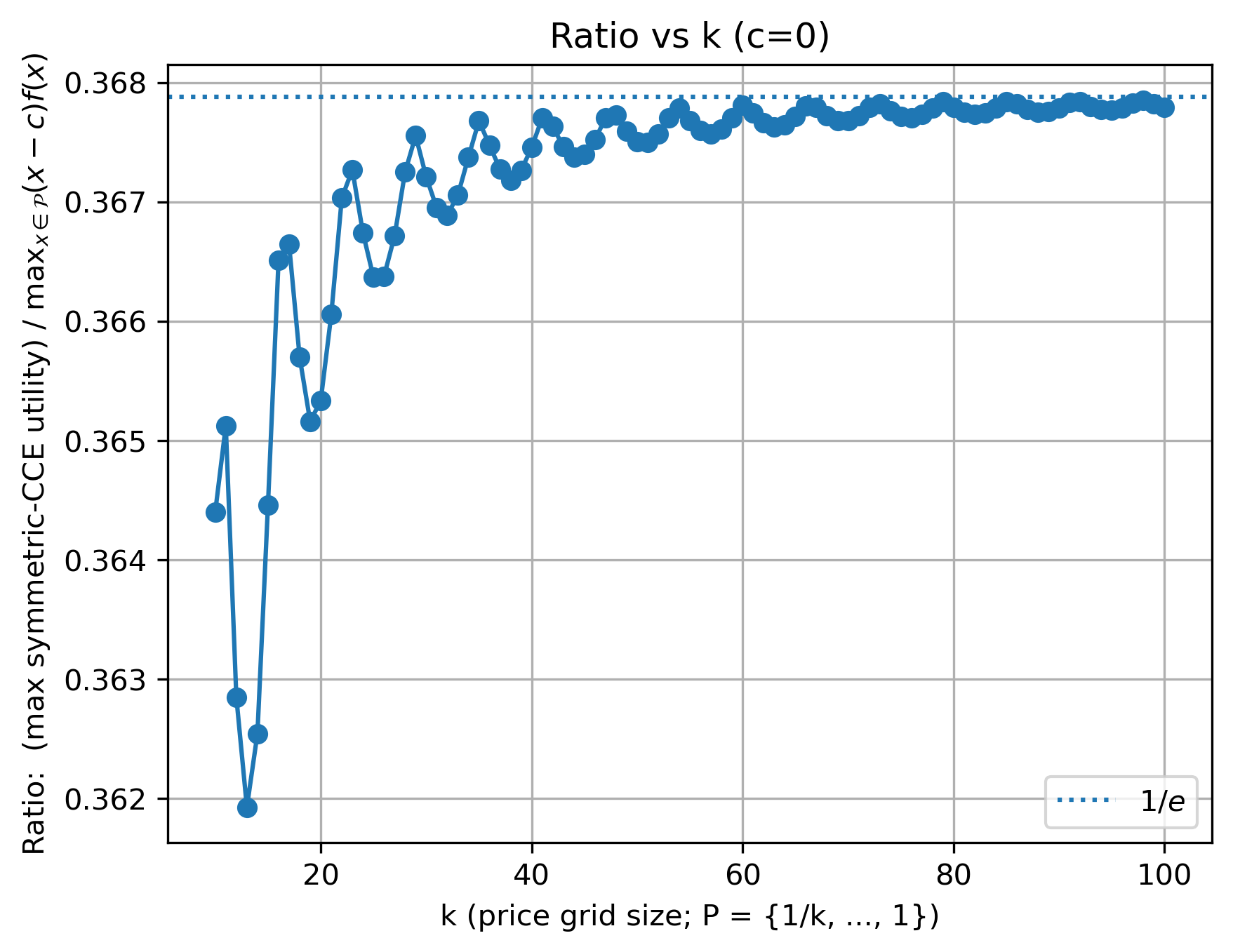}
    \caption{Ratio of duopoly utility under the best symmetric CCE and monopoly utility}
    \label{fig-exp-0.0:img1}
  \end{subfigure}\hfill
  \begin{subfigure}[t]{0.39\textwidth}
    \centering
    \includegraphics[width=\linewidth]{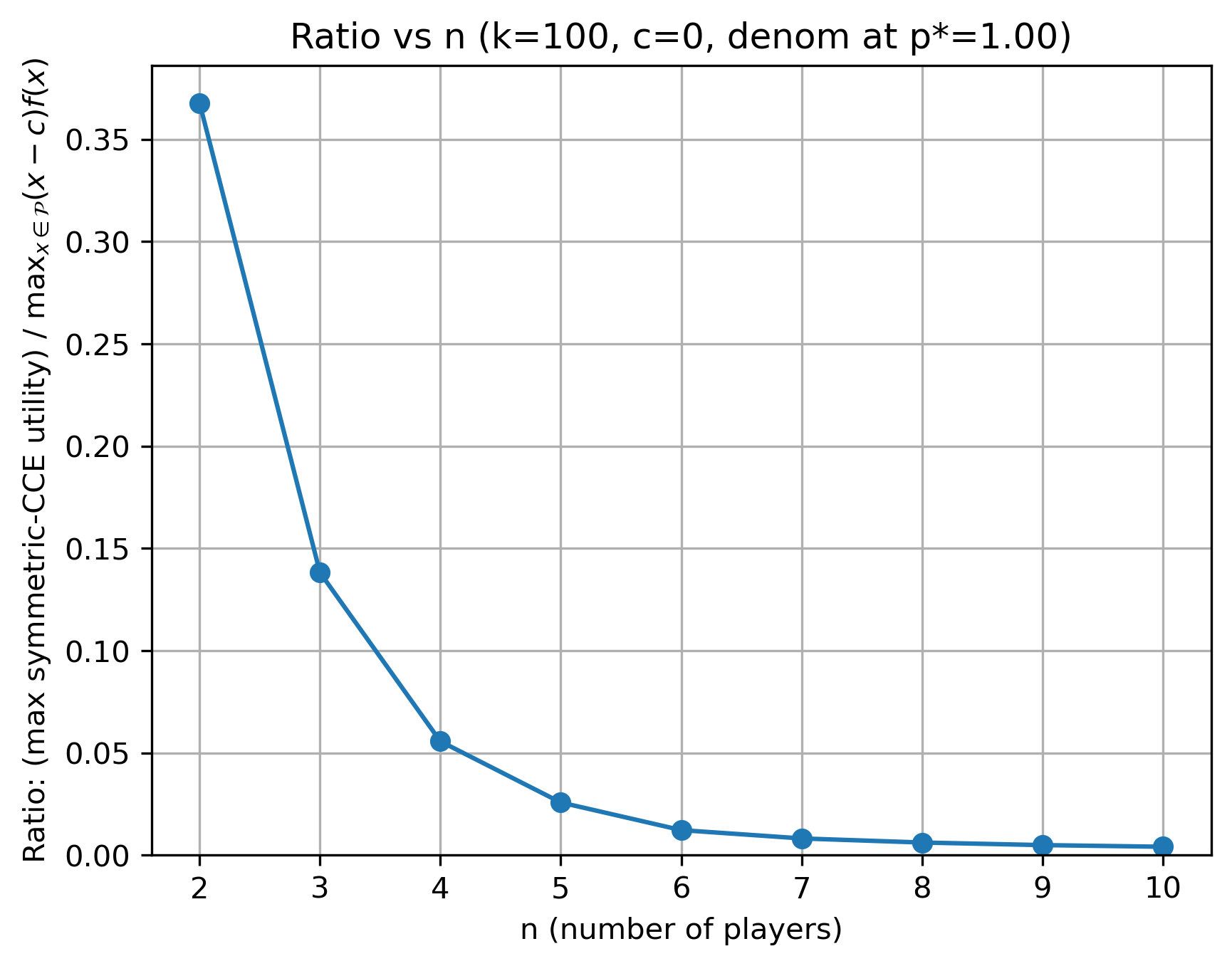}
    \caption{Exponential decay of utility under the best symmetric CCE}
    \label{fig-exp-0.0:img2}
  \end{subfigure}

  \caption{Numerical experiments for symmetric CCE, exponential demand and $c=0.0$.}
  \label{fig-exp-0.0}
\end{figure}
  \begin{figure}[H]
  \centering
  \begin{subfigure}[t]{0.39\textwidth}
    \centering
    \includegraphics[width=\linewidth]{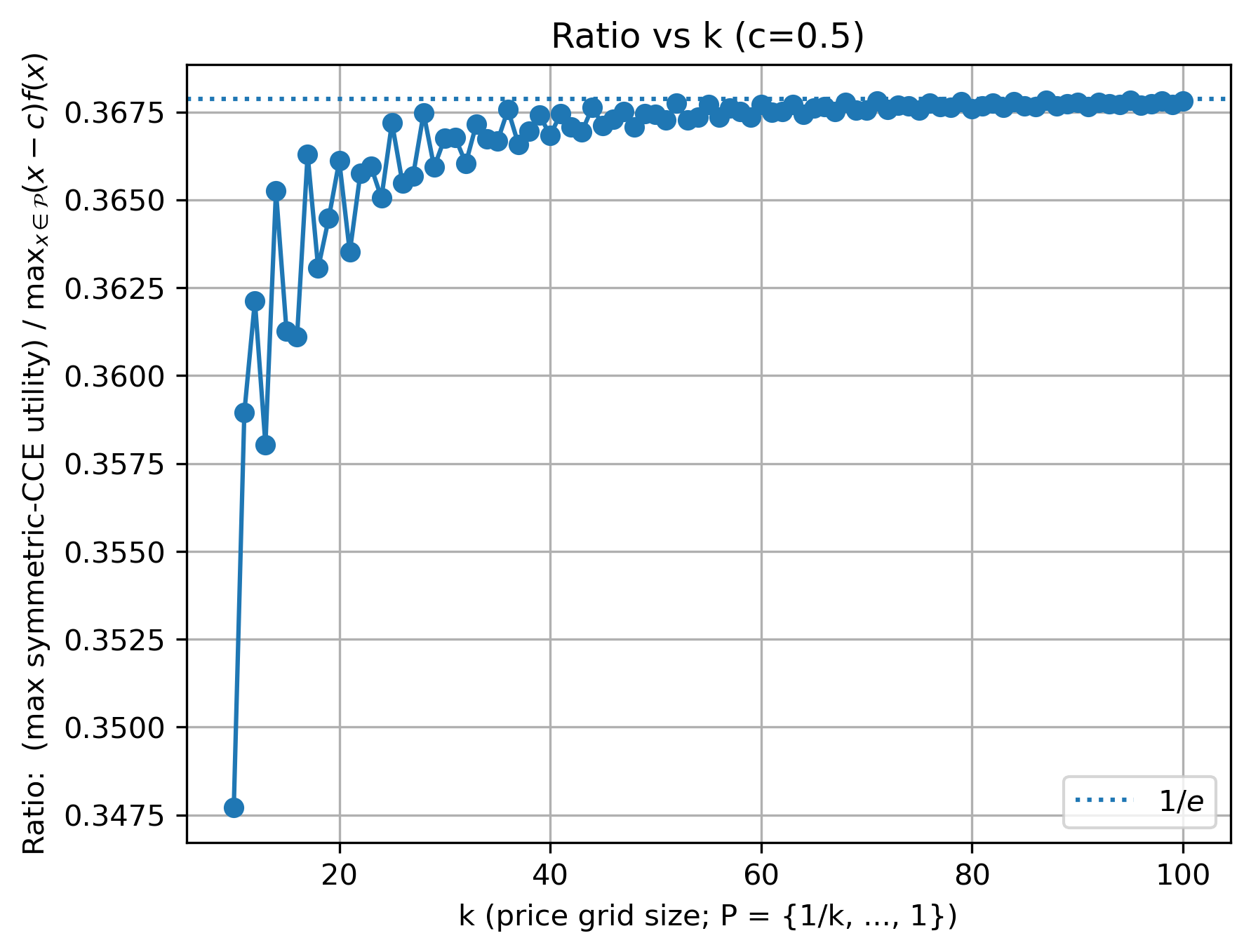}
    \caption{Ratio of duopoly utility under the best symmetric CCE and monopoly utility}
    \label{fig-exp-0.5:img1}
  \end{subfigure}\hfill
  \begin{subfigure}[t]{0.39\textwidth}
    \centering
    \includegraphics[width=\linewidth]{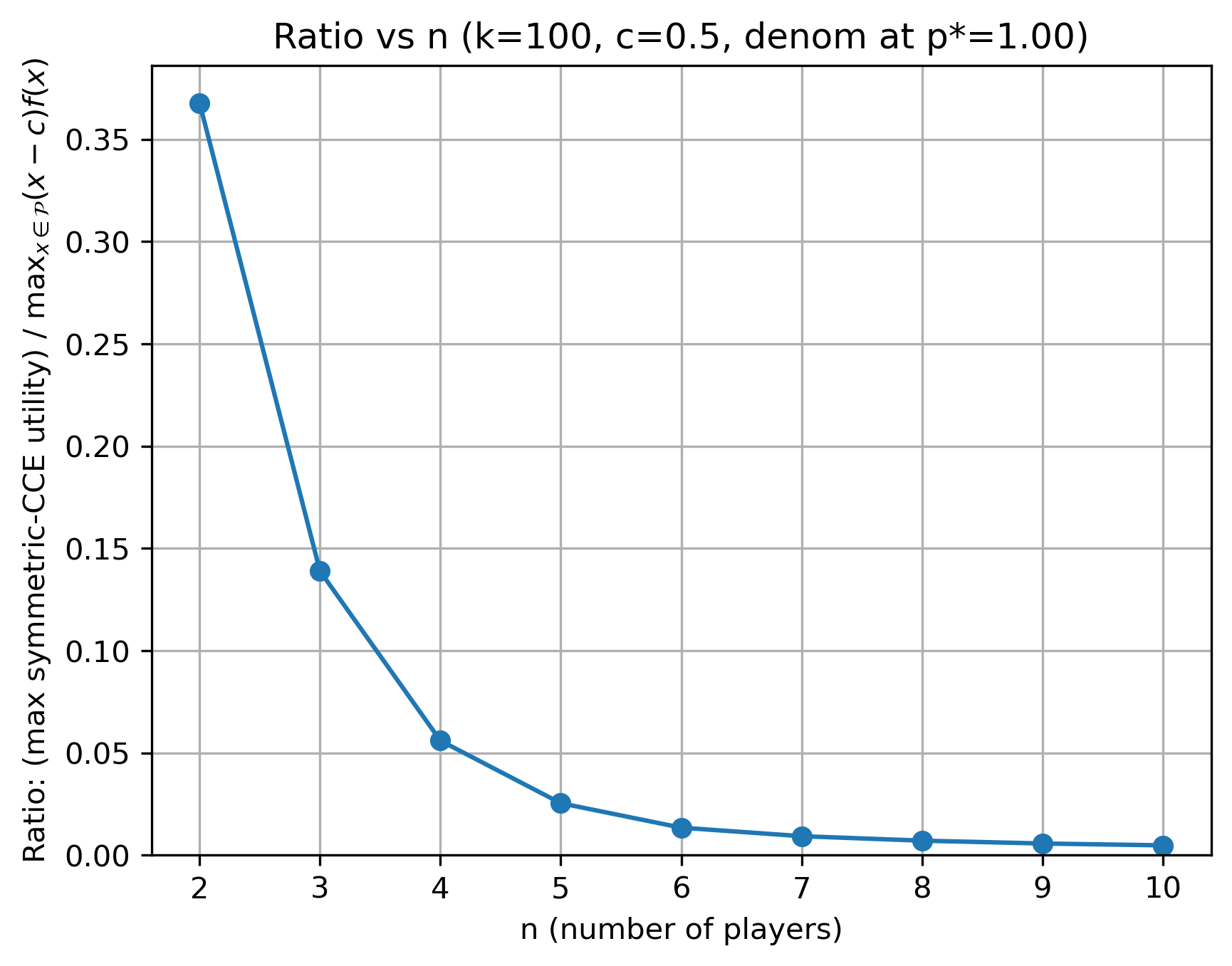}
    \caption{Exponential decay of utility under the best symmetric CCE}
    \label{fig-exp-0.5:img2}
  \end{subfigure}

  \caption{Numerical experiments for symmetric CCE, exponential demand and $c=0.5$.}
  \label{fig-exp-0.5}
\end{figure}
\begin{figure}[H]
  \centering

  \begin{subfigure}[t]{0.39\textwidth}
    \centering
    \includegraphics[width=\linewidth]{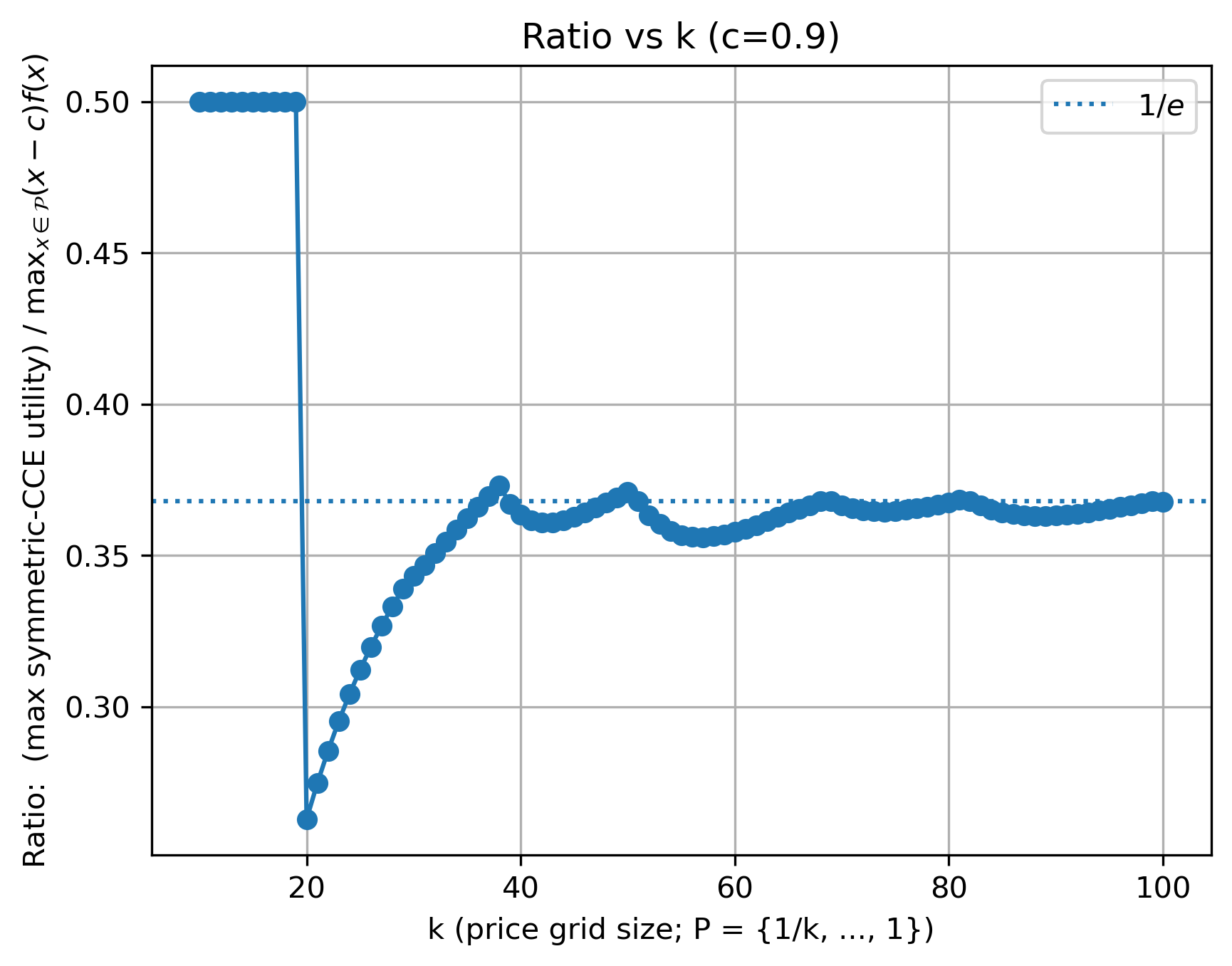}
    \caption{Ratio of duopoly utility under the best symmetric CCE and monopoly utility}
    \label{fig-exp-0.9:img1}
  \end{subfigure}\hfill
  \begin{subfigure}[t]{0.39\textwidth}
    \centering
    \includegraphics[width=\linewidth]{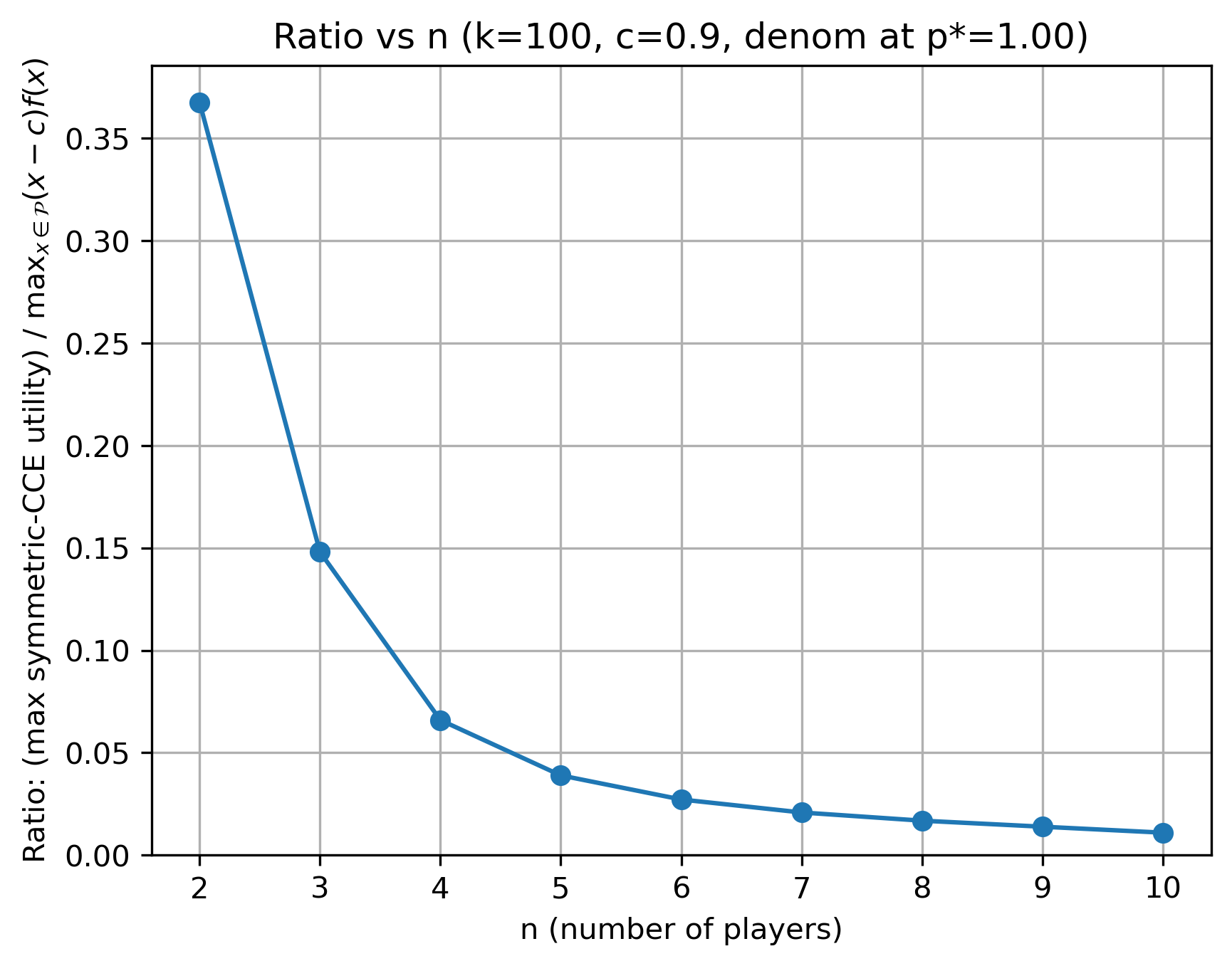}
    \caption{Exponential decay of utility under the best symmetric CCE}
    \label{fig-exp-0.9:img2}
  \end{subfigure}

  \caption{Numerical experiments for symmetric CCE, quadratic demand and $c=0.9$.}
  \label{fig-exp-0.9}
\end{figure}
\subsection{Empirical experiments for Hedge-based no-swap-regret learner}\label{appendix:experiments-hedge}
\begin{figure}[H]
  \centering

  \begin{subfigure}[t]{0.33\textwidth}
    \centering
    \includegraphics[width=\linewidth]{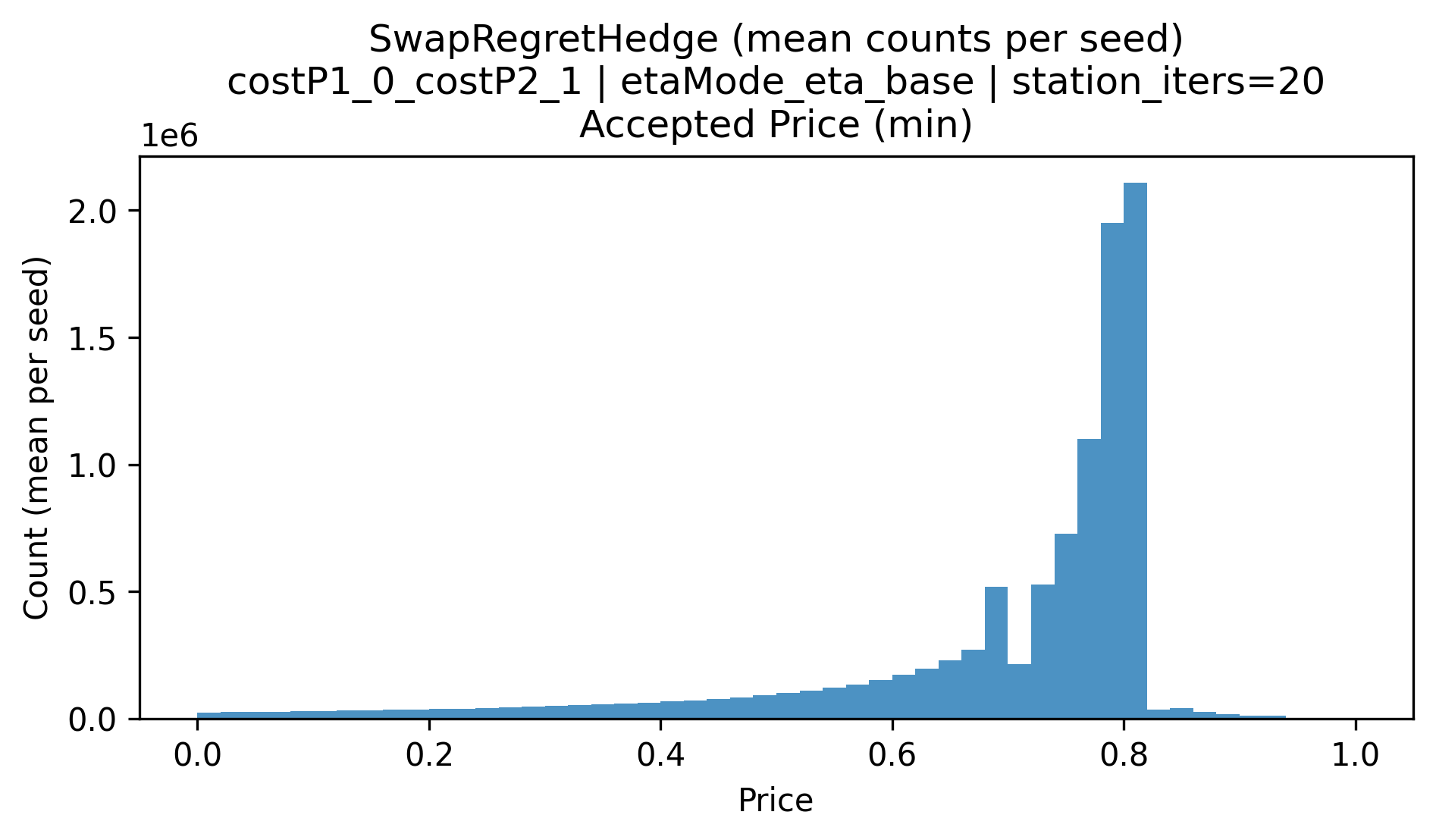}
    \caption{Transaction Price} 
    \label{fig2-1-00:img1}
  \end{subfigure}\hfill
  \begin{subfigure}[t]{0.33\textwidth}
    \centering
    \includegraphics[width=\linewidth]{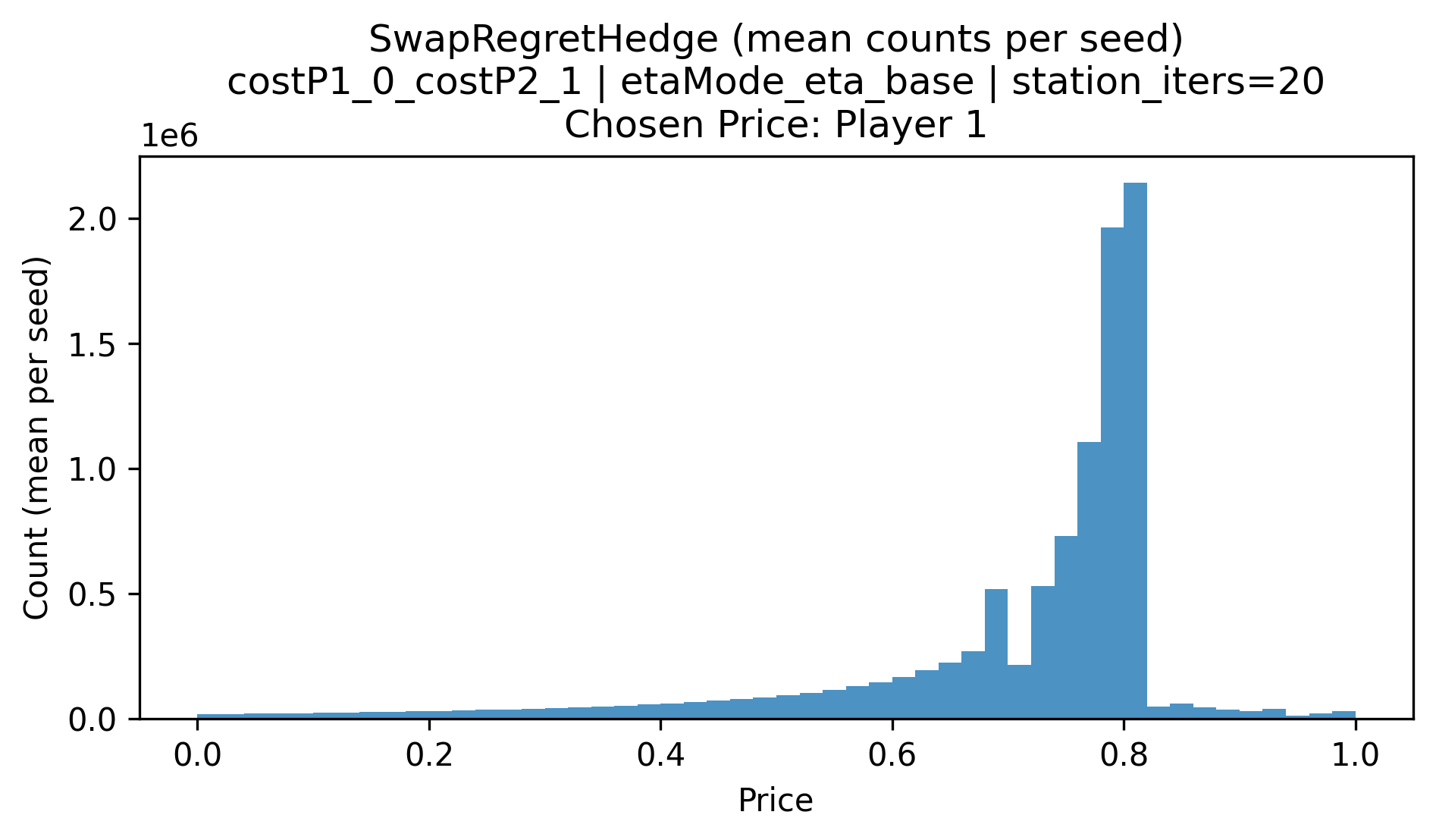}
    \caption{Prices set by Player 1}
    \label{fig2-1-00:img2}
  \end{subfigure}\hfill
  \begin{subfigure}[t]{0.33\textwidth}
    \centering
    \includegraphics[width=\linewidth]{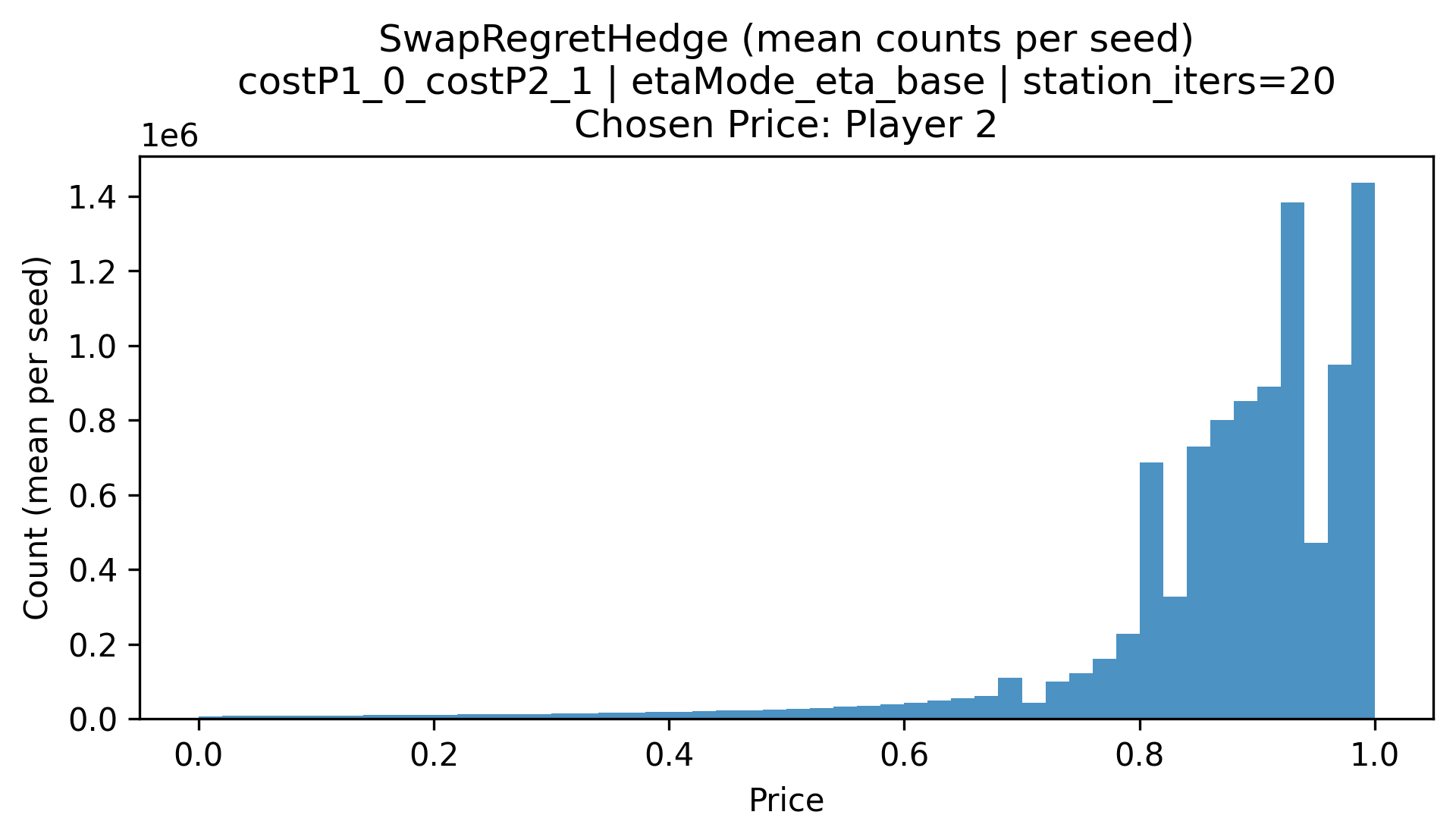}
    \caption{Prices set by Player 2}
    \label{fig2-1-00:img3}
  \end{subfigure}\hfill

  \caption{Experiments using the Hedge-based no-swap-regret learner with $c_2=1$ and both players have the same learning rates. The plots show the frequency of prices over $T=10^7$ rounds, averaged across 100 random seeds.}
  \label{fig2-1-00}
\end{figure}
\begin{figure}[H]
  \centering

  \begin{subfigure}[t]{0.33\textwidth}
    \centering
    \includegraphics[width=\linewidth]{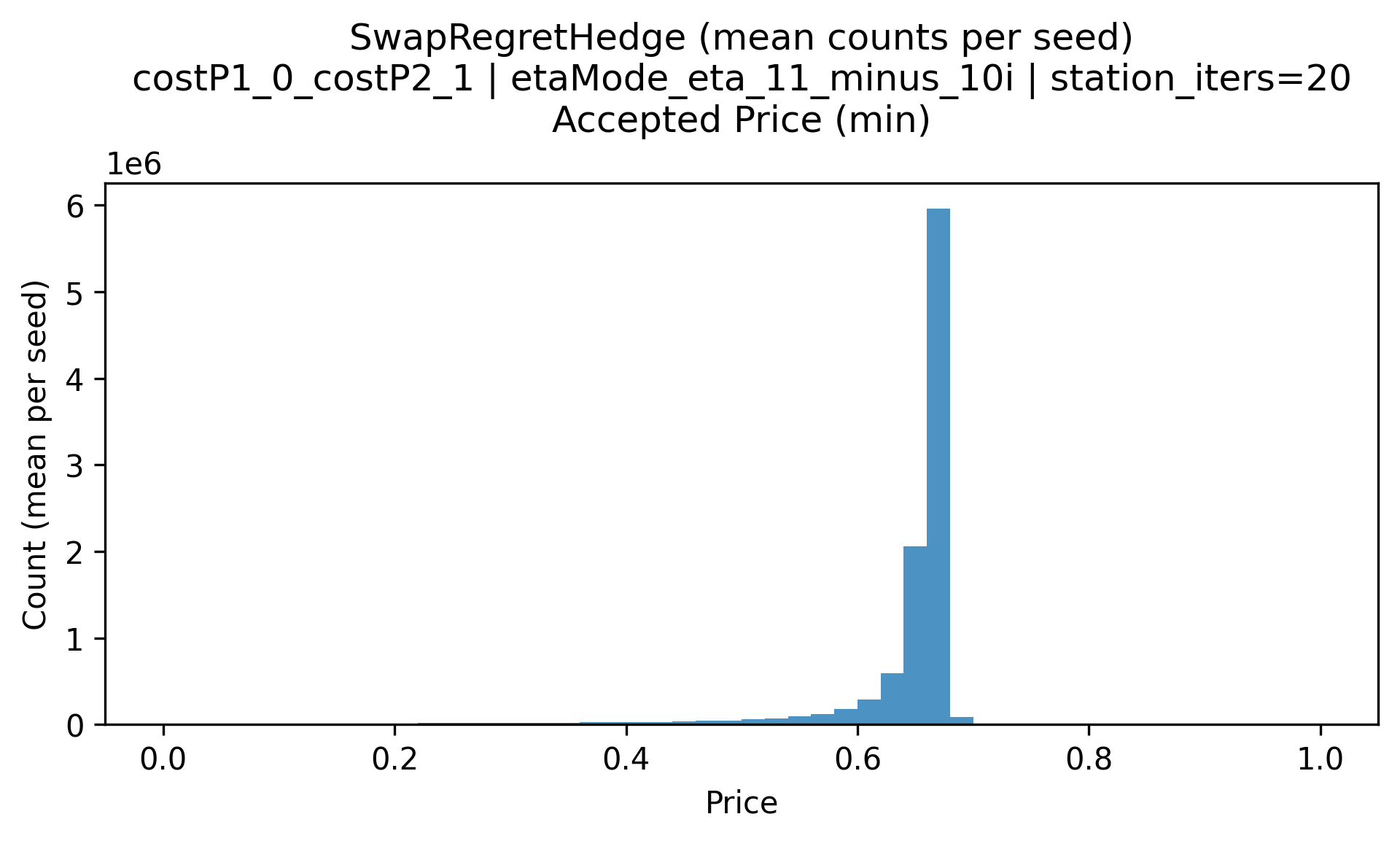}
    \caption{Transaction Price} 
    \label{fig2-1-10:img1}
  \end{subfigure}\hfill
  \begin{subfigure}[t]{0.33\textwidth}
    \centering
    \includegraphics[width=\linewidth]{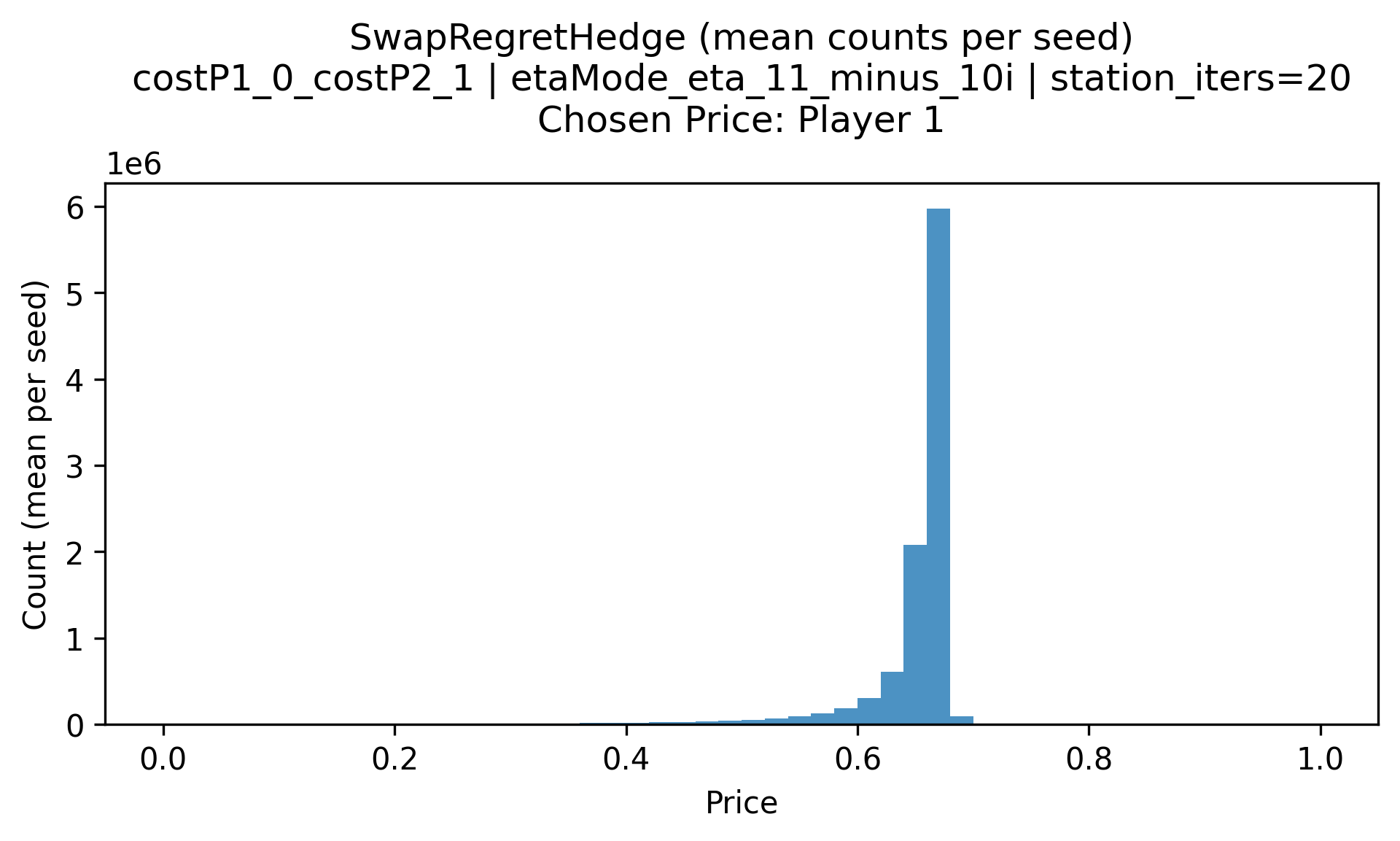}
    \caption{Prices set by Player 1}
    \label{fig2-1-10:img2}
  \end{subfigure}\hfill
  \begin{subfigure}[t]{0.33\textwidth}
    \centering
    \includegraphics[width=\linewidth]{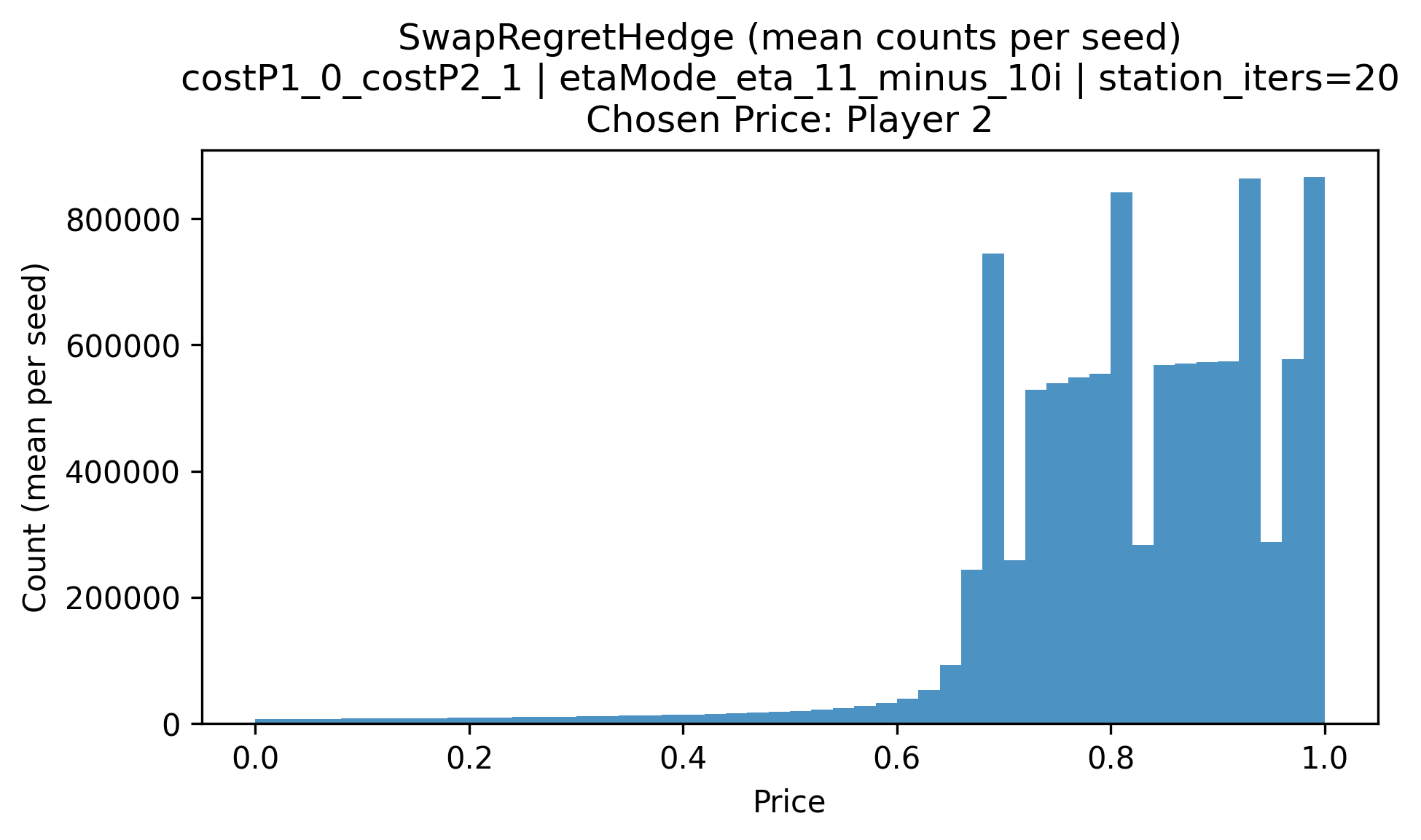}
    \caption{Prices set by Player 2}
    \label{fig2-1-10:img3}
  \end{subfigure}\hfill

  \caption{Experiments using the Hedge-based no-swap-regret learner with $c_2=1$ and player 1 having larger learning rate than player 2. The plots show the frequency of prices over $T=10^7$ rounds, averaged across 100 random seeds.}
  \label{fig2-1-10}
\end{figure}
\begin{figure}[H]
  \centering
\begin{subfigure}[t]{0.33\textwidth}
    \centering
    \includegraphics[width=\linewidth]{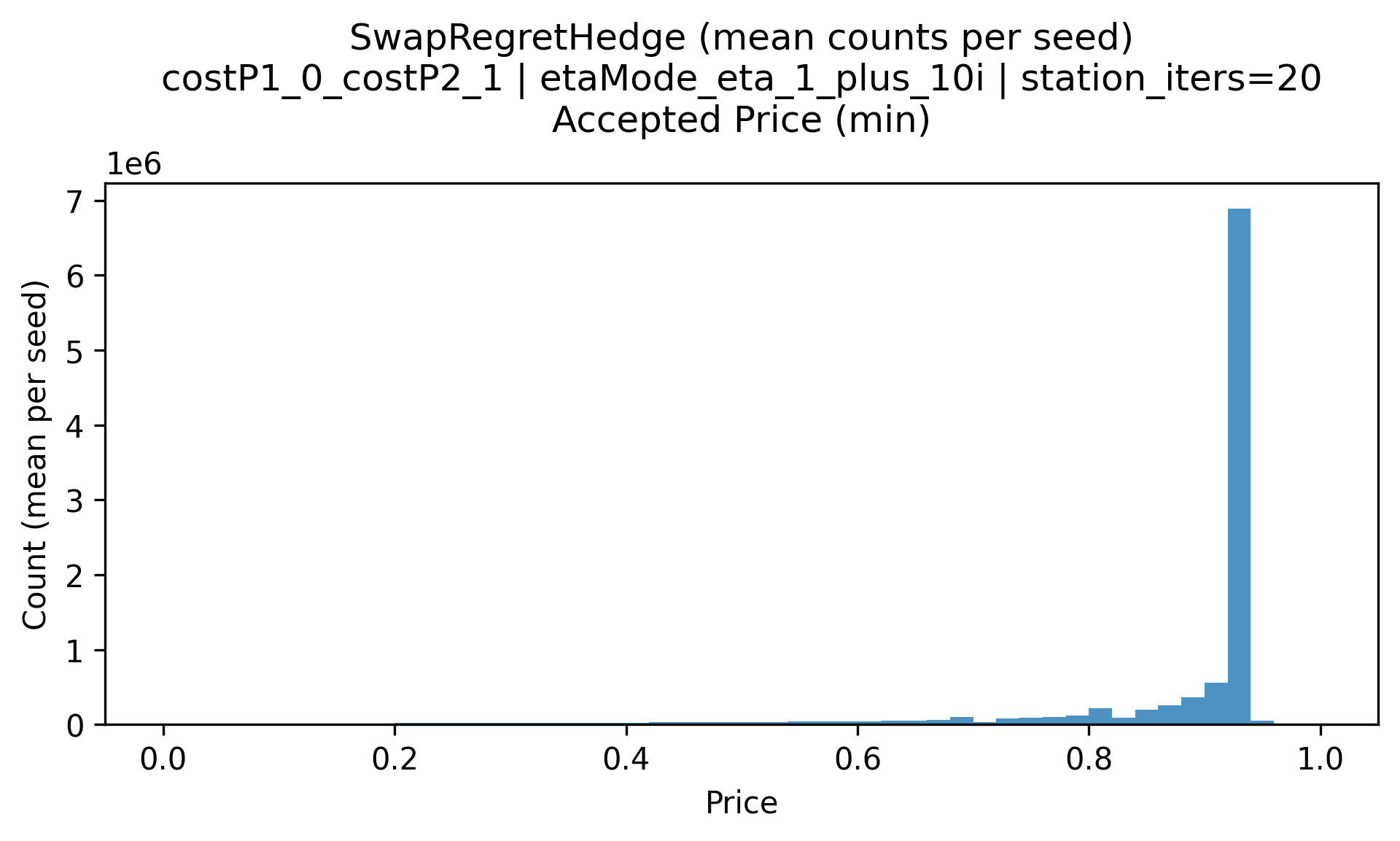}
    \caption{Transaction Price} 
    \label{fig2-1-01:img1}
  \end{subfigure}\hfill
  \begin{subfigure}[t]{0.33\textwidth}
    \centering
    \includegraphics[width=\linewidth]{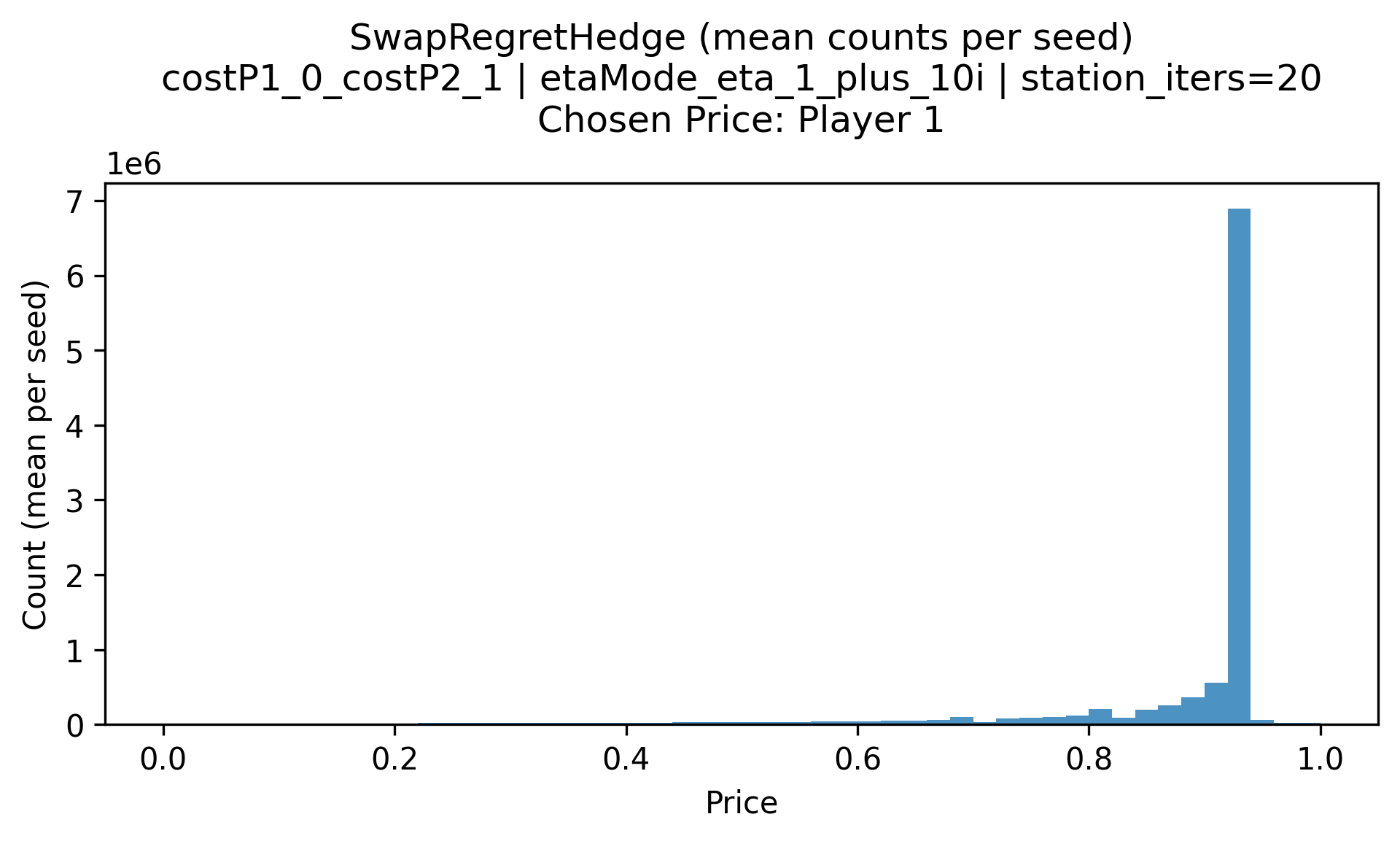}
    \caption{Prices set by Player 1}
    \label{fig2-1-10:img2}
  \end{subfigure}\hfill
  \begin{subfigure}[t]{0.33\textwidth}
    \centering
    \includegraphics[width=\linewidth]{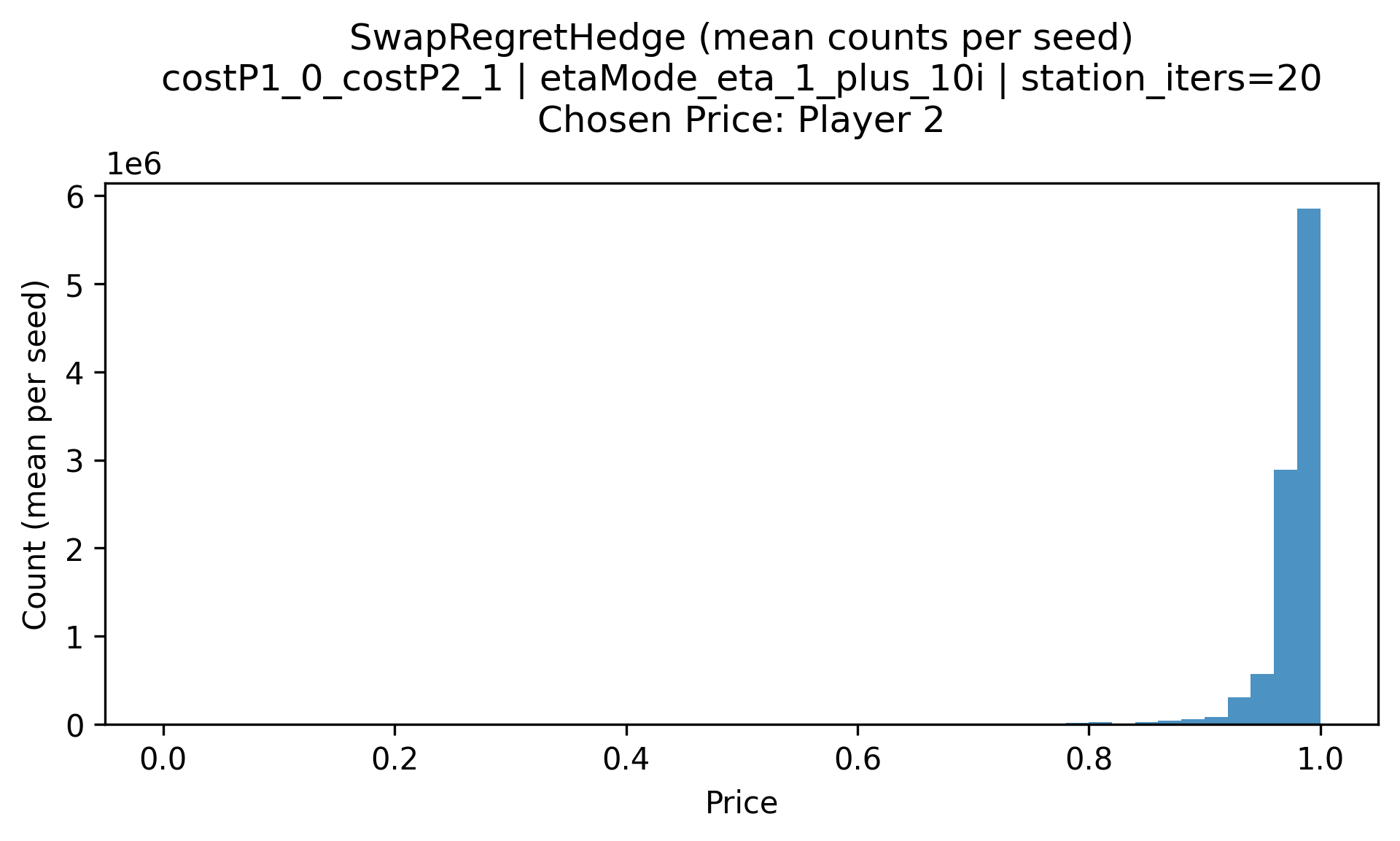}
    \caption{Prices set by Player 2}
    \label{fig2-1-10:img3}
  \end{subfigure}\hfill

  \caption{Experiments using the Hedge-based no-swap-regret learner with $c_2=1$ and player 2 having larger learning rate than player 1. The plots show the frequency of prices over $T=10^7$ rounds, averaged across 100 random seeds.}
  \label{fig2-1-10}
\end{figure}

\begin{figure}[H]
  \centering

  \begin{subfigure}[t]{0.33\textwidth}
    \centering
    \includegraphics[width=\linewidth]{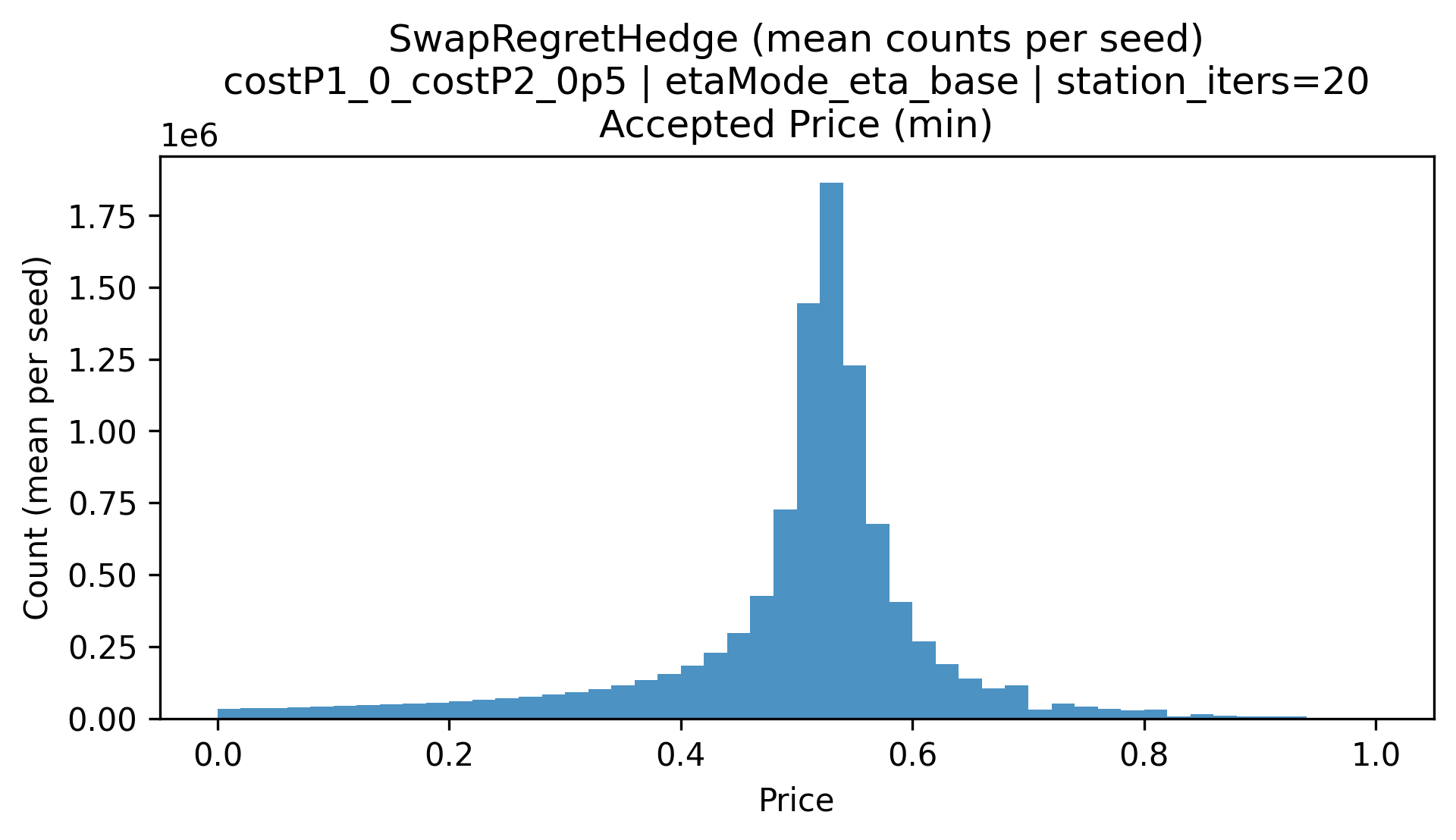}
    \caption{Transaction Price} 
    \label{fig2-0p5-00:img1}
  \end{subfigure}\hfill
  \begin{subfigure}[t]{0.33\textwidth}
    \centering
    \includegraphics[width=\linewidth]{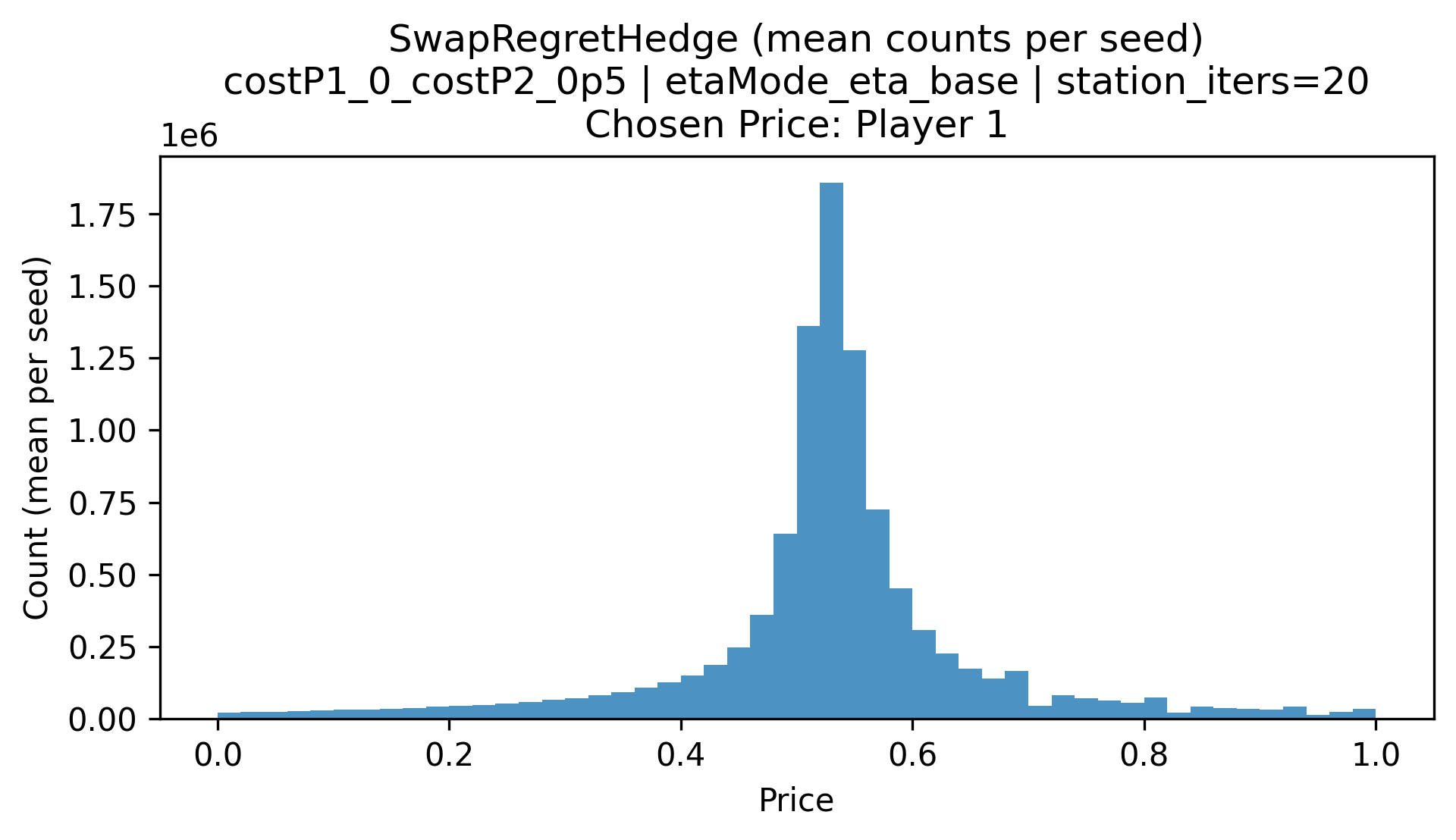}
    \caption{Prices set by Player 1}
    \label{fig2-0p5-00:img2}
  \end{subfigure}\hfill
  \begin{subfigure}[t]{0.33\textwidth}
    \centering
    \includegraphics[width=\linewidth]{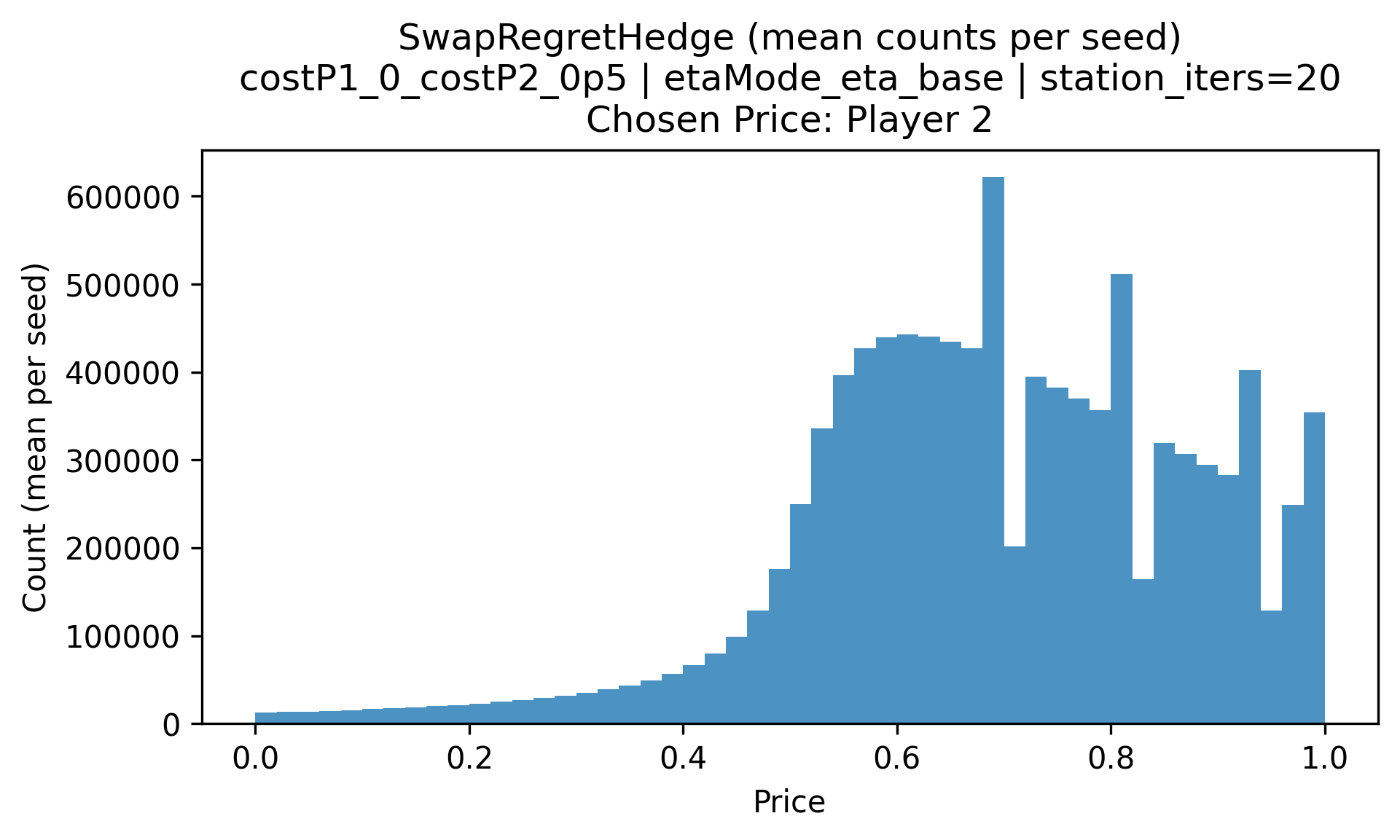}
    \caption{Prices set by Player 2}
    \label{fig2-0p5-00:img3}
  \end{subfigure}\hfill

  \caption{Experiments using the Hedge-based no-swap-regret learner with $c_2=0.5$ and both players have the same learning rates. The plots show the frequency of prices over $T=10^7$ rounds, averaged across 100 random seeds.}
  \label{fig2-0p5-00}
\end{figure}
\begin{figure}[H]
  \centering

  \begin{subfigure}[t]{0.33\textwidth}
    \centering
    \includegraphics[width=\linewidth]{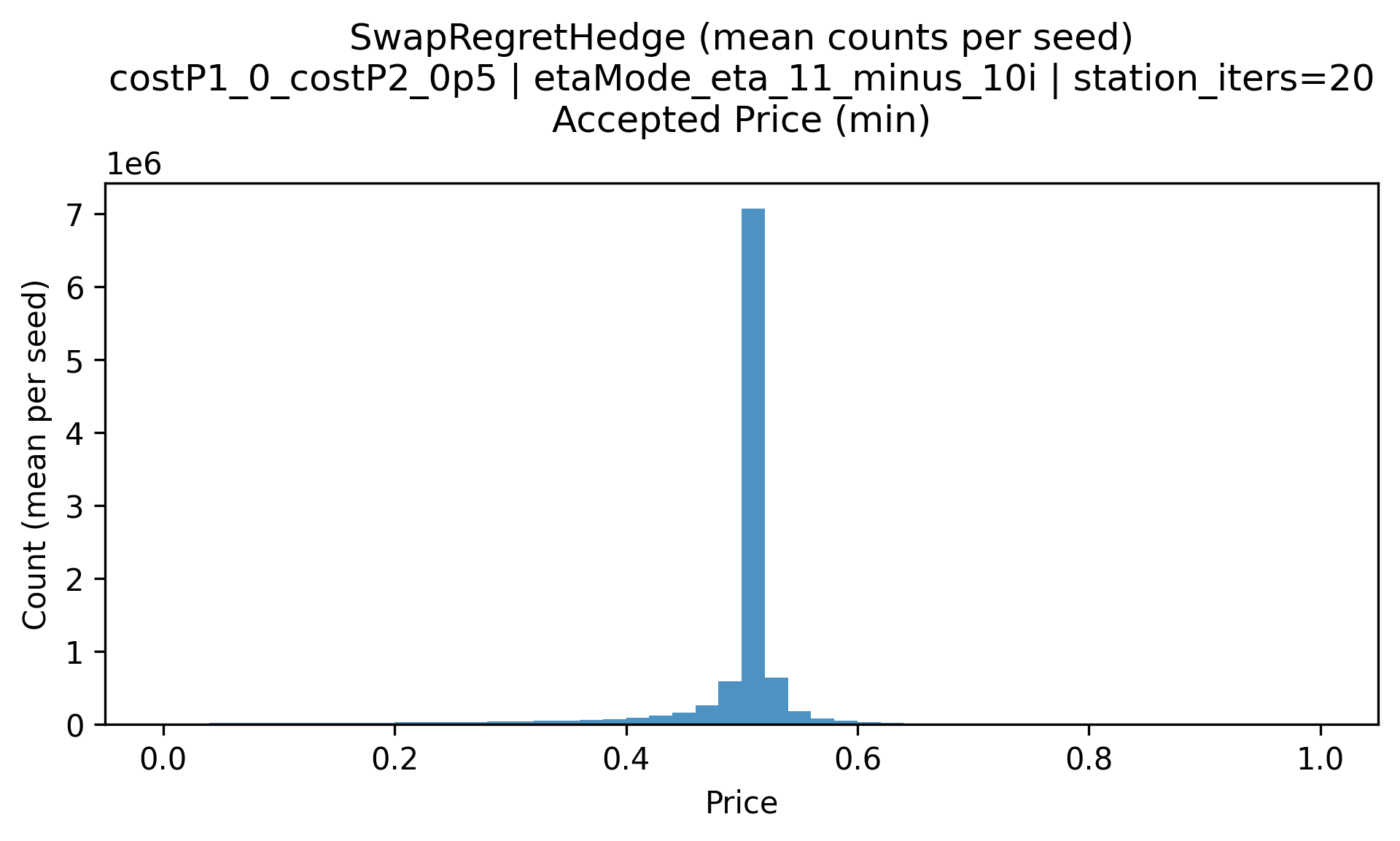}
    \caption{Transaction Price} 
    \label{fig2-0p5-10:img1}
  \end{subfigure}\hfill
  \begin{subfigure}[t]{0.33\textwidth}
    \centering
    \includegraphics[width=\linewidth]{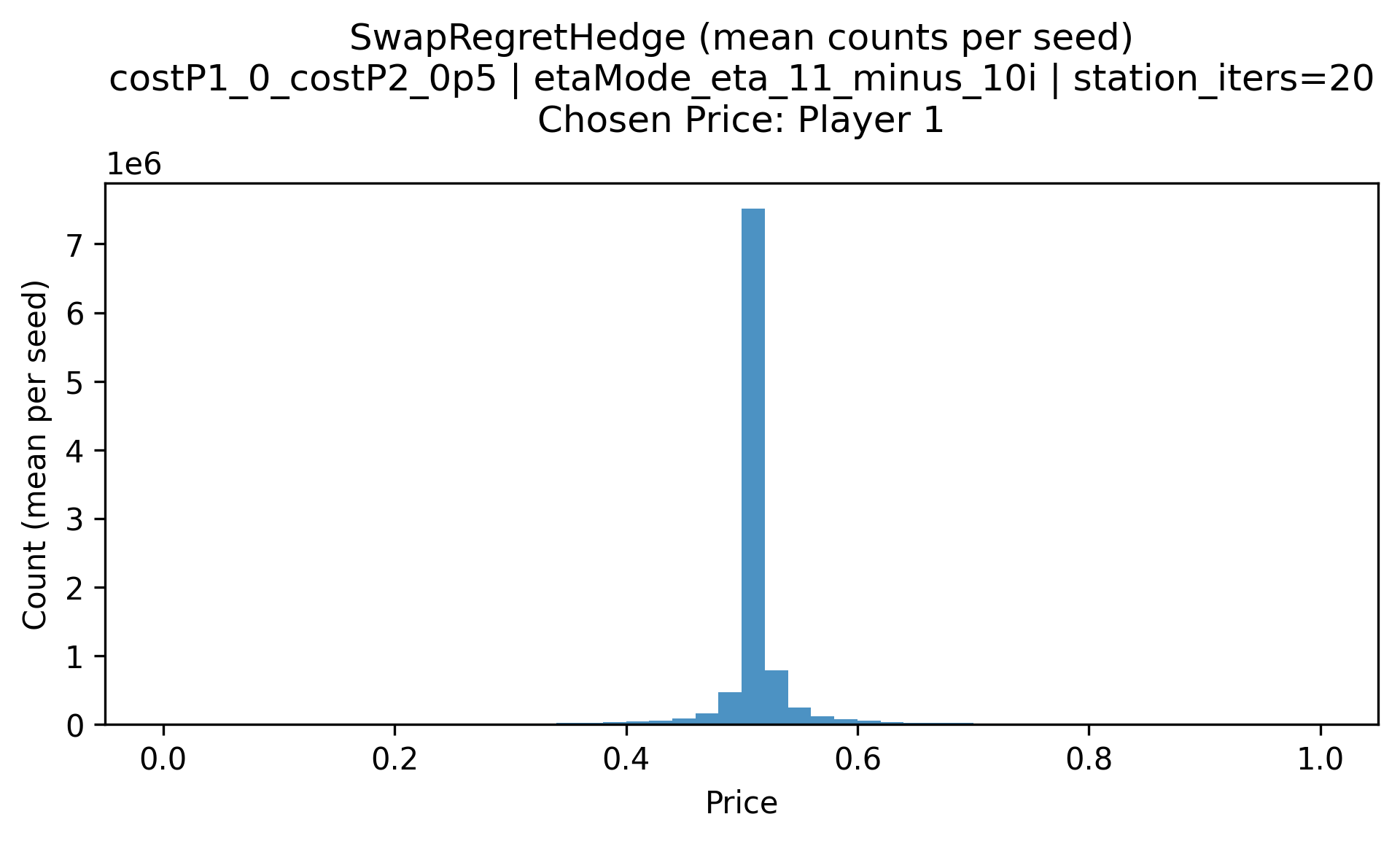}
    \caption{Prices set by Player 1}
    \label{fig2-0p5-10:img2}
  \end{subfigure}\hfill
  \begin{subfigure}[t]{0.33\textwidth}
    \centering
    \includegraphics[width=\linewidth]{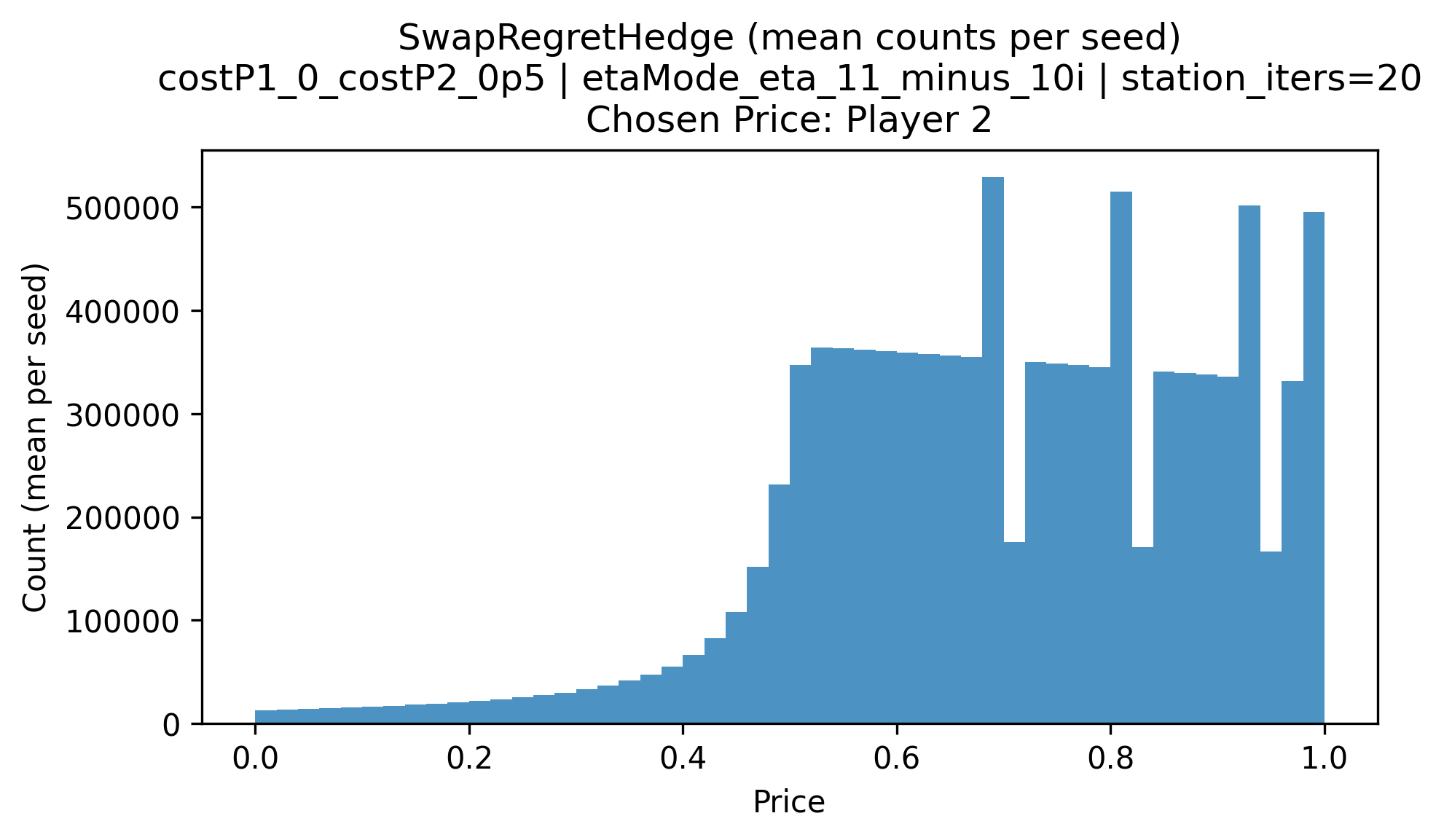}
    \caption{Prices set by Player 2}
    \label{fig2-0p5-10:img3}
  \end{subfigure}\hfill

  \caption{Experiments using the Hedge-based no-swap-regret learner with $c_2=0.5$ and player 1 having larger learning rate than player 2. The plots show the frequency of prices over $T=10^7$ rounds, averaged across 100 random seeds.}
  \label{fig2-0p5-10}
\end{figure}
\begin{figure}[H]
  \centering
\begin{subfigure}[t]{0.33\textwidth}
    \centering
    \includegraphics[width=\linewidth]{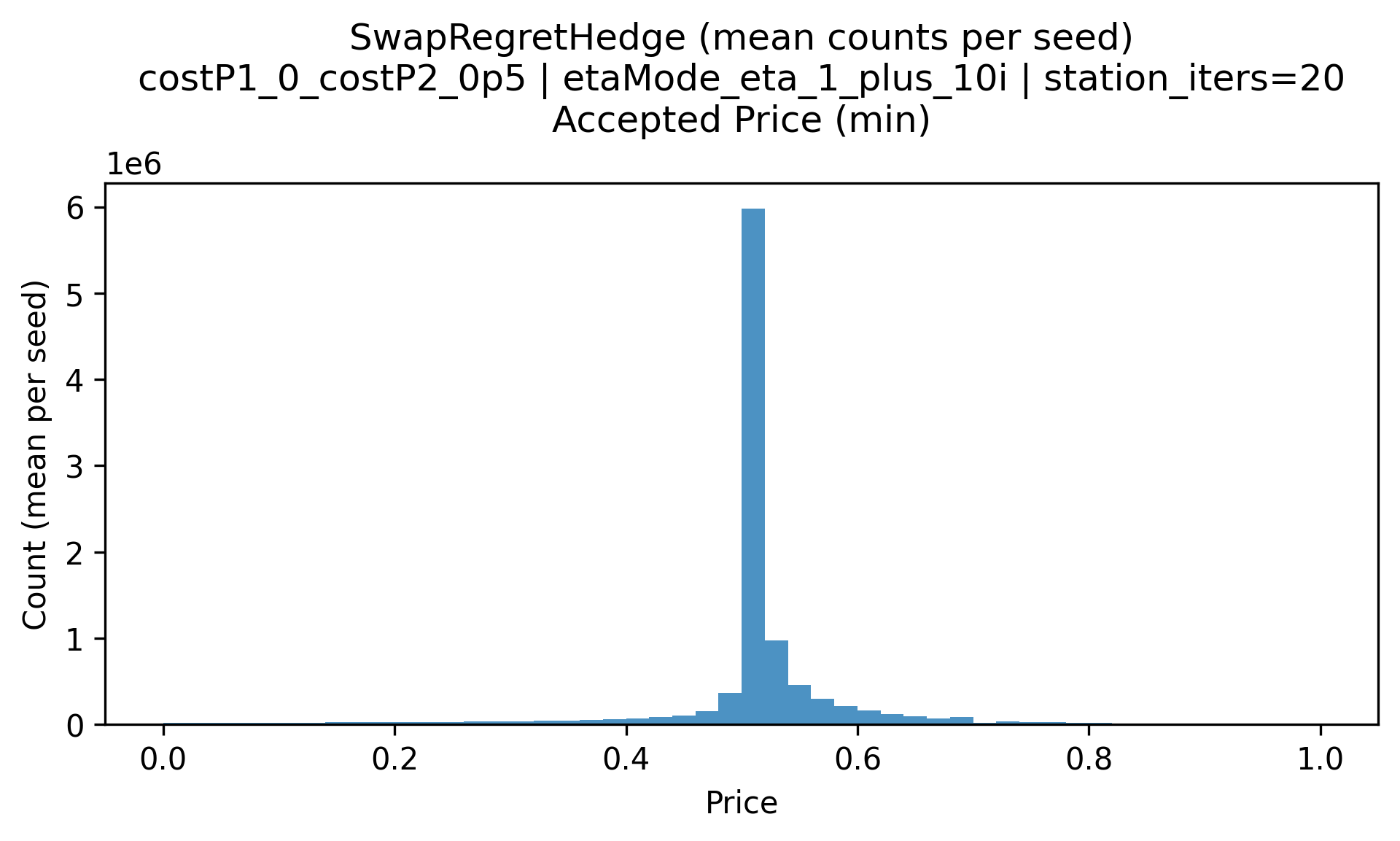}
    \caption{Transaction Price} 
    \label{fig2-0p5-01:img1}
  \end{subfigure}\hfill
  \begin{subfigure}[t]{0.33\textwidth}
    \centering
    \includegraphics[width=\linewidth]{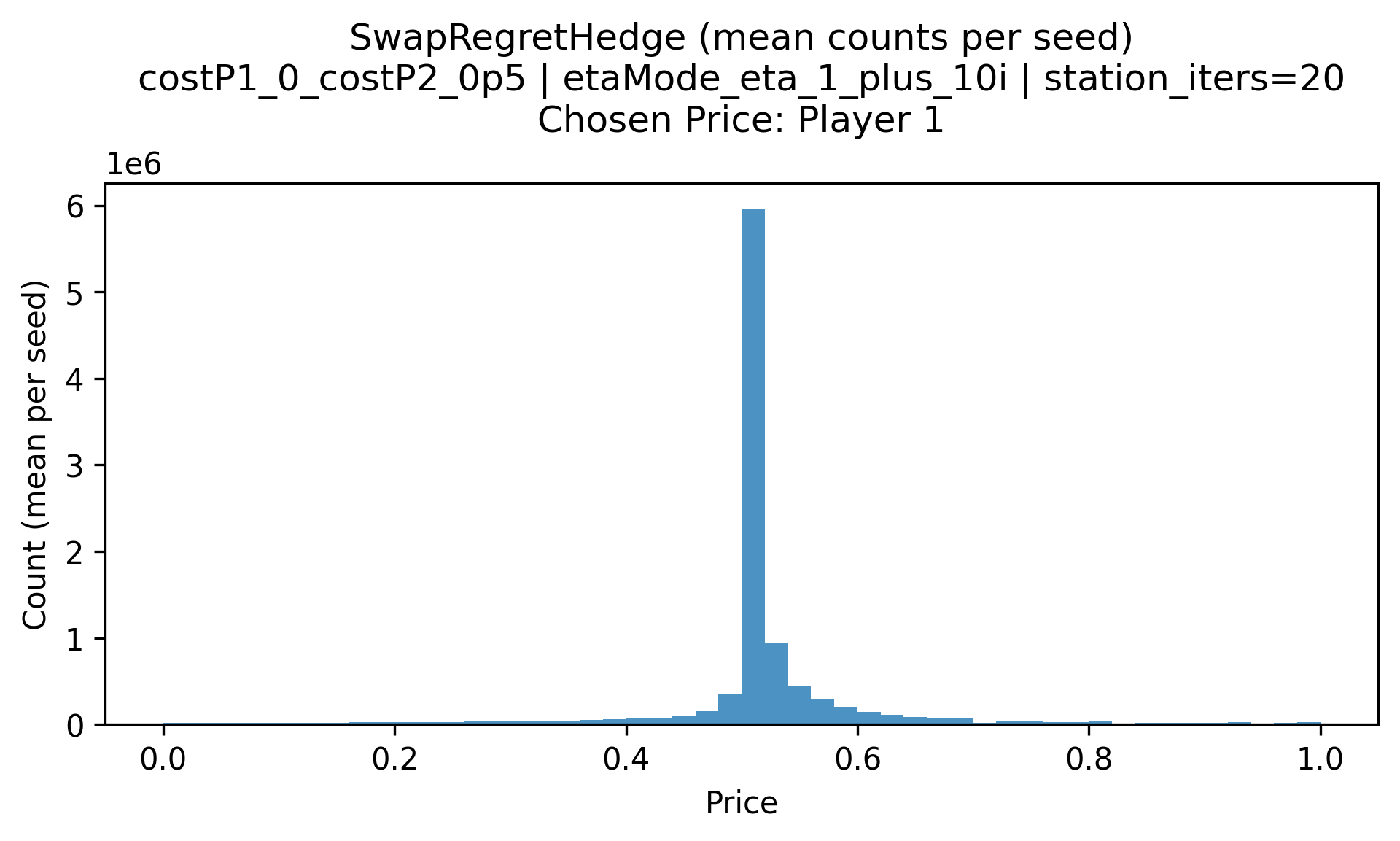}
    \caption{Prices set by Player 1}
    \label{fig2-0p5-10:img2}
  \end{subfigure}\hfill
  \begin{subfigure}[t]{0.33\textwidth}
    \centering
    \includegraphics[width=\linewidth]{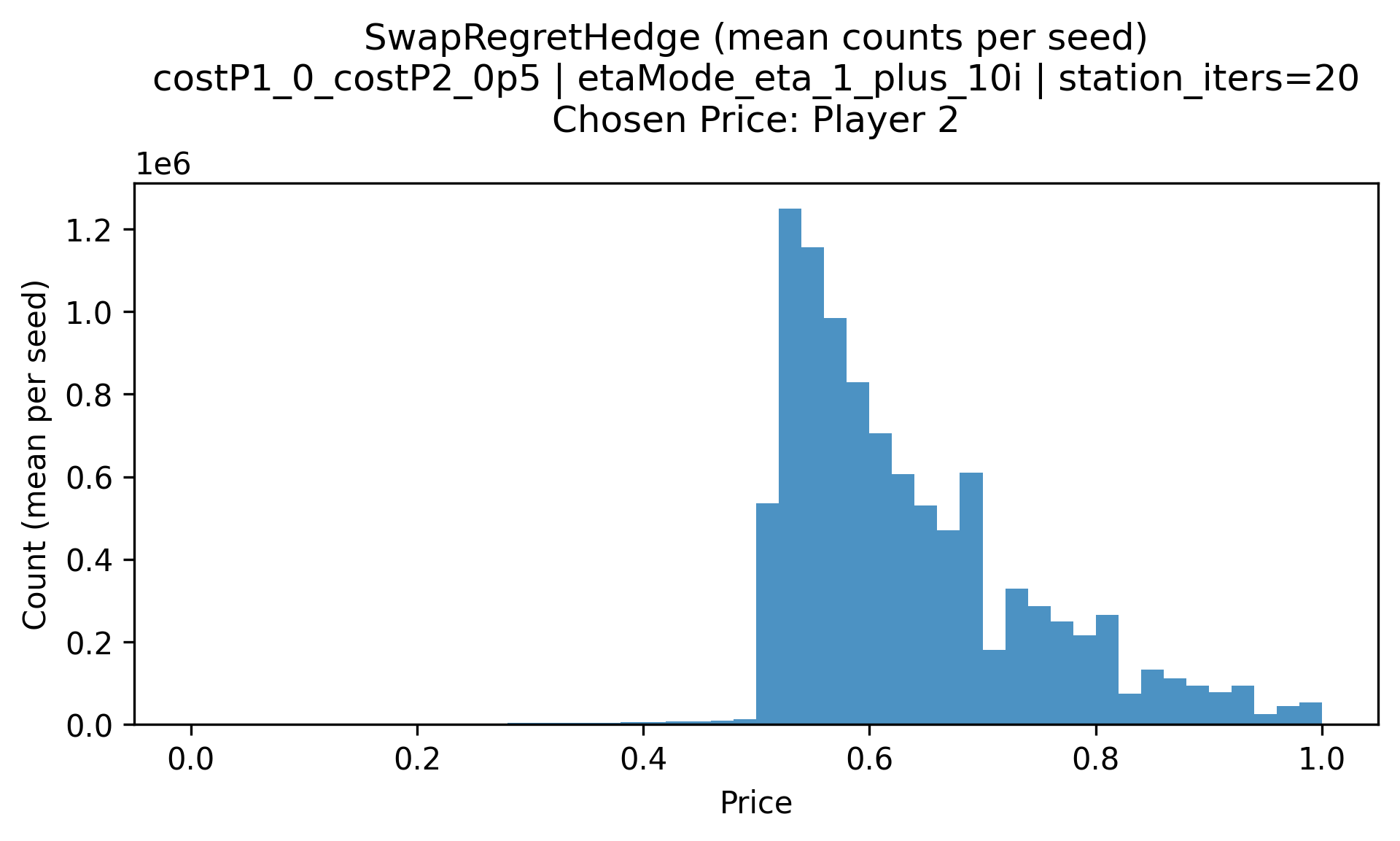}
    \caption{Prices set by Player 2}
    \label{fig2-0p5-10:img3}
  \end{subfigure}\hfill

  \caption{Experiments using the Hedge-based no-swap-regret learner with $c_2=0.5$ and player 2 having larger learning rate than player 1. The plots show the frequency of prices over $T=10^7$ rounds, averaged across 100 random seeds.}
  \label{fig2-0p5-10}
\end{figure}

\begin{figure}[H]
  \centering
\begin{subfigure}[t]{0.33\textwidth}
    \centering
    \includegraphics[width=\linewidth]{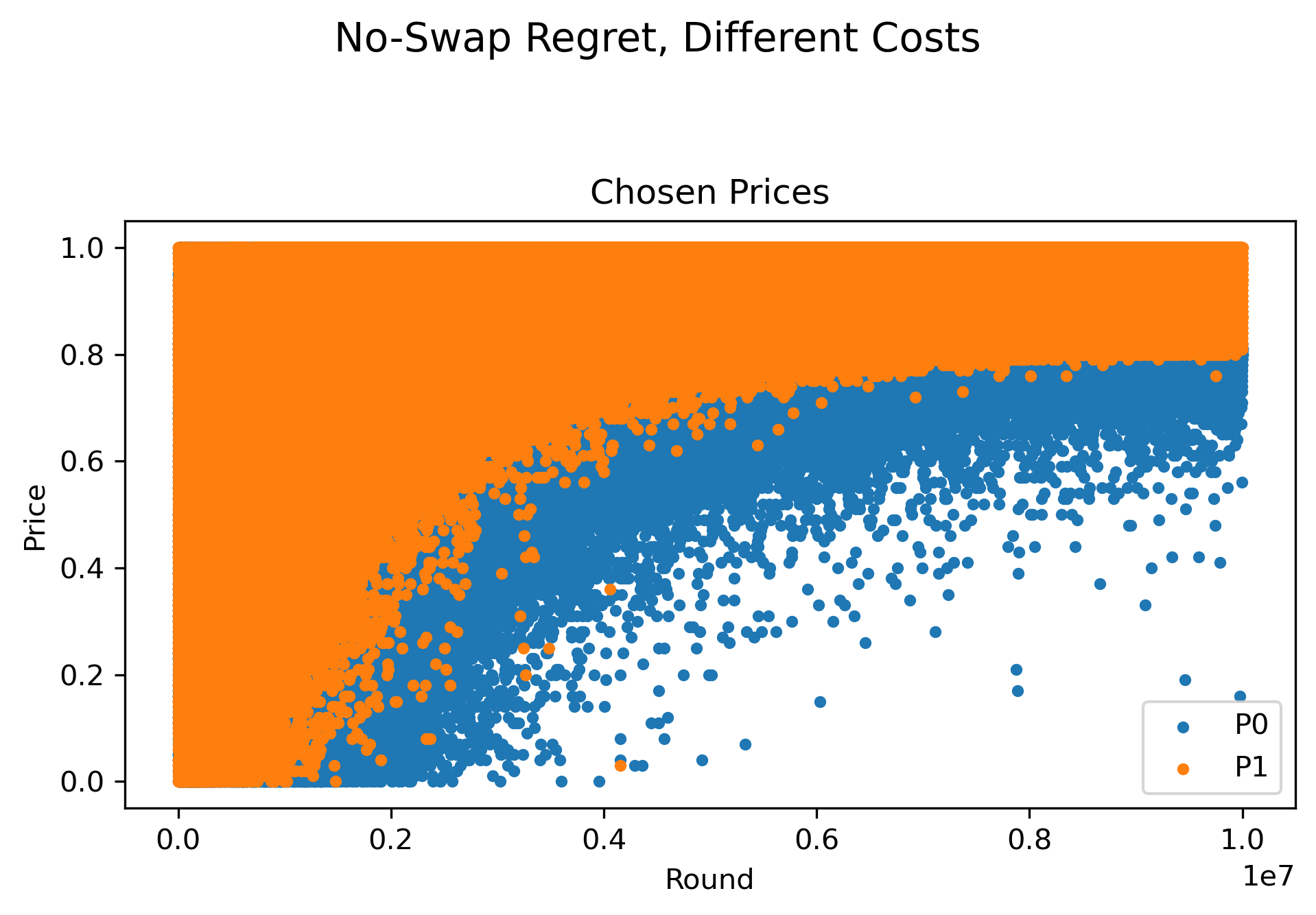}
    \caption{Both players have same learning rate} 
    \label{fig2-snap-00:img1}
  \end{subfigure}\hfill
  \begin{subfigure}[t]{0.33\textwidth}
    \centering
    \includegraphics[width=\linewidth]{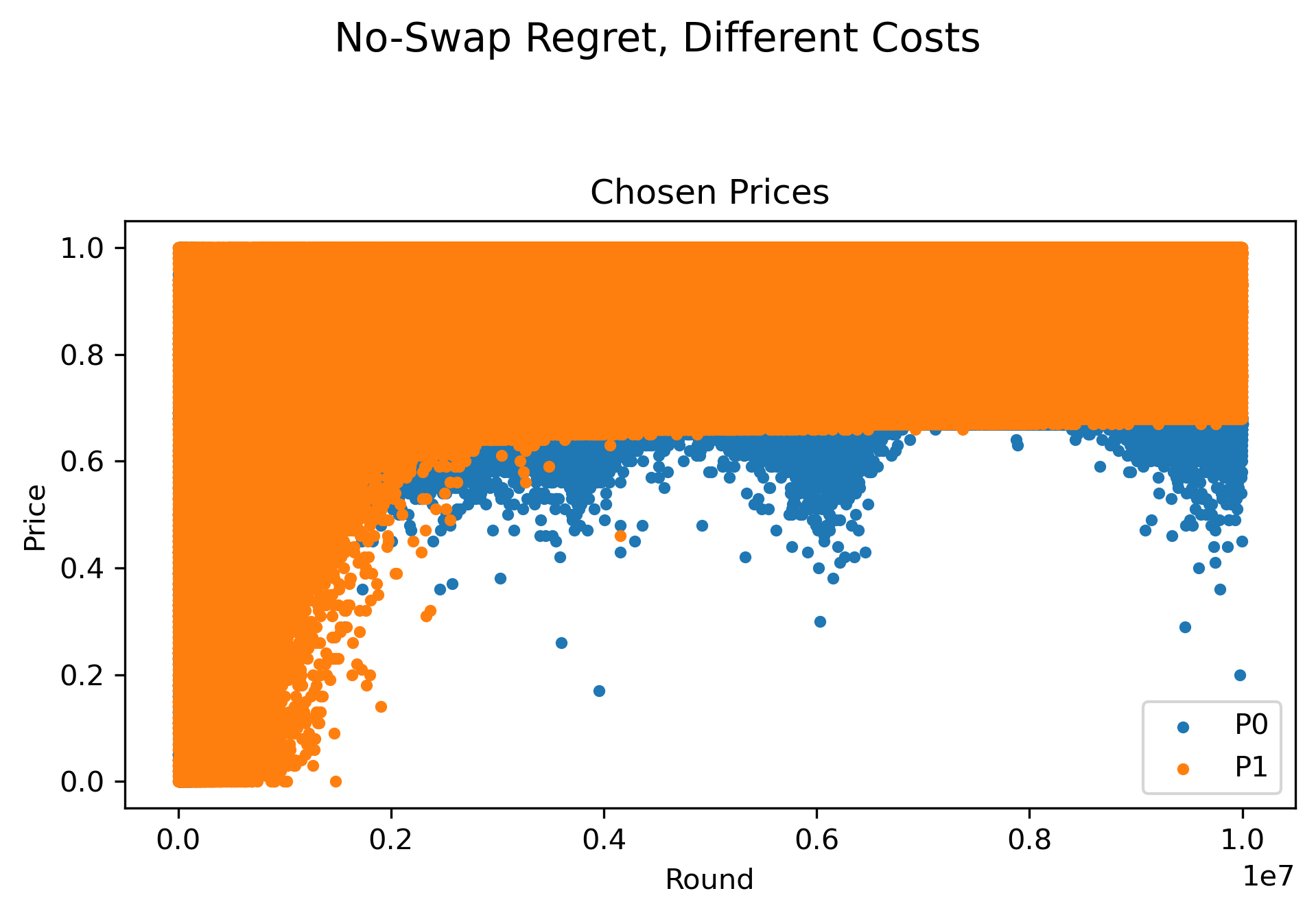}
    \caption{Player 1 has larger learning rate}
    \label{fig2-snap-10:img2}
  \end{subfigure}\hfill
  \begin{subfigure}[t]{0.33\textwidth}
    \centering
    \includegraphics[width=\linewidth]{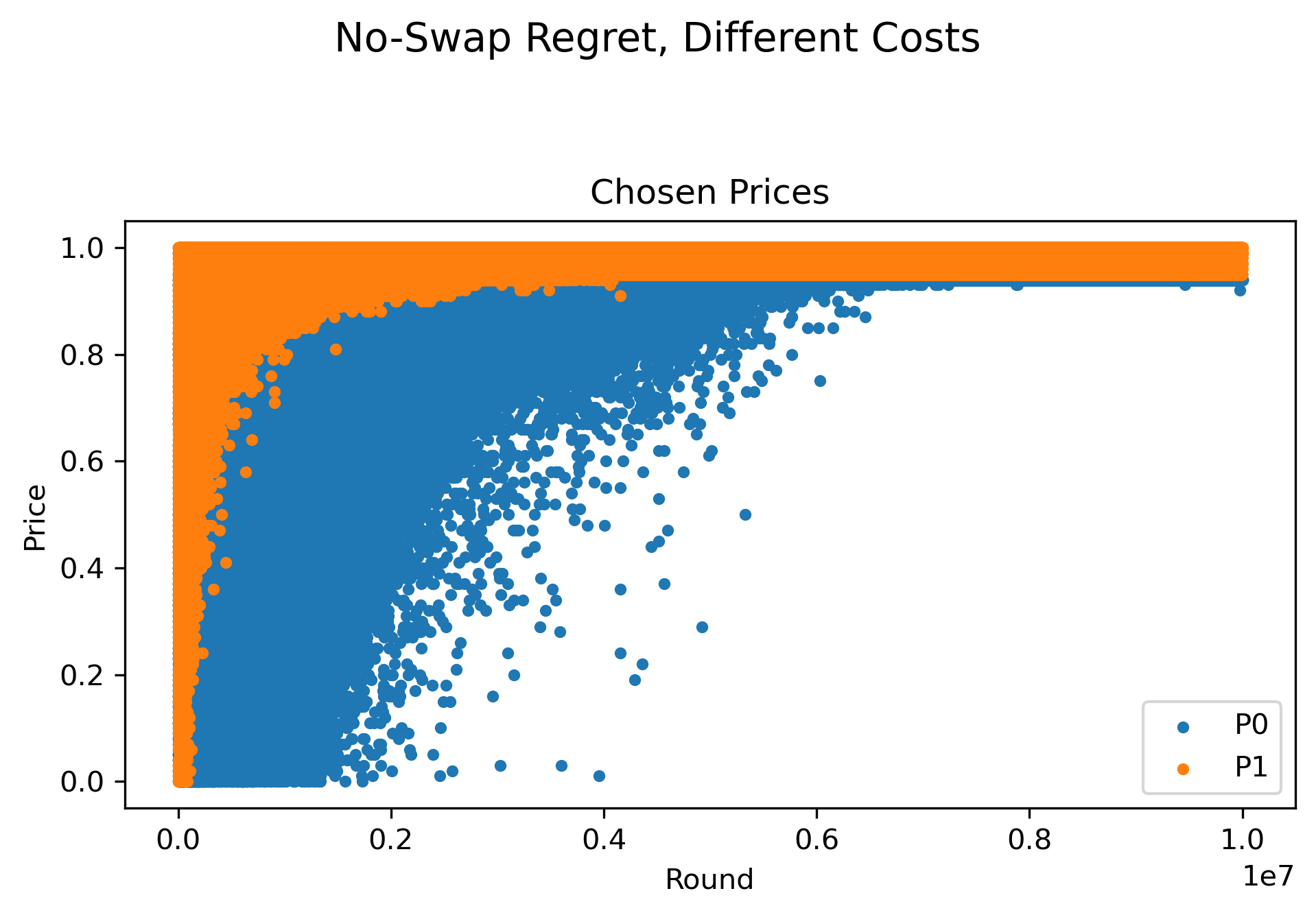}
    \caption{Player 2 has larger learning rate}
    \label{fig2-snap-01:img3}
  \end{subfigure}\hfill

  \caption{Actual prices chosen by each player during $T=10^7$ rounds for a fixed random seed with $c_2=1$. P0 denotes player $1$ and P1 denotes player $2$.}
  \label{fig2-snap-1}
\end{figure}

\begin{figure}[H]
  \centering
\begin{subfigure}[t]{0.45\textwidth}
    \centering
    \includegraphics[width=\linewidth]{graphics/outputs00/run_and_plot_k_100_T_10000000_iv_20_station_20_costs_P0_0_P1_1_timeseries_chosen_prices.png}
    \label{fig2-seed:img1}
  \end{subfigure}\hfill
  \begin{subfigure}[t]{0.45\textwidth}
    \centering
    \includegraphics[width=\linewidth]{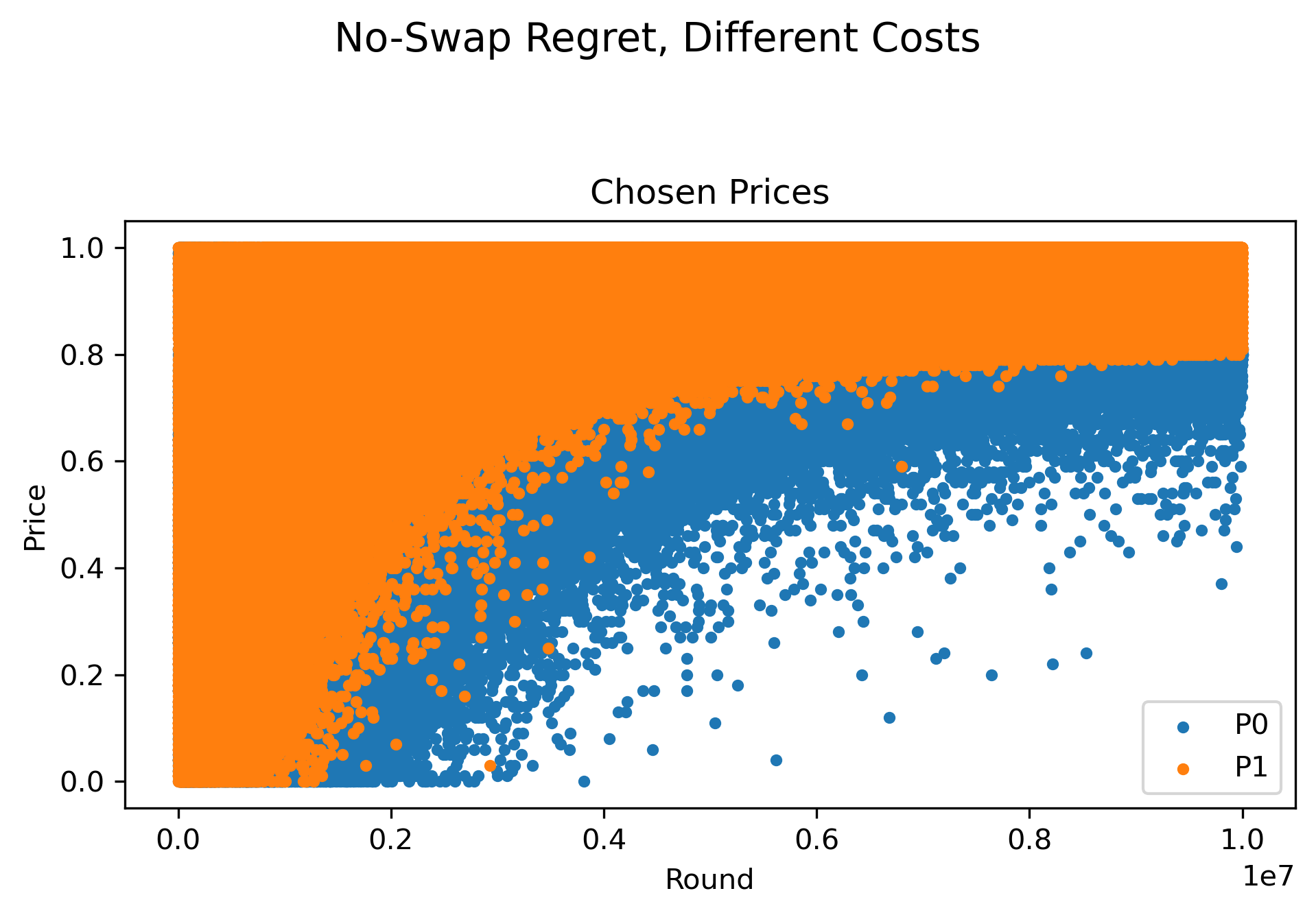}
    \label{fig2-seed:img2}
  \end{subfigure}\hfill

  \caption{Actual prices chosen by each player during $T=10^7$ rounds for two different random seeds with $c_2=1$ and same learning rates. P0 denotes player $1$ and P1 denotes player $2$. The Hedge-based no-swap regret learner appears to choose the same set of prices across different seeds.}
  \label{fig2-seed}
\end{figure}

\subsection{Empirical experiments for Regret Matching}\label{appendix:experiments-regret-matching}
In this section, we focus on regret matching. Each player selects prices using a regret-matching algorithm. Whenever the algorithm recommends a price $p$, the player plays $p$ for $t_0$ consecutive rounds and updates the regret-matching algorithm only after these $t_0$ rounds, where $t_0 \in \{1,20\}$. The plots in this section primarily show the frequency of the prices chosen, while the last two figures depict the actual prices selected in each round under fixed random seeds.

\begin{figure}[H]
  \centering

  \begin{subfigure}[t]{0.33\textwidth}
    \centering
    \includegraphics[width=\linewidth]{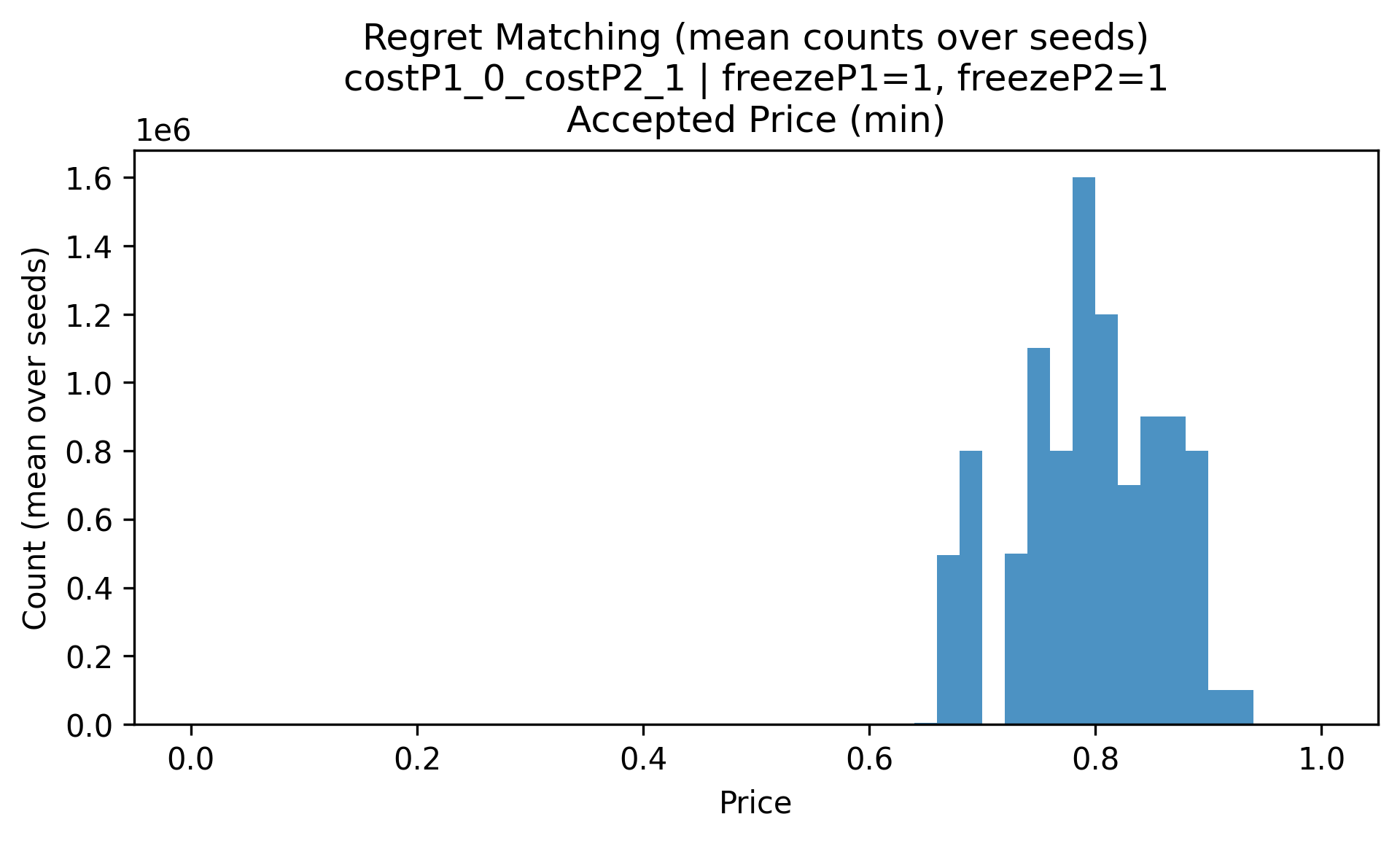}
    \caption{Transaction Price} 
    \label{fig3-1-00:img1}
  \end{subfigure}\hfill
  \begin{subfigure}[t]{0.33\textwidth}
    \centering
    \includegraphics[width=\linewidth]{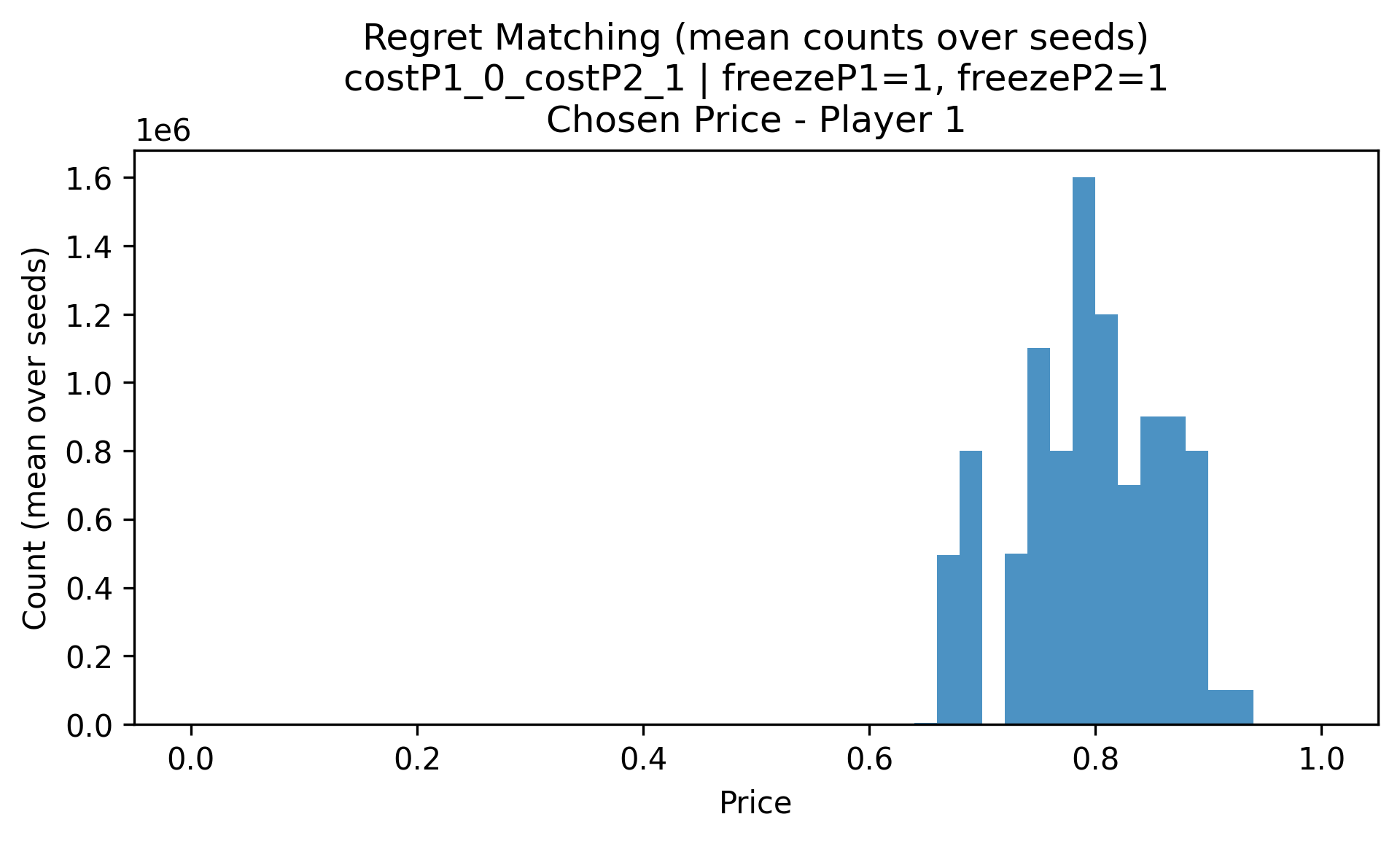}
    \caption{Prices set by Player 1}
    \label{fig3-1-00:img2}
  \end{subfigure}\hfill
  \begin{subfigure}[t]{0.33\textwidth}
    \centering
    \includegraphics[width=\linewidth]{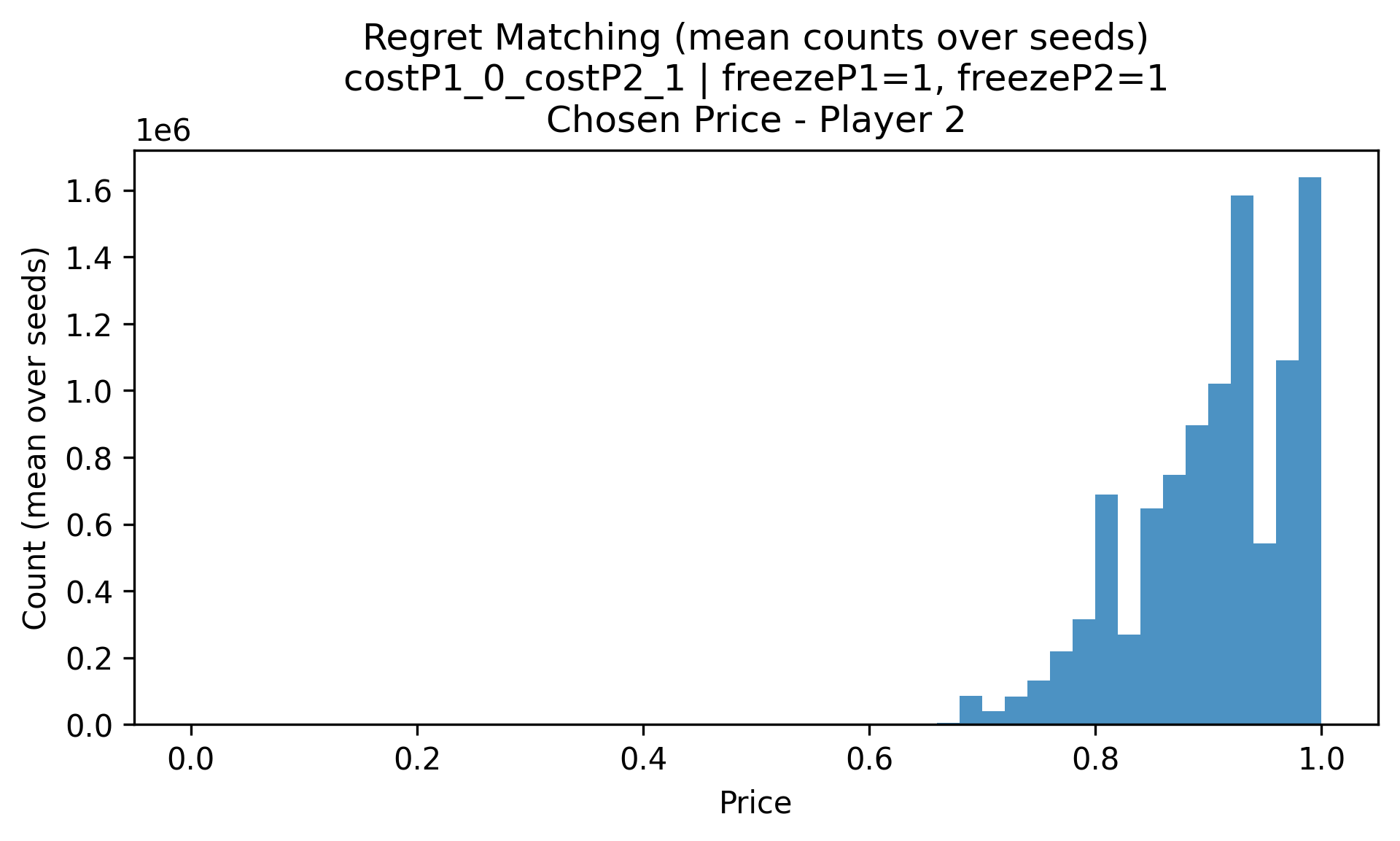}
    \caption{Prices set by Player 2}
    \label{fig3-1-00:img3}
  \end{subfigure}\hfill

  \caption{Experiments using Regret Matching with $c_2=1$ and both players repeating their prices once. The plots show the frequency of prices over $T=10^7$ rounds, averaged across 100 random seeds.}
  \label{fig3-1-00}
\end{figure}
\begin{figure}[H]
  \centering

  \begin{subfigure}[t]{0.33\textwidth}
    \centering
    \includegraphics[width=\linewidth]{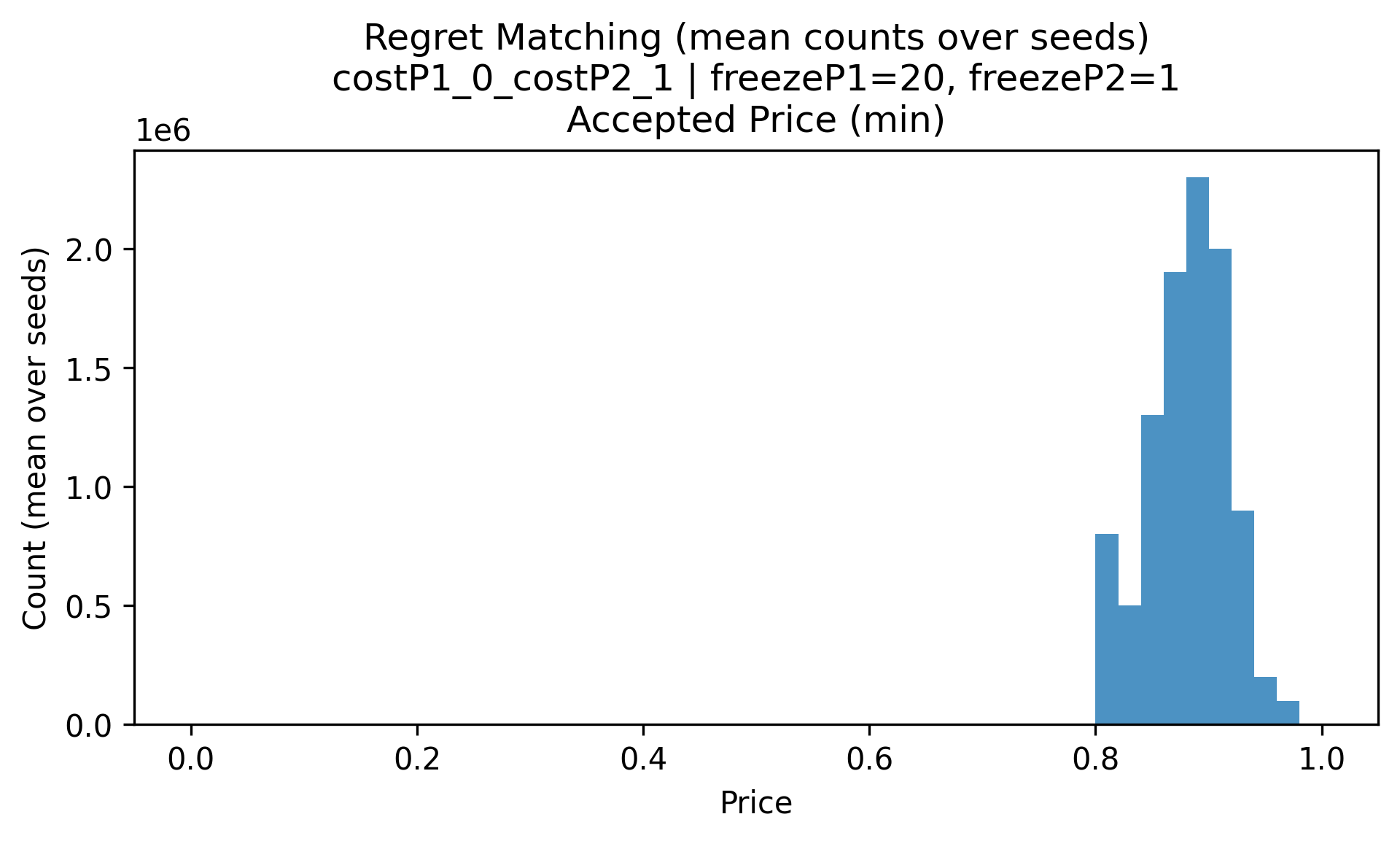}
    \caption{Transaction Price} 
    \label{fig3-1-10:img1}
  \end{subfigure}\hfill
  \begin{subfigure}[t]{0.33\textwidth}
    \centering
    \includegraphics[width=\linewidth]{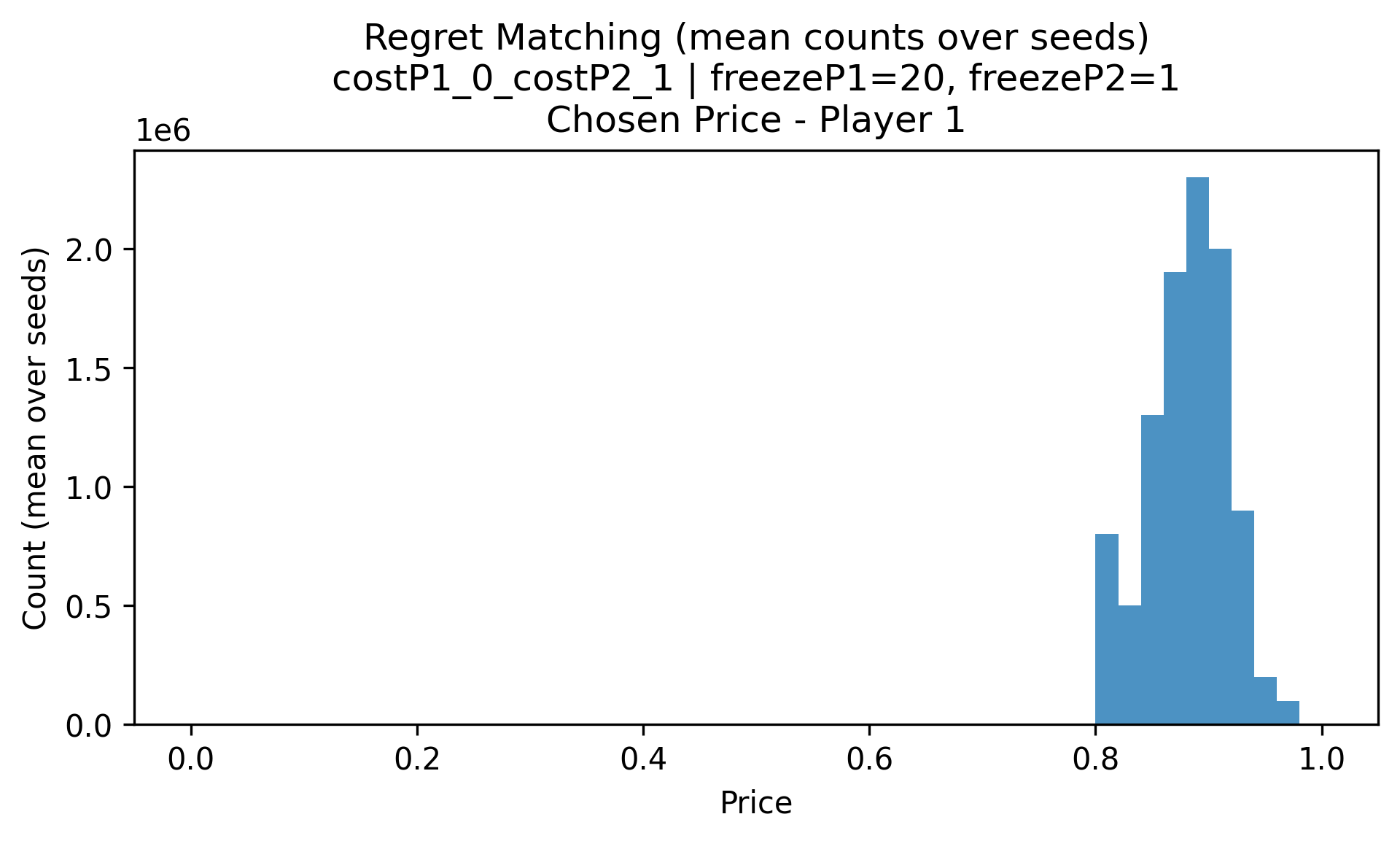}
    \caption{Prices set by Player 1}
    \label{fig3-1-10:img2}
  \end{subfigure}\hfill
  \begin{subfigure}[t]{0.33\textwidth}
    \centering
    \includegraphics[width=\linewidth]{graphics/results/outputs_regret_1p0/rm_frozen_costP1_0_costP2_1_freezeP1_20_freezeP2_1_player1.png}
    \caption{Prices set by Player 2}
    \label{fig3-1-10:img3}
  \end{subfigure}\hfill

  \caption{Experiments using Regret Matching with $c_2=1$ where player 1 repeats its price 20 times and player 2 repeats its price once. The plots show the frequency of prices over $T=10^7$ rounds, averaged across 100 random seeds.}
  \label{fig3-1-10}
\end{figure}
\begin{figure}[H]
  \centering
\begin{subfigure}[t]{0.33\textwidth}
    \centering
    \includegraphics[width=\linewidth]{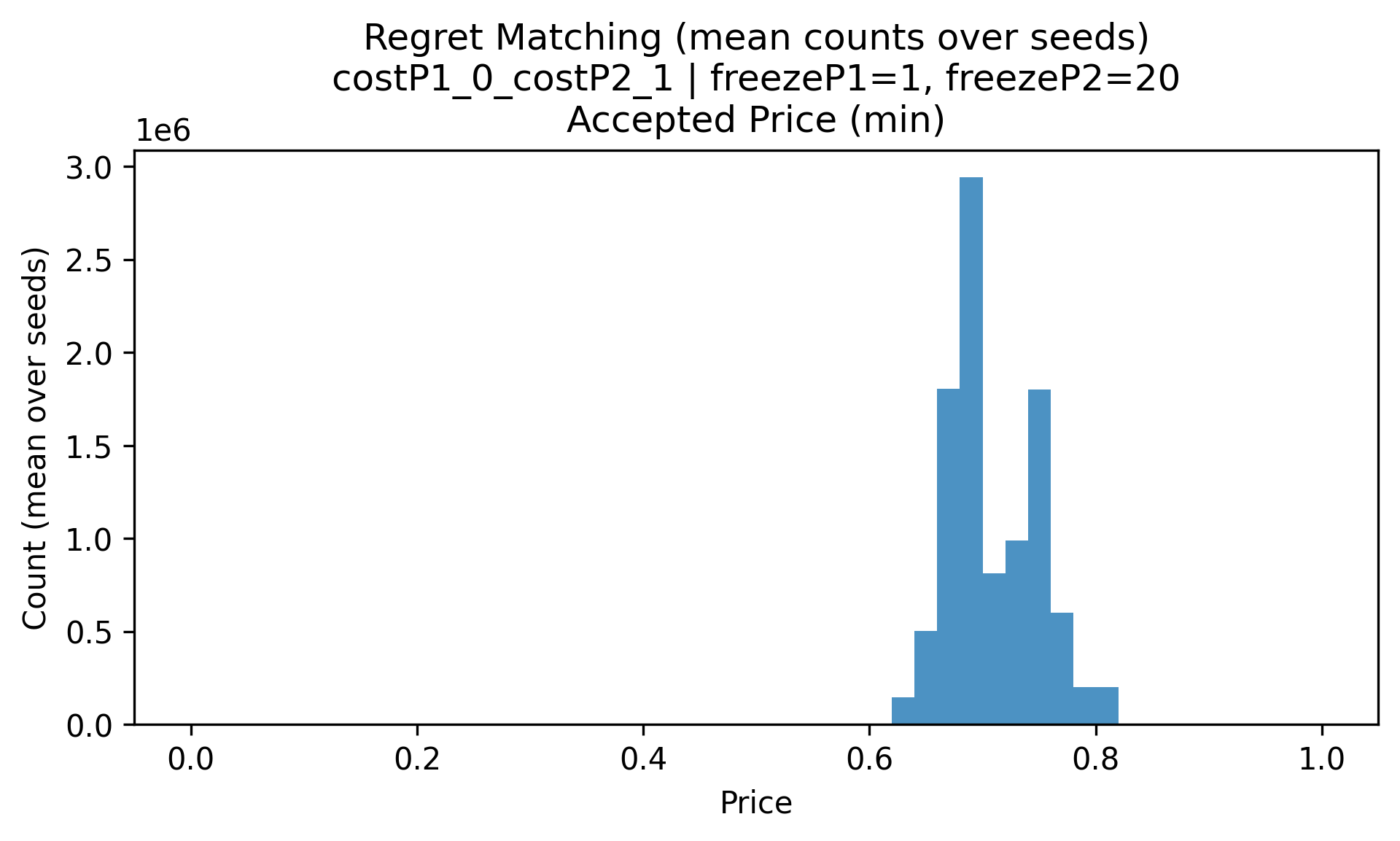}
    \caption{Transaction Price} 
    \label{fig3-1-01:img1}
  \end{subfigure}\hfill
  \begin{subfigure}[t]{0.33\textwidth}
    \centering
    \includegraphics[width=\linewidth]{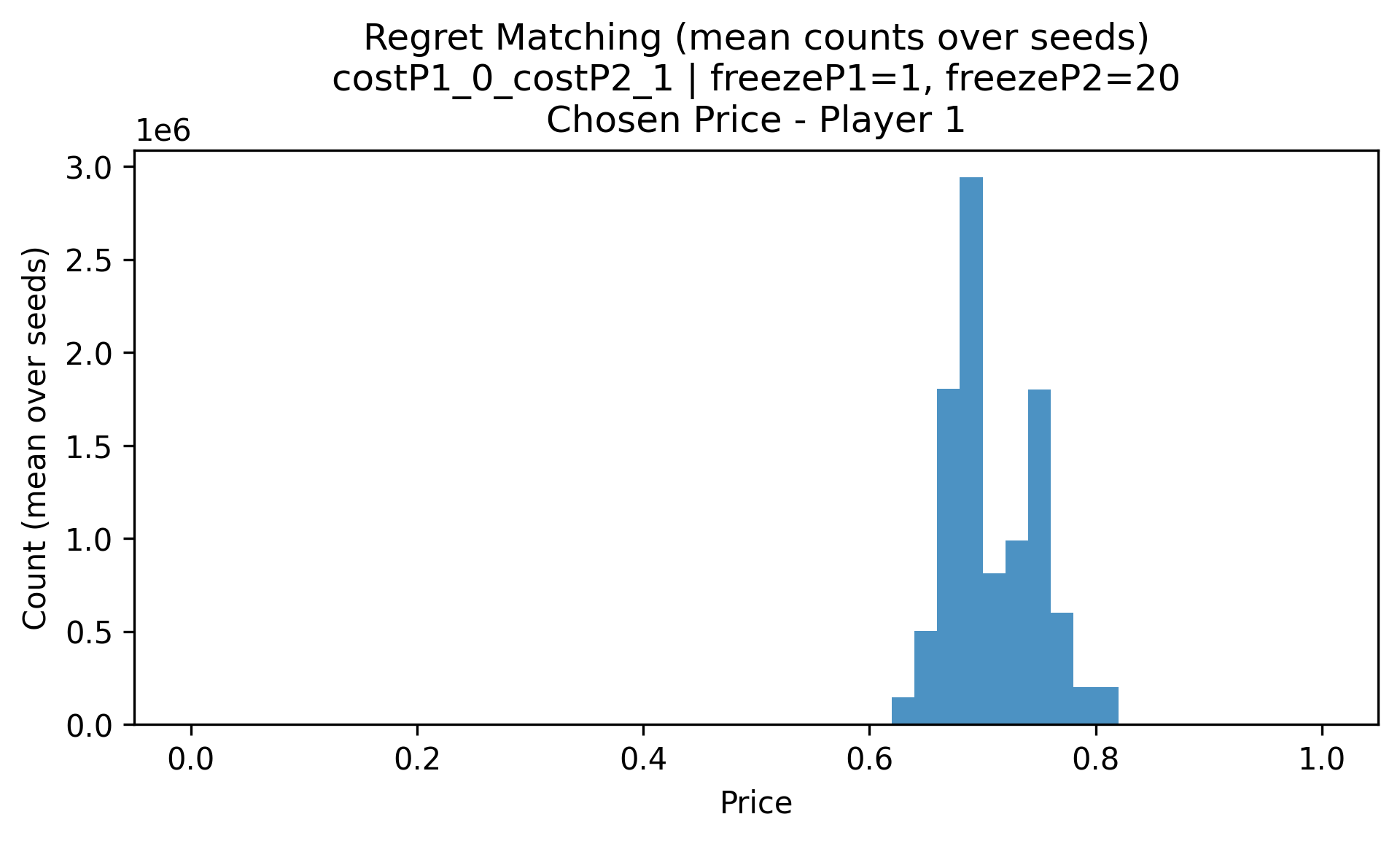}
    \caption{Prices set by Player 1}
    \label{fig3-1-10:img2}
  \end{subfigure}\hfill
  \begin{subfigure}[t]{0.33\textwidth}
    \centering
    \includegraphics[width=\linewidth]{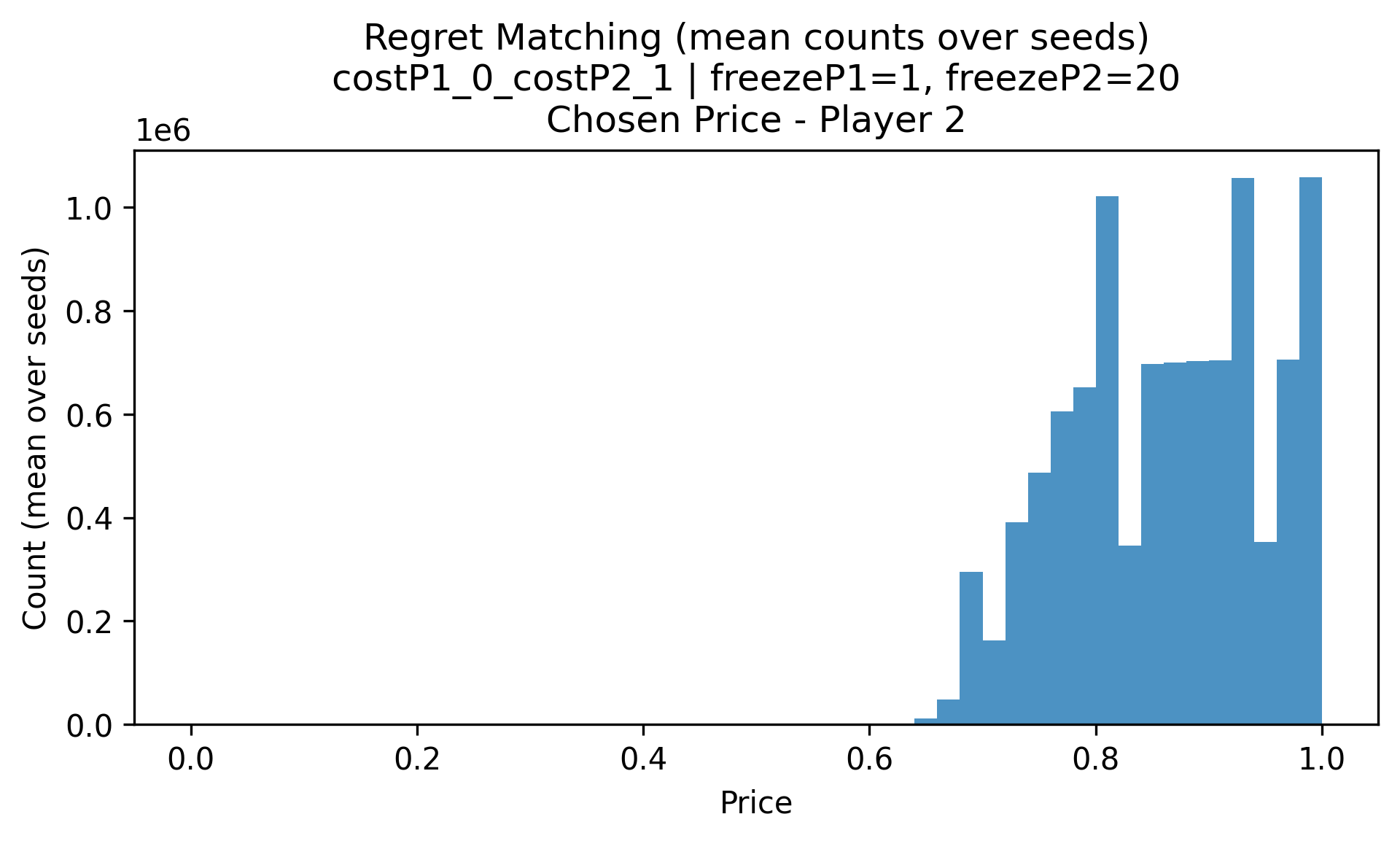}
    \caption{Prices set by Player 2}
    \label{fig3-1-10:img3}
  \end{subfigure}\hfill

  \caption{Experiments using Regret Matching with $c_2=1$ where player 2 repeats its price 20 times and player 1 repeats its price once. The plots show the frequency of prices over $T=10^7$ rounds, averaged across 100 random seeds.}
  \label{fig3-1-10}
\end{figure}

\begin{figure}[H]
  \centering

  \begin{subfigure}[t]{0.33\textwidth}
    \centering
    \includegraphics[width=\linewidth]{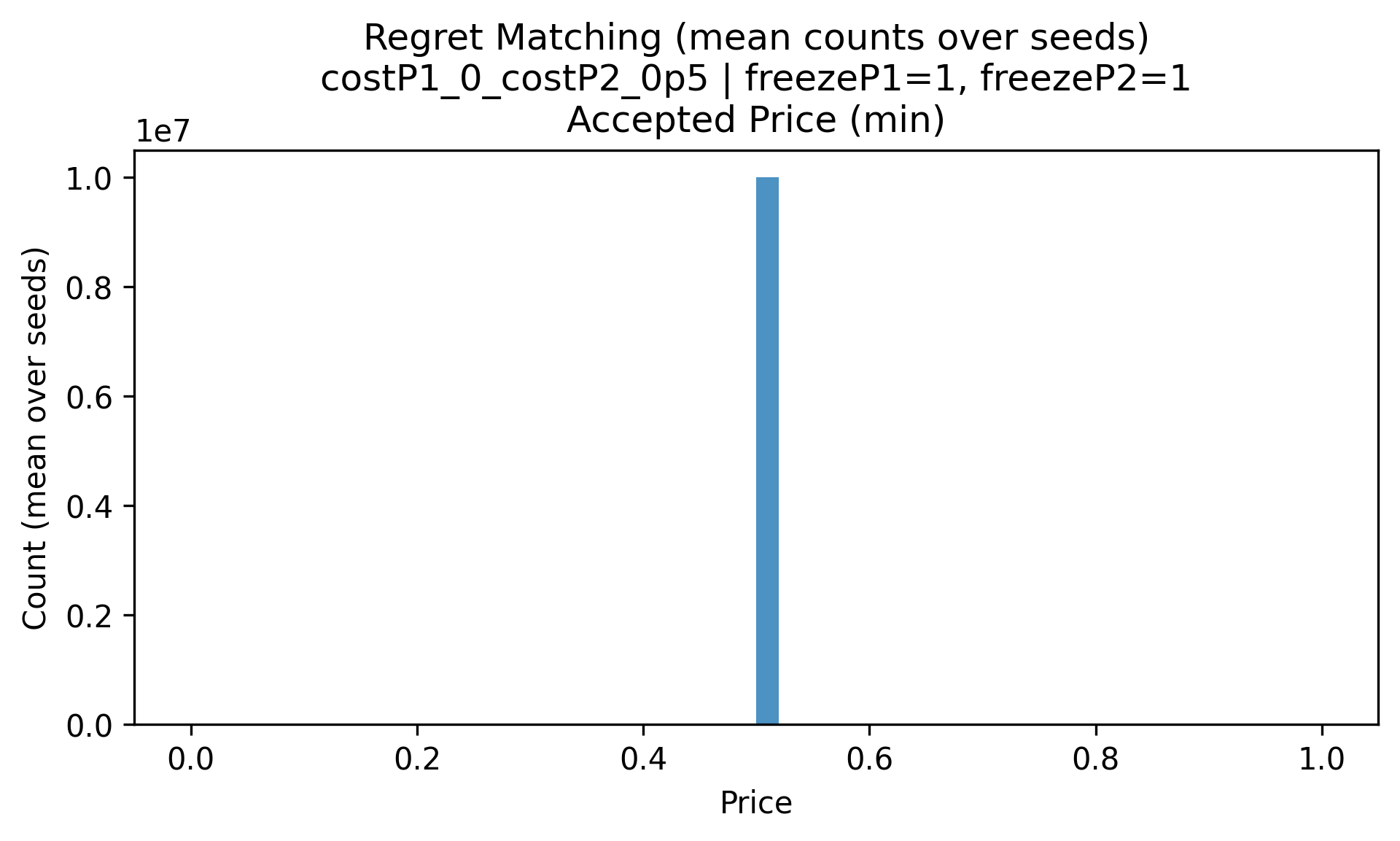}
    \caption{Transaction Price} 
    \label{fig3-0p5-00:img1}
  \end{subfigure}\hfill
  \begin{subfigure}[t]{0.33\textwidth}
    \centering
    \includegraphics[width=\linewidth]{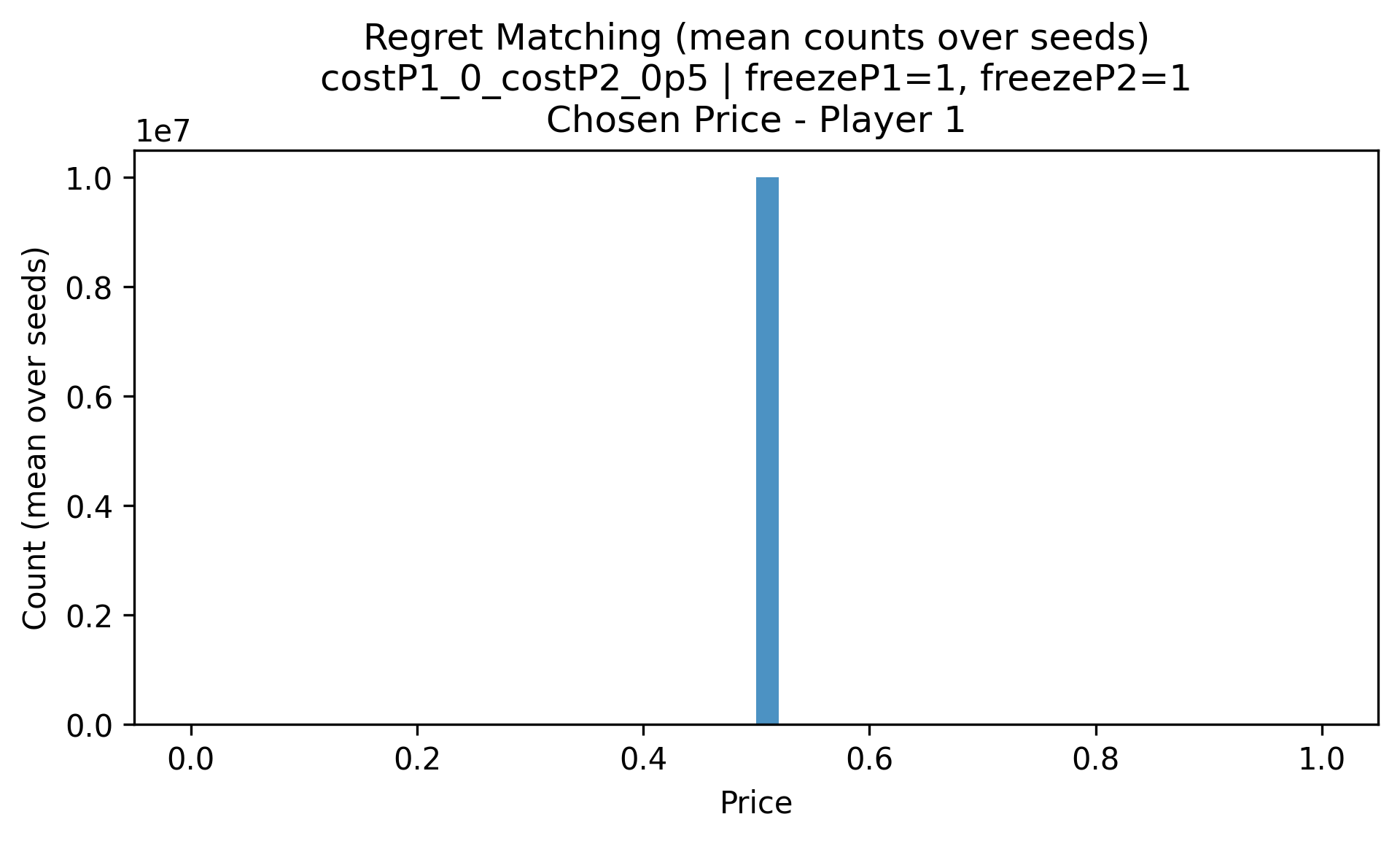}
    \caption{Prices set by Player 1}
    \label{fig3-0p5-00:img2}
  \end{subfigure}\hfill
  \begin{subfigure}[t]{0.33\textwidth}
    \centering
    \includegraphics[width=\linewidth]{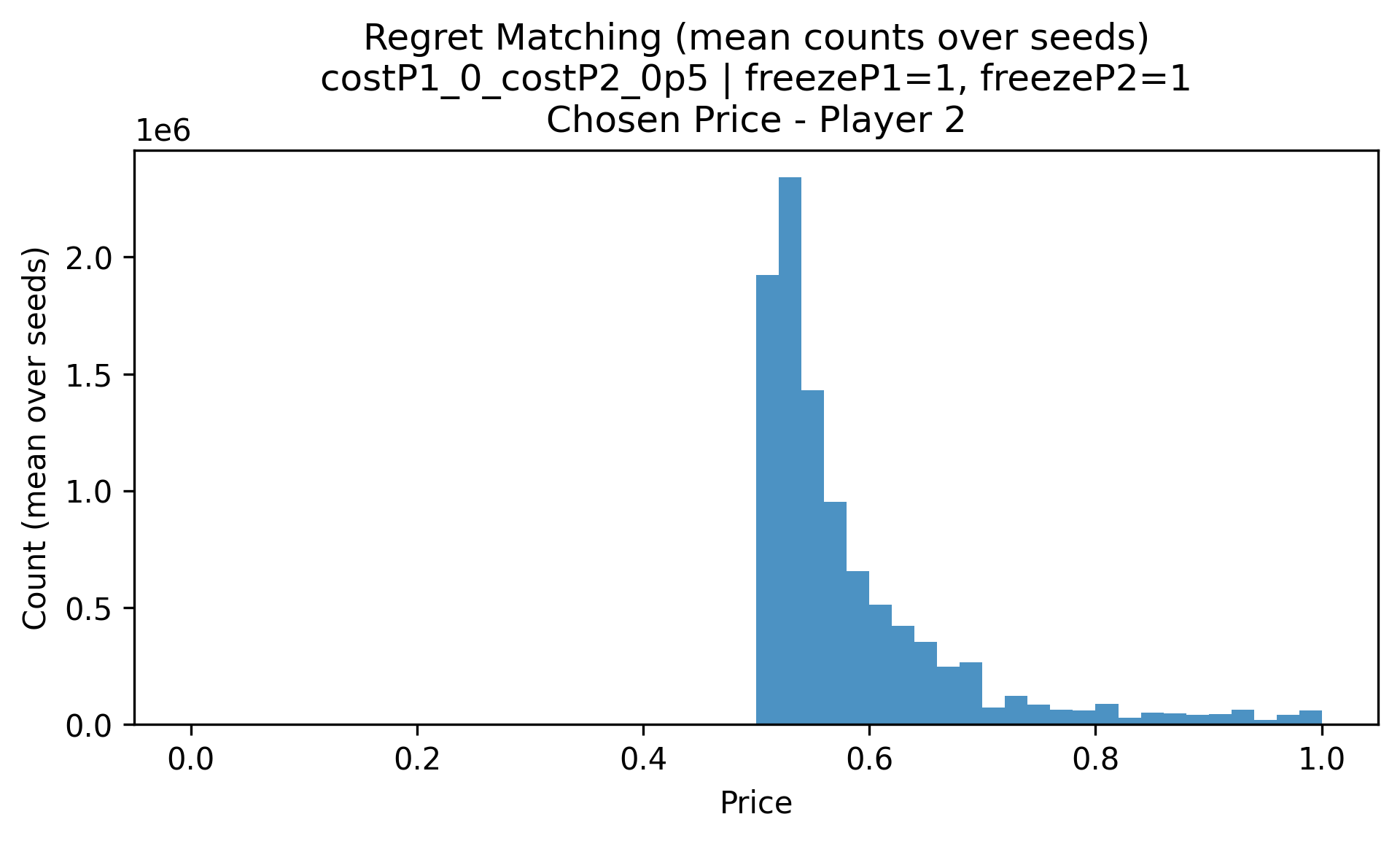}
    \caption{Prices set by Player 2}
    \label{fig3-0p5-00:img3}
  \end{subfigure}\hfill

  \caption{Experiments using Regret Matching with $c_2=0.5$ and both players repeating their prices once. The plots show the frequency of prices over $T=10^7$ rounds, averaged across 100 random seeds.}
  \label{fig3-0p5-00}
\end{figure}
\begin{figure}[H]
  \centering

  \begin{subfigure}[t]{0.33\textwidth}
    \centering
    \includegraphics[width=\linewidth]{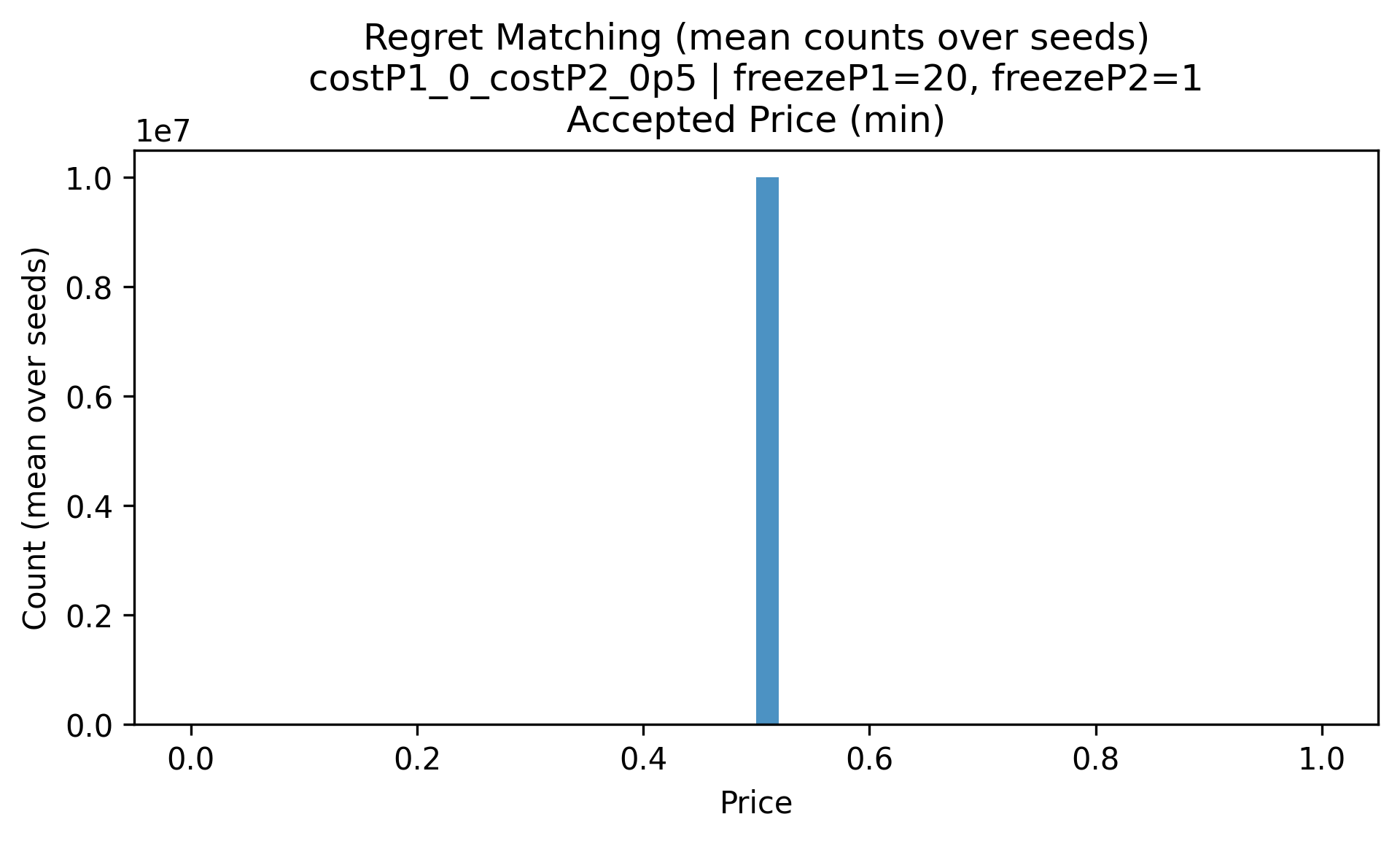}
    \caption{Transaction Price} 
    \label{fig3-0p5-10:img1}
  \end{subfigure}\hfill
  \begin{subfigure}[t]{0.33\textwidth}
    \centering
    \includegraphics[width=\linewidth]{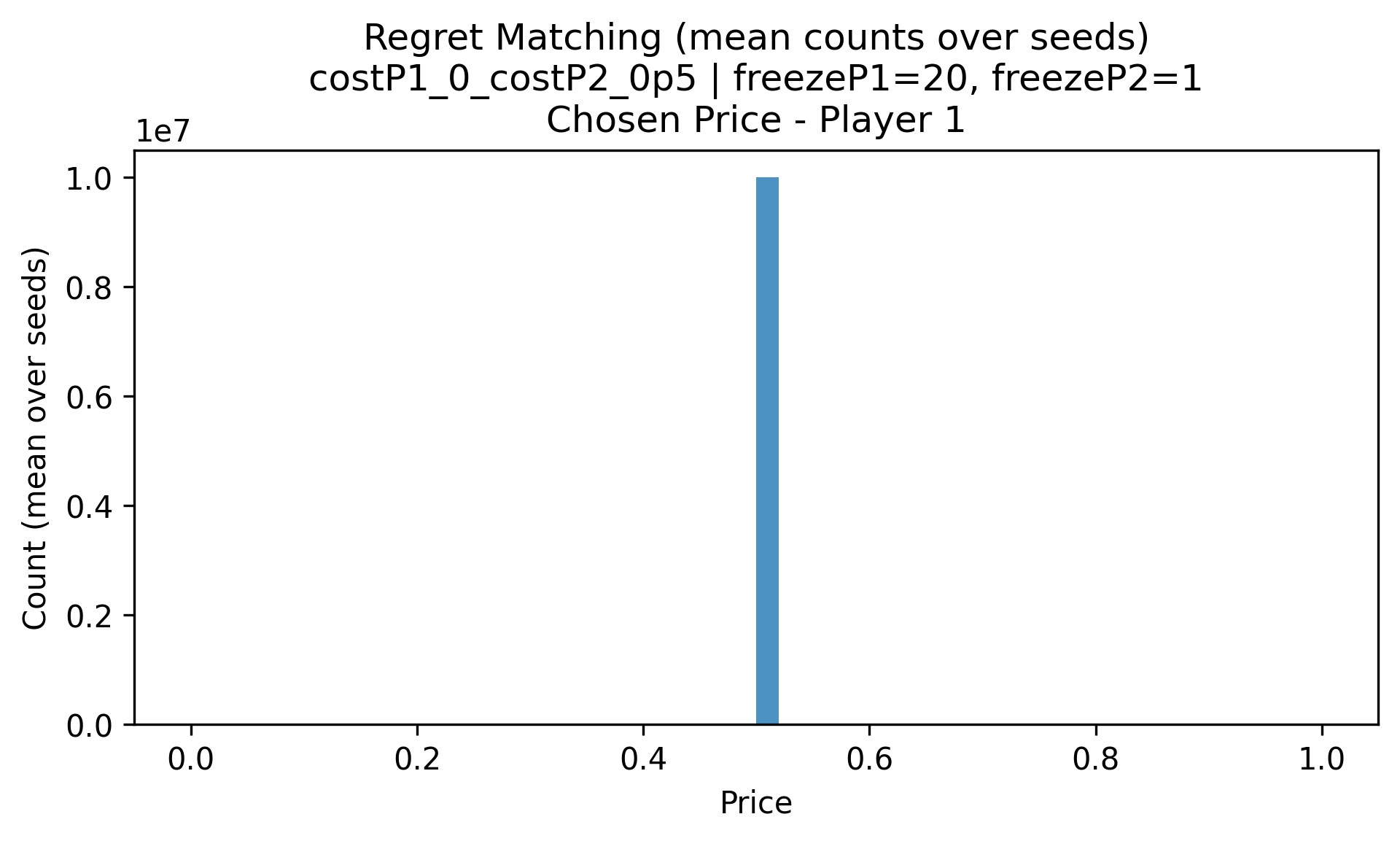}
    \caption{Prices set by Player 1}
    \label{fig3-0p5-10:img2}
  \end{subfigure}\hfill
  \begin{subfigure}[t]{0.33\textwidth}
    \centering
    \includegraphics[width=\linewidth]{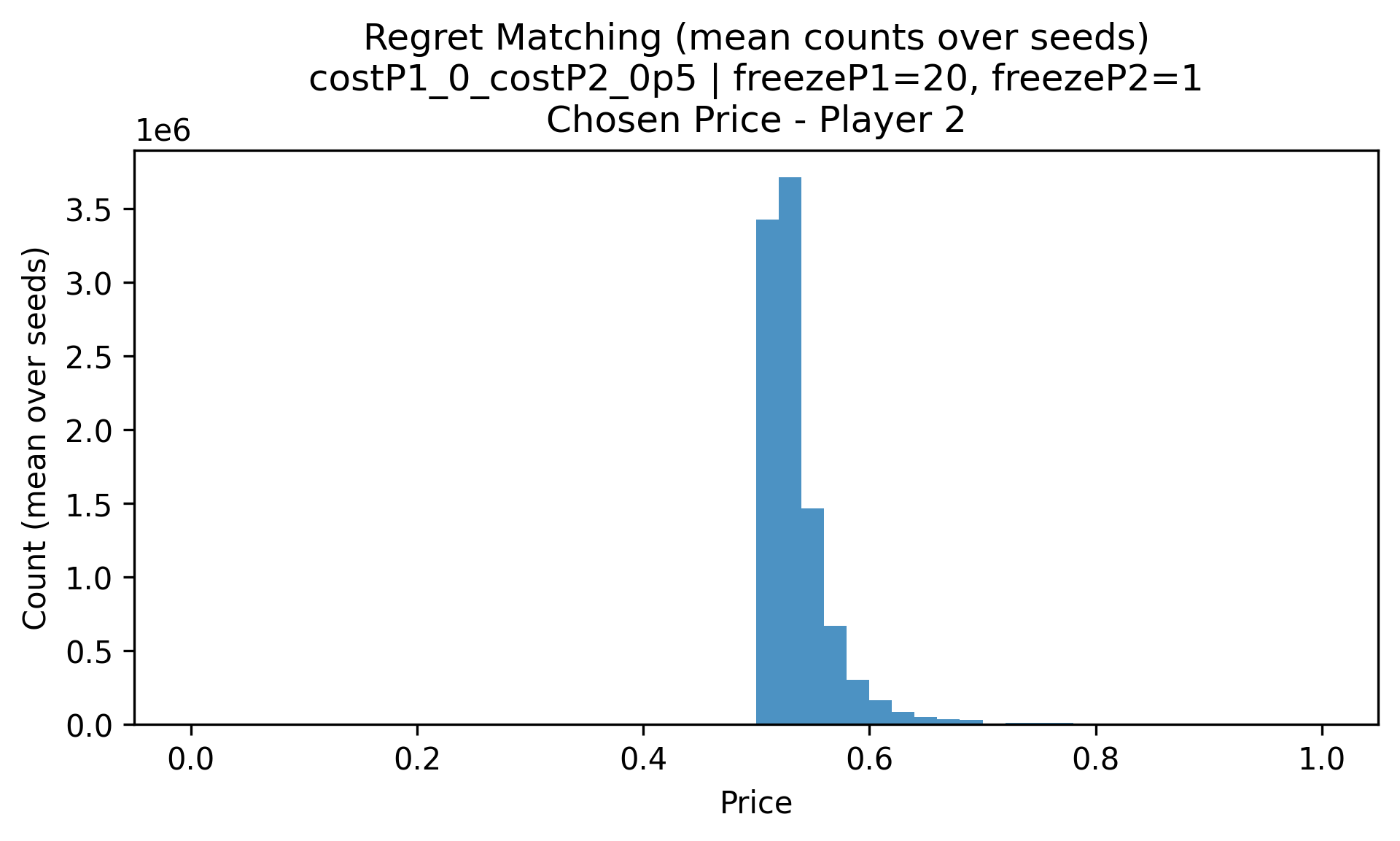}
    \caption{Prices set by Player 2}
    \label{fig3-0p5-10:img3}
  \end{subfigure}\hfill

  \caption{Experiments using Regret Matching with $c_2=0.5$ where player 1 repeats its price 20 times and player 2 repeats its price once. The plots show the frequency of prices over $T=10^7$ rounds, averaged across 100 random seeds.}
  \label{fig3-0p5-10}
\end{figure}
\begin{figure}[H]
  \centering
\begin{subfigure}[t]{0.33\textwidth}
    \centering
    \includegraphics[width=\linewidth]{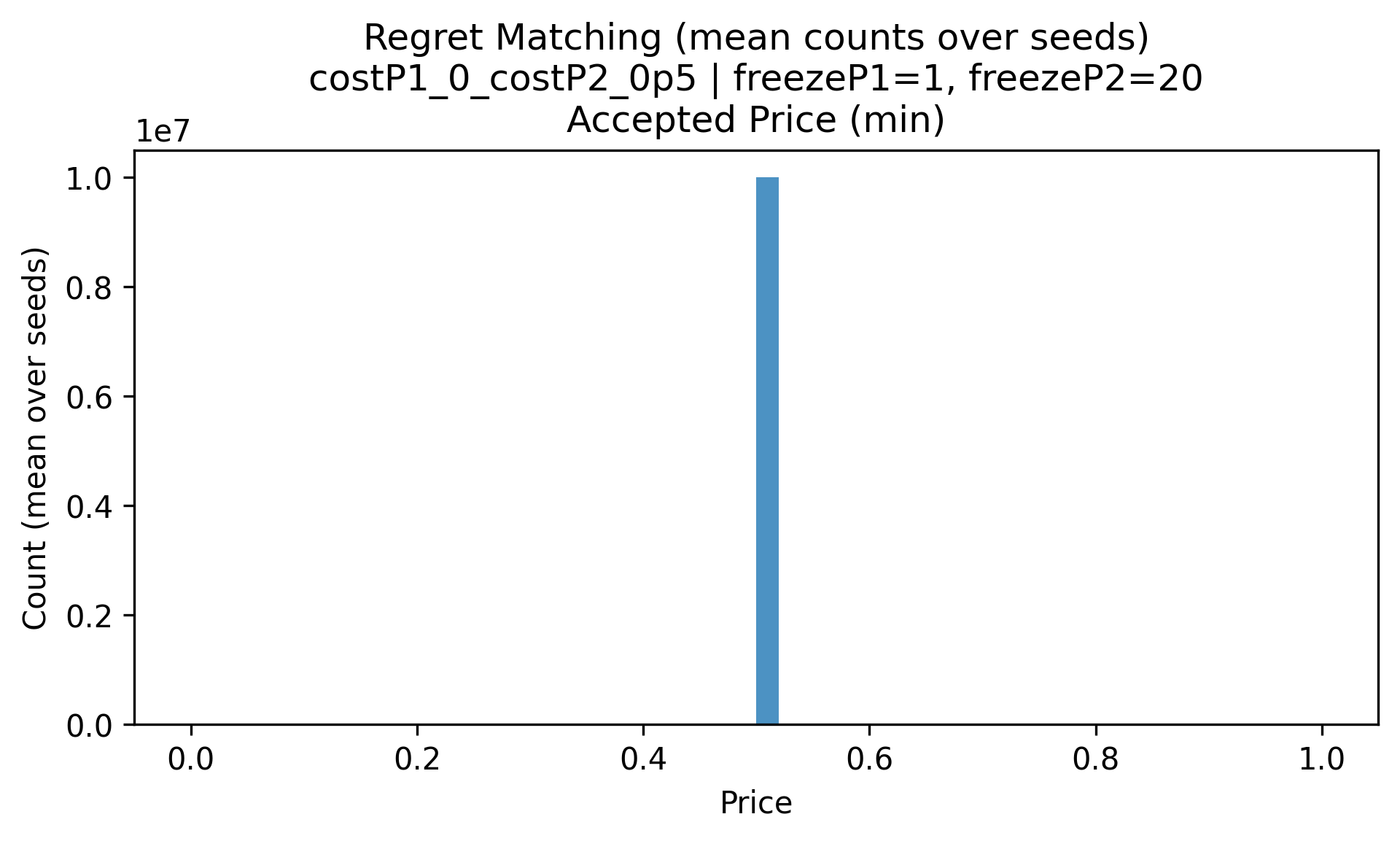}
    \caption{Transaction Price} 
    \label{fig3-0p5-01:img1}
  \end{subfigure}\hfill
  \begin{subfigure}[t]{0.33\textwidth}
    \centering
    \includegraphics[width=\linewidth]{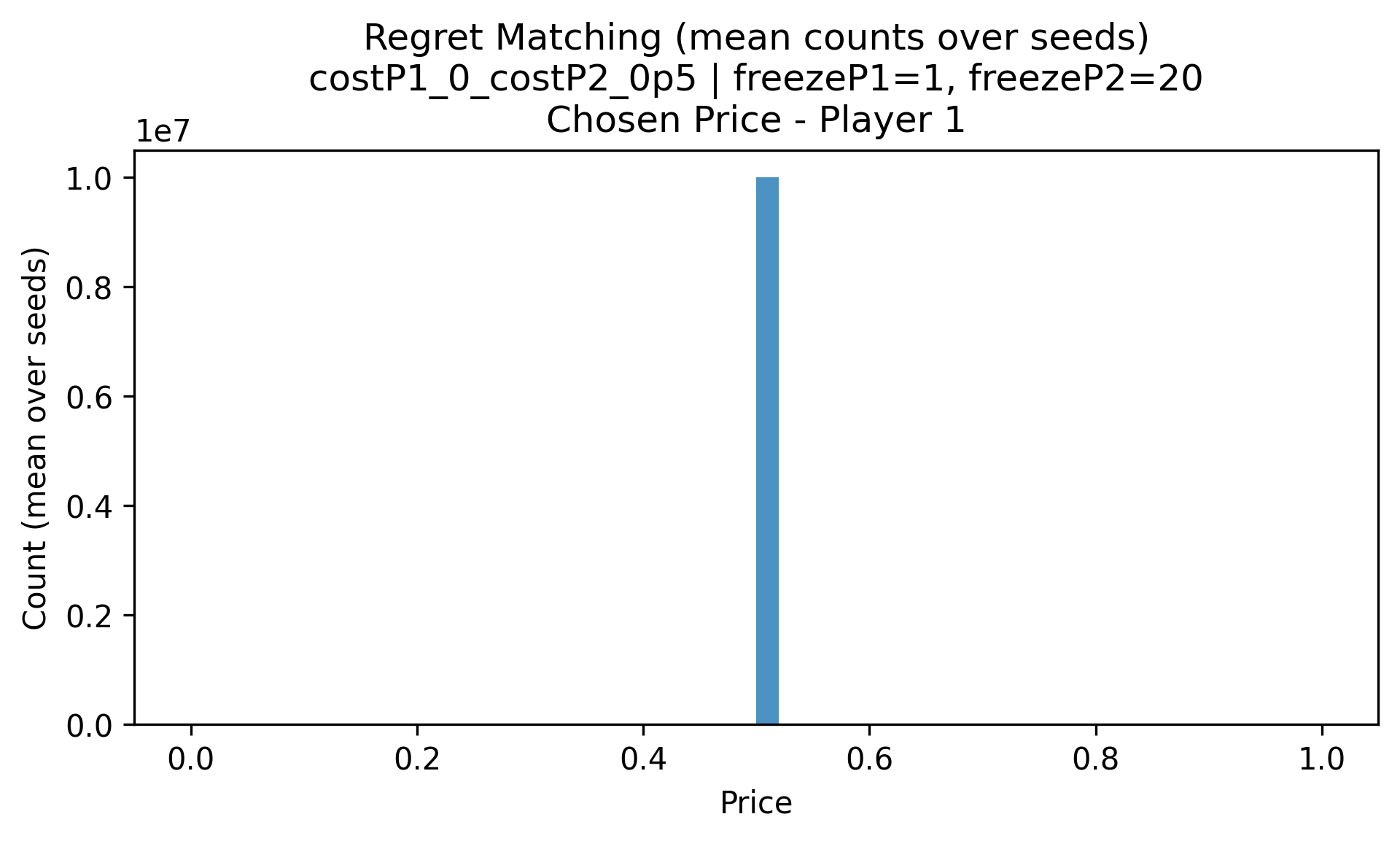}
    \caption{Prices set by Player 1}
    \label{fig3-0p5-10:img2}
  \end{subfigure}\hfill
  \begin{subfigure}[t]{0.33\textwidth}
    \centering
    \includegraphics[width=\linewidth]{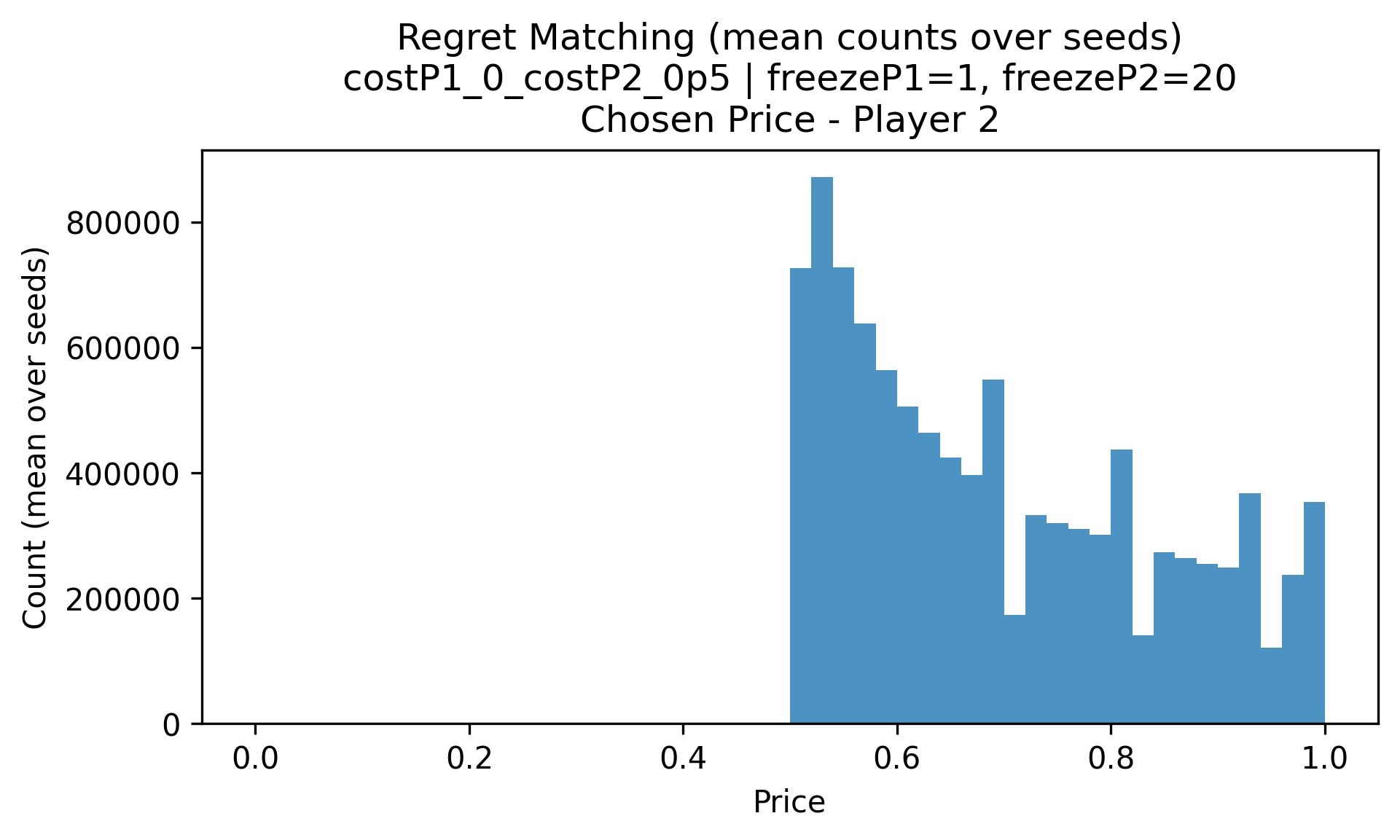}
    \caption{Prices set by Player 2}
    \label{fig3-0p5-10:img3}
  \end{subfigure}\hfill

  \caption{Experiments using Regret Matching with $c_2=0.5$ where player 2 repeats its price 20 times and player 1 repeats its price once. The plots show the frequency of prices over $T=10^7$ rounds, averaged across 100 random seeds.}
  \label{fig3-0p5-10}
\end{figure}

\begin{figure}[H]
  \centering
\begin{subfigure}[t]{0.33\textwidth}
    \centering
    \includegraphics[width=\linewidth]{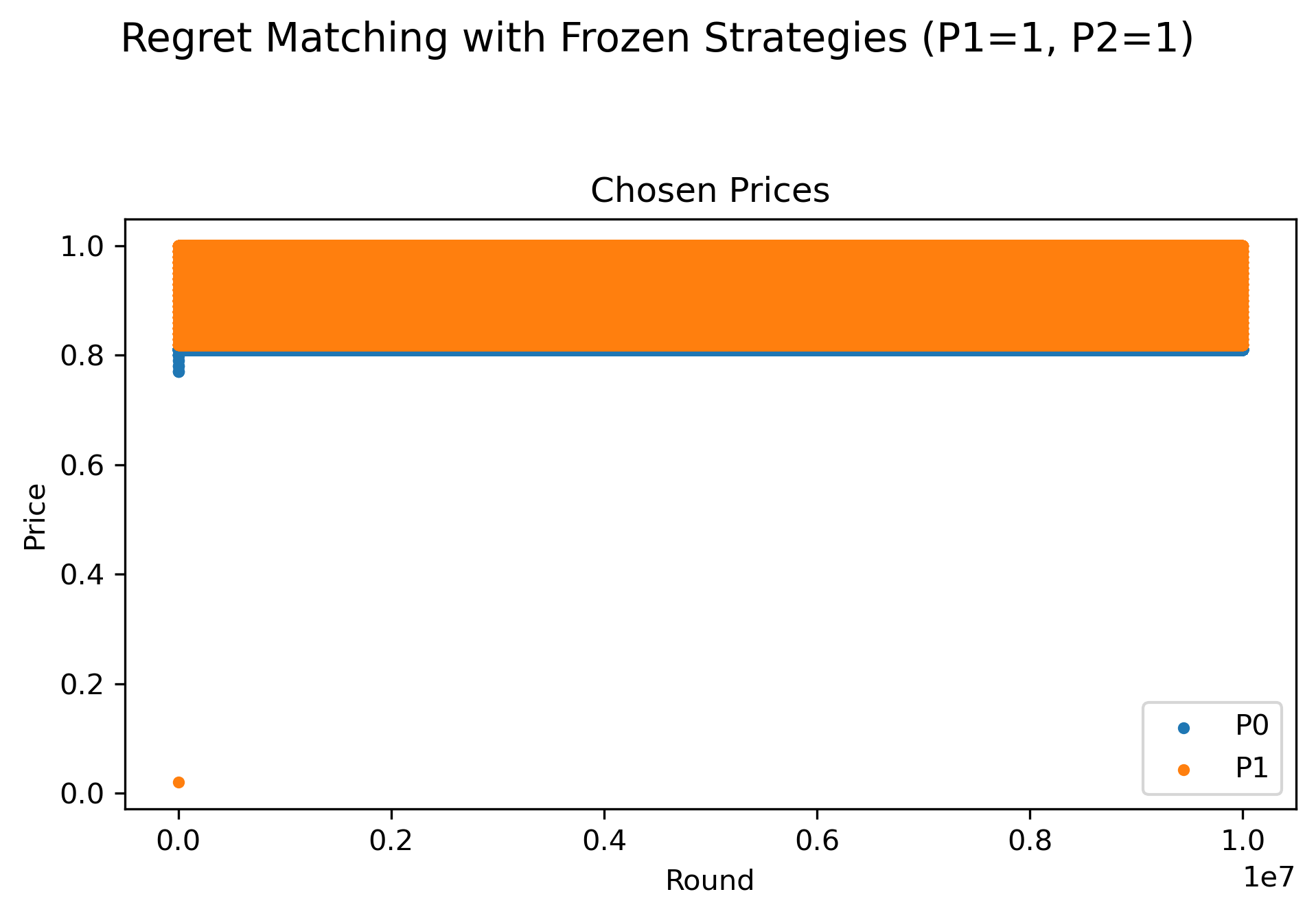}
    \caption{Both players repeat their prices once.} 
    \label{fig3-snap-00:img1}
  \end{subfigure}\hfill
  \begin{subfigure}[t]{0.33\textwidth}
    \centering
    \includegraphics[width=\linewidth]{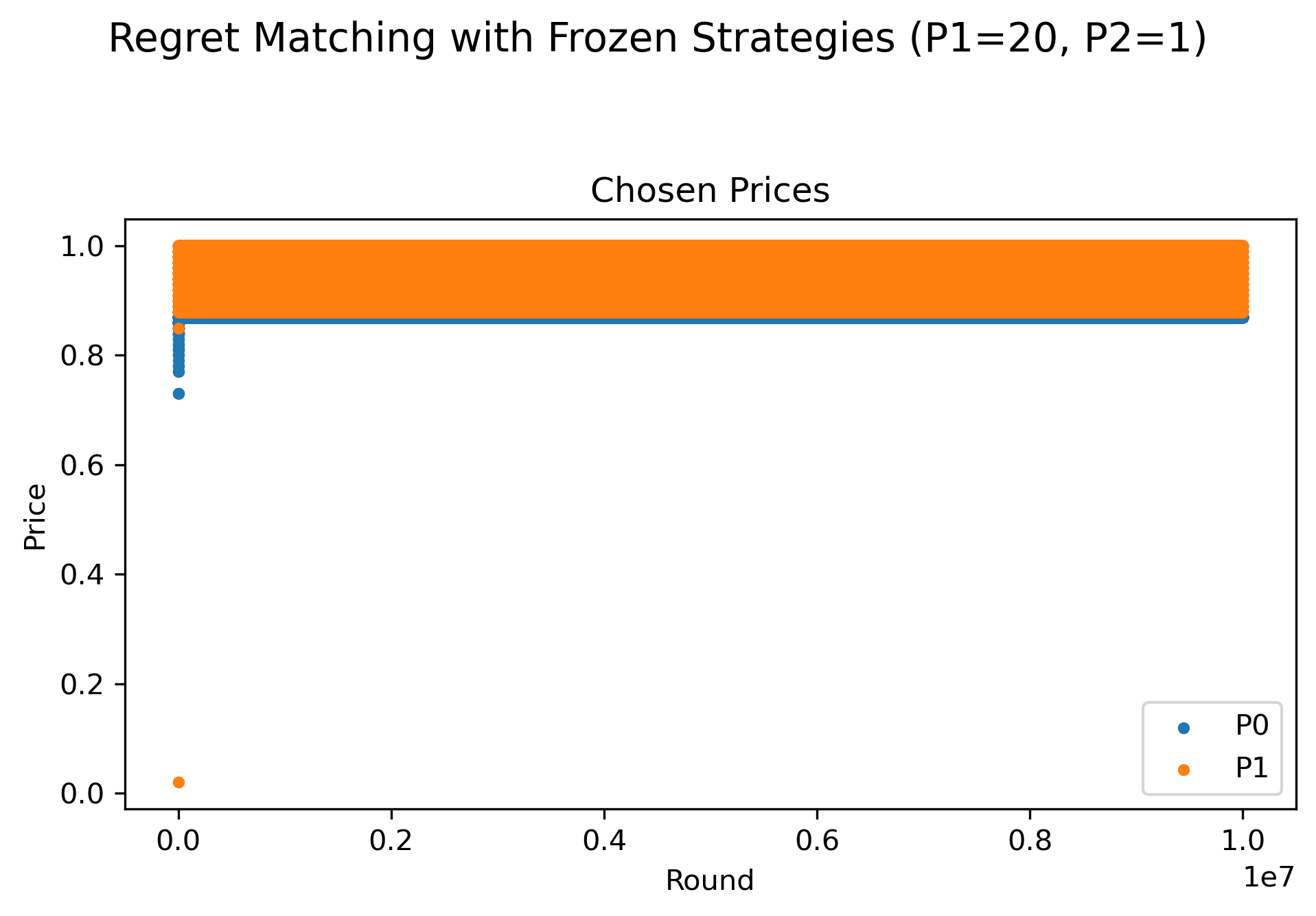}
    \caption{Player 1 repeats its price 20 times and Player 2 repeats its price once.}
    \label{fig3-snap-10:img2}
  \end{subfigure}\hfill
  \begin{subfigure}[t]{0.33\textwidth}
    \centering
    \includegraphics[width=\linewidth]{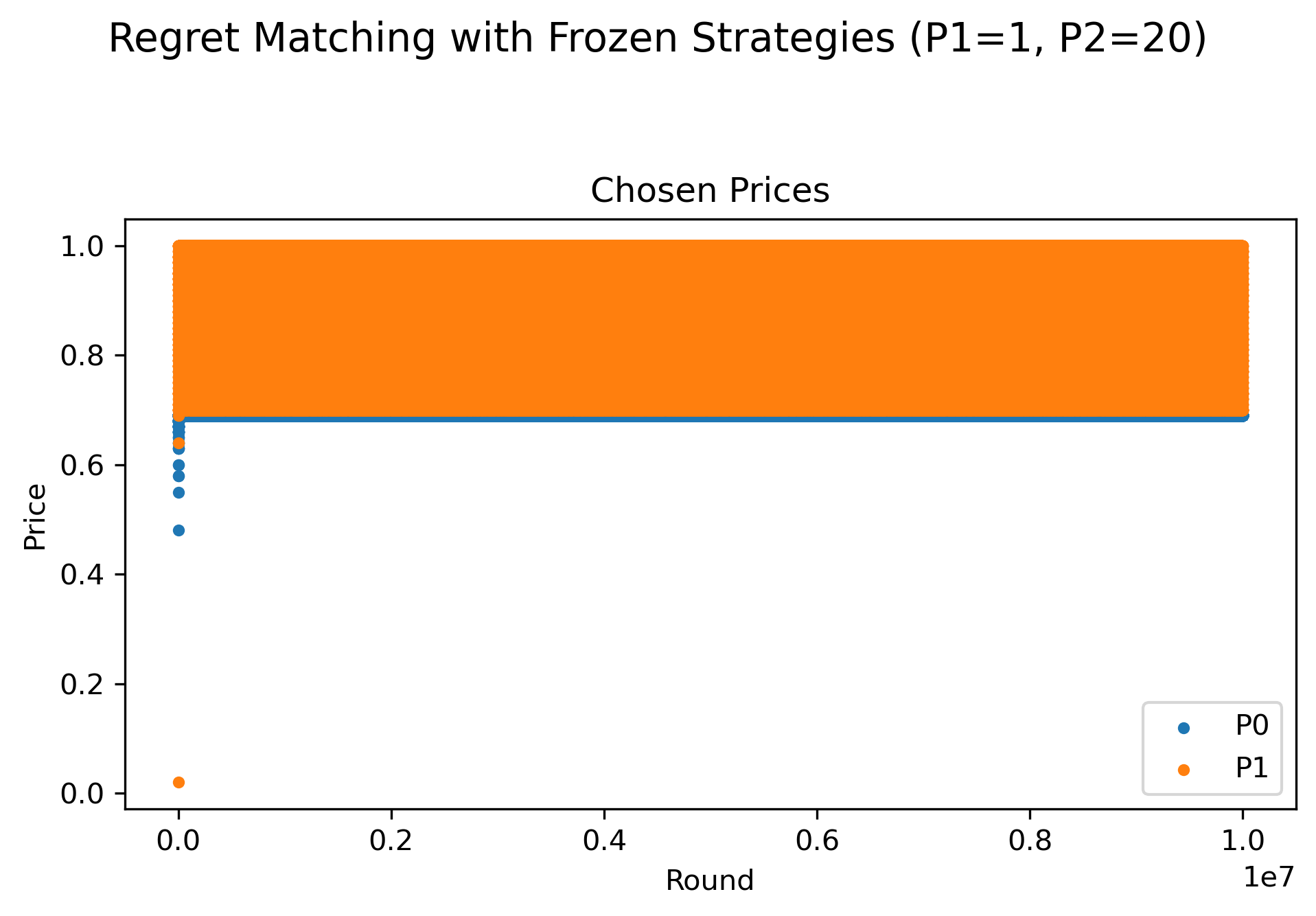}
    \caption{Player 2 repeats its price 20 times and Player 1 repeats its price once.}
    \label{fig3-snap-01:img3}
  \end{subfigure}\hfill

  \caption{Actual prices chosen by each player during $T=10^7$ rounds for a fixed random seed with $c_2=1$. P0 denotes player $1$ and P1 denotes player $2$.}
  \label{fig3-snap-1}
\end{figure}

\begin{figure}[H]
  \centering
\begin{subfigure}[t]{0.45\textwidth}
    \centering
    \includegraphics[width=\linewidth]{graphics/outputs_10/rm_run_k_100_T_10000000_iv_20_costs_P0_0_P1_1_rmPlus_0_freeze_P0_1_P1_1_timeseries_chosen_prices.png}
    \label{fig3-seed:img1}
  \end{subfigure}\hfill
  \begin{subfigure}[t]{0.45\textwidth}
    \centering
    \includegraphics[width=\linewidth]{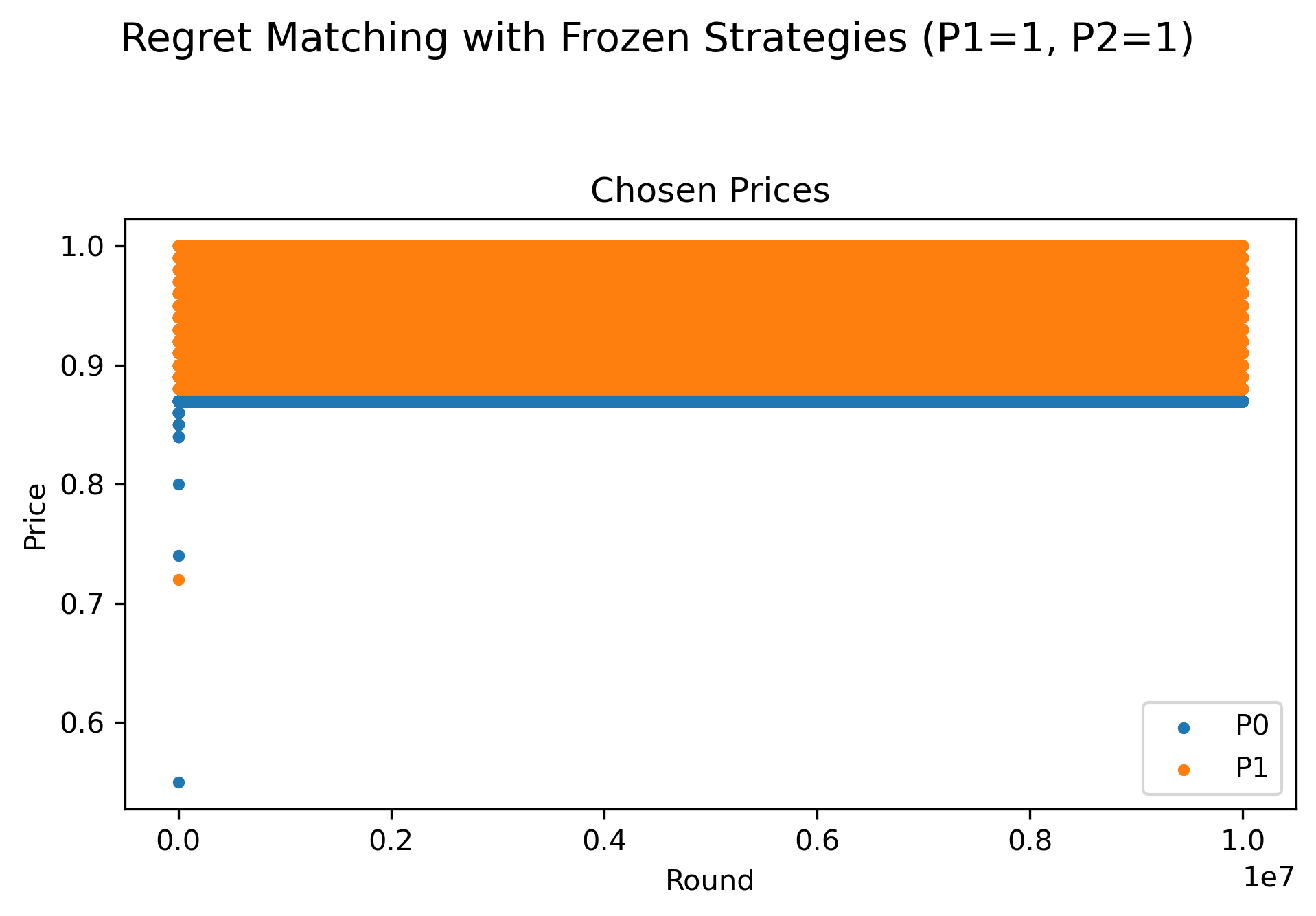}
    \label{fig3-seed:img2}
  \end{subfigure}\hfill

  \caption{Actual prices chosen by each player during $T=10^7$ rounds for two different random seeds with $c_2=1$ and both players repeat their prices once. P0 denotes player $1$ and P1 denotes player $2$. Regret Matching appears to converge to different prices across different seeds.}
  \label{fig3-seed}
\end{figure}

\subsection{Intuitive explaination of the plots}\label{appendix:experiments-intuition}
We provide intuition for the empirical results in Appendices~\ref{appendix:experiments-hedge} and \ref{appendix:experiments-regret-matching}. First, in Appendix~\ref{appendix:experiments-hedge}, for the case $c_2=1$, we observe in Figure~\ref{fig2-1-00:img1} that the most frequent transaction price is around $0.8$. When we increase the learning rate of player~1, Figure~\ref{fig2-1-10:img1} shows that the most frequent transaction price decreases. Intuitively, a higher learning rate makes player~1 react more aggressively to player~2's posted prices and undercut earlier, which pushes transaction prices down to around $0.7$. In contrast, when we increase the learning rate of player~2, Figure~\ref{fig2-1-01:img1} shows that the most frequent transaction price increases to around $0.9$. Intuitively, a higher learning rate makes player~2 respond more quickly to player~1's prices by moving to higher prices sooner, which helps avoid negative utilities and pushes transaction prices upward.

Next, in Appendix~\ref{appendix:experiments-regret-matching}, for the case $c_2=1$, Figure~\ref{fig3-1-00:img1} shows that the most frequent transaction price is around $0.8$. When player~1 repeats its chosen price for 20 consecutive rounds, Figure~\ref{fig3-1-10:img1} shows that the most frequent transaction price increases to around $0.9$. Intuitively, repeating prices slows down player~1's response to player~2's posted prices, so player~1 undercuts later than it would without repetition, which pushes transaction prices upward. On the other hand, when player~2 repeats its chosen price for 20 consecutive rounds, Figure~\ref{fig3-1-01:img1} shows that the most frequent transaction price decreases to around $0.7$. Intuitively, repetition makes player~2 react more slowly to player~1's posted prices and delay moving to higher prices, which allows player~1 to undercut for longer and pushes transaction prices downward.

Next, observe that in Figure~\ref{fig2-seed}, Hedge-based no-swap-regret learners exhibit similar price trajectories across two different random seeds, whereas regret matching produces noticeably different trajectories across seeds. This behavior can be partly explained by the fact that the Hedge-based learner is relatively stable early on due to its learning rate being on the order of $1/\sqrt{T}$, while regret matching is much more sensitive to the initial random fluctuations, which can in turn influence where transaction prices eventually settle.

Finally, observe that when $c_2=0.5$, transaction prices concentrate around $0.5$, and this behavior persists even when we vary the behavior of our no-swap-regret learners. The fact that prices above $0.5$ cannot be sustained in any CE already explains why prices do not rise above this level. To understand why prices also do not fall significantly below $0.5$, consider the limiting regime $T\to\infty$. If the transaction price converges to some level $p$, then player~2 would be playing uniformly at random over prices above $p$. One can then carry out a straightforward calculation showing that no transaction price way below $0.5$ can be sustained in this case. In contrast, when $c_2=1$, prices as low as roughly $2/3$ can be sustained in principle.

\section{Additional Technical Details}\label{appendix:extra-thm}

\subsection{Existence of no-regret learners that converge to a specific equilibrium}\label{appendix:extra-learner}
In this section, we show that in the duopoly setting where $n=2$, for every $\Phi$-correlated equilibrium there exist \emph{no-$\Phi_i$-regret} learners for each player $i$ such that, when both players simultaneously use their corresponding learners, the empirical distribution of $\{x^{(t)}\}_{t=1}^T$ converges to that equilibrium as $T\to\infty$.




Fix a $\Phi$-correlated equilibrium $\mu$ on $\mathcal P^2$. For every tuple $x\in \mathcal{P}^2$, let $\mu(x)$ denote the weight of $x$ under $\mu$.  Fix a time horizon $T$ which is a perfect square. For each tuple $x\in\mathcal P^2$, define $n(x):=\lfloor \mu(x)\,\sqrt T\rfloor$ and $T_0:=\sum_{x\in\mathcal P^2} n(x)$. Observe that $\sqrt{T}-k^2\leq T_0\leq \sqrt{T}$.

Construct a sequence of length $T_0$ by listing each $x$ consecutively $n(x)$ times (and omitting
those with $n(x)=0$). Denote this sequence by $(\hat x^t)_{t=1}^{T_0}=\bigl((\hat x_1^t,\hat x_2^t)\bigr)_{t=1}^{T_0}$.

Define the empirical distribution of this $T_0$-length sequence:
\[
\tilde\mu(x):=\frac{1}{T_0}\bigl|\{t\in[T_0]: \hat x^t=x\}\bigr|=\frac{n(x)}{T_0}.
\]

Now we claim that $\|\tilde\mu-\mu\|_1 \le \frac{2k^2}{\sqrt{T}}.$ First we have the following:
\[
\|\tilde\mu-\mu\|_1
\le
\sum_x \left|\frac{n(x)}{T_0}-\frac{n(x)}{\sqrt{T}}\right|
+
\sum_x \left|\frac{n(x)}{\sqrt{T}}-\mu(x)\right|.
\]
The second sum is at most $k^2/\sqrt{T}$ as $\bigl|\frac{n(x)}{\sqrt{T}}-\mu(x)\bigr|\le 1/\sqrt{T}$ for each $x\in \mathcal{P}^2$.
For the first sum, we have the following:
\[
\sum_x \left|\frac{n(x)}{T_0}-\frac{n(x)}{\sqrt{T}}\right|
=
\left|\frac{1}{T_0}-\frac{1}{\sqrt{T}}\right|\sum_x n(x)
=
\left|\frac{1}{T_0}-\frac{1}{\sqrt{T}}\right|T_0
=
\left|1-\frac{T_0}{\sqrt{T}}\right|
=
\frac{\sqrt{T}-T_0}{\sqrt{T}}
\le \frac{k^2}{\sqrt{T}}.
\]
Hence, $\|\tilde\mu-\mu\|_1 \le \frac{2k^2}{\sqrt{T}}.$

Fix player $i\in\{1,2\}$. Let $\mathcal A_i$ be any standard no-$\Phi_i$-regret algorithm.
Define learner $\mathcal L_i$ that operates epoch-by-epoch, with each epoch lasting $T_0$ rounds
(except possibly the last partial epoch). Let $x^{s,t}$ denote the price tuple played in round $t\in [T_0]$ of epoch $s$. Starting at epoch $s=1$ and round $t=1$, player $i$ plays the prescribed action $x_i^{s,t}=\hat x_i^t$. If in any round $t$ of epoch $s$ it observes $x_{-i}^{s,t}\neq \hat x_{-i}^t$, then from the next round onward it switches permanently to $\mathcal A_i$.

If each player $i$ uses the learner $\mathcal L_i$, then no mismatch ever occurs, and the price tuples chosen is the
periodic repetition of the $T_0$-length cycle $(\hat x^t)_{t=1}^{T_0}$, up to at most one partial
epoch of length at most $T_0$. Hence we have the following:
\[
\left\|\frac{1}{T}\sum_{t=1}^T \mathbf 1\{x^{(t)}=\cdot\}-\tilde\mu\right\|_1 \le \frac{T_0}{T}.
\]
Since $T_0\le \sqrt T$, we have $T_0/T\to 0$ as
$T\to\infty$. Also recall that $\|\tilde\mu-\mu\|_1\to 0$ as
$T\to\infty$, so the empirical distribution of $\{x^{(t)}\}_{t=1}^T$ converges to $\mu$.

We now compute the $\Phi_i$-regret of player $i$.

Let us first consider the case where no mismatch is observed.
Fix $\phi\in\Phi_i$ and define the one-shot deviation gain
\[
\Delta_i^\phi(x):=u_i(\phi(x_i),x_{-i})-u_i(x).
\]
As $\mu$ is a $\Phi$-correlated equilibrium, $\mathbb E_{x\sim\mu}[\Delta_i^\phi(x)]\le 0$.
Also the utilities of player $i$ lies in $[-c_i,1-c_i]$ in this game, so $|\Delta_i^\phi(x)|\le 1$ for all $x\in\mathcal{P}^2$. Therefore
\[
\mathbb E_{x\sim\tilde\mu}[\Delta_i^\phi(x)]
\le
\mathbb E_{x\sim\mu}[\Delta_i^\phi(x)] + \|\tilde\mu-\mu\|_1
\le
\|\tilde\mu-\mu\|_1.
\]
As no mismatch is observed, the sequence of price tuples played is a repetition of the cycle whose empirical distribution is
exactly $\tilde\mu$ on each full epoch, plus a final partial epoch of length $<T_0$. Hence
\[
\sum_{t=1}^T \Delta_i^\phi(x^{(t)})
\le
T\cdot \|\tilde\mu-\mu\|_1 + T_0
\le
T\cdot \frac{2k^2}{\sqrt{T}} + \sqrt{T}
=
O(k^2\sqrt T).
\]

Let us now consider a case where a mismatch is observed.
Let $\tau$ be the first time-step a mismatch is detected. Recall that from time-step $\tau+1$ onward, player $i$ runs
$\mathcal A_i$, whose $\Phi_i$-regret over the remaining rounds is $o(T)$. The regret incurred up to time-step $\tau$ is bounded as in the previous case, with an additional cost of at most $1$ for the single mismatch round (since $|\Delta_i^\phi|\le 1$).
Thus total $\Phi_i$-regret is
$O(m\sqrt T)+o(T)$.

Therefore, $\mathcal{L}_i$ is a \emph{no-$\Phi_i$-regret} learner.

\subsection{Transforming CCE to a Symmetric CCE}\label{appendix:extra-symmetric-cce}
In this section we discuss how any CCE $\mu$ can be transformed into a symmetric CCE without changing the sum of players’ expected utilities.

For $p\in\mathcal P$, define $g(p):=(p-c)\,f(p)$, and for a tuple $x\in\mathcal P^n$ define
\[
\min(x):=\min_{j\in[n]} x_j.
\]

Fix any tuple $x\in\mathcal{P}^n$. Let $t(x):=\big|\{j: x_j=\min(x)\}\big|$. Now we have the following:
\begin{equation}\label{eq:ast}
    \sum_{i=1}^n u_i(x) \;=\; t(x)\cdot \frac{g(\min(x))}{t(x)} \;=\; g(\min(x)).
\end{equation}

Let $S_n$ be the set of all permutations of $[n]$. For $\pi\in S_n$ and $x\in\mathcal P^n$,
write $(\pi x)_i := x_{\pi(i)}$. We now construct a distribution $\bar\mu$ over $\mathcal P^n$ as follows: sample $x\sim\mu$ and independently sample a permutation $\pi$ uniformly from $S_n$, and output the permuted tuple $\pi x$.


For any player $i\in [n]$ and any fixed deviation \(p\in\mathcal P\),
   \[
   \mathbb E_{x\sim \bar\mu}[u_i(x)]
   = \frac{1}{n!}\sum_{\pi}\mathbb E_{x\sim\mu}[u_{\pi(i)}(x)]
   \;\ge\; \frac{1}{n!}\sum_{\pi}\mathbb E_{x\sim\mu}[u_{\pi(i)}(p,x_{-\pi(i)})]
   = \mathbb E_{x\sim \bar\mu}[u_i(p,x_{-i})],
   \]

Hence, $\bar\mu$ is also a CCE. Due to \eqref{eq:ast}, we have the following:
\begin{equation}\label{eq:astast}
    \mathbb E_{x\sim\bar\mu}\left[\sum_i u_i(x)\right]
= \frac{1}{n!}\sum_{\pi}\mathbb E_{x\sim\mu}\left[\sum_iu_{\pi(i)}(x)\right]
= \frac{1}{n!}\sum_{\pi}\mathbb E_{x\sim\mu}[g(\min(x))]
= \mathbb E_{x\sim\mu}\Big[\sum_i u_i(x)\Big].
\end{equation}


As $\bar\mu$ is permutation-invariant, all players have the same expected utility under
$\bar\mu$. Hence, we have the following for each $i\in[n]$:
\begin{equation}\label{eq:dagger}
    \mathbb E_{x\sim\bar\mu}\big[u_i(x)\big]
=
\frac{1}{n}\,\mathbb E_{x\sim\bar\mu}\Big[\sum_{j=1}^n u_j(x)\Big]
=
\frac{1}{n}\,\mathbb E_{x\sim\bar\mu}\big[g(\min(x))\big].
\end{equation}

Next, define the map $T:\mathcal P^n\to\mathcal P^n$ by
\[
T(x) := (\min(x),\ldots,\min(x)).
\]
Let $\nu$ be a distribution over $\mathcal{P}^n$ defined as follows: sample $x\sim\bar\mu$ and output $T(x)$.
Then $\nu$ is supported on tuples with identical coordinate values.
Also, since $\min(T(x))=\min(x)$, we have $g(\min(T(x)))=g(\min(x))$, so
\begin{equation}\label{eq:ddagger}
    \mathbb E_{y\sim\nu}\Big[\sum_i u_i(y)\Big]
=
\mathbb E_{x\sim\bar\mu}\big[g(\min(T(x)))\big]
=
\mathbb E_{x\sim\bar\mu}\big[g(\min(x))\big].
\end{equation}

Now by combining \eqref{eq:dagger} and \eqref{eq:ddagger} we get the following:
\begin{equation}\label{eq:clubsuit}
    \mathbb E_{y\sim\nu}[u_i(y)]
=
\frac{1}{n}\,\mathbb E_{y\sim\nu}\Big[\sum_j u_j(y)\Big]
=
\frac{1}{n}\,\mathbb E_{x\sim\bar\mu}\big[g(\min(x))\big]
=
\mathbb E_{x\sim\bar\mu}[u_i(x)].
\end{equation}

Fix a player $i\in[n]$ and a deviation price $q\in\mathcal P$ such that $q\geq c$.
For any tuple $x$, recall that in the tuple $y=T(x)$, so all coordinate values in $y_{-i}$ is equal to $\min(x)$. Therefore, we have the following for all $x\in\mathcal{P}^n$:
\begin{equation}\label{eq:heartsuit}
    u_i(q, y_{-i}) \;\le\; u_i(q, x_{-i})
\end{equation}

The above follows due to following facts:
\begin{itemize}
\item If $q<\min(x)$, then $q$ is strictly below every opponent price in both $x_{-i}$ and $y_{-i}$.
\item If $q>\min(x)$, then $ u_i(q, y_{-i})=0$.
\item If $q=\min(x)$, then $u_i(q, y_{-i})=\frac{g(\min(x))}{n}$ whereas $u_i(q, x_{-i})=\frac{g(\min(x))}{t+1}$ where
$t:=|\{j\neq i: x_j=m\}|$.
\end{itemize}

Now we have the following:
\begin{equation}\label{eq:spadesuit}
    \mathbb E_{y\sim\nu}\big[u_i(q,y_{-i})\big]
=
\mathbb E_{x\sim\bar\mu}\big[u_i(q,T(x)_{-i})\big]
\le
\mathbb E_{x\sim\bar\mu}\big[u_i(q,x_{-i})\big].
\end{equation}

Since $\bar\mu$ is a CCE, and using \eqref{eq:clubsuit} and \eqref{eq:spadesuit}, we have the following:
\begin{equation}
    \mathbb E_{y\sim\nu}[u_i(y)]
=
\mathbb E_{x\sim\bar\mu}[u_i(x)]
\ge
\mathbb E_{x\sim\bar\mu}[u_i(q,x_{-i})]
\ge
\mathbb E_{y\sim\nu}[u_i(q,y_{-i})].
\end{equation}

Since $i$ and $q$ were chosen arbitrarily, $\nu$ is a CCE.

Finally, we get the following by combining \eqref{eq:astast} and \eqref{eq:ddagger} to get
\[
\mathbb E_{y\sim\nu}\Big[\sum_i u_i(y)\Big]
=
\mathbb E_{x\sim\bar\mu}\Big[\sum_i u_i(x)\Big]
=
\mathbb E_{x\sim\mu}\Big[\sum_i u_i(x)\Big].
\]
This completes the proof of our claim that any CCE can be transformed into a symmetric CCE without changing the sum of players’ expected utilities.


\end{document}